\newcommand{\modalb}[1]{\left[ #1 \right]}
\newcommand{\modald}[1]{\langle #1 \rangle}
\newcommand{\modalt}{\left[\langle\otimes\rangle\right]}
\newcommand{\mutrue}{\mathrm{tt}}
\newcommand{\mufalse}{\mathrm{ff}}
\newcommand{\denot}[1]{\lVert #1 \rVert}
\newtheorem{thm}{Theorem}[section]
\newtheorem{lem}[thm]{Lemma}
\newtheorem{cor}[thm]{Corollary}
\newtheorem{prop}[thm]{Proposition}
\newtheorem{conj}[thm]{Conjecture}
\newtheorem{fact}[thm]{Fact}
\newtheorem{defn}[thm]{Definition}
\newtheorem{notn}[thm]{Notation}
\newtheorem{rem}[thm]{Remark}
\newcommand{\petzero}{*=<10pt>[o][F]{}}
\newcommand{\petone}{*=<10pt>[o][F]{\bullet}}
\newcommand{\petsqa}{*=<10pt>[][F]{a}}
\newcommand{\petsqb}{*=<10pt>[][F]{b}}
\newcommand{\petsqc}{*=<10pt>[][F]{c}}
\newcommand{\petsqe}{*=<10pt>[][F]{}}
\def\cfs/{\mathrel{\mathsf{cfs}}}
\def\Sub{\mathit{Sub}}
\def\I{\mathrel{I}}
\def\ppar/{\mathrel{\mathsf{par}}}
\def\co/{\mathrel{\mathsf{co}}}
\def\Conf/{\mathit{Conf}}
\def\lmu/{\ensuremath{\mathcal{L}_{\mu}}}
\def\tfl/{\ensuremath{\mathbf{\mathbb{L}_{\mu}}}}
\def\tlmu/{\ensuremath{\mathcal{L}^{\mathfrak{\otimes}}_{\mu}}}
\def\clmu/{\ensuremath{\mathcal{L}^{c}_{\mu}}}
\def\deflogo{\ensuremath{\triangleleft}}
\def\eos{\allowbreak\null\hskip\parfillskip\quad\deflogo\nobreak\hskip-\parfillskip\penalty0\par}
\newcommand{\myfrac}[2]
{\begin{tabular}{@{}c@{}}
 \ensuremath{#1}\\ \hline \ensuremath{#2}
 \end{tabular}}
\begin{document}
\title{Logics and Games for True Concurrency}
\author{Julian Gutierrez}
\institute{LFCS. School of Informatics. University of Edinburgh}

\maketitle 

\begin{abstract}
We study the underlying mathematical properties of various partial order models of concurrency based on transition systems, Petri nets, and event structures, and show that the concurrent behaviour of these systems can be captured in a uniform way by two simple and general dualities of local behaviour. Such dualities are used to define new mu-calculi and logic games for the analysis of concurrent systems with partial order semantics. Some results of this work are: the definition of a number of \emph{mu-calculi} which, in some classes of systems, induce the same identifications as some of the best known bisimulation equivalences for concurrency; and the definition of (infinite) \emph{higher-order logic games} for bisimulation and model-checking, where the players of the games are given (local) monadic second-order power on the sets of elements they are allowed to play. More specifically, we show that our games are \emph{sound} and \emph{complete}, and therefore, \emph{determined}; moreover, they are \emph{decidable} in the finite case and underpin novel decision procedures for bisimulation and model-checking. Since these mu-calculi and logic games generalise well-known fixpoint logics and game-theoretic decision procedures for concurrent systems with interleaving semantics, the results herein give some of the groundwork for the design of a logic-based, game-theoretic framework for studying, in a \emph{uniform} way, several concurrent systems regardless of whether they have an interleaving or a partial order semantics.\\

\textbf{Keywords:} Modal and temporal logics; Petri nets, event structures, TSI models; Bisimulation and model-checking; Logic games for verification.
\end{abstract}

\section{Introduction}\label{intro}
Concurrency theory studies the logical and mathematical foundations of parallel processes, i.e., of systems composed of independent components which can interact with each other and with an environment. These systems can be analysed by studying the formalisms (logics and methodologies) employed to specify and verify their properties as well as the mathematical structures used to represent their behaviour. Such formalisms and structures make use of models of two different kinds: \emph{interleaving} or \emph{partially ordered}. This semantic feature is particularly important as most logics, tools, and verification techniques for analysing the behaviour of concurrent systems have to take this difference into account. This is sometimes an undesirable situation since it obscures our understanding of concurrent computations and divide research efforts in two different directions. Here we report on some work towards the definition of theories and verification techniques for analysing different models for concurrency in a uniform way.

This study focuses on core issues related to mu-calculi (fixpoint extensions of modal logic, in this case) and infinite logic games for concurrency. In particular, using a game-theoretic approach, we study fixpoint modal logics with partial order models as well as their associated bisimulation and model-checking problems. Our results show that generalisations (to a partial order setting) of some of the theories and verification techniques for interleaving concurrency can be used to address, \emph{uniformly}, the analysis of concurrent systems with both interleaving and partial order semantics. Some of our particular contributions are as follows.

We first study the relationships between logics and equivalences for concurrent systems with partial order semantics purely based on observable `local dualities' between concurrency and conflict, on the one hand, and concurrency and causality on the other. These dualities, which can be found across several partial order models of concurrency, are mathematically supported in a beautiful way by a simple axiomatization of concurrent behaviour. Although the dualities and axiomatization are defined with respect to partial order models of concurrency, such dualities and axiomatization have a natural interpretation when considering concurrent systems with interleaving semantics such as transition systems (or their unfoldings) since they appear as particular instances of our framework. 

We also define a logical notion of equivalence for concurrency tailored to be model independent. We do so by defining a number of fixpoint modal logics whose semantics are given by an intermediate structure called a `process space', which is a mathematical structure intended to be used as a common bridge between the particular models of concurrency under consideration. Roughly speaking, a process space is a structure that contains the local partial order behaviour of a concurrent system, and is built using the local dualities mentioned above. Then, following this approach, two concurrent systems, possibly with models of different kinds, can be compared with each other within the same framework by comparing \emph{logically} their associated process spaces.

Moreover, some of the bisimulation equivalences induced by these logics coincide with the standard bisimilarities both for interleaving and for causal systems, namely with Milner's strong bisimilarity (sb \cite{hmljacm-milner}) and with history-preserving bisimilarity (hpb \cite{hpb-rav}), respectively. The latter result holds when restricted to a particular class of concurrent systems, which we currently call the class of $\Xi$-systems. We also define a new bisimulation equivalence, which (on $\Xi$-systems) is strictly stronger than hpb and strictly weaker than hereditary history-preserving bisimilarity (hhpb \cite{open-nielsen}), one of the specializations of the abstract notion of bisimulation equivalence defined by Joyal, Nielsen, and Winskel using open maps \cite{open-nielsen}. 

We also study the model-checking problem for these logics against the models for concurrency we consider here. The outcome of this is a generalisation of the local model-checking games defined by Stirling \cite{localmc-stirling} for the mu-calculus (\lmu/ \cite{lmutcs-kozen}). This new game-based decision procedure is used for the temporal verification of a class of regular event structures \cite{res-thia}, and thereby, we improve previous results in the literature \cite{mceslics-madhu,tacas-penczek} in terms of temporal expressive power. We do so by allowing a free interplay of \emph{fixpoint operators} and \emph{local monadic second-order power} on the sets of elements that can be described within the logics.

The distinctive feature of the (infinite) logic games we define in order to address the bisimulation and model-checking problems we have described is that through their formal definition we move from a traditional setting where both players, namely a ``Verifier'' Eve ($\exists$) and a ``Falsifier'' Adam ($\forall$), have first-order power on the elements available in the locality where they are to play, to a more complex setting in which the players are provided with higher-order power on the sets of elements they are allowed to play. From a more computational viewpoint, we show that despite their higher-order features both logic games are sound and complete, and therefore, determined; moreover, they are also decidable when played on finite systems allowing for possible practical implementations. 

The structure of the document is as follows. Section \ref{pre} introduces some background on the models for concurrency, fixpoint modal logics, and bisimulation and model-checking games of our interest. In Section \ref{mucalculi} we define the local dualities recognisable in several (partial order) models for concurrency as well as the fixpoint modal logics that can be extracted from such dualities; here we also study the bisimulation equivalences induced by some of the modal logics defined in this section making no use of any game-theoretic machinery. Then, in Sections \ref{bisgames} and \ref{mcgames}, we introduce the higher-order logic games that characterise, respectively, the bisimulation and model-checking problems of the logics defined in the previous section; we also show their correctness and applications as described before. Finally, in Section \ref{relwork} a summary of related work is given, and in Section \ref{conc} we provide some concluding remarks and directions for further work.

\section{Preliminaries}\label{pre}
In this section we study the models for concurrency of our interest, together with background material on the modal logics and games for verification that are relevant to the work presented in this document. We also discuss some relationships between the models for concurrency that are studied here as well as between the equivalences induced by the modal logics presented in this section and the equivalences for concurrency considered in this and forthcoming sections.

\subsection{Partial Order Models of Concurrency}\label{ch2-pom}
In concurrency there are two main semantic approaches to modelling concurrent behaviour, either using interleaving or partial order models for concurrency. On the one hand, \emph{interleaving} models represent concurrency as the nondeterministic combination of all possible sequential behaviours in the system. On the other hand, \emph{partial order} models represent concurrency explicitly by means of an independence relation on the set of actions, transitions, or events in the system that can be executed concurrently.

We are interested in partial order models for various reasons. In particular, because they can be seen as a generalisation of interleaving models as explained later. This feature allows us to define the logics and games developed in further sections in a uniform way for several different models for concurrency, regardless of whether they are used to provide interleaving or partial order semantics. 

In the following, we present the three partial order models for concurrency that we study here, namely Petri nets, transition systems with independence, and event structures. We also present some basic relationships between these three models, and how they generalise some models for interleaving concurrency. For further information on models for concurrency and their relationships the reader is referred to \cite{models-winskel,modelstcs-winskel} where one can find a more comprehensive presentation.

\subsubsection*{Petri Nets.}
A \emph{net} $\mathcal{N}$ is a tuple $(P, C, R, \theta,\Sigma)$, where $P$ is a set of places, $C$ is a set of actions, $R \subseteq {{(P \times C)} \cup {(C \times P)}}$ is a relation between places and actions, and $\theta$ is a labelling function ${\theta} : {C \rightarrow \Sigma}$ from actions to a set $\Sigma$ of action labels. Places and actions are called nodes; given a node $n \in P \cup C$, the set ${{}^{\bullet}n} = {\{ {x} \mid {(x,n) \in R} \}}$ is the preset of $n$ and the set ${n^{\bullet}} = {\{ {y} \mid {(n,y) \in R} \}}$ is the postset of $n$. These elements define the static structure of a net.\footnote{The reader acquainted with net theory may have noticed that we use the word `action' instead of `transition', more common in the literature on (Petri) nets. We have made this choice of notation in order to avoid confusion later on in the document.} The notion of computation state in a net (i.e., its dynamic part) is that of a `marking', which is a set or a multiset of places; in the former case such nets are called \emph{safe}. Hereafter we only consider safe nets.

\begin{defn}\label{ch2-petrinets}
\emph{
A \emph{Petri net} $\mathfrak{N}$ is a tuple $(\mathcal{N},M_0)$, where $\mathcal{N} = (P, C, R, \theta,\Sigma)$ is a net and ${M_0} \subseteq {P}$ is its initial marking.
}
\eos
\end{defn}

As mentioned above, markings define the dynamics of nets; they do so in the following way. We say that a marking $M$ enables an action $t$ iff ${{}^{\bullet}t} \subseteq {M}$. If $t$ is enabled at $M$, then $t$ can occur and its occurrence leads to a successor marking $M'$, where ${M'} = {{({M} \setminus {{}^{\bullet}t})} \cup {t^{\bullet}}}$, written as $M \xrightarrow{t} M'$. Let $\xrightarrow{t}$ be the relation between successor markings and let $\longrightarrow^{*}$ be its transitive closure. Given a Petri net $\mathfrak{N} = (\mathcal{N},M_0)$, the relation $\longrightarrow^{*}$ defines the set of reachable markings in the system $\mathfrak{N}$; such a set of reachable markings is fixed for any pair $(\mathcal{N},M_0)$, and can be constructed with the occurrence net unfolding construction defined by Nielsen, Plotkin, and Winskel \cite{pnesdom-winskel}.

Finally, let $\ppar/$ be the symmetric independence relation on actions such that $t_1 \ppar/ t_2$ iff ${{}^{\bullet}t_{1}^{\bullet}} \cap {{}^{\bullet}t_{2}^{\bullet}} = \emptyset$, where ${{}^{\bullet}t^{\bullet}}$ stands for the set ${{}^{\bullet}t} \cup {t^{\bullet}}$, and there exists a reachable marking $M$ such that both ${}^{\bullet}t_1 \subseteq M$ and ${}^{\bullet}t_2 \subseteq M$. Then, if two actions $t_1$ and $t_2$ can occur concurrently they must be independent, i.e., ${(t_1,t_2)} \in {\ppar/}$. 

\subsubsection*{Transition Systems with Independence.}
A \emph{labelled transition system} (LTS) is an edge-labelled graph structure. Formally, an LTS is a tuple $(S,T,\Sigma)$, where $S$ is a set of vertices called states, $\Sigma$ is a set of labels, and ${T} \subseteq {S \times \Sigma \times S}$ is a set of $\Sigma$-labelled edges, which are called transitions. A \emph{rooted} LTS is an LTS with a designated initial state $s_0 \in S$. A transition system with independence is a rooted LTS where independent transitions can be explicitly recognised. Formally:

\begin{defn}
\emph{
A \emph{transition system with independence} (TSI) $\mathfrak{T}$ is a tuple $(S,s_0,T,I,\Sigma)$, where $S$ is a set of states with initial state $s_0$, ${T} \subseteq {S \times \Sigma \times S}$ is a transition relation, $\Sigma$ is a set of labels, and ${I} \subseteq {T \times T}$ is an irreflexive and symmetric relation on independent transitions. The binary relation $\prec$ on transitions defined by
\begin{center}
$(s,a,s_{1}) \prec (s_{2},a,q) \Leftrightarrow \exists b. (s,a,s_{1}) I (s,b,s_{2}) \wedge (s,a,s_{1}) I (s_{1},b,q) \wedge (s,b,s_{2}) I (s_{2},a,q)$
\end{center}
expresses that two transitions are `instances' of the same action, but in two different interleavings. We let $\sim$ be the least equivalence relation that includes $\prec$, i.e., the reflexive, symmetric, and transitive closure of $\prec$. The equivalence relation $\sim$ is used to group all transitions that are instances of the same action in all its possible interleavings. Additionally, $I$ is subject to the following axioms:
\begin{itemize}
\item \textbf{A1}. $(s,a,s_{1}) \sim (s,a,s_{2}) \Rightarrow s_{1} = s_{2}$ 
\item \textbf{A2}. $(s,a,s_{1}) \I (s,b,s_{2}) \Rightarrow \exists q. (s,a,s_{1}) \I (s_{1},b,q) \wedge (s,b,s_{2}) \I (s_{2},a,q)$
\item \textbf{A3}. $(s,a,s_{1}) \I (s_{1},b,q) \Rightarrow \exists s_{2}. (s,a,s_{1}) \I (s,b,s_{2}) \wedge (s,b,s_{2}) \I (s_{2},a,q)$
\item \textbf{A4}. $(s,a,s_{1}) (\prec \cup \succ) (s_{2},a,q) \I (w,b,w') \Rightarrow (s,a,s_{1}) \I (w,b,w')$ \eos
\end{itemize}
}
\end{defn}

Axiom \textbf{A1} states that from any state, the execution of a transition 
leads always to a unique state. This is a determinacy
condition. Axioms \textbf{A2} and \textbf{A3} ensure that independent
transitions can be executed in either order. Finally, \textbf{A4}
ensures that the relation $I$ is well defined. More precisely,
\textbf{A4} says that if two transitions $t$ and $t'$ are independent,
then all other transitions in the equivalence class
$\left[t\right]_{\sim}$ (i.e., all other transitions that are
instances of the same action but in different interleavings) are
independent of $t'$ as well, and vice versa. Having said that, an alternative and 
possibly more intuitive definition for axiom \textbf{A4} can be given. 
Let $\mathfrak{I}(t)$ be the set $\{ t' \mid t \I t' \}$. Then, axiom \textbf{A4} 
is equivalent to this expression: \textbf{A4}'. $t \sim t_2 \Rightarrow \mathfrak{I}(t) = \mathfrak{I}(t_2)$.

This axiomatization of concurrent behaviour was defined by Winskel and Nielsen \cite{models-winskel}, but has its roots in the theory of traces \cite{tracesbook-maz}, notably developed by Mazurkiewicz for trace languages, one of the simplest partial order models for concurrency. As shown in Figure \ref{ch2-diamond}, this axiomatization can be used to generate a `concurrency diamond' for any two independent transitions $t$ and $t'$.
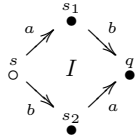
\begin{figure}[]
\begin{displaymath}
\xymatrix @R=5pt @C=10pt 
{
& \stackrel{s_1}{\bullet} \ar[rd]^{b} &  \\
\stackrel{s}{\circ} \ar[ru]^{a} \ar[rd]_{b} & \mathit{I} & \stackrel{q}{\bullet}  \\
& \stackrel{s_2}{\bullet} \ar[ru]_{a} &  \\
} 
\end{displaymath}
\caption{The `concurrency diamond' for $t \I t'$, where $t = (s,a,s_1)$ and $t' = (s,b,s_2)$. Concurrency is depicted with the symbol $I$ inside the square. The initial state is $\circ$.}
\label{ch2-diamond}
\end{figure}

\subsubsection*{Event Structures.} An event structure is a possibly labelled partially ordered set (poset) together with a binary relation on such a set. Formally:
\begin{defn}\label{esdef}
\emph{
A labelled \emph{event structure} $\mathcal{E}$ is a tuple $(E,\preccurlyeq,\sharp,\eta,\Sigma)$, where $E$ is a set of events that are partially ordered by $\preccurlyeq$, the causal dependency relation on events; events in a labelled event structure are `occurrences' of actions in a system. Moreover, ${\eta} : {E \rightarrow \Sigma}$ is a labelling function from events to a set of labels $\Sigma$, and ${\sharp} \subseteq  {E \times E}$ is an irreflexive and symmetric conflict relation such that the following two conditions hold: (1) if ${e_1, e_2, e_3} \in {E}$ and ${e_1 \sharp e_2} \preccurlyeq {e_3}$, then $e_1 \sharp e_3$; and moreover (2) ${\forall e} \in {E}$ the set $\{ {e' \in E} \mid {e' \preccurlyeq e} \}$ is finite.
}
\eos
\end{defn}

The independence relation on events is defined with respect to the causal relation $\preccurlyeq$ and conflict relation $\sharp$ on events. Two events $e_1$ and $e_2$ are said to be concurrent with each other, denoted by $e_1 \co/ e_2$, iff $e_1 \not \preccurlyeq e_2$ and $e_2 \not \preccurlyeq e_1$ and $\neg (e_1 \sharp e_2)$. The notion of computation state for event structures is that of a `configuration'. A configuration $C$ is a conflict-free set of events (i.e., if $e_1,e_2 \in C$, then $\neg (e_1 \sharp e_2)$) such that if $e \in C$ and $e' \preccurlyeq e$, then $e' \in C$. The initial configuration (or initial state) of an event structure $\mathcal{E}$ is by definition the empty configuration $\{\}$. Finally, a successor configuration $C'$ of a configuration $C$ is given by ${C'} = {{C} \cup {\{e\}}}$ such that $e \not \in C$. Write $C \xrightarrow{e} C'$ for this relation, and let $\longrightarrow^{*}$ be defined similar to the Petri net case.

\subsubsection*{A Uniform Representation.}
Despite being different informatic structures, the three models for concurrency just presented have a number of fundamental relationships between them, as well as with some models for interleaving concurrency. More precisely, TSI are noninterleaving transition-based representations of Petri nets, whereas event structures are unfoldings of TSI. This is analogous to the fact that LTS are interleaving transition-based representations of Petri nets while trees are unfoldings of LTS.

There are also simple relationships between TSI and LTS as well as between event structures and trees as follows: LTS are exactly those TSI with an empty independence relation $I$ on transitions, and trees are those event structures with and empty relation $\co/$ on events. In this way, partial order models can generalise the most important interleaving models in concurrency (and in program verification), namely LTS, trees, and Kripke structures (which are the vertex-labelled counterparts of LTS models).

Since the results presented in further sections are valid across all the models previously mentioned, it is convenient to fix some notations to refer unambiguously to any of them. To this end, we use the notation coming from the TSI model and present the maps that determine a TSI model based on the primitives of Petri nets and event structures. Also, with no further distinctions we use the word `system' when referring to any of these models or to sub-models of them, e.g., to LTS or Kripke structures.

The main reason for our choice of notation is that the basic components of a TSI can be easily and uniformly recognised in all the other models studied here. Thus, the translations are simple and direct. Also, this generic setting allows one to see more clearly that the axiomatization presented for TSI also holds for the other partial order models when analysing their local behaviour.

\paragraph*{Petri Nets and Event Structures as TSI Models.} A Petri net $\mathfrak{N} = (\mathcal{N},M_0)$, where $\mathcal{N} = (P, C, R, \theta,\Sigma)$ is a net as defined before and $M_0$ is its initial marking, can be represented as a TSI $\mathfrak{T} = (S,s_0,T,I,\Sigma)$ in the following way:
\begin{center}
\begin{tabular}{lcl}
$S$ & $=$ & $\{M \subseteq P \mid M_0 \longrightarrow^{*} M \}$\\
$T$ & $=$ & $\{ {(M,a,M')} \mid {\exists t \in C . ~  {a = \theta(t)} , M \xrightarrow{t} M'} \}$ \\
$I$ & $=$ & $\{ {((M_1,a,M'_1),(M_2,b,M'_2))} \mid {\exists (t_1,t_2) \in {\ppar/}} . $ \\ && ${ {a = \theta(t_1)}, {b = \theta(t_2)}, M_1 \xrightarrow{t_1} M'_1 , M_2 \xrightarrow{t_2} M'_2 } \}$
\end{tabular} 
\end{center} 
where the set of states $S$ of the TSI $\mathfrak{T}$ represents the set of reachable markings of $\mathfrak{N}$, the initial state $s_0$ is the initial marking $M_0$, the set of labels $\Sigma$ remains the same, and $T$ and $I$ have the expected derived interpretations. Similarly, an event structure $\mathcal{E} = (E,\preccurlyeq,\sharp,\eta,\Sigma)$ determines a TSI $\mathfrak{T} = (S,s_0,T,I,\Sigma)$ by means of the following mapping:
\begin{center}
\begin{tabular}{lcl}
$S$ & $=$ & $\{C \subseteq E \mid \{\} \longrightarrow^{*} C \}$ \\
$T$ & $=$ & $\{ {(C,a,C')} \mid {\exists e \in E . ~ a = \eta(e), C \xrightarrow{e} C'} \}$ \\
$I$ & $=$ & $\{ {((C_1,a,C'_1),(C_2,b,C'_2))} \mid { \exists (e_1,e_2) \in {\co/} . } $ \\ && ${ a=\eta(e_1), b=\eta(e_2), C_1 \xrightarrow{e_1} C'_1, C_2 \xrightarrow{e_2} C'_2 } \}$\\
\end{tabular} 
\end{center}
where the set of states $S$ is the set of configurations of $\mathfrak{E}$, the initial state $s_0$ is the initial configuration $\{\}$, and, as before, the set of labels $\Sigma$ remains the same in both models, and $T$ and $I$ have the expected derived TSI interpretations.

Notice that \emph{actions} in a Petri net, \emph{transitions} in a TSI and \emph{events} in an event structure are all different. As said before, transitions are \emph{instances} of actions, i.e., are actions relative to a particular interleaving. For instance, a Petri net composed of two independent actions ($a \parallel b$ in CCS notation \cite{ccs-milner}) is represented by a TSI with four different transitions, since there are two possible interleavings in such a system, namely $a_1.b_2$ and $b_1.a_2$. Therefore each action in the Petri net for $a \parallel b$ becomes two different transitions in the corresponding TSI.

On the other hand, events are \emph{occurrences} of actions, i.e., are actions relative to the causality relation. For instance, the Petri net representing the system defined by $(a+b).c$, where $a+b$ is the nondeterministic choice between actions $a$ and $b$, and $.$ is the sequential composition of such a choice with the action $c$, is represented by four events, instead of only three, because there are two different causal lines for the execution of action $c$, namely $a.c_1$ and $b.c_2$. Then, the Petri net action $c$ becomes two events $c_1$ and $c_2$ in the corresponding event structure.

\begin{notn}
\emph{
Given a transition $t=(s,a,s')$, also written as $s \xrightarrow{a} s'$ or $s \xrightarrow{t} s'$ if no confusion arises, we have that: state $s$ is called the source of $t$, and write $\sigma(t) = s$; state $s'$ is the target of $t$, and write $\tau(t) = s'$; and $a$ is the label of $t$, and write $\delta(t) = a$.}
\eos
\end{notn}

\begin{rem}
\emph{
The systems we study here may be finite or infinite, and this is always explicitly stated. However, they all are `image-finite', i.e., of finite branching.}
\eos
\end{rem}

\subsection{Modal Logic and the Mu-Calculus}\label{ch2-mlmc}
In this paper we study modal logics based on the mu-calculus (and the mu-calculus itself) since it can be used to express both linear-time and branching-time temporal properties. But first, we review Hennessy--Milner logic (HML \cite{hmljacm-milner}), a precursor modal language to the mu-calculus, which has played a major role in computer science, and especially in the specification of properties of concurrent systems. Then we turn our attention to the modal mu-calculus simply by adding fixpoint operators to HML. After that we look at the logical equivalences induced by these logics and how they have been used as equivalences for concurrency. See \cite{mucalculi-bradfield,processes-stirling} for further information on modal logics, the mu-calculus, or the equivalences induced by such logics.

\subsubsection*{Hennessy--Milner Logic.}
HML is a modal logic of actions that has its roots in the study of process algebras for concurrent and communicating systems. It was intended as an alternative approach to the formalisation of the notion of `observational equivalence' for concurrent systems. As usual for modal logics, HML formulae are interpreted over the set of states of a system.

\begin{defn}
\emph{
\emph{Hennessy--Milner logic} (HML \cite{hmljacm-milner}) has formulae $\phi$ is built from a set $\Sigma$ of labels $a,b,...$ by the following grammar:
\begin{center}
$\phi ::= \mufalse \mid \mutrue \mid \phi_1 \wedge \phi_2 \mid \phi_1 \vee \phi_2 \mid \modald{a} \phi_1 \mid \modalb{a} \phi_1$
\end{center}
where $\mufalse$ and $\mutrue$ are the false and true boolean constants, respectively, $\wedge$ and $\vee$ are  boolean operators, and $\modald{a} \phi_1$ and $\modalb{a} \phi_1$ are the modalities of the logic.}
\eos
\end{defn}

The meanings of $\mufalse$, $\mutrue$, $\wedge$, and $\vee$ are the usual ones. On the other hand, the semantics of the `diamond' modality $\modald{a} \phi_1$ is, informally, that at a given state it is possible to perform an $a$-labelled action to a state where $\phi_1$ holds; and dually for the `box' modality $\modalb{a} \phi_1$. Following \cite{processes-stirling}, we give the denotation of HML formulae inductively using an LTS. The semantics of HML is as follows:
\begin{defn}
\emph{
An \emph{HML model} $\mathfrak{T}$ of a formula $\phi$ is an LTS $(S,T,\Sigma)$. The denotation $\denot{\phi}^{\mathfrak{T}}$ of a formula $\phi$ is given as follows (omitting the superscript $\mathfrak{T}$):
\begin{center}
\begin{tabular}{lll}
$\denot{\mufalse}$&$=$&$\emptyset$ \\
$\denot{\mutrue}$&$=$&$S$ \\
$\denot{\phi_1 \wedge \phi_2}$&$=$&$\denot{\phi_1} \cap \denot{\phi_2}$ \\
$\denot{\phi_1 \vee \phi_2}$&$=$&$\denot{\phi_1} \cup \denot{\phi_2}$ \\
$\denot{\modald{a} \phi_1}$ & $=$ & $\{ s \in S \mid \exists s'. ~{s \xrightarrow{a} s'} \wedge {s' \in \denot{\phi_1}} \}$ \\
$\denot{\modalb{a} \phi_1}$ & $=$ & $\{ s \in S \mid \forall s'. ~{s \xrightarrow{a} s'} \Rightarrow {s' \in \denot{\phi_1}} \}$
\end{tabular} 
\end{center}
The \emph{satisfaction} relation $\models$ is defined in the usual way: $s \models \phi $ iff $s \in \denot{\phi}$.
}
\eos
\end{defn}

One of the most interesting properties of HML is that it characterises `bisimilarity' \cite{hmljacm-milner}, the equivalence relation induced by modal logic. A bisimulation between two rooted systems $\mathfrak{T}_1$ and $\mathfrak{T}_2$ with initial states $s_0$ and $q_0$, respectively, is an equivalence relation $\sim_{b}$ such that $s_0 \sim_{b} q_0$ if, and only if, they satisfy the same set of HML formulae. Then, we say that the two states $s_0$ and $q_0$ (or equivalently the two rooted systems $\mathfrak{T}_1$ and $\mathfrak{T}_2$) are bisimilar iff there is a bisimulation equivalence between them.

HML was initially defined as an alternative approach to understanding process equivalence in the context of CCS; Milner and Hennessy \cite{hmljacm-milner} showed that if two CCS processes are bisimilar, or in their words ``observationally equivalent'', then they satisfy the same set of HML formulae. They found, therefore, a correspondence between the logical equivalence induced by HML and an equivalence for concurrency (bisimilarity or observational equivalence in this case), modulo LTS, the class of models used for giving the semantics of CCS expressions.

Even though HML is quite a natural logic for studying process equivalences, it is not so much as a specification language, since it cannot express many temporal properties. Due to this, stronger logics have been studied. We now review one of such logics, the modal mu-calculus, which has strong connections to HML and a beautiful theory based on the addition of fixpoint operators to modal logic.

\subsubsection*{Fixpoints and the Modal Mu-Calculus.}
Fixpoint logics or mu-calculi \cite{mucalculi-bradfield} are logics that make use of fixpoint operators; in particular, the modal mu-calculus is a simple extension of modal logic with fixpoint operators. The mu-calculus as we use it nowadays was defined by Kozen \cite{lmutcs-kozen}, but it can also be seen as HML with fixpoint operators. The use of fixpoints in program logics was, however, not new by the time the mu-calculus was proposed. It actually dates back at least to Park \cite{fixpv-park} already in the context of program verification.

In informatics, and especially in concurrency and systems verification, the main motivation for extending a logic with fixpoint operators is the ability to express and study temporal properties of systems, this is their (possibly infinite) behaviour. In the reminder of this section we describe the mu-calculus, but before giving a formal presentation of it let us state some concepts and results that relate to fixpoints in general and their ubiquity in lattices and ordered structures.

\paragraph*{Fixpoints in Ordered Structures.} Fixpoints can be seen as equilibrium points. Their definition is simple: given a function $f$, we say that $x$ is a fixpoint of $f$ iff $x = f(x)$; it is a pre-fixpoint of $f$ if $f(x) \leq x$ and a post-fixpoint if $x \leq f(x)$. As we shall see, fixpoint theory is rather useful in logic when $f$ is monotonic and its domain is a complete lattice. Before stating one of the results on fixpoints that is relevant to this work, let us introduce some ordered structures. 

A \emph{partially ordered set} (poset) $(A,\leq)$ is a set $A$ together with a reflexive, transitive and anti-symmetric relation $\leq$ on its elements. A \emph{lattice} $\mathfrak{A} = (A,\leq)$ is a poset where for every two elements $x$ and $y$ in $A$, arbitrary meets (written $x \times y$) and joins (written $x + y$) exist. If, moreover, arbitrary meets and joins exist for any subset $B \subseteq A$, then $\mathfrak{A}$ is a complete lattice. 

\begin{thm}
\emph{
\textbf{(Knaster-Tarski fixpoint theorem \cite{ktft-tarski})} Let $f : A \rightarrow A$ be a monotone mapping on a complete lattice $\mathfrak{A} = (A, \leq)$. Then $f$ has a least fixpoint $x_{\mu}$ and a greatest fixpoint $x_{\nu}$ determined, respectively, by the pre-fixpoints and post-fixpoints of $f$:
\begin{center}
$x_{\mu} = \bigotimes \{ x \in A \mid f(x) \leq x \}$\\
$x_{\nu} = \bigoplus \{ x \in A \mid x \leq f(x) \}$
\end{center}
where $\bigotimes$ and $\bigoplus$ are the generalisations to arbitrary sets of the operators $\times: {A^{2}} \rightarrow A$ and $+: {A^{2}} \rightarrow A$ on pairs of elements as described before.
}
\eos
\end{thm}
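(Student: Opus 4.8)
The plan is to prove the statement for the least fixpoint $x_\mu$ and then obtain the greatest fixpoint $x_\nu$ by a dual argument (or, equivalently, by applying the first half to the order-reversed lattice $(A,\geq)$, which is again a complete lattice and on which $f$ remains monotone). First I would set $P = \{x \in A \mid f(x) \leq x\}$, the set of pre-fixpoints of $f$, and define $m = \bigotimes P$, which exists because $\mathfrak{A}$ is a complete lattice. Note $P$ is nonempty since the top element $\top = \bigotimes \emptyset \in A$ satisfies $f(\top) \leq \top$ trivially, so $m$ is a genuine element of $A$.

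The core of the argument is to show $m$ is itself a pre-fixpoint, i.e. $f(m) \leq m$, and then that it is in fact a fixpoint. For the first part: for every $x \in P$ we have $m \leq x$ by definition of the meet, hence $f(m) \leq f(x) \leq x$ by monotonicity and the fact that $x \in P$; since this holds for all $x \in P$, the element $f(m)$ is a lower bound of $P$, and therefore $f(m) \leq \bigotimes P = m$. So $m \in P$. For the reverse inequality $m \leq f(m)$: applying $f$ to $f(m) \leq m$ and using monotonicity gives $f(f(m)) \leq f(m)$, which says $f(m) \in P$; hence $m = \bigotimes P \leq f(m)$. Combining, $m = f(m)$, so $m$ is a fixpoint. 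Finally, $m$ is the least fixpoint because any fixpoint $x$ satisfies $f(x) = x \leq x$, so $x \in P$, whence $m \leq x$. This establishes $x_\mu = m = \bigotimes\{x \in A \mid f(x) \leq x\}$.

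For the greatest fixpoint, I would run the symmetric argument with the set of post-fixpoints $Q = \{x \in A \mid x \leq f(x)\}$, take $M = \bigoplus Q$ (using completeness again, with $Q \ni \bot$ nonempty), and show by the dual reasoning — if $x \in Q$ then $x \leq M$ so $x \leq f(x) \leq f(M)$, hence $f(M)$ is an upper bound of $Q$ and $M \leq f(M)$; then $f(M) \in Q$ forces $f(M) \leq M$ — that $M$ is a fixpoint, and it is the greatest since every fixpoint lies in $Q$ and is therefore below $\bigoplus Q$. This gives $x_\nu = M = \bigoplus\{x \in A \mid x \leq f(x)\}$.

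There is no serious obstacle here; the only point demanding care is the bookkeeping of which direction of monotonicity is used at each step and ensuring the two defining inequalities ($f(m) \leq m$ and $m \leq f(m)$) are each derived from the appropriate instance — the first from $m$ being a lower bound of the pre-fixpoints, the second from $f(m)$ being shown to belong to that same set. One should also not forget to remark that $\bigotimes P$ and $\bigoplus Q$ exist precisely because $\mathfrak{A}$ is assumed to be a \emph{complete} lattice, which is where that hypothesis is used.
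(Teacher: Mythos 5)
Your proof is correct and is the standard Knaster--Tarski argument; note that the paper itself states this theorem only as cited background (Tarski's classical result) and gives no proof of its own, so there is nothing to compare against beyond confirming correctness. Each of your steps --- $f(m) \leq m$ from $m$ being a lower bound of the pre-fixpoint set, $m \leq f(m)$ from showing $f(m)$ itself lies in that set, minimality from every fixpoint being a pre-fixpoint, and the dual argument for $x_{\nu}$ --- is exactly the classical proof, with the use of completeness correctly isolated to the existence of $\bigotimes P$ and $\bigoplus Q$.
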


\paragraph*{The Modal Mu-Calculus.} With these concepts in mind we are now ready to present the modal mu-calculus in full as well as some properties of mu-formulae.

\begin{defn}
\emph{
The \emph{modal mu-calculus} (\lmu/ \cite{lmutcs-kozen}) has formulae $\phi$ built from a set $\mathrm{Var}$ of variables $Y,Z,...$ and a set $\Sigma$ of labels $a,b,...$ by the following grammar:
\begin{center}
$\phi ::= Z \mid \phi_1 \wedge \phi_2 \mid \phi_1 \vee \phi_2 \mid \modald{a} \phi_1 \mid \modalb{a} \phi_1 \mid \mu Z . \phi_1 \mid \nu Z . \phi_1$
\end{center}
Also, define the boolean constants as $\mufalse \overset{\underset{\mathrm{def}}{}}{=} \mu Z. Z $ and $\mutrue \overset{\underset{\mathrm{def}}{}}{=} \nu Z . Z$; and assume these abbreviations: $\modald{K} \phi_1$ for $\bigvee_{a \in K} \modald{a} \phi_1$ and $\modalb{K} \phi_1$ for $\bigwedge_{a \in K} \modalb{a} \phi_1$, where $K \subseteq \Sigma$, as well as $\left[ - \right] \phi_1$ for $\left[ \Sigma \right] \phi_1$ and $\left[ - K \right] \phi_1$ for $\left[ {{\Sigma} \setminus {K}} \right] \phi_1$, and similarly for the diamond modality.
}
\eos
\end{defn}

The meaning of the boolean and modal operators is as for HML. The two additional operators of \lmu/, namely $\mu Z . \phi$ and $\nu Z . \phi$ are, respectively, the minimal and maximal fixpoint operators of the logic. The denotations of mu-calculus formulae are given over the set of states of a system as follows:

\begin{defn}
\emph{
A \emph{mu-calculus model} $\mathfrak{M} = (\mathfrak{T},\mathcal{V})$ is an LTS $\mathfrak{T} = (S,T,\Sigma)$ together with a valuation $\mathcal{V} : \mathrm{Var} \rightarrow 2^{S}$. The denotation  $\denot{\phi}^{\mathfrak{T}}_{\mathcal{V}}$ of a formula $\phi$ in the model $\mathfrak{M}$ is a subset of $S$ given as follows (omitting the superscript $\mathfrak{T}$):
\begin{center}
\begin{tabular}{lll}
$\denot{Z}_{\mathcal{V}}$&$=$&$\mathcal{V}(Z)$ \\
$\denot{\phi_1 \wedge \phi_2}_{\mathcal{V}}$&$=$&$\denot{\phi_1}_{\mathcal{V}} \cap \denot{\phi_2}_{\mathcal{V}}$ \\
$\denot{\phi_1 \vee \phi_2}_{\mathcal{V}}$&$=$&$\denot{\phi_1}_{\mathcal{V}} \cup \denot{\phi_2}_{\mathcal{V}}$ \\
$\denot{\modald{a} \phi_1}_{\mathcal{V}}$ & $=$ & $\{ s \in S \mid \exists s' \in S . ~ s \xrightarrow{a} s' \wedge s' \in \denot{\phi_1}_{\mathcal{V}} \}$ \\
$\denot{\modalb{a} \phi_1}_{\mathcal{V}}$ & $=$ & $\{ s \in S \mid \forall s' \in S . ~ s \xrightarrow{a} s' \Rightarrow s' \in \denot{\phi_1}_{\mathcal{V}} \}$ \\
$\denot{\mu Z . \phi}_{\mathcal{V}}$ & $=$ & $\bigcap \{ Q \in 2^{S} \mid \denot{\phi}_{\mathcal{V}\left[Z:=Q\right]} \subseteq Q \}$ \\
$\denot{\nu Z . \phi}_{\mathcal{V}}$ & $=$ & $\bigcup \{ Q \in 2^{S} \mid Q \subseteq \denot{\phi}_{\mathcal{V}\left[Z:=Q\right]} \}$ \\
\end{tabular} 
\end{center}
where $\mathcal{V}\left[Z:=Q\right]$ is the valuation $\mathcal{V'}$ which agrees with $\mathcal{V}$ save that $\mathcal{V'}(Z) = Q$.
}
\eos
\end{defn}

Note that the denotation of the fixpoint operators is given by the Knaster-Tarski fixpoint theorem where $f$ is the mapping ${\denot{\phi}^{\mathfrak{T}}_{\mathcal{V}}}$, the order relation $\leq$ is the subset inclusion relation $\subseteq$, and $\bigotimes$ and $\bigoplus$ are $\bigcap$ and $\bigcup$, respectively.

Also, let us define the `subformulae' of a mu-calculus formula $\phi$; formally, the \emph{subformula set} $Sub(\phi)$ of an \lmu/ formula $\phi$ is given by the \emph{Fischer--Ladner closure} of \lmu/ formulae \cite{phdthesis-lange} in the following way:
\begin{center}
\begin{tabular}{lll}
$Sub(Z)$                     &  $=$ & $\{Z\}$ \\
$Sub(\phi_1 \wedge \phi_2)$  &  $=$ & $\{ \phi_1 \wedge \phi_2 \} \cup Sub(\phi_1) \cup Sub(\phi_2)$ \\
$Sub(\phi_1 \vee \phi_2)$    &  $=$ & $\{ \phi_1 \vee \phi_2 \} \cup Sub(\phi_1) \cup Sub(\phi_2)$ \\
$Sub(\modald{a} \phi_1)$   &  $=$ & $\{\modald{a} \phi_1 \} \cup Sub(\phi_1)$ \\
$Sub(\modalb{a} \phi_1)$   &  $=$ & $\{\modalb{a} \phi_1 \} \cup Sub(\phi_1)$ \\
$Sub(\mu Z . \phi_1)$          &  $=$ & $\{\mu Z . \phi_1 \} \cup Sub(\phi_1)$ \\
$Sub(\nu Z . \phi_1)$          &  $=$ & $\{\nu Z . \phi_1 \} \cup Sub(\phi_1)$ \\
\end{tabular} 
\end{center}

We finish this presentation of the mu-calculus with a note on its expressive power. One of the most interesting features of the mu-calculus is that many interesting temporal logics used for program verification can be embedded into \lmu/. The translation of CTL is straightforward, e.g., as shown in \cite{phdthesis-lange}; other mappings, such as the one for CTL$^{*}$ and thus for LTL as well, are not so simple but still possible \cite{tltolmu-dam}. The source of the immense expressiveness of the mu-calculus comes from the freedom to mix (or alternate) minimal and maximal fixpoint operators arbitrarily. In fact, Bradfield showed that this alternation defines a strict hierarchy \cite{alt-bradfield}, one the most remarkable results regarding the expressivity of \lmu/. These results, amongst many others, have made the mu-calculus one of the most important and studied logics in informatics.

\subsection{Logic Games for Verification}\label{ch2-lgv}
A logic game \cite{lg-jvb} is played by two `players', a ``Verifier" ($\exists$) and a ``Falsifier" ($\forall$), in order to verify the truth or falsity of a given property. In these games the Verifier tries to show that the property holds, whereas the Falsifier wants to refute such an assertion. Solving these games amounts to answering the question of whether the Verifier has a `strategy' to win all plays in the game. Usually the `board' where the game is played is a graph structure in which each position of the board belongs to only one of the two players. Due to this, the games are sequential since at any moment only one of them can play. A play can be of finite or infinite length, and the winner is determined by a set of winning conditions.

There are different questions that can be asked in a verification game. For instance, if a logic formula has at least one model (a satisfiability problem), if a model satisfies a temporal property (a model-checking problem), or whether two systems are equivalent with respect to some notion of equivalence (an equivalence-checking problem). In this paper we are interested in two problems: \emph{bisimulation} and \emph{model-checking} for concurrent systems with partial order semantics. There are some aspects of the games of our interest I should remark.

Traditionally the \emph{players} have been given names depending on the kind of verification game that is being played. For instance, in a bisimulation game the Verifier is called Duplicator whereas the Falsifier is called Spoiler. Similarly, in other kinds of games, the Verifier and Falsifier have been called, respectively, Eloise and Abelard, Player $\exists$ and Player $\forall$, Builder and Critic, Player $\Diamond$ and Player $\Box$, Proponent and Opponent, Eve and Adam, or simply Player I and Player II. In order to have a uniform notation, we choose to call them ``Eve'' and ``Adam'' regardless of the game they play. 

The \emph{boards} where the games are played also have different structures depending on the kind of verification problem that one wants to solve. In a bisimulation game the board is made up with the elements of the two systems that are being analysed, e.g., each position in the board is an element of the Cartesian product of the state sets of the two systems. On the other hand, in a model-checking game the board is composed of pairs of elements where one of the components is an element of the model being checked and the other component relates to the temporal property under consideration. These game features are formally defined whenever new bisimulation or model-checking games are presented.

Finally, notice that by playing a logic game the two players jointly define sequences of positions of the game board. Such sequences are called plays of the game. Let $\Gamma$ be the set of plays of a game and $\mathfrak{B}$ be a game board, i.e., a set of positions in the game. Then, a deterministic \emph{strategy} is a function $\lambda: {\Gamma \rightarrow \mathfrak{B}}$ from plays to positions of the game board, so that such strategies define the next move a player makes. But in some cases, in order for a player to make a move he or she only needs to know their current position. In these cases, their strategies can be defined as functions on the set of positions of the board, rather then on the set of plays of the game. These strategies are called `history-free'---positional or memoryless. Formally, a history-free strategy is a function $\lambda : \mathfrak{B} \rightarrow \mathfrak{B}$. Finally, a \emph{winning} strategy is a strategy that guarantees that the player that uses it can win all plays of the game. Here, we only deal with history-free winning strategies.

\subsubsection*{Bisimulation Games.}
Bisimulation games are formal and interactive characterisations of a family of equivalence relations called bisimulation equivalences. One of the simplest bisimulation equivalences is `bisimilarity', the equivalence relation induced by modal logic. This equivalence was defined, independently, by Johan van Benthem \cite{jvbthesis-jvb} while studying the semantics of modal logic, and a few years later by Milner and Park \cite{ccs80-milner,bis-park} while studying the behaviour of concurrent systems with interleaving semantics.

More precisely, a bisimulation game $\mathcal{G}(\mathfrak{T}_1,\mathfrak{T}_2)$ is a formal representation of a bisimulation equivalence $\sim_{eq}$ between two systems $\mathfrak{T}_1$ and $\mathfrak{T}_2$. Whereas Eve believes that $\mathfrak{T}_1 \sim_{eq} \mathfrak{T}_2$, Adam wants to show that $\mathfrak{T}_1 \not \sim_{eq} \mathfrak{T}_2$. All plays start in the initial position $(s_0,q_0)$ consisting of the initial states of the systems, and the players take alternating turns---although Adam always plays first and chooses where to play. Thus, in every round of the game Adam makes the first move in either system according to a set of rules, and then Eve must make a corresponding $\sim_{eq}$-equivalent move on the other system; the game can proceed in this way indefinitely. Thus, the plays of the game can be of finite or infinite length. All plays of infinite length are winning for Eve; in the case of plays of finite length, the player who cannot make a move loses the game. These winning conditions apply to all the bisimulation games we study here.

In concurrency, bisimulation games are often used to show that two concurrent systems interact equivalently (with respect to $\sim_{eq}$) with an arbitrary environment. Since the exact definition of a particular bisimulation equivalence $\sim_{eq}$ can be altered (strengthened or weakened) by the kinds of properties that one wants to analyse, then the set of rules for playing a bisimulation game can be different in each game. The best known bisimulation game for interleaving concurrency is the one that characterises `strong' bisimilarity (sb \cite{hmljacm-milner}), the bisimulation equivalence induced by HML.

However, in order to capture properties of partial order models rather than of interleaving ones, equivalences finer than strong bisimilarity have been defined as well as their associated bisimulation games. Two of the most important bisimulation games for partial order models are the ones that characterise `history-preserving' bisimilarity (hpb \cite{hpb-rav}) and `hereditary history-preserving' bisimilarity (hhpb \cite{open-nielsen}). Both (history-preserving) bisimulation equivalences, together with a deep study of their applications to concurrency, can be found in \cite{hpb-glabbeek}. Let us now introduce some concepts needed to present the bisimulation games that characterise sb, hpb, and hhpb.

\paragraph{Strong Bisimulation Games.}
A bisimulation game for strong bisimilarity is played on a board $\mathfrak{B}$ composed of pairs $(s,q)$ of states $s$ and $q$ of two systems $\mathfrak{T}_1$ and $\mathfrak{T}_2$, respectively. Such a pair is a position of the board $\mathfrak{B}$ and is called a `configuration' of the game. The position $(s_0,q_0)$, where $s_0$ and $q_0$ are the initial states of $\mathfrak{T}_1$ and $\mathfrak{T}_2$, is the initial configuration. Since the strategies of the game are history-free, then a strategy $\lambda$ is a partial function on ${\mathfrak{B}} \subseteq {S \times Q}$, where $S$ and $Q$ are the state sets of the two systems $\mathfrak{T}_1$ and $\mathfrak{T}_2$, respectively.

\begin{notn}
\emph{
Since a system has only one initial state, a bisimulation game can be unambiguously presented as either $\mathcal{G}(\mathfrak{T}_1,\mathfrak{T}_2)$ or $\mathcal{G}(s_0,q_0)$ if the two systems are obvious from the context. Also, since bisimulation games are symmetric, we omit the subscript in $\mathfrak{T}$ whenever referring to either system.
}
\eos
\end{notn}

\begin{defn}\label{ch2-sbgame}
\emph{
\textbf{(Strong bisimulation games)} Let ${(s,q)}$ be a configuration of the game $\mathcal{G}(\mathfrak{T}_1,\mathfrak{T}_2)$. There are two players, Adam and Eve, and Adam always plays first and chooses where to play. The equivalence relation $R_{sb}$ is a strong bisimulation, $\sim_{sb}$, between $\mathfrak{T}_1$ and $\mathfrak{T}_2$ if:
\begin{itemize}
\item (Base case) The initial configuration $(s_0,q_0)$ is in $R_{sb}$.
\item ($\sim_{sb}$ rule) If ${(s,q)}$ is in $R_{sb}$ and Adam chooses a transition in $\mathfrak{T}$, say a transition $s \xrightarrow{a} s'$ of $\mathfrak{T}_1$, then Eve must choose a transition in the other system (any $q \xrightarrow{a} q'$ of $\mathfrak{T}_2$ in this case), such that the new configuration $(s',q')$ is in $R_{sb}$ as well.
\end{itemize}
$\mathfrak{T}_1 \sim_{sb} \mathfrak{T}_2$ iff Eve has a winning strategy for the sb game  $\mathcal{G}(\mathfrak{T}_1,\mathfrak{T}_2)$.}
\eos
\end{defn}

This bisimulation game do not capture any information of partial order models that is not already present in their interleaving counterparts. For this reason, games for strong bisimilarity are considered games for interleaving concurrency rather than for partial order concurrency. In order to capture properties of partial order models, one has to recognise at least when two transitions of a system are independent and hence executable in parallel. This feature is captured by the following finer games.

\paragraph{History-Preserving Bisimulation Games.}
A game for history-preserving bisimilarity is a bisimulation game as presented before with a further `synchronisation' requirement on transitions. Such a synchronisation requirement makes the selection of transitions by Eve more restricted. Let us first define this notion of synchronisation on transitions before making a formal presentation of the game.

A possibly empty sequence of transitions $\pi = {\left[t_1,...,t_k \right]}$ is a \emph{run} of a system $\mathfrak{T}$. Let $\Pi_{\mathfrak{T}}$ be the set of runs of $\mathfrak{T}$ and $\varrho(\pi)$ be the last transition of $\pi$. Define $\epsilon = \varrho(\left[ ~ \right])$ and $s_0 = \sigma(\epsilon) = \tau(\epsilon)$, for an empty sequence $\left[ ~ \right]$. Given a run $\pi$ and a transition $t$, the sequence $\pi . t$ denotes the run $\pi$ extended with $t$. Let $\pi_1 \in {\Pi_{\mathfrak{T}_1}}$ and $\pi_2 \in {\Pi_{\mathfrak{T}_2}}$ for two systems $\mathfrak{T}_1$ and $\mathfrak{T}_2$. We say that the pair of runs ($\pi_1 . u , \pi_2 . v$) is \emph{synchronous} iff ${{({\varrho(\pi_1)},u)} \in I_1} \Leftrightarrow {{({\varrho(\pi_2)},v)} \in I_2}$, where $I_1$ and $I_2$ are the independence relations of $\mathfrak{T}_1$ and $\mathfrak{T}_2$, and the posets induced by $\pi_1 . u$ with $I_1$ and $\pi_2 . v$ with $I_2$ are isomorphic.\footnote{Given a run $\pi$ and an independence relation $I$, there is a poset $(E,\leq_{E})$ induced by $\pi$ with $I$ such that $E$ has as elements the event occurrences associated with the transitions in $\pi$ and where the partial order relation $\leq_{E}$ is defined by the event structure unfolding of the system whose independence relation is $I$.} By definition $(\epsilon,\epsilon)$ is synchronous. As it is more convenient to define hpb games on pairs of runs rather than on pairs of states, a configuration of the game will be a pair of runs. 

\begin{defn}\label{ch2-hpbgame}
\emph{
\textbf{(History-preserving bisimulation games)} Let $(\pi_1 , \pi_2)$ be a configuration of the game $\mathcal{G}(\mathfrak{T}_1,\mathfrak{T}_2)$. The initial configuration of the game is $(\epsilon,\epsilon)$. The relation $R_{hpb}$ is a history-preserving (hp) bisimulation, $\sim_{hhpb}$, between $\mathfrak{T}_1$ and $\mathfrak{T}_2$ iff it is a strong bisimulation relation between $\mathfrak{T}_1$ and $\mathfrak{T}_2$ and:
\begin{itemize}
\item (Base case) The initial configuration $(\epsilon,\epsilon)$ is in $R_{hpb}$.
\item ($\sim_{hpb}$ rule) If $(\pi_1,\pi_2)$ is in $R_{hpb}$ and Adam chooses a transition $u$ in either system, say in $\mathfrak{T}_1$, such that $u = \tau(\varrho(\pi_1)) \xrightarrow{a} s'$, then Eve must choose a transition $v$ in the other system such that $v = \tau(\varrho(\pi_2)) \xrightarrow{a} q'$ and the new configuration $(\pi_1 . u , \pi_2 . v)$ is synchronous, i.e., ${(\pi_1 . u , \pi_2 . v)}$ is in $R_{hpb}$ as well.
\end{itemize}
$\mathfrak{T}_1 \sim_{hpb} \mathfrak{T}_2$ iff Eve has a winning strategy for the hpb game  $\mathcal{G}(\mathfrak{T}_1,\mathfrak{T}_2)$.}
\eos
\end{defn}

\paragraph{Hereditary History-Preserving Bisimulation Games.}
A bisimulation game for hereditary history-preserving bisimilarity is an hpb game extended with backtracking moves. These backtracking moves are restricted to transitions that are `backwards enabled'. More specifically, let $\pi(i)$ be the $i$-th transition in $\pi$. Given a run $\pi = \left[ t_1,...,t_k \right] $, a transition $\pi(i)$ is backwards enabled if, and only if, it is independent of all transitions $t_j$ that appear after it in $\pi$, i.e., iff ${\forall t_j} \in {\{\pi(i+1),...,\pi(k) \} . ~{{\pi(i)} \I {t_j}}} $. 

This definition captures the fact that backwards enabled transitions are the terminal elements of the partial order induced by the independence relation $I$ on the transitions in $\pi$. Now, let $\pi - \pi(i)$ be the sequence of transitions $\pi$ without its $i$-th element $\pi(i)$. It should be clear that if $\pi(i)$ is backwards enabled, then the partial order induced by $I$ on those transitions in $\pi - \pi(i)$ is just the same partial order induced by $I$ on $\pi$ without the terminal element or transition $\pi(i)$. Formally, an hhpb game is defined as follows:

\begin{defn}\label{ch2-hhpbgame}
\emph{
\textbf{(Hereditary history-preserving bisimulation games)} Let $(\pi_1 , \pi_2)$ be a configuration of the game $\mathcal{G}(\mathfrak{T}_1,\mathfrak{T}_2)$. The initial configuration of the game is $(\epsilon,\epsilon)$. The equivalence relation $R_{hhpb}$ is a hereditary history-preserving (hhp) bisimulation, $\sim_{hhpb}$, between $\mathfrak{T}_1$ and $\mathfrak{T}_2$ iff it is a history-preserving bisimulation between $\mathfrak{T}_1$ and $\mathfrak{T}_2$ and:
\begin{itemize}
\item (Base case) The initial configuration $(\epsilon,\epsilon)$ is in $R_{hhpb}$.
\item ($\sim_{hhpb}$ rule) If $(\pi_1,\pi_2)$ is in $R_{hhpb}$ and Adam deletes, say from $\pi_1$, a transition $\pi_1(i)$ that is backwards enabled, then Eve must delete the transition $\pi_2(i)$ from the history of the game in the other system, provided that $\pi_2(i)$ is also backwards enabled and that the new configuration $(\pi_1 - \pi_1(i) , \pi_2 - \pi_2(i))$ is in $R_{hhpb}$ as well.
\end{itemize}
$\mathfrak{T}_1 \sim_{hhpb} \mathfrak{T}_2$ iff Eve has a winning strategy for the hhpb game  $\mathcal{G}(\mathfrak{T}_1,\mathfrak{T}_2)$.
}
\eos
\end{defn}

Unlike the game for $\sim_{sb}$, which is a game for interleaving concurrency, both history-preserving games presented here can capture properties of partial order models and differentiate them from their interleaving counterparts. The simplest example is the case of two processes $a \parallel b$ and $a.b + b.a$, which are equivalent from an interleaving viewpoint, but different if considering partial order semantics.

\subsubsection*{Model-Checking Games.}
Model-checking games \cite{mcgames-gradel,games-wal}, also called Hintikka evaluation games, are logic games played in a formula $\phi$ and a mathematical model $\mathfrak{M}$. In a game $\mathcal{G}(\mathfrak{M},\phi)$ the goal of Eve is to show that $\mathfrak{M} \models \phi$, while Adam believes that $\mathfrak{M} \not \models \phi$. In program verification, most usually $\phi$ is a modal or a temporal formula and $\mathfrak{M}$ is a Kripke structure or an LTS, and the two players play the game $\mathcal{G}(\mathfrak{M},\phi)$ by picking single elements of $\mathfrak{M}$, according to the game rules defined by $\phi$. For now, let us consider model-checking games played on interleaving models and on formulae given as mu-calculus specifications.

The game we are about to describe is the local model-checking procedure for the mu-calculus as defined by Stirling \cite{localmc-stirling}. It is a game interpretation of the tableau technique for mu-calculus model-checking introduced by Stirling and Walker \cite{lmctab-stirling}. Although the game is naturally played on interleaving models of concurrency, it can also be used to model-check partial order models, such as Petri nets, if one considers their one-step interleaving semantics, e.g., as in \cite{localtcs-bradfield}.

\paragraph{Local Model-Checking Games in the Mu-Calculus.}
A local model-checking game $\mathcal{G}(\mathfrak{M},\phi)$ is
played on a mu-calculus model $\mathfrak{M} = (\mathfrak{T},\mathcal{V})$, where
$\mathfrak{T} = (S,s_0,T,\Sigma)$ is an interleaving system, and on a mu-calculus formula
$\phi$. Since the game is local, this is, it answers to the question of whether the initial state $s_0$ satisfies $\phi$, then it can also be presented as $\mathcal{G}_{\mathfrak{M}}(s_0,\phi)$, or even as $\mathcal{G}(s_0,\phi)$ whenever the model $\mathfrak{M}$ is clear from the context. The board in which the game is played has the form
$\mathfrak{B} \subseteq S \times Sub(\phi)$, where $Sub (\phi)$ is the set of subformulae of a mu-calculus formula $\phi$ as defined by the Fischer--Ladner closure of mu-calculus formulae.

A play is a possibly infinite sequence of configurations $C_0, C_1, ...$; each $C_i = (s,\psi)$ is an element of the board $\mathfrak{B}$. i.e., it is a position of the game. Every play starts in $C_0 = (s_0,\phi)$, and proceeds according to the rules of the game, given below. Two deterministic rules control the unfolding of fixpoint operators. Moreover, given a configuration $(s,\phi)$, the rules for $\vee$ and $\wedge$ make, respectively, Eve and Adam choose a next configuration $(s,\psi)$ which is determined by the subformula set of $\phi$. Similarly, the rules for $\modald{~}$ and $\modalb{~}$ make, respectively, Eve and Adam choose a next configuration $(q,\psi)$ which is determined by those transitions $t$ such that $s = \sigma(t)$ and $q = \tau(t)$. These conditions can be captured in the following way. Let $(s,\phi)$ be the current configuration of the game; the next configuration of the game is defined by the following  game rules:
\begin{itemize}
\item if $\phi = \mu Z . \varphi$ (resp. $\phi = \nu Z . \varphi$), then Eve (resp. Adam) replaces $\mu Z . \varphi$ (resp. $\nu Z . \varphi$) by its associated variable $Z$ and the next configuration is $(s,Z)$.
\item if $\phi = Z$ such that $\psi = \mu Z . \varphi$ (resp. $\psi = \nu Z . \varphi$) for some formula $\psi$, then Eve (resp. Adam) unfolds the fixpoint and the next configuration is $(s,\varphi)$.
\item if $\phi = \psi_1 \vee \psi_2$ (resp. $\phi = \psi_1 \wedge \psi_2$), then Eve (resp. Adam) chooses some $\psi_i$, for $i \in \{1,2\}$, and the next configuration is $(s,\psi_i)$.
\item if $\phi = \modald{a} \psi$ (resp. $\phi = \modalb{a} \psi$), then Eve (resp. Adam) chooses a transition $s \xrightarrow{a} s'$ and the next configuration is $(s',\psi)$.
\end{itemize}

Finally the following rules are the winning conditions that determine a unique winner for every finite or infinite play $C_0, C_1, ... $ in a game $\mathcal{G}(s_0,\phi)$. Adam wins a finite play $C_0, C_1, ..., C_k$ or an infinite play $C_0, C_1, ...$ iff:
\begin{enumerate}
\item $C_k = (s,Z)$ and $s \not \in \mathcal{V}(Z)$.
\item $C_k = (s,\modald{a} \psi)$ and $\{{s'} \mid {s \xrightarrow{a} s'} \} = \emptyset$.
\item The play is of infinite length and there exists a mu-calculus formula $Z$ which is both the least fixpoint of some subformula $\mu Z . \psi$ and the syntactically outermost variable in $\phi$ that occurs infinitely often in the game.
\end{enumerate}

Dually, Eve wins a finite play $C_0, C_1, ..., C_n$ or an infinite play $C_0, C_1, ...$ iff:
\begin{enumerate}
\item $C_k = (s,Z)$ and $s \in \mathcal{V}(Z)$.
\item $C_k = (s,\modalb{a} \psi)$ and $\{{s'} \mid {s \xrightarrow{a} s'} \} = \emptyset$.
\item The play is of infinite length and there exists a mu-calculus formula $Z$ which is both the greatest fixpoint of some subformula $\nu Z . \psi$ and the syntactically outermost variable in $\phi$ that occurs infinitely often in the game.
\end{enumerate}

\noindent Then $s_0 \models \phi$ iff Eve has a winning strategy in the model-checking game $\mathcal{G}(s_0,\phi)$. 

\section{Mu-Calculi with Partial Order Semantics}\label{mucalculi}

In this section we study the underlying mathematical properties of the partial order models of concurrency presented before, and show that the behaviour of these systems can be captured in a uniform way by two simple and general dualities of local behaviour. We use these dualities of local behaviour to define a number of mu-calculi, or fixpoint modal logics, with partial order semantics. This work delivers a logical approach to defining a notion of equivalence for concurrency tailored to be abstract or model independent, setting the grounds for a logic-based framework for studying different models for concurrency uniformly.

\subsection{Local Dualities in Partial Order Models} \label{dual}

We present two ways in which concurrency can be regarded as a dual concept to \emph{conflict} and \emph{causality}, respectively. These two ways of observing concurrency will be called \emph{immediate concurrency} and \emph{linearised concurrency}. Whereas immediate concurrency is dual to conflict, linearised concurrency is dual to causality. These local dualities were first defined in \cite{fos09-gut}.

The intuitions behind these two observations are the following. Consider a concurrent system and any two different transitions $t$ and $t'$ with the same source node, i.e., $\sigma(t) =  \sigma(t')$. These two transitions are either immediately concurrent, and therefore independent, i.e., $(t,t') \in I$, or dependent, in which case they must be in conflict. Similarly, consider any two transitions $t$ and $t'$ where $\tau(t) = \sigma(t')$. Again, the pair of transitions $(t,t')$ can either belong to $I$, in which case the two transitions are concurrent, yet have been linearised, or the pair does not belong to $I$, and therefore the two transitions are causally dependent. In both cases, the two conditions are exclusive and there are no other possibilities.

The local dualities just described are formally defined in the following way:
\begin{center}
\begin{tabular}{lcl}
$\otimes$ & $\overset{\underset{\mathrm{def}}{}}{=} $ &$\{(t,t') \in T \times T \mid \sigma(t) = \sigma(t') \wedge t \I t' \}$ \\
$\#$ & $\overset{\underset{\mathrm{def}}{}}{=} $ &$\{ (t,t') \in T \times T \mid \sigma(t) = \sigma(t') \wedge \neg ( t \I t' ) \}$ \\
$\ominus$ & $\overset{\underset{\mathrm{def}}{}}{=} $ &$\{ (t,t') \in T \times T \mid \tau(t) = \sigma(t') \wedge t \I t' \}$ \\
$\leq$ & $\overset{\underset{\mathrm{def}}{}}{=} $ &$\{ (t,t') \in T \times T \mid \tau(t) = \sigma(t') \wedge \neg ( t \I t' ) \}$ \\
\end{tabular} 
\end{center}

Notice the dual conditions between $\otimes$ and $\#$ and between $\ominus$ and $\leq$ with respect to the independence relation, if assuming valid the locality requirement.

\begin{defn}
\emph{
Let $t$ and $t'$ be two transitions. We say that $t$ and $t'$ are \emph{immediately concurrent} iff $(t,t') \in \otimes$, \emph{in conflict} iff $(t,t') \in \#$, \emph{linearly concurrent} iff $(t,t') \in \ominus$, or \emph{causally dependent} iff ${(t,t')} \in {\leq}$.
}
\eos
\end{defn}

\subsubsection*{Sets in a Local Context.}
The relation $\otimes$ defined on pairs of transitions, can be used to recognise \emph{sets} where every transition is independent of each other and hence can all be executed concurrently. Such sets are said to be \emph{conflict-free} and belong to the same `trace'. 

\begin{defn}\label{cfsets}
\emph{A \emph{conflict-free} set of transitions $P$ is a set of transitions with the same source node, where $t \otimes t'$ for each two elements in $P$.}
\eos
\end{defn}

Notice that by definition empty sets and singleton sets are 
trivially conflict-free. Given a system $\mathfrak{T}$, all conflict-free
sets of transitions at a state $s$ can be defined locally from
the \emph{maximal set} of transitions $\mathfrak{X}(s)$, where
$\mathfrak{X}(s)$ is the set of all transitions $t$ such that
$\sigma(t) = s$. We simply write $\mathfrak{X}$ when the state $s$ is 
defined elsewhere or is implicit from the context. 
Moreover, all maximal sets and conflict-free sets of
transitions are fixed given a particular system $\mathfrak{T}$. 

\begin{defn}
\emph{
Given a system $\mathfrak{T}$, a \emph{support} set $R$ in $\mathfrak{T}$ is either a maximal set of transitions in $\mathfrak{T}$ or a non-empty conflict-free set of transitions in $\mathfrak{T}$.
}
\eos
\end{defn}

Given a system $\mathfrak{T}$, the set of all its support sets is denoted by $\mathfrak{P}$. As can be seen from the definition, support sets can be of two kinds, and one of them provide us with a way of doing local reasoning. More precisely, doing local reasoning on sets of independent transitions becomes possible when considering conflict-free sets since they can be decomposed into smaller sets, where every transition is, as well, independent of each other. Using standard notation on sets, we write $P_1 \uplus P_2$ to denote the disjoint union of sets $P_1$ and $P_2$. If both $P_1$ and $P_2$ are support sets then we have that $P_1 \neq \emptyset$ and $P_2 \neq \emptyset$, and hence $P_1 \uplus P_2 \neq \emptyset$.

\begin{defn}
\emph{
Given a support set $R$, a \emph{complete trace} $M$ of $R$, denoted by $M \sqsubseteq R$, is a support set $M \subseteq R$ such that $\neg \exists t \in {R \setminus M} . ~ \forall t' \in M . ~ t \otimes t'$.
}
\eos
\end{defn}

Note that if $R$ is a conflict-free support set, then $M=R$. Otherwise, $R$ necessarily is a maximal set $\mathfrak{X}$ and $M$ must be a proper subset of $R$. Therefore, if $R = \mathfrak{X}$, then the sets $M$ such that $M \sqsubseteq \mathfrak{X}$ are the biggest conflict-free support sets that can be recognised in a particular state $s$ of a system $\mathfrak{T}$; we call them \emph{maximal traces}. Since all complete and maximal traces are support sets, then they are also fixed and computable given a particular system $\mathfrak{T}$.

\subsection{Fixpoint Logics with Partial Order Models} \label{synsem}

The local dualities and sets defined in the previous section can be used to build the semantics of a number of fixpoint modal logics which capture that behaviour of partial order models that is not present in interleaving ones. As a consequence, these logics may be more adequate languages for expressing properties of partial order systems such as Petri nets, event structures, or TSI.

The naturality of these logics is reflected by the bisimulation equivalences they induce, since in several cases they either coincide with standard bisimilarities for concurrency, e.g., with sb or with hpb, or have better decidability properties than other already known bisimulation equivalences for partial order models, e.g., with respect to hhpb, which is undecidable even in finite systems. 

The semantics of the logics we define here are based on the recognition of the dualities that can be defined in a partial order model for concurrency. The logic we introduce here is called Trace Fixpoint Logic (\tfl/). As defined by its semantics, \tfl/ captures the duality between concurrency and causality by refining the usual modal operator of \lmu/. On the other hand, the duality between concurrency and conflict is captured by a second-order modality that recognises maximal traces in the system. Such a modality enjoys beautiful mathematical properties; in particular, not only it is a monotonic, but also an \emph{idempotent} operator, which informally means that it delivers as much information as possible whenever used. \tfl/ is a more simple, purely modal logic for reasoning about partial order systems alternative to the fixpoint logic introduced in \cite{fos09-gut}.

\subsubsection*{Process Spaces.} In order to define the semantics of \tfl/ we construct an intermediate structure into which any of the systems we consider here can be mapped. Such a structure determines a `space of processes', which are simple abstract entities representing pieces of isolated (i.e., local and independent) behaviour.

\begin{defn}
\emph{
Let $\mathfrak{T} = (S,s_0,T,\Sigma,I)$ be a system. A \emph{Process Space} $\mathbb{S}$ is the lattice $\mathfrak{P} \times \mathfrak{A}$, where $\mathfrak{P}$ is the set of support sets of $\mathfrak{T}$ and $\mathfrak{A}$ is the set of transitions $T \cup \{t_{\epsilon}\}$, such that $t_{\epsilon}$ is the empty transition satisfying that for all $t \in T$, if $s_0 = \sigma(t)$ then $t_{\epsilon} \leq t$. A tuple $(R,t) \in \mathbb{S}$ is called a process, and the initial process of $\mathbb{S}$ is the tuple $(\mathfrak{X}_0,t_{\epsilon})$, where $\mathfrak{X}_0 = \mathfrak{X}(s_0)$. 
}
\eos
\end{defn}

Notice that for any process it is always possible to infer the particular state in $\mathfrak{T}$ to which such a process relates. Since a process does not represent explicitly the states of a systems we say that a process space is `stateless'. Also, let $\mathcal{X}$ be the subset of $\mathfrak{P}$ that contains only maximal sets $\mathfrak{X}$ and maximal traces $M$. Call $\mathfrak{S} = \mathcal{X} \times \mathfrak{A}$ a \emph{stateless maximal} process space.

\subsubsection*{Trace Fixpoint Logic.} Having defined local dualities in partial order models and a process space upon them, we are now ready to present a a modal logic that is sensitive to causal information and that allows for reasoning on the traces of a concurrent system with a partial order semantics.

\begin{defn}
\emph{
\emph{Trace Fixpoint Logic} (\tfl/) has formulae $\phi$ built from a set $\mathrm{Var}$ of variables $Y,Z,...$ and a set $\Sigma$ of labels $a,b,...$ by the following grammar:
\begin{center}
$\phi ::= Z \mid \neg \phi_1 \mid \phi_1 \wedge \phi_2 \mid \modald{a}_{c} \phi_1 \mid \modald{a}_{nc} \phi_1 \mid \modald{\otimes} \phi_1 \mid \mu Z . \phi_1$
\end{center}
where $Z \in \mathrm{Var}$ and $\mu Z . \phi_1$ has the restriction that any free occurrence of $Z$ in $\phi_1$ must be within the scope of an even number of negations. 
Dual boolean, modal, and fixpoint operators are defined in the usual way: 
\begin{center}
\begin{tabular}{lcl}
$\phi_1 \vee \phi_2 $&$\overset{\underset{\mathrm{def}}{}}{=}$&$ \neg (\neg \phi_1 \wedge \neg \phi_2)$\\
$\modalb{a}_{c} \phi_1 $&$\overset{\underset{\mathrm{def}}{}}{=}$&$ \neg \modald{a}_{c} \neg \phi_1 $\\
$\modalb{a}_{nc} \phi_1 $&$\overset{\underset{\mathrm{def}}{}}{=}$&$ \neg \modald{a}_{nc} \neg \phi_1 $\\
$\modalb{\otimes} \phi_1 $&$\overset{\underset{\mathrm{def}}{}}{=}$&$ \neg \modald{\otimes} \neg \phi_1$\\
$\nu Z . \phi_1 $&$\overset{\underset{\mathrm{def}}{}}{=}$&$ \neg \mu Z . \neg \phi_1 \left[ \neg Z / Z \right]$
\end{tabular}
\end{center}
Boolean constants and other abbreviations are defined as for \lmu/. Moreover, `plain' modalities, i.e., HML modalities, can be represented as follows:
\begin{center}
\begin{tabular}{lcl}
$\modald{a} \phi_1 $&$\overset{\underset{\mathrm{def}}{}}{=}$&$ \modald{a}_{c} \phi_1 \vee \modald{a}_{nc} \phi_1$ \\ $\modalb{a} \phi_1 $&$\overset{\underset{\mathrm{def}}{}}{=}$&$ \modalb{a}_{c} \phi_1 \wedge \modalb{a}_{nc} \phi_1$
\end{tabular}
\end{center}
}
\eos
\end{defn}

We say that a formula is in `positive form' if negations are applied only to variables. Any formula built with the language given above, together with the dual operators, can be converted into positive form; it is moreover in `positive normal form' if there are no clashes of bound variables. Again, any formula can be converted into an equivalent one in positive normal form. Then, without loss of generality, hereafter we only consider formulae in positive normal form.

\begin{defn}\label{semtfl}
\emph{
A \emph{\tfl/ model} $\mathfrak{M}$ is a system $\mathfrak{T} = (S,s_0,T,I,\Sigma)$ together with a valuation $\mathcal{V} : \mathrm{Var} \rightarrow 2^{\mathfrak{S}}$, where $\mathfrak{S} = \mathcal{X} \times \mathfrak{A}$ is the stateless maximal process space associated with $\mathfrak{T}$. The denotation $\denot{\phi}^{\mathfrak{T}}_{\mathcal{V}}$ of a formula $\phi$ in the model $\mathfrak{M} = (\mathfrak{T},\mathcal{V})$ is a subset of $\mathfrak{S}$, given by the following rules (omitting the superscript $\mathfrak{T}$):
\begin{center}
\begin{tabular}{lll}
$\denot{Z}_{\mathcal{V}}$&$=$&$\mathcal{V}(Z)$ \\
$\denot{\neg \phi_1}_{\mathcal{V}}$&$=$&$\mathfrak{S} - \denot{\phi_1}_{\mathcal{V}}$\\
$\denot{\phi_1 \wedge \phi_2}_{\mathcal{V}}$&$=$&$\denot{\phi_1}_{\mathcal{V}} \cap \denot{\phi_2}_{\mathcal{V}}$ \\
$\denot{\modald{a}_{c} \phi_1}_{\mathcal{V}}$ & $=$ & $\{ (R,t) \in \mathfrak{S} \mid \exists r \in R . ~ t \leq r \wedge (\mathfrak{X},r) \in \denot{\phi_1}_{\mathcal{V}} \}$ \\
$\denot{\modald{a}_{nc} \phi_1}_{\mathcal{V}}$ & $=$ & $\{ (R,t) \in \mathfrak{S} \mid \exists r \in R . ~ t \ominus r \wedge (\mathfrak{X},r) \in \denot{\phi_1}_{\mathcal{V}} \} $ \\
$\denot{\modald{\otimes} \phi_1}_{\mathcal{V}}$ &    $=$ & $\{ (R,t) \in \mathfrak{S} \mid \exists M \in \mathcal{X} . ~ M \sqsubseteq R \wedge (M,t) \in \denot{\phi_1}_{\mathcal{V}} \}$ \\
\end{tabular} 
\end{center}
such that $a = \delta(r)$ and $\mathfrak{X}$ is the maximal set at $\tau(r)$. Also, given the usual restriction on free occurrences of variables imposed in order to obtain monotone operators in the complete lattice $(2^{\mathfrak{S}},\subseteq)$, the powerset lattice of $\mathfrak{S}$, it is possible to define the denotation of the least fixpoint operator in the standard way according to the Knaster-Tarski fixpoint theorem: 
\begin{center}
\begin{tabular}{lll}
$\denot{\mu Z . \phi}_{\mathcal{V}}$ & $=$ & $\bigcap \{ Q \subseteq \mathfrak{S} \mid \denot{\phi}_{\mathcal{V}\left[Z:=Q\right]} \subseteq Q \}$ \\
\end{tabular} 
\end{center}
where $\mathcal{V}\left[Z:=Q\right]$ is the valuation $\mathcal{V'}$ which agrees with $\mathcal{V}$ save that $\mathcal{V'}(Z) = Q$. Since positive normal form is assumed henceforth, the semantics of the dual boolean, modal, and fixpoint operators can be given in the usual way. Finally, the \emph{satisfaction} relation $\models$ is defined in the usual way: given a process $P$ and a formula $\phi$, we have that $P \models \phi $ iff $P \in \denot{\phi}$.
}
\eos
\end{defn}

Informally, the meaning of the basic \tfl/ operators is the following: boolean constants and operators are interpreted in the usual sense; the semantics of the `causal' diamond modality $\modald{a}_c \phi_1$ (resp. of the `non-causal' diamond modality $\modald{a}_{nc} \phi_1$) is that a process $(R,t)$ satisfies $\modald{a}_c \phi_1$  (resp. $\modald{a}_{nc} \phi_1$) if it can perform an $a$-labelled action $r$ that causally depends on $t$ (resp. that is independent of $t$) and move through $r$ into a process where $\phi_1$ holds; and dually for the causal and non-causal box modalities $\modalb{a}_c \phi_1$ and $\modalb{a}_{nc} \phi_1$. The modality $\modald{\otimes} \phi_1$ provides local second-order power on conflict-free sets of transitions, i.e., on sets of independent transitions. This modality allows one to restrict, locally, the behaviour of a system to those executions that can actually happen concurrently at a given state. Finally, the meaning of the fixpoint operators is as for \lmu/.

\begin{prop}\label{idem}
$\modald{\otimes}$ is an idempotent operator.
\end{prop}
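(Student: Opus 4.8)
The plan is to show that $\denot{\modald{\otimes}\modald{\otimes}\phi_1}_{\mathcal{V}} = \denot{\modald{\otimes}\phi_1}_{\mathcal{V}}$ for every formula $\phi_1$ and every valuation $\mathcal{V}$, by unfolding the semantic clause for $\modald{\otimes}$ twice and comparing the resulting set-theoretic conditions. Since every process space component $R$ is a support set — hence either a conflict-free set or a maximal set $\mathfrak{X}$ — the key combinatorial fact I would isolate first is: if $M \sqsubseteq R$ and $N \sqsubseteq M$, then $N = M$; and conversely, if $N \sqsubseteq R$ then taking $M = N$ witnesses both $M \sqsubseteq R$ and $N \sqsubseteq M$. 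Concretely, if $R$ is conflict-free then $M \sqsubseteq R$ forces $M = R$ (noted after the definition of complete trace), so the only complete trace of $R$ is $R$ itself, and likewise the only complete trace of $M$ is $M$. If $R = \mathfrak{X}$ is a maximal set, then each $M$ with $M \sqsubseteq R$ is a maximal trace, in particular conflict-free, so again its unique complete trace is $M$ itself. In all cases, complete-trace refinement is idempotent at the level of support sets: the set $\{M : M \sqsubseteq R\}$ is closed under taking complete traces, and taking a complete trace of a complete trace of $R$ yields exactly a complete trace of $R$.

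Next I would push this through the semantics. By definition,
\[
\denot{\modald{\otimes}\modald{\otimes}\phi_1}_{\mathcal{V}}
= \{(R,t)\in\mathfrak{S} \mid \exists M\in\mathcal{X}.\ M\sqsubseteq R \wedge (M,t)\in\denot{\modald{\otimes}\phi_1}_{\mathcal{V}}\},
\]
and $(M,t)\in\denot{\modald{\otimes}\phi_1}_{\mathcal{V}}$ means $\exists N\in\mathcal{X}.\ N\sqsubseteq M \wedge (N,t)\in\denot{\phi_1}_{\mathcal{V}}$. For the inclusion $\denot{\modald{\otimes}\modald{\otimes}\phi_1}_{\mathcal{V}}\subseteq\denot{\modald{\otimes}\phi_1}_{\mathcal{V}}$, given such witnesses $M,N$ I note that $N \sqsubseteq M$ and $M \sqsubseteq R$ imply $N \sqsubseteq R$ — this is the transitivity-type step, and it follows because $N$ is conflict-free, $N\subseteq M\subseteq R$, and the maximality condition defining $N\sqsubseteq M$ upgrades to the one defining $N\sqsubseteq R$ once one checks that no transition of $R\setminus N$ can be $\otimes$-related to all of $N$ (using that the only such candidates lie in $M\setminus N$, already excluded, and transitions in $R\setminus M$, excluded by $M\sqsubseteq R$ together with conflict-freeness reasoning). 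Since also $N\in\mathcal{X}$, the pair $(N,t)$ witnesses $(R,t)\in\denot{\modald{\otimes}\phi_1}_{\mathcal{V}}$. For the reverse inclusion, given a witness $N\sqsubseteq R$ with $(N,t)\in\denot{\phi_1}_{\mathcal{V}}$, I take $M := N$: then $M\sqsubseteq R$, and since $N$ is conflict-free its unique complete trace is $N$ itself so $N\sqsubseteq M$ holds trivially, giving $(M,t)\in\denot{\modald{\otimes}\phi_1}_{\mathcal{V}}$ and hence $(R,t)\in\denot{\modald{\otimes}\modald{\otimes}\phi_1}_{\mathcal{V}}$.

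The main obstacle I anticipate is the transitivity step $N\sqsubseteq M \wedge M\sqsubseteq R \Rightarrow N\sqsubseteq R$: the definition of $\sqsubseteq$ is not literally transitive as a relation on arbitrary support sets, so the argument must genuinely use that complete traces are conflict-free and that $\mathcal{X}$ ranges only over maximal sets and maximal traces. I would handle this by a short case split on whether $R$ is conflict-free (then $M = R$, reducing to $N\sqsubseteq R$ immediately) or $R = \mathfrak{X}$ (then both $M$ and $N$ are maximal traces of the same state, and I would argue that a maximal trace contained in another maximal trace must be equal — or, more carefully, that $N$, being $\sqsubseteq$-maximal inside $M$ and $M$ being $\sqsubseteq$-maximal inside $\mathfrak{X}$, is already $\sqsubseteq$-maximal inside $\mathfrak{X}$). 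Once that lemma is in place the rest is bookkeeping. I would also remark explicitly that monotonicity of $\modald{\otimes}$ (needed for the statement to make sense alongside fixpoints, and itself routine from the existential clause) combined with idempotence is what justifies the informal phrase ``delivers as much information as possible''.
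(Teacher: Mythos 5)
Your proof is correct and uses essentially the same argument as the paper's: both reduce to the facts that every complete trace is conflict-free and that a conflict-free support set has itself as its unique complete trace, so that $\modald{\otimes}$ stabilises after a single application. The ``transitivity'' of $\sqsubseteq$ that you worry about in your middle paragraph is a non-issue --- since $M$ is conflict-free, $N \sqsubseteq M$ forces $N = M$, and $N = M \sqsubseteq R$ is all you need (the direct exclusion argument you sketch there, ruling out transitions of $R \setminus M$ via $M \sqsubseteq R$, would not by itself exclude a transition $\otimes$-related to all of a \emph{proper} subset $N$ of $M$, but that case never arises).
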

\begin{proof}
Let $H = \denot{\modald{\otimes}\phi}$ and $G = \denot{\phi}$. $G$ can be split into two disjoint sets of stateless maximal processes $G^{\otimes} \uplus G^{\#}$ (called simply processes in the sequel), where the former is the set of processes in $G$ whose support sets are conflict-free, and the latter those processes whose support sets are not, i.e., $G \setminus G^{\otimes}$. Similarly, $H$ can be represented as the disjoint union of sets of processes $H^{\otimes}$ and $H^{\#}$. 

Notice that $H^{\otimes} = G^{\otimes}$ because for any process $P_{H^{\otimes}} = (R,t)$ in $H^{\otimes}$ there is a process $P_{G^{\otimes}} = (R,t)$ in $G^{\otimes}$, since $R \sqsubseteq R$ for any conflict-free support set $R$. However, this is clearly not the case for the processes in $G^{\#}$ and $H^{\#}$, because there may be a process $P_{G^{\#}}$ in $G^{\#}$ (whose support set is necessarily maximal and not conflict-free) such that there is no process $P_{G^{\otimes}}$ in $G^{\otimes}$ to which the support set of $P_{G^{\#}}$ can be related using $\sqsubseteq$. Therefore, whereas $G^{\#}$ would contain such a processes, $H^{\#}$ would not. Similarly, there may be new processes in $H^{\#}$ whose support sets can be related to support sets of processes in $G^{\otimes}$ (and of course in $H^{\otimes}$ as well) using $\sqsubseteq$, but that were not in $G^{\#}$.

Now, let $F = F^{\otimes} \uplus F^{\#} = \denot{\modald{\otimes} \modald{\otimes} \phi}$. For the same reason as before, $F^{\otimes} = H^{\otimes}$. However, in this case $F^{\#} = H^{\#}$ since now for every process in both $F^{\#}$ and $H^{\#}$, there must be a process in $H^{\otimes}$ (and of course in $F^{\otimes}$) to which their support sets can be related using $\sqsubseteq$. So, since applying $\modald{\otimes}$ only once always leads to a fixpoint, then $\modald{\otimes}$ is an idempotent operator. 
\qed
\end{proof}

\begin{fact}
$\modald{\otimes}$ is not an extensive operator.
\end{fact}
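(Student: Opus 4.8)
The plan is to refute extensiveness by a single, minimal counterexample rather than by a structural argument. Recall that an operator $f$ on a lattice is \emph{extensive} when $x \le f(x)$ for every $x$; the lattice on which $\modald{\otimes}$ acts is the powerset lattice $(2^{\mathfrak{S}},\subseteq)$, so the assertion to be contradicted is that $\denot{\phi}_{\mathcal{V}} \subseteq \denot{\modald{\otimes}\phi}_{\mathcal{V}}$ for all formulae $\phi$ and valuations $\mathcal{V}$. Since $\denot{Z}_{\mathcal{V}} = \mathcal{V}(Z)$ may be \emph{any} subset of $\mathfrak{S}$, it is enough to produce one system and one set $G \subseteq \mathfrak{S}$ such that, reading $G$ as $\mathcal{V}(Z)$, we get $G \not\subseteq \denot{\modald{\otimes}Z}_{\mathcal{V}}$. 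This reuses exactly the decomposition $G = G^{\otimes} \uplus G^{\#}$ from the proof of Proposition~\ref{idem}: the processes we need to ``lose'' live in $G^{\#}$.

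First I would take any system possessing a state $s_0$ from which two transitions $t_1,t_2$ emanate with $\neg(t_1 \I t_2)$ — the simplest being the process $a+b$ (in CCS notation), with empty independence relation, for which all four TSI axioms hold vacuously. Then the maximal set $\mathfrak{X}_0 = \mathfrak{X}(s_0) = \{t_1,t_2\}$ is not conflict-free, and — as already noted in the proof of Proposition~\ref{idem} — a non-conflict-free maximal set is not a complete trace of itself: its complete (indeed maximal) traces are the proper conflict-free subsets $\{t_1\}$ and $\{t_2\}$, and no support set whatsoever has $\mathfrak{X}_0$ among its complete traces, since complete traces are conflict-free.

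Next I would set $\mathcal{V}(Z) = \{(\mathfrak{X}_0, t_{\epsilon})\}$, the singleton containing only the initial process of the associated stateless maximal process space. Then $\denot{Z}_{\mathcal{V}} = \{(\mathfrak{X}_0,t_{\epsilon})\}$, whereas by the semantic clause for $\modald{\otimes}$ a process $(R,t)$ belongs to $\denot{\modald{\otimes}Z}_{\mathcal{V}}$ only if $(M,t) \in \mathcal{V}(Z)$ for some $M \in \mathcal{X}$ with $M \sqsubseteq R$; the sole candidate is $M = \mathfrak{X}_0$, which by the previous paragraph is a complete trace of no support set $R$. Hence $\denot{\modald{\otimes}Z}_{\mathcal{V}} = \emptyset$, so $\denot{Z}_{\mathcal{V}} \not\subseteq \denot{\modald{\otimes}Z}_{\mathcal{V}}$ and $\modald{\otimes}$ is not extensive.

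There is essentially no hard step here; the only point requiring care is the observation — implicit already in the proof of Proposition~\ref{idem} — that a maximal set which is not conflict-free is not a complete trace of itself, which is precisely why information carried by processes whose support set is maximal but not conflict-free is \emph{discarded}, rather than created, by $\modald{\otimes}$. Conceptually, $\modald{\otimes}$ restricts attention to genuinely concurrent (conflict-free) behaviour, so, together with Proposition~\ref{idem}, this shows it is monotone and idempotent but \emph{not} a closure operator — it behaves, rather, like an interior/kernel operator.
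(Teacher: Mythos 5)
Your proof is correct and rests on exactly the same observation as the paper's: a process whose support set is maximal but not conflict-free need not survive an application of $\modald{\otimes}$, because no conflict-free complete trace $M$ with $M \sqsubseteq R$ need carry it into $\denot{\modald{\otimes}\phi}$. Where you differ is in how much you discharge. The paper's proof is a two-line pointer back to the proof of Proposition~\ref{idem}, asserting that ``it is possible that $G^{\#}$ contains processes that are not in $H^{\#}$'' without exhibiting a system or a formula for which this happens; you actually construct the witness ($a+b$ with empty independence, $\phi = Z$, $\mathcal{V}(Z) = \{(\mathfrak{X}_0,t_{\epsilon})\}$), which is what the Fact, read as an existential claim, strictly requires. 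Your use of a bare variable $Z$ with a free valuation is the right trick: it lets you place the offending process in $\denot{\phi}$ directly without having to engineer a closed formula. The one point you lean on --- that a maximal set which is not conflict-free is never a complete trace of any support set --- is not literally forced by Definition of $\sqsubseteq$ (taking $M=R$ makes the non-existence condition vacuous), but it is exactly the reading the paper fixes in the remark following that definition (``$M$ must be a proper subset of $R$'', complete traces being conflict-free), and the paper's own proofs of Proposition~\ref{idem} and of this Fact depend on the same reading, so you are on safe ground. Only your closing aside overreaches slightly: the proof of Proposition~\ref{idem} also notes that $H^{\#}$ may contain processes \emph{not} in $G^{\#}$, so $\modald{\otimes}$ is not contractive either and hence is not an interior operator in the lattice-theoretic sense; it is simply monotone and idempotent without being comparable to the identity.
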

\begin{proof}
Let $H = H^{\otimes} \uplus H^{\#} = \denot{\modald{\otimes}\phi}$ and $G = G^{\otimes} \uplus G^{\#} = \denot{\phi}$, where $H^{\otimes}$ and $H^{\#}$ as well as $G^{\otimes}$ and $G^{\#}$ are defined as before. As shown in the proof of Proposition \ref{idem}, it is possible that $G^{\#}$ contains processes that are not in $H^{\#}$. Therefore $G \nsubseteq H$. 
\qed
\end{proof}

\begin{cor}
$\modald{\otimes}$ is not a closure operator.
\end{cor}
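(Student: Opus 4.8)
The plan is to read this off immediately from the preceding Fact. Recall that a \emph{closure operator} on a poset — here the operator $\phi \mapsto \modald{\otimes}\phi$ acting on denotations, i.e.\ the map $G = \denot{\phi} \mapsto \denot{\modald{\otimes}\phi}$ on the complete lattice $(2^{\mathfrak{S}},\subseteq)$, the powerset lattice of $\mathfrak{S}$ — is by definition an operator that is simultaneously \emph{monotone}, \emph{idempotent}, and \emph{extensive}, the last property meaning $G \subseteq \denot{\modald{\otimes}\phi}$ for every $G$. First I would observe that the Fact already supplies a counterexample to extensivity: it exhibits a $G = G^{\otimes} \uplus G^{\#}$ whose $G^{\#}$ part contains a process not present in $H^{\#}$, where $H = H^{\otimes} \uplus H^{\#} = \denot{\modald{\otimes}\phi}$, so that $G \nsubseteq \denot{\modald{\otimes}\phi}$. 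Since a closure operator must be extensive and $\modald{\otimes}$ is not, it cannot be a closure operator, and the corollary follows.

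The only point worth spelling out is that this conclusion is independent of the positive results already obtained — $\modald{\otimes}$ is monotone (by the usual free-occurrence restriction on variables, which makes all the operators monotone in $(2^{\mathfrak{S}},\subseteq)$) and idempotent (Proposition \ref{idem}) — but possessing two of the three defining properties does not make an operator a closure operator: failure of any single one disqualifies it, and here it is extensivity that fails. There is no genuine obstacle in this argument; the corollary is a direct logical consequence of the Fact together with the standard definition of a closure operator, and I would keep the proof to a single line.
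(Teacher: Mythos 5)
Your proposal is correct and matches the paper's own one-line argument exactly: a closure operator must be extensive, and the preceding Fact shows $\modald{\otimes}$ is not, so the corollary follows regardless of its monotonicity and idempotence. Nothing further is needed.
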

\begin{proof}
$\modald{\otimes}$ is monotonic and idempotent, but is not extensive.
\qed
\end{proof}

\subsection{Logical and Concurrent Equivalences}\label{equi}
We now turn our attention to the study of the relationships between the explicit notion of independence in concurrent systems with partial order semantics (a \emph{model} independence), and the explicit notion of independence in the logics we have defined (a \emph{logical} independence). We do so by relating well-known equivalences for concurrency, namely $\sim_{sb}$, $\sim_{hpb}$ and $\sim_{hhpb}$, with the equivalences induced by different \tfl/ sublogics where the interplay between concurrency and conflict, and concurrency and causality is \emph{syntactically} restricted.

\begin{defn} \label{sflequiv}
\emph{
{\bf ($\mathfrak{L}$ equivalence $\sim_{\mathfrak{L}}$)} Given a logic $\mathfrak{L}$, two processes $P$ and $Q$ associated with two systems $\mathfrak{T}_1$ and $\mathfrak{T}_2$, respectively, are $\mathfrak{L}$-equivalent, $P \sim_{\mathfrak{L}} Q$, if and only if, for every $\mathfrak{L}$ formula $\phi$ in $\mathfrak{F}_{\mathfrak{L}}$, $P \models^{\mathfrak{T}_1} \phi \Leftrightarrow Q \models^{\mathfrak{T}_2} \phi$, where $\mathfrak{F}_{\mathfrak{L}}$ is the set of all fixpoint-free closed formulae of $\mathfrak{L}$.
}
\eos
\end{defn}
\begin{rem}
\emph{
The previous definition delivers a logical, abstract notion of equivalence that can be used across different models for concurrency, i.e., tailored to be model independent. With this logical notion of equivalence two systems $\mathfrak{T}_1$ and $\mathfrak{T}_2$, possibly of different kinds, are equivalent with respect to some equivalence $\sim_{\mathfrak{L}}$ if, and only if, their associated process spaces cannot be differentiated by any $\mathfrak{L}$-logical formula.}
\eos
\end{rem}

Recall that in order to obtain an exact match between finitary modal logic and bisimulation, all models considered here are image-finite \cite{hmljacm-milner}, i.e., of finite branching. Moreover, since the semantics of \tfl/ is based on action labels, we only consider models without `auto-concurrency' \cite{models-winskel}, a common restriction when studying either modal logics or equivalences for (labelled) partial order models.

More precisely, auto-concurrency is the phenomenon by which multiple instances of various concurrent transitions are equally labelled. In other words, auto-concurrency can be seen as nondeterminism inside a set of independent transitions. In many cases auto-concurrency is regarded as an undesirable situation on partial order models since it can be easily avoided in practice and makes slightly counter-intuitive the analysis of behavioural properties of concurrent processes with partial order semantics.

As a matter of fact, on finite systems, auto-concurrency is formally, but not actually, a further restriction since any bounded branching system that has auto-concurrency can be effectively converted into a system that does not have auto-concurrency by a suitable relabelling of auto-concurrent transitions without changing the concurrent behaviour of the model. Notice that no auto-concurrency is a real further restriction for infinite systems as image-finiteness does not imply branching boundedness on infinite models.

Having said that, let us turn to the study of some syntactic fragments of \tfl/. They are called the \emph{natural} syntactic fragments of \tfl/ because such sub-logics arise as the languages where the dualities between concurrency and causality as well as concurrency and conflict are syntactically manipulated. As we will see the equivalences induced by all such fragments are decidable and in some cases coincide with well-known bisimilarities for interleaving and for partial order models of concurrency. We start this study of logical and concurrent equivalences by analysing a syntactic fragment of \tfl/ that is oblivious to causal information.

\subsubsection*{The Modal Mu-Calculus.} 
The first sublogic is obtained from \tfl/ by disabling the sensitivity of this logic to both dualities. On the one hand, insensitivity to the duality between concurrency and causality can be captured by considering only modalities without subscript, i.e., HML modalities, using the abbreviations given previously in Section \ref{synsem}. On the other hand, insensitivity to the duality between concurrency and conflict can be captured by considering the $\modalt$-free \tfl/ sublanguage, where $\modalt$ means $\{\modald{\otimes},\modalb{\otimes}\}$. The resulting logic has the same syntax of \lmu/. This fragment is the purely-modal $\modalt$-free fragment of \tfl/.
\begin{prop}\label{sflmults}
The syntactic purely-modal $\modalt$-free fragment of \tfl/ is semantically equivalent to the modal mu-calculus.
\end{prop}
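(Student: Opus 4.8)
The plan is to treat the syntactic and the semantic sides separately. Syntactically, after expanding the abbreviations $\modald{a}\phi_1 \equiv \modald{a}_{c}\phi_1 \vee \modald{a}_{nc}\phi_1$ and $\modalb{a}\phi_1 \equiv \modalb{a}_{c}\phi_1 \wedge \modalb{a}_{nc}\phi_1$ and working in positive normal form (which both logics adopt without loss of generality), the grammar of the purely-modal $\modalt$-free fragment of \tfl/ is literally the grammar of \lmu/: $\phi ::= Z \mid \phi_1\wedge\phi_2 \mid \phi_1\vee\phi_2 \mid \modald{a}\phi_1 \mid \modalb{a}\phi_1 \mid \mu Z.\phi_1 \mid \nu Z.\phi_1$. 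So what remains is to show the two semantics agree on this shared syntax, modulo the change of model: a system $\mathfrak{T} = (S,s_0,T,I,\Sigma)$ with its stateless maximal process space $\mathfrak{S}$ for \tfl/, versus the underlying rooted LTS $\mathfrak{T}' = (S,s_0,T,\Sigma)$ obtained by forgetting $I$ for \lmu/.

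For the semantic side I would single out the set $\mathfrak{S}^{\mathfrak{X}}\subseteq\mathfrak{S}$ of \emph{maximal} processes, those of the form $(\mathfrak{X}(\tau(t)),t)$ with $t\in\mathfrak{A}$ (using the convention $\tau(t_{\epsilon})=s_0$, forced by $t_{\epsilon}\leq t$ for transitions $t$ out of $s_0$). The point of dropping $\modalt$ is that $\modald{\otimes}$ and $\modalb{\otimes}$ are the only \tfl/ operators that can replace the support-set component of a process by a proper maximal trace; every modality $\modald{a}$, $\modalb{a}$ only consults processes $(\mathfrak{X}(\tau(r)),r)$ with $r\in\mathfrak{X}(\tau(t))$ and $\delta(r)=a$, all of which lie again in $\mathfrak{S}^{\mathfrak{X}}$. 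Hence $\mathfrak{S}^{\mathfrak{X}}$, equipped with the edges $(\mathfrak{X}(\tau(t)),t)\xrightarrow{a}(\mathfrak{X}(\tau(r)),r)$ for exactly those $r$, forms an LTS $\mathbb{L}$, and $\rho(\mathfrak{X}(\tau(t)),t)=\tau(t)$ is a surjective functional bisimulation from $\mathbb{L}$ onto the reachable part of $\mathfrak{T}'$ (it plainly preserves and reflects $a$-labelled edges, and sends the initial process $(\mathfrak{X}_0,t_{\epsilon})=(\mathfrak{X}(s_0),t_{\epsilon})$ to $s_0$). Moreover, \emph{on} $\mathfrak{S}^{\mathfrak{X}}$ the \tfl/ clause for $\modald{a}\phi_1=\modald{a}_{c}\phi_1\vee\modald{a}_{nc}\phi_1$ collapses to the ordinary \lmu/ clause: at $(\mathfrak{X}(\tau(t)),t)$ the side conditions $t\leq r$ and $t\ominus r$ together amount to just $\sigma(r)=\tau(t)$ (whether $t\I r$ or not is precisely what distinguishes $\leq$ from $\ominus$, and the union erases that distinction), so membership reduces to the existence of some $r$ with $\sigma(r)=\tau(t)$, $\delta(r)=a$ and $(\mathfrak{X}(\tau(r)),r)\in\denot{\phi_1}$; dually, $\modalb{a}\phi_1$ becomes a universal quantification over the same $r$.

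With this preparation I would prove, by structural induction on a fragment formula $\phi$, that for every \tfl/-valuation $\mathcal{V}$ one has $\denot{\phi}^{\mathfrak{T}}_{\mathcal{V}}\cap\mathfrak{S}^{\mathfrak{X}} = \denot{\phi}^{\mathbb{L}}_{\mathcal{V}'}$, where $\mathcal{V}'(Z)=\mathcal{V}(Z)\cap\mathfrak{S}^{\mathfrak{X}}$ and the right-hand side is the \lmu/ denotation over $\mathbb{L}$. The cases $Z$, $\wedge$, $\vee$ are immediate, since these operators act on $\mathfrak{S}^{\mathfrak{X}}$ without consulting the rest of $\mathfrak{S}$ (intersection, union and the valuation commute with $\,\cap\,\mathfrak{S}^{\mathfrak{X}}$); the cases $\modald{a}$, $\modalb{a}$ are the collapse computation above combined with the induction hypothesis applied to the successor processes, which all lie in $\mathfrak{S}^{\mathfrak{X}}$ so the part of $\denot{\phi_1}$ outside $\mathfrak{S}^{\mathfrak{X}}$ is irrelevant. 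Chaining this lemma with the bisimulation $\rho$ and the bisimulation-invariance of closed \lmu/ formulae then gives, for every closed $\phi$ in the fragment, $(\mathfrak{X}_0,t_{\epsilon})\models\phi$ in \tfl/ iff $s_0\models\phi$ in \lmu/ over $\mathfrak{T}'$. Together with the syntactic identification, and since every closed \lmu/ formula is itself in the fragment, this is the asserted semantic equivalence (in particular, the equivalences $\sim_{\mathfrak{L}}$ of Definition \ref{sflequiv} induced by the fragment and by \lmu/ coincide).

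The step I expect to be the real obstacle is the fixpoint clause. The \tfl/ semantics defines $\denot{\mu Z.\phi}_{\mathcal{V}}$ as the least pre-fixpoint of $g(Q)=\denot{\phi}_{\mathcal{V}[Z:=Q]}$ in the full powerset lattice $2^{\mathfrak{S}}$, whereas over $\mathbb{L}$ one computes the least pre-fixpoint of the induced operator $g_A(Q')=g(Q')\cap\mathfrak{S}^{\mathfrak{X}}$ in $2^{\mathfrak{S}^{\mathfrak{X}}}$, and the induction needs these to agree on $\mathfrak{S}^{\mathfrak{X}}$. For this I would isolate a small lattice fact: if $A\subseteq\mathfrak{S}$ and $g$ is monotone with $g(Q)\cap A$ depending only on $Q\cap A$ (so $g_A$ is well defined and monotone), then $\mathrm{lfp}(g)\cap A=\mathrm{lfp}(g_A)$; one inclusion holds because $\mathrm{lfp}(g)\cap A$ is a fixpoint of $g_A$, and the other because $Q_0=\mathrm{lfp}(g_A)\cup(\mathfrak{S}\setminus A)$ is a pre-fixpoint of $g$, whence $\mathrm{lfp}(g)\subseteq Q_0$ and so $\mathrm{lfp}(g)\cap A\subseteq\mathrm{lfp}(g_A)$. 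The hypothesis that $g(Q)\cap A$ depends only on $Q\cap A$ is exactly the strengthening of the inductive claim stating that the fragment's operators never inspect $\mathfrak{S}\setminus\mathfrak{S}^{\mathfrak{X}}$ to decide membership inside $\mathfrak{S}^{\mathfrak{X}}$, which the boolean and modal cases already establish; so the $\mu$ case closes, and $\nu$ is symmetric (directly with greatest fixpoints, or via the De Morgan dualities the logic provides).
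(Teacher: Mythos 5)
Your proof is correct and follows essentially the same route as the paper's: expand $\modald{a}$ and $\modalb{a}$ into their causal and non-causal components, note that without $\modalt$ only processes carrying the full maximal set ever arise, and observe that the disjunction (resp.\ conjunction) of the conditions $t \leq r$ and $t \ominus r$ exhausts all possibilities, so the transition component drops out and the clauses collapse to the standard state-based \lmu/ semantics. The only difference is that you spell out what the paper dismisses with ``without loss of generality, only the modal operators'': the explicit restriction to maximal processes, the functional bisimulation onto the underlying LTS, and the lattice lemma reconciling the fixpoints computed in $2^{\mathfrak{S}}$ with those computed in the restricted lattice.
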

\begin{proof}
Recall the semantics of the operators of \tfl/. Without loss of generality, we can only consider the case of the modal operators. 
\begin{center}
\begin{tabular}{lll}
$\denot{\modald{a} \phi_1}_{\mathcal{V}}$ &  $=$ & $\denot{\modald{a}_c \phi_1 \vee \modald{a}_{nc}\phi_1}_{\mathcal{V}} = \denot{\modald{a}_c \phi_1}_{\mathcal{V}} \cup \denot{\modald{a}_{nc} \phi_1}_{\mathcal{V}}$ \\
& $=$ & $\{ (R,t) \in \mathcal{X} \times \mathfrak{A} \mid \exists r \in R . ~ t \leq r \wedge (\mathfrak{X},r) \in \denot{\phi_1}_{\mathcal{V}} \} ~\cup$ \\
&  & $\{ (R,t) \in \mathcal{X} \times \mathfrak{A} \mid \exists r \in R . ~ t \ominus r \wedge (\mathfrak{X},r) \in \denot{\phi_1}_{\mathcal{V}} \}$ \\
\end{tabular} 
\end{center}
The first observation to be made is that the $\modalt$-free fragment of \tfl/ only considers maximal sets in the semantics. Therefore, support sets can be disregarded and only the state $\sigma(r)$ associated with a transition $r \in R$ need to be kept. Then, the expressions above can be modified in the following way:
\begin{center}
\begin{tabular}{lll}
$\denot{\modald{a} \phi_1}_{\mathcal{V}}$ &  $=$ & $\{ (s,t) \in S \times \mathfrak{A} \mid \exists q \in S . ~t \leq r \wedge (q,r) \in \denot{\phi_1}_{\mathcal{V}} \} ~\cup$ \\
&  & $\{ (s,t) \in S \times \mathfrak{A} \mid \exists q \in S . ~t \ominus r \wedge (q,r) \in \denot{\phi_1}_{\mathcal{V}} \} $ \\
\end{tabular} 
\end{center}
where $r = s \xrightarrow{a} q$. The second observation is that when computing the semantics of the combined operator $\modald{a}$, the conditions $t \leq r$ and $t \ominus r$ complement each other and become trivially true since there are no other possibilities. Therefore, the second component of every pair $(s,t) \in S \times \mathfrak{A}$ can also be disregarded.
\begin{center}
\begin{tabular}{lll}
$\denot{\modald{a} \phi_1}_{\mathcal{V}}$ &  $=$ & $\{ s \in S \mid \exists q \in S . ~s \xrightarrow{a} q \wedge q \in \denot{\phi_1}_{\mathcal{V}} \}$ \\
\end{tabular} 
\end{center}

The case for the box operator $\modalb{a}$ is similar. As a consequence, the semantics of all the operators of this \tfl/ sublogic and the mu-calculus coincide.
\qed
\end{proof}

\begin{rem}
\emph{
The mu-calculus cannot recognise pairs of transitions in $I$ and therefore sees any partial order model as its interleaving counterpart, or what is equivalent, a partial order model with an empty relation $I$. As a consequence, although using a partial order model of concurrency, it is possible to retain in \tfl/ all the joys of a logic with an interleaving model, and so, nothing is lost with respect to the main interleaving approaches to concurrency.
}
\eos
\end{rem}

Regarding logical and concurrent equivalences, it is now easy to see that $\sim_{sb}$, the bisimulation equivalence induced by modal logic, is captured by the fixpoint-free fragment of this sublogic, which we can denote by $\sim_{\lmu/}$. Hence, the \emph{logical correspondence} ${\sim_{\lmu/}} \equiv {\sim_{sb}}$ follows from Proposition \ref{sflmults} and the fact that modal logic characterises bisimulation on image-finite models.

\subsubsection*{The Trace Modal Mu-Calculus.} 
The second sublogic we study is the `trace modal mu-calculus', \tlmu/. This logic is obtained from \tfl/ by allowing only the recognition of the duality between concurrency and conflict with its idempotent operator. The syntax of \tlmu/ is: $\phi ::= Z \mid \neg \phi_1 \mid \phi_1 \wedge \phi_2 \mid \modald{a} \phi_1 \mid \modald{\otimes} \phi_1 \mid \mu Z . \phi_1$. We write $\sim_{\tlmu/}$ for the equivalence induced by this sublogic. It is easy to see that \tlmu/ is more expressive than \lmu/ in partial order models simply because \tlmu/ includes \lmu/ and can differentiate concurrency from nondeterminism. However, the following counter-example shows that $\sim_{\tlmu/}$ and $\sim_{hpb}$ do not coincide.

\begin{prop}
Neither ${\sim_{hpb}} \subseteq {\sim_{\tlmu/}}$ nor ${\sim_{\tlmu/}} \subseteq {\sim_{hpb}}$.
\end{prop}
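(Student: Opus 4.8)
The plan is to refute the two inclusions separately, each by exhibiting a small pair of systems together with (i) a proof that one equivalence does hold and (ii) a witness that the other does not. Since, by the uniform representation, a small event structure can be read equally as a TSI or a Petri net, I would present both counterexamples as finite event structures and compute their stateless maximal process spaces directly.

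For $\sim_{hpb}\not\subseteq\sim_{\tlmu/}$, I would pick hp-bisimilar systems $\mathfrak{T}_1,\mathfrak{T}_2$ and a fixpoint-free closed $\tlmu/$ formula separating them. Because the only expressive resource $\tlmu/$ has beyond HML is the idempotent modality $\modald{\otimes}$, and HML cannot tell apart hp-bisimilar (hence strongly bisimilar) systems, the separating formula must be built from $\modald{\otimes}$ together with (possibly nested) plain modalities — something of the shape $\modald{\otimes}\big(\bigwedge_i\modald{a_i}\psi_i\wedge\bigwedge_j\neg\modald{b_j}\mutrue\big)$, which pins down a \emph{maximal} conflict-free set of currently enabled transitions whose label content, or whose one-step continuations, is realisable in one system but not in the other. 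The systems must therefore be arranged so that at some pair of states reached by hp-matching runs the maximal traces are laid out differently even though the causal posets of all synchronous runs still coincide. To certify $\mathfrak{T}_1\sim_{hpb}\mathfrak{T}_2$ I would simply exhibit the hp-bisimulation as an explicit synchronous, poset-isomorphism-preserving relation on runs and check the $\sim_{hpb}$ rule for each kind of Adam move; this part is routine once the pair is fixed.

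For $\sim_{\tlmu/}\not\subseteq\sim_{hpb}$ I would take systems $\mathfrak{T}_3,\mathfrak{T}_4$ satisfying exactly the same fixpoint-free closed $\tlmu/$ formulae yet not hp-bisimilar. To prove $\tlmu/$-equivalence I would set up a relation $\mathcal{R}$ on the stateless maximal process spaces of $\mathfrak{T}_3$ and $\mathfrak{T}_4$ and show it is a ``$\tlmu/$-bisimulation'': it relates the initial processes, it behaves as a strong bisimulation with respect to the plain modalities $\modald{a}$ (so all of HML is preserved), and it respects $\modald{\otimes}$, i.e.\ $\mathcal{R}$-related processes have $\mathcal{R}$-matching families of maximal traces; a straightforward induction on the structure of fixpoint-free formulae then yields $\mathfrak{T}_3\sim_{\tlmu/}\mathfrak{T}_4$. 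For $\mathfrak{T}_3\not\sim_{hpb}\mathfrak{T}_4$ I would describe a winning strategy for Adam in the hpb game: $\tlmu/$ has only the plain modality, so it is insensitive to the $\leq$-versus-$\ominus$ duality (to whether a transition causally depends on, or is merely linearised after, its predecessor); hence $\mathfrak{T}_3$ and $\mathfrak{T}_4$ can be designed to agree on the interleaving structure and on all maximal-trace data while disagreeing on the causal poset of some synchronous run, and Adam just drives the play into that run so that the synchronisation condition $\varrho(\pi_3)\,I_3\,u\Leftrightarrow\varrho(\pi_4)\,I_4\,v$ (or the required poset isomorphism) is forced to fail.

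The hard part will be the second direction, and specifically the choice of $\mathfrak{T}_3,\mathfrak{T}_4$: demanding $\sim_{\tlmu/}$-equivalence already forces them to agree on all of HML and on all maximal-trace data, so the causal discrepancy that hpb exploits has to be hidden in the independence relation without changing the underlying transition graph or the local conflict--concurrency structure; getting a genuinely small such pair, and then checking that the candidate $\mathcal{R}$ really does respect the $\modald{\otimes}$-clause — since $\modald{\otimes}$ is the one operator that reaches into the independence relation — is where the real work is. The first direction, by contrast, is mechanical once the pair is chosen, the only care being the explicit verification that the exhibited relation on runs is genuinely an hp-bisimulation.
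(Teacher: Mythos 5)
Your proposal follows the same overall strategy as the paper --- refute each inclusion by a concrete counterexample pair --- but it never actually produces the counterexamples, and those witnesses \emph{are} the entire content of the proposition. You say ``I would pick hp-bisimilar systems $\mathfrak{T}_1,\mathfrak{T}_2$ and a fixpoint-free closed \tlmu/ formula separating them'' and ``I would take systems $\mathfrak{T}_3,\mathfrak{T}_4$ satisfying exactly the same \tlmu/ formulae yet not hp-bisimilar,'' and you yourself flag the choice of $\mathfrak{T}_3,\mathfrak{T}_4$ as ``the hard part,'' but that hard part is precisely what a proof of a non-inclusion statement must deliver. The paper does deliver it: the pair at the top of Figure \ref{fig:ceslmu} is hp bisimilar yet separated by $\modald{\otimes}(\modald{a}\modald{c}\mutrue \wedge \modald{b}\modald{d}\mutrue)$ (the $\modald{\otimes}$ modality pins down a maximal trace containing both an $a$- and a $b$-transition with the required continuations, which only one system can realise), and the pair at the bottom is not hp bisimilar but \tlmu/-equivalent, checked by exhausting \tlmu/ formulae at the initial state. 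Without such explicit systems, your argument establishes nothing beyond a plausibility sketch.

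There is also a substantive worry in your plan for the direction $\sim_{\tlmu/}\not\subseteq\sim_{hpb}$. You propose to hide the causal discrepancy ``in the independence relation without changing the underlying transition graph or the local conflict--concurrency structure.'' But in a TSI the $\ominus$-versus-$\leq$ classification of consecutive transitions is not a free parameter: axiom \textbf{A3} forces any independent consecutive pair to sit inside a concurrency diamond, hence to induce co-initial independence (i.e.\ $\otimes$-data) at the common source, and \textbf{A2} conversely forces consecutive independence inside every such diamond. So two systems with literally the same transition graph and the same $\otimes$/$\#$ structure at every state have essentially no room left to disagree on causal structure, and your intended pair may simply not exist in the form you describe. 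The paper avoids this by not insisting on identical graphs: its bottom pair ($a\parallel b$ versus the variant with an additional conflicting $b$-transition, an instance of asymmetric confusion) are non-isomorphic systems that \tlmu/ nevertheless cannot tell apart, because the extra branching is invisible to plain modalities and to the maximal-trace modality $\modald{\otimes}$, while hpb detects it through the synchrony requirement. To repair your proof you need to exhibit concrete pairs --- the confusion-based examples of Figure \ref{fig:ceslmu} (or similar ones) will do --- and then carry out the verifications you correctly identify: an explicit synchronous relation on runs for the hpb claim in one direction, and an exhaustive (or relation-based) check of fixpoint-free \tlmu/ formulae for the other.
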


\begin{proof}
The two systems at the top in Figure \ref{fig:ceslmu} are hp bisimilar and yet can be distinguished by the formula $\phi = \modald{\otimes}(\modald{a} \modald{c} \mutrue \wedge \modald{b} \modald{d} \mutrue)$. On the other hand, the systems at the bottom are not hp bisimilar and cannot be differentiated by any \tlmu/ formula. This can be verified by exhaustively checking \tlmu/ formulae in the initial state $\circ$.
\qed
\end{proof}
\begin{figure}[ht]
  \begin{center}
    \subfigure
    {\label{fig:ceslmu-hpb} 
\xymatrix @R=5pt @C=10pt 
{
          &                        &                                &                                     & \bullet \\
	        & \ar[ld]_{b} \bullet    &                                & \bullet \ar[rd]^{b} \ar[ru]^{c}     &         \\
	\bullet & \mathit{I} & \ar[lu]_{a} \ar[ld]^{b} \circ \ar[ru]^{a} \ar[rd]_{b} & \mathit{I}             & \bullet \\
	        & \ar[lu]^{a} \bullet    &                                & \bullet \ar[ru]_{a} \ar[rd]_{d}     &         \\
          &                        &                                &                                     & \bullet 
} 
    } \hspace{0.7cm}
    \subfigure
    {\label{fig:ceslmu-nothpb}
\xymatrix @R=5pt @C=10pt 
{
          &                                    &                                &                                     & \bullet \\
	        & \ar[ld]_{b} \bullet                &                                & \bullet \ar[rd]^{b} \ar[ru]^{c}     &         \\
	\bullet & \mathit{I} & \ar[lu]_{a} \ar[ld]^{b} \circ \ar[ru]^{a} \ar[rd]_{b} & \mathit{I}             & \bullet \\
	        & \ar[ld]^{d} \ar[lu]^{a} \bullet    &                                & \bullet \ar[ru]_{a}                 &         \\
  \bullet &                                    &                                &                                     & 
} 
    }
\\
\begin{center}\begin{small} $\sim_{hpb}$ but not $\sim_{\tlmu/}$ \end{small} \end{center}
    \subfigure
    {\label{fig:nothpb1} 
\xymatrix @R=5pt @C=10pt 
{
	                                & \bullet \ar[rd]^{b} &          \\
	\circ \ar[ru]^{a} \ar[rd]_{b} & \mathit{I}          & \bullet  \\
	                                & \bullet \ar[ru]_{a} &          \\
} 
    } \hspace{0.7cm}
    \subfigure
    {\label{fig:nothpb2} 
\xymatrix @R=5pt @C=10pt 
{
	\bullet && \ar[ll]_{b} \bullet \ar[rr]^{b}    &            & \bullet                \\
	        &&                                    & \mathit{I} &                        \\
	        && \circ \ar[uu]^{a} \ar[rr]_{b}    &            & \bullet \ar[uu]_{a}    \\
} 
    }
\begin{center}\begin{small} $\sim_{\tlmu/}$ but not $\sim_{hpb}$ \end{small} \end{center}
\end{center}
\caption{Counter-examples of the coincidence between $\sim_{hpb}$ and $\sim_{\tlmu/}$}
\label{fig:ceslmu}
\end{figure}
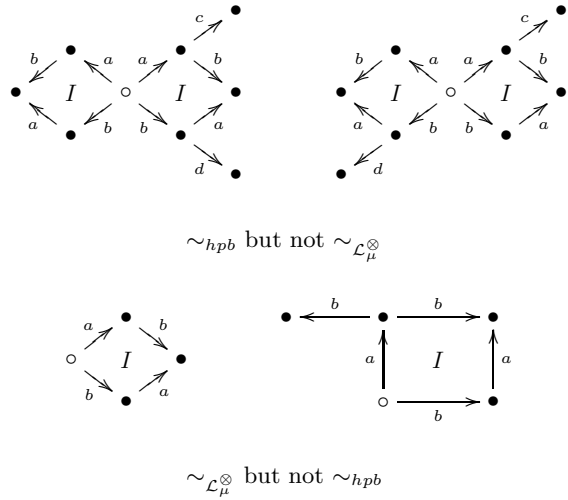 

There is a fundamental reason for the mismatch between $\sim_{hpb}$ and $\sim_{\tlmu/}$. It has to do with a special `sharing' of resources between some of the transitions in the model. This special kind of sharing of resources is characterised by a phenomenon called \emph{confusion}, which is a concept in net theory, and thus, it is useful to think of it directly on nets.

Confusion can be symmetric or asymmetric. Roughly speaking, it is a phenomenon that arises between at least three different actions, say between $t_1$, $t_2$, and $t_3$. In the symmetric case, two of them are independent, e.g., $t_1 \ppar/ t_2$, and at the same time are in conflict with the third action, i.e., ${{^{\bullet}t_1} \cap {^{\bullet}t_3}} \neq \emptyset$ and ${{^{\bullet}t_2} \cap {^{\bullet}t_3}} \neq \emptyset$. On the other hand, in the asymmetric case, two of the actions are independent, e.g., $t_1 \ppar/ t_2$ as before, whereas the third one is in conflict with one of the independent actions, say with $t_1$, and causally depends on the other, i.e., ${{^{\bullet}t_1} \cap {^{\bullet}t_3}} \neq \emptyset$ and ${{t_2^{\bullet}} \cap {^{\bullet}t_3}} \neq \emptyset$, respectively. Confusion is important because, although it is undesirable when analysing the behaviour of a concurrent system, it is also ``inherent to any reasonable net model of a mutual exclusion module'' \cite{confusiontcs-esmith}. Confusion is also present when modelling race conditions in concurrent and distributed systems with shared memory models. These facts show the ubiquity of this phenomenon when analysing real-life models of communicating concurrent systems. Although confusion is a natural concept in net theory, it can also be defined for TSI and event structures, though in these cases the definition is more complicated because it involves sets of transitions and sets of events, respectively, rather than single actions as in the Petri net case.

Confusion appears in the two counter-examples shown in Figure \ref{fig:ceslmu}. In both cases in its asymmetric variant. The problem is that both $\sim_{hpb}$ and $\sim_{\tlmu/}$ can recognise some forms of confusion, but not all of them. However, there is a class of nets called \emph{free-choice} where confusion never arises, and for which coincidence results between $\sim_{hpb}$ and $\sim_{\tlmu/}$ may be possible. Define a free-choice system as follows (see \cite{njc-bradfield} for an equivalent presentation):

\begin{defn} \label{fcsys}
\emph{
A system $\mathfrak{T}$ is \emph{free-choice} iff whenever there are three transitions $t_1$, $t_2$, and $t_3$ such that $t_1$ is in conflict with $t_2$ and any of them is independent of $t_3$, i.e., either $(t_1,t_3) \in I$ or $(t_2,t_3) \in I$, then there must be two transitions $t_4$ and $t_5$, where $t_1 \sim t_4$ and $t_2 \sim t_5$, such that $t_4$ and $t_5$ are also in conflict, and $t_3$ is independent of both $t_4$ and $t_5$.
}
\eos
\end{defn}

Informally, the previous definition means that a choice, i.e., a conflict, cannot be globally affected by the execution of a concurrent transition, since equivalent choices are always possible both before and after that. These facts, along with the observations made before, led us to believe that the following statement holds, although we have so far not been able to prove it. 

\begin{conj}
${\sim_{\tlmu/}} \equiv {\sim_{hpb}}$ on free-choice systems without auto-concurrency.
\end{conj}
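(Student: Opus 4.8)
The plan is to prove the conjecture by establishing both inclusions between $\sim_{\tlmu/}$ and $\sim_{hpb}$ on free-choice systems without auto-concurrency. One direction, ${\sim_{\tlmu/}} \subseteq {\sim_{hpb}}$, is expected to fail in general (as the bottom counter-example in Figure \ref{fig:ceslmu} shows), so the free-choice hypothesis must be doing real work precisely here; the reverse direction ${\sim_{hpb}} \subseteq {\sim_{\tlmu/}}$, shown to fail by the top counter-example, also relies on ruling out confusion. Since on free-choice systems confusion (symmetric and asymmetric) never arises, the high-level strategy is: (i) show that on a confusion-free system an hpb-winning strategy for Eve, together with the structure of the synchronisation condition, suffices to answer every $\modald{\otimes}$-move and every causal/non-causal modal move that a \tlmu/ formula can pose; and conversely, (ii) show that $\sim_{\tlmu/}$-equivalence, exploiting the $\modald{\otimes}$ modality to inspect maximal traces and the plain modalities to step through them, lets Eve reconstruct an hp-bisimulation relation between runs.

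For direction (i), ${\sim_{hpb}} \subseteq {\sim_{\tlmu/}}$, I would argue by induction on \tlmu/ formulae (in positive normal form, restricting to the fixpoint-free closed formulae that define $\sim_{\mathfrak{L}}$ in Definition \ref{sflequiv}). The interesting case is $\modald{\otimes}\psi$: if $P=(R,t)$ satisfies $\modald{\otimes}\psi$ via a maximal trace $M\sqsubseteq R$, I must produce a corresponding maximal trace $M'$ on the other side. Here is where free-choice enters: in a confusion-free system, the set of transitions enabled concurrently at a state — i.e., the maximal traces $M\sqsubseteq\mathfrak{X}$ — is determined, up to the independence structure, by the pairwise conflict/independence relations among the enabled transitions, and an hp-bisimulation preserves exactly this local picture (because the poset-isomorphism clause of synchronisation together with the strong-bisimulation clause pins down which one-step extensions are independent of the current history, and free-choice forbids a concurrent transition from globally destroying a choice). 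So I would prove a lemma: \emph{on free-choice systems without auto-concurrency, if $(\pi_1,\pi_2)\in R_{hpb}$ then there is a label- and independence-preserving bijection between the maximal traces at $\tau(\varrho(\pi_1))$ and those at $\tau(\varrho(\pi_2))$, compatible with Eve's strategy on single transitions}. Granting this lemma, the modal cases $\modald{a}\psi$ (which unfold to $\modald{a}_c\psi \vee \modald{a}_{nc}\psi$) follow from the strong-bisimulation clause plus the synchronisation clause distinguishing causal from non-causal steps, and the boolean and $\neg$ cases are routine.

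For direction (ii), ${\sim_{\tlmu/}} \subseteq {\sim_{hpb}}$, I would build an hp-bisimulation out of $\sim_{\tlmu/}$. The candidate relation pairs runs $\pi_1,\pi_2$ whose final processes are $\sim_{\tlmu/}$-equivalent and whose induced posets are isomorphic; the work is to show this is closed under Adam's hpb-moves. Since \tlmu/ lacks the causal/non-causal subscripted modalities, a process cannot directly test whether a new transition causally depends on a given earlier one — this is exactly the gap the bottom counter-example exploits. The claim would be that on free-choice systems this information is nonetheless recoverable: using $\modald{\otimes}$ Eve can locally isolate the maximal trace containing a candidate transition, and iterating $\modald{\otimes}$ and plain modalities along a run she can reconstruct enough of the concurrency diamonds to certify poset-isomorphism. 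Formally I would show that if two free-choice systems agree on all fixpoint-free \tlmu/ formulae at corresponding runs, then Adam's move $u$ can be matched by some $v$ with the same label such that the synchronisation condition holds, by a back-and-forth argument on trace structure, and that $\sim_{\tlmu/}$ is preserved at the successor — using that \tlmu/ includes \lmu/, hence refines $\sim_{sb}$, to get the underlying strong bisimulation for free.

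The main obstacle — and the reason the authors state this only as a conjecture — is the lemma in direction (i) (equivalently its converse in (ii)): precisely characterising how much of the local independence structure an hp-bisimulation (resp. $\sim_{\tlmu/}$-equivalence) preserves, and proving that free-choice closes the gap. The subtlety is that $\sim_{hpb}$ and $\sim_{\tlmu/}$ each fail to see \emph{some} forms of confusion, and it is not obvious that they fail to see \emph{exactly the same} forms, nor that the residual discrepancies both vanish under free-choice; making the back-and-forth on maximal traces interact correctly with the poset-isomorphism clause of synchronisation, in the presence of arbitrarily long histories, is where a rigorous argument is genuinely hard. A secondary difficulty is handling the \tlmu/ side's inability to name individual past transitions when reconstructing causal dependence, which forces the reconstruction to be entirely local-and-iterative rather than global.
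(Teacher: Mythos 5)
This statement is left as an open conjecture in the paper --- the authors explicitly say they have not been able to prove it --- so there is no paper proof to compare against, and your text is a plan rather than a proof. As a plan it identifies the right battleground, but both of its load-bearing steps are asserted, not established, and they are exactly the hard content. For direction (i), your key lemma (that on free-choice systems without auto-concurrency an hp-bisimulation induces a label- and independence-preserving bijection between the maximal traces at related states, compatible with Eve's strategy) is precisely what the top counter-example in Figure \ref{fig:ceslmu} refutes in general, and you give no argument for why free-choice restores it: the synchronisation clause of $\sim_{hpb}$ only constrains how a \emph{single} new transition relates to the last transition of the run, whereas a maximal trace is a claim about \emph{mutual} pairwise independence among several co-enabled transitions, so preservation of maximal traces does not follow from the hpb clauses without a genuine combinatorial argument exploiting the free-choice structure. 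There is also a structural problem with the induction itself: after a $\modald{\otimes}$-move the semantics evaluates $\psi$ at a process $(M,t)$ with a \emph{restricted} support set $M \sqsubseteq R$, but $\sim_{hpb}$ is a relation on runs/states, so the inductive hypothesis as you state it does not apply to these restricted processes; you would need to define and work with an enriched relation on pairs (run, support set) and prove it is preserved, which is not sketched.

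For direction (ii) the gap is symmetric and at least as serious. You claim that on free-choice systems Eve can ``reconstruct enough of the concurrency diamonds'' from $\modald{\otimes}$ and plain modalities to certify the poset-isomorphism required by synchrony, but \tlmu/ has no causal/non-causal modalities, so a formula evaluated at $(R,t)$ cannot ask whether a one-step extension is independent of $t$ --- that is exactly what the bottom counter-example exploits --- and you offer no mechanism by which free-choice makes this information expressible or recoverable through iterated trace restriction. Your own closing paragraph concedes that it is unclear whether $\sim_{hpb}$ and $\sim_{\tlmu/}$ are blind to \emph{the same} residual structure once confusion is excluded; that is the whole question, and until the two lemmas are actually proved (or a counter-example on free-choice systems is found), the proposal does not advance beyond the paper's stated conjecture.
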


Now, let us move to the study of a modal logic that is sensitive to the causal information embodied in partial order systems. In particular, it will be shown that for some classes of systems the \emph{local} duality between concurrency and causality is good enough to capture the full notion of \emph{global} causality defined by $\sim_{hpb}$ on partial order models.

\subsection*{From Local to Global Causality}

In this section we show the first coincidence result of the equivalence induced by one of the sublogics of \tfl/ with a bisimilarity for partial order systems. The result holds for a class of systems whose expressive power lies between that of so-called `free-choice' nets \cite{fcnets-esparza} and that of safe nets, as before with the usual restrictions to systems that are image-finite and have no auto-concurrency.

The coincidence result is with respect to $\sim_{hpb}$. This equivalence is considered to be the standard bisimulation equivalence for causality since it fully captures the interplay between branching and causal behaviour. 
The interesting feature of this coincidence result is that $\sim_{hpb}$ provides a \emph{global} notion of causality whereas the logic we are about to study induces a \emph{local} one, as shown later on. Then, the question we answer here is that of the class of systems for which `local causality' fully captures the standard notion of `global causality'. Such an answer is given by the following modal logic. 

\subsubsection*{The Causal Modal Mu-Calculus.} 
The fourth sublogic to be considered is the `causal modal mu-calculus', \clmu/. This sublogic is obtained from \tfl/ by allowing only the recognition of the duality between concurrency and causality throughout the modal operators on transitions of \tfl/. The syntax of this syntactic fragment is the following: $\phi ::= Z \mid \neg \phi_1 \mid \phi_1 \wedge \phi_2 \mid \modald{a}_{c} \phi_1 \mid \modald{a}_{nc} \phi_1 \mid \mu Z . \phi_1$. 

Clearly, \clmu/ is also more expressive than \lmu/ in partial order models because of the same reasons given for \tlmu/. The naturality of \clmu/ for expressing causal properties is demonstrated by the equivalence it induces, written as $\sim_{\clmu/}$, which coincides with $\sim_{hpb}$, the standard bisimulation equivalence for causal systems, when restricted to systems without auto-concurrency where any 3-tuple of transitions $(t_1,t_2,t_3)$ in confusion is in some sense deterministic. Thus, let us define confusion, a ternary relation on transitions as well as its deterministic variant when considering labelled systems.

\begin{defn} \label{ch3-confusion}
\emph{
\textbf{(Confusion)}
Let $\cfs/$ be a relation on transitions of a system $\mathfrak{T}$ such that ${(t_1,t_2,t_3)} \in {\cfs/}$ iff $t_1 \otimes t_2$ and either $t_1 \# t_3$ and $t_2 \# t_3$ (the symmetric case) or $t_1 \leq t_3$ and $\exists r_{2}.~{t_2 \sim r_{2}} \wedge {r_{2} \# t_3}$ (the asymmetric case). A tuple ${(t_1,t_2,t_3)} \in {\cfs/}$ is deterministic iff either the three transitions have different labels or $\delta(t_1) = \delta(t_3)$ and $t_1 \leq t_3$.
}
\eos
\end{defn}

There are analogous Petri net and event structure definitions for confusion using the basic elements of such models. Those definitions are much better known than the one presented here since confusion is a basic concept in net theory; however, the definition we have just given is equivalent. Perhaps due to this fact is that a very easy way of depicting confusion is using Petri nets. Figure \ref{ch3-conf-fig} shows the two simplest nets featuring confusion, both in their symmetric and asymmetric variants. 

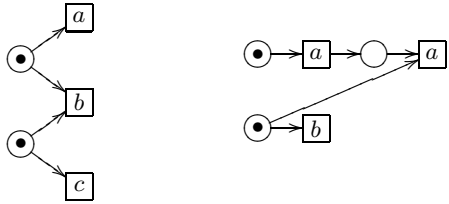
\begin{figure}[ht]
\begin{center}
\subfigure
{ 
\xymatrix @R=6pt @C=12pt 
{
 & \petsqa  \\
\petone \ar[ru] \ar[rd] &\\
& \petsqb   \\
\petone \ar[ru] \ar[rd]&\\
 & \petsqc  
} 
}
\hspace{1.7cm}
\subfigure
{ 
\xymatrix @R=6pt @C=12pt 
{
&&&\\
\petone \ar[r] & \petsqa \ar[r] & \petzero \ar[r] & \petsqa   \\
&&&\\
\petone \ar[r] \ar[rrruu] & \petsqb  & &  \\
&&&
} 
}
\end{center}
\caption{\small Confusion: the Petri net on the left has symmetric confusion and the Petri net on the right has asymmetric confusion. In both cases it is deterministic.}
\label{ch3-conf-fig}
\end{figure} 

Any Petri net that has confusion must have either of these two nets as a subsystem. The statement equivalently holds for TSI and event structures if considering, respectively, the TSI and event structure models corresponding to such nets. This property allows one to define a class of systems for which the logical equivalence induced by \clmu/ captures $\sim_{hpb}$. Such a class contains those systems without auto-concurrency that either are \emph{free-choice} or whose confusion relation has only deterministic tuples. Thus, let us now define the class of free-choice systems. For simplicity, we do so indirectly via the standard definition of free-choice nets, which is well-known in the literature.

\begin{defn} \label{ch3-fcnets}
\emph{
\textbf{(Free-choice nets)}
Let $\mathcal{N}$ be a net. A net is \emph{free-choice} iff for all $s \in P$ we have that ${{\mid s^{\bullet} \mid} \leq {1}}$ or ${\forall t \in s^{\bullet}. {\mid {^{\bullet}t} \mid} = {1}}$. 
}
\eos
\end{defn}

A free-choice Petri net is a free-choice net with an initial marking. A free-choice event structure is an event structure unfolding \cite{pnesdom-winskel} of a free-choice Petri net and a free-choice TSI is the TSI model obtained from a free-choice Petri net. Free-choice nets, and hence free-choice systems, have no confusion as other classes of systems. But what is interesting about this class of nets is that the confusion-freeness property (which is a behavioural characteristic) comes directly from a structural property of these nets. In particular, any free-choice net can be built using the subnets shown in Figure \ref{ch3-freechoicesubnets-fig} (with additional flow arrows after net actions, which can be used unrestrictedly as long as the net is safe).

\begin{figure}[ht]
\begin{center}
\subfigure
{ 
\xymatrix @R=6pt @C=12pt 
{
\petzero \ar[rd] &\\
\vdots & \petsqe   \\
\petzero \ar[ru] &
} 
}
\hspace{1.7cm}
\subfigure
{ 
\xymatrix @R=6pt @C=12pt 
{
                         & \petsqe  \\
\petzero \ar[ru] \ar[rd] & \vdots  \\
                         & \petsqe
}
}
\end{center}
\caption{\small Free-choice nets: two subnets from which any free-choice system can be built.}
\label{ch3-freechoicesubnets-fig}
\end{figure}
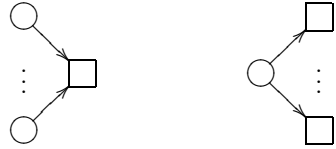 

We are almost ready to show that the two equivalences $\sim_{\clmu/}$ and $\sim_{hpb}$ coincide for a class of systems that we call `fc-structured', and denote by $\Xi$.

\begin{defn} \label{ch3-fcstructuredsys}
\emph{
\textbf{(Fc-structured ($\Xi$) systems)}
The family of \emph{fc-structured} ($\Xi$) systems is the class of systems without auto-concurrency that either are free-choice or whose confusion relation has only deterministic elements. 
}
\eos
\end{defn}

\begin{rem} \label{ch3-fcstructuredsysrem}
\emph{
The family of $\Xi$ systems contains, at least, the following classes of models (without auto-concurrency and with a deterministic conflict relation): Moore and Mealy machines, labelled graphs, synchronous and asynchronous products of sequential systems, free-choice systems, and nondeterministic concurrent systems. More importantly, $\Xi$ systems can model mutual exclusion protocols and the usual synchronization mechanisms of some process calculi.
}
\eos
\end{rem}

Now, back to the issue of relating $\sim_{hpb}$ and $\sim_{\clmu/}$, the proof that $\sim_{hpb}$ and $\sim_{\clmu/}$ coincide for the class of $\Xi$ systems goes by showing that the two inclusions ${\sim_{hpb}} \subseteq {\sim_{\clmu/}}$ and ${\sim_{\clmu/}} \subseteq {\sim_{hpb}}$ hold separately. In fact, the first inclusion holds for any class of systems while the second one requires the restriction to $\Xi$ systems.

\begin{lem}\label{ch3-hpbtoclmu}
\emph{\textbf{(Logical soundness)}}
${\sim_{hpb}} \subseteq {\sim_{\clmu/}}$.
\end{lem}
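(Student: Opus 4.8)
The plan is to turn a winning Eve strategy in the hpb game into a relation between runs of the two systems, and then lift it, by induction on formulae, to a relation between the corresponding processes of the two process spaces. Assume $\mathfrak{T}_1 \sim_{hpb} \mathfrak{T}_2$ and let $R$ be the set of pairs of (synchronous) runs reachable along plays in which Eve follows a winning strategy. By definition of the hpb game, $R$ contains $(\epsilon,\epsilon)$ and, since Adam may open either system, $R$ is closed under the following back-and-forth condition: if $(\pi_1,\pi_2)\in R$ and $u$ is a transition with $\sigma(u)=\tau(\varrho(\pi_1))$, then there is $v$ with $\sigma(v)=\tau(\varrho(\pi_2))$, $\delta(v)=\delta(u)$, $(\pi_1.u,\pi_2.v)$ synchronous and $(\pi_1.u,\pi_2.v)\in R$ (and symmetrically with the two systems exchanged).

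To each run $\pi$ of $\mathfrak{T}_i$ associate the process $\mathrm{proc}(\pi)=(\mathfrak{X},\varrho(\pi))$, where $\mathfrak{X}$ is the maximal set at $\tau(\varrho(\pi))$, with $\mathrm{proc}(\epsilon)=(\mathfrak{X}_0,t_\epsilon)$ the initial process. Since \clmu/ has no $\modald{\otimes}$ modality, every process reachable from the initial one by the \clmu/ modalities carries a maximal support set, so $\mathrm{proc}$ ranges over exactly the processes relevant to evaluating a closed \clmu/ formula, and consecutive modal steps track the consecutive transitions of a run. The core claim is: for every $(\pi_1,\pi_2)\in R$ and every fixpoint-free closed \clmu/ formula $\phi$ (which, being closed and fixpoint-free, contains no variables), $\mathrm{proc}(\pi_1)\models^{\mathfrak{T}_1}\phi$ iff $\mathrm{proc}(\pi_2)\models^{\mathfrak{T}_2}\phi$. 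Instantiating this at $(\epsilon,\epsilon)\in R$ and unfolding Definition~\ref{sflequiv} yields $\sim_{hpb}\subseteq\sim_{\clmu/}$.

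The claim is proved by structural induction on $\phi$. The boolean cases ($\neg$, $\wedge$, hence the derived $\vee$; the constants $\mutrue,\mufalse$ being trivial) are immediate from the induction hypothesis, using that $\denot{\cdot}$ is valuation-independent on closed formulae and $\denot{\neg\psi}=\mathfrak{S}\setminus\denot{\psi}$. For $\phi=\modald{a}_c\psi$: if $\mathrm{proc}(\pi_1)\models\modald{a}_c\psi$, there is a transition $r$ with $\sigma(r)=\tau(\varrho(\pi_1))$, $\delta(r)=a$, $\varrho(\pi_1)\leq r$ and $\mathrm{proc}(\pi_1.r)\models\psi$. Letting Adam play $u=r$, Eve's strategy gives $v$ with $\sigma(v)=\tau(\varrho(\pi_2))$, $\delta(v)=a$, $(\pi_1.r,\pi_2.v)\in R$ and $(\pi_1.r,\pi_2.v)$ synchronous; the first conjunct of synchronicity, $(\varrho(\pi_1),r)\in I_1\Leftrightarrow(\varrho(\pi_2),v)\in I_2$, together with $\varrho(\pi_1)\leq r$ (i.e.\ $(\varrho(\pi_1),r)\notin I_1$), forces $(\varrho(\pi_2),v)\notin I_2$, hence $\varrho(\pi_2)\leq v$. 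The induction hypothesis at $(\pi_1.r,\pi_2.v)$ and $\psi$ gives $\mathrm{proc}(\pi_2.v)\models\psi$, so $\mathrm{proc}(\pi_2)\models\modald{a}_c\psi$ with witness $v$; the converse direction uses the back half of the back-and-forth condition. The case $\phi=\modald{a}_{nc}\psi$ is identical, using $t\ominus r$ iff $(\varrho(\pi),r)\in I$, which the same conjunct of synchronicity preserves; the box cases follow either from these via the derived operators or directly from the negation case.

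The only genuinely delicate point — and where I would concentrate — is pinning down the correspondence between the bookkeeping of the hpb game (pairs of runs, and the poset each induces through $I$) and the stateless process-space semantics of \clmu/ (pairs $(R,t)$ with $R$ maximal). What makes the lemma work is that \clmu/ only ever consults the \emph{local} causal/concurrency status of a freshly executed transition against the \emph{immediately preceding} one, and this is precisely the first conjunct of synchronicity; the second conjunct (isomorphism of the full induced posets), which encodes \emph{global} causality, is not needed here. This is exactly why the inclusion holds for \emph{all} systems, whereas the reverse inclusion ${\sim_{\clmu/}}\subseteq{\sim_{hpb}}$, where one must reconstruct global causality out of the local data, forces the restriction to $\Xi$ systems. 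One small auxiliary check is the base of a modal chain: the Process Space axiom $t_\epsilon\leq r$ for every $r$ enabled at $s_0$ matches the game convention that $(\epsilon,u)\notin I$, so the $\modald{a}_c/\modald{a}_{nc}$ distinction already agrees at $(\epsilon,\epsilon)$.
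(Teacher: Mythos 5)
Your proof is correct and takes essentially the same route as the paper's: a structural induction on fixpoint-free closed \clmu/ formulae in which the first (local) conjunct of the hpb synchronicity condition is what forces Eve's response to match the $\leq$/$\ominus$ status of Adam's transition, with the global poset isomorphism left unused. Your write-up is merely more explicit than the paper's about the bookkeeping — the relation $R$ of synchronous run pairs under Eve's winning strategy and the map from runs to processes $(\mathfrak{X},\varrho(\pi))$ — which the paper's proof leaves implicit by phrasing the induction hypothesis directly as $P' \sim_{hpb} Q'$.
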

\begin{proof}
This inclusion can be shown by induction on \clmu/ formulae, which we denote by $\mathfrak{F}_{\clmu/}$. Let $\mathfrak{T}_1$ and $\mathfrak{T}_0$ be two systems and $P \in \mathfrak{S}_1$ and $Q \in \mathfrak{S}_0$ two processes that belong to the process spaces $\mathfrak{S}_1$ and $\mathfrak{S}_0$ associated with $\mathfrak{T}_1$ and $\mathfrak{T}_0$, respectively. If $P \sim_{hpb} Q$ then for all $\phi \in \mathfrak{F}_{\clmu/}$ we have that $P \models^{\mathfrak{T}_1}_{\mathcal{V}_1} \phi \Leftrightarrow Q \models^{\mathfrak{T}_0}_{\mathcal{V}_0} \phi$ given two models $\mathfrak{M}_1 = (\mathfrak{T}_1,\mathcal{V}_1)$ and $\mathfrak{M}_0 = (\mathfrak{T}_0,\mathcal{V}_0)$. Since \clmu/ only considers maximal sets, the process $P = (p,t)$ (resp. the process $Q = (q,t)$) is actually a binary tuple in $S_1 \times \mathfrak{A}_1$ (resp. in $S_0 \times \mathfrak{A}_0$) rather than a tuple in $\mathcal{X}_1 \times \mathfrak{A}_1$ (resp. in $\mathcal{X}_0 \times \mathfrak{A}_0$). Henceforth, let us write $\models$ instead of $\models^{\mathfrak{T}_i}_{\mathcal{V}_i}$, for $i \in \{0,1\}$, since the models will be clear from the context.

The base case of the induction is when $\phi = \mutrue$ or when $\phi = \mufalse$ which is trivial since $\mutrue$ and $\mufalse$ are always true and false, respectively. Now, consider the cases for the boolean operators $\wedge$ and $\vee$; first suppose that $\phi = \psi_1 \wedge \psi_2$ and assume that the result holds for both $\psi_1$ and $\psi_2$. By the definition of the satisfaction relation $P \models \phi$ iff $P \models \psi_1$ and $P \models \psi_2$ iff by the inductive hypothesis $Q \models \psi_1$ and $Q \models \psi_2$, and hence, by the definition of the satisfaction relation $Q \models \phi$. The case for $\vee$ is similar.

Now, consider the cases for the four modalities. First, suppose $\phi = \modalb{a}_{nc} \psi$ and $P \models \phi$. Therefore, for any $P' = (p',t')$, such that $a = \delta(t')$ and $P \xrightarrow{a} P'$ and $t \ominus t'$, it follows that $P' \models \psi$. Now, let $Q \xrightarrow{a} Q'$ such that $a = \delta(t')$ and $t \ominus t'$ since the bisimulation must remain synchronous. Just to recall, synchrony in an hp bisimulation means that the last transition chosen in $\mathfrak{T}_1$ (resp. in $\mathfrak{T}_0$) is concurrent with the former transition also chosen in $\mathfrak{T}_1$ (resp. in $\mathfrak{T}_0$) iff the same pattern holds in the last two transitions chosen in $\mathfrak{T}_0$ (resp. in $\mathfrak{T}_1$), and moreover the two sequences of transitions (i.e., runs) that are generated in this way are the linearisations of isomorphic posets. So, as we know that for some $P'$ there is a $P \xrightarrow{a} P'$, where $t \ominus t'$, and by the inductive hypothesis $P' \sim_{hpb} Q'$, then $Q' \models \psi$, where $t \ominus t'$, and so by the definition of the satisfaction relation $Q \models \phi$. The case when $Q$ satisfies $\phi$ is symmetric, and the case when $\phi = \modalb{a}_c \psi$ is similar (only changing $\ominus$ for $\leq$). The cases for the operators $\modald{a}_c$ and $\modald{a}_{nc}$ are analogous.
\qed
\end{proof}

In order to show the second inclusion, namely ${\sim_{\clmu/}} \subseteq {\sim_{hpb}}$, we first require some lemmas that characterise the set of runs that can be identified by \clmu/ in a partial order system. More specifically, a proof that if two systems $\mathfrak{T}_0$ and $\mathfrak{T}_1$ are \clmu/-equivalent, then for each run of one of the systems there exists a `locally synchronous' run (which is defined below) in the other system. Then, one can use this result to show that for any two $\Xi$ systems $\mathfrak{T}_0$ and $\mathfrak{T}_1$ such that $\mathfrak{T}_0 \sim_{\clmu/} \mathfrak{T}_1$, each pair of locally synchronous runs is moreover induced by two isomorphic posets, and hence, the two systems must be $\sim_{hpb}$ as well since in such a case the pair of runs is synchronous.

Recall the definition of runs and of synchronous runs given before, and let $\pi_0 \in {\Pi_{\mathfrak{T}_0}}$ and $\pi_1 \in {\Pi_{\mathfrak{T}_1}}$ be two runs of two systems $\mathfrak{T}_0$ and $\mathfrak{T}_1$, and $u,v$ two transitions. A pair of runs ($\pi_0 . u , \pi_1 . v$) is inductively defined as \emph{locally synchronous} iff ($\pi_0,\pi_1$) is locally synchronous and ${{({\varrho(\pi_0)},u)} \in I_0} \Leftrightarrow {{({\varrho(\pi_1)},v)} \in I_1}$, where $I_0$ and $I_1$ are the independence relations of $\mathfrak{T}_0$ and $\mathfrak{T}_1$. By definition, the pair of empty runs $(\epsilon,\epsilon)$ is locally synchronous. 
Note that the definitions of locally synchronous runs and synchronous runs is quite similar; the only difference is that synchronous runs must be the linearization of isomorphic posets whereas locally synchronous runs need not be.

\begin{lem}\label{ch3-clmuruns}
Let $\mathfrak{T}_0$ and $\mathfrak{T}_1$ be two systems and ${\Pi_{\mathfrak{T}_0}}$ and ${\Pi_{\mathfrak{T}_1}}$ their sets of runs. If $\mathfrak{T}_0 \sim_{\clmu/} \mathfrak{T}_1$ then for each $\pi_0 \in {\Pi_{\mathfrak{T}_0}}$ (resp. $\pi_1 \in {\Pi_{\mathfrak{T}_1}}$) there exists a run $\pi_1 \in {\Pi_{\mathfrak{T}_1}}$ (resp. $\pi_0 \in {\Pi_{\mathfrak{T}_0}}$) such that the pair of runs $(\pi_0,\pi_1)$ is locally synchronous.
\end{lem}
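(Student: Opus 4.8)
## Proof Plan for Lemma \ref{ch3-clmuruns}

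The plan is to proceed by induction on the length of the run $\pi_0$ (by symmetry it suffices to build a witness $\pi_1$ for each $\pi_0$; the other direction is identical). The base case is immediate: the empty run $\epsilon$ in $\mathfrak{T}_0$ is matched by the empty run $\epsilon$ in $\mathfrak{T}_1$, and the pair $(\epsilon,\epsilon)$ is locally synchronous by definition. For the inductive step, suppose $\pi_0 . u$ is a run of $\mathfrak{T}_0$, so that $\pi_0$ is a run of $\mathfrak{T}_0$ and $u$ is a transition with $\sigma(u) = \tau(\varrho(\pi_0))$. By the induction hypothesis there is a run $\pi_1$ of $\mathfrak{T}_1$ such that $(\pi_0,\pi_1)$ is locally synchronous; in particular the processes reached along $\pi_0$ and $\pi_1$ satisfy the same $\clmu/$ formulae (this is the content of $\mathfrak{T}_0 \sim_{\clmu/} \mathfrak{T}_1$ propagated along the runs, which one must first argue carefully — see below). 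I must then produce a transition $v$ in $\mathfrak{T}_1$ with $\sigma(v) = \tau(\varrho(\pi_1))$, $\delta(v) = \delta(u)$, and ${{(\varrho(\pi_0),u)} \in I_0} \Leftrightarrow {{(\varrho(\pi_1),v)} \in I_1}$, and such that the extended pair remains in the $\clmu/$-equivalence (so that the induction can continue).

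The key step is the choice of $v$. Write $t_0 = \varrho(\pi_0)$ and $t_1 = \varrho(\pi_1)$, and let $a = \delta(u)$. There are two cases according to whether $u$ is causally dependent on $t_0$ (i.e.\ $t_0 \leq u$, equivalently $(t_0,u) \notin I_0$) or linearly concurrent with it ($t_0 \ominus u$, i.e.\ $(t_0,u) \in I_0$). In the first case, the process $(p_0, t_0)$ reached after $\pi_0$ satisfies $\modald{a}_c \mutrue$ (it can perform an $a$-action causally depending on $t_0$), hence so does the process $(p_1,t_1)$ reached after $\pi_1$, which by the semantics of $\modald{a}_c$ means there is a transition $v$ with $\sigma(v) = \tau(t_1)$, $\delta(v) = a$, and $t_1 \leq v$; in the second case the same argument with $\modald{a}_{nc}$ produces a $v$ with $t_1 \ominus v$. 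Either way the local-synchrony condition ${{(t_0,u)} \in I_0} \Leftrightarrow {{(t_1,v)} \in I_1}$ is met. To guarantee that $(\pi_0 . u, \pi_1 . v)$ stays $\clmu/$-equivalent (not merely that \emph{some} $v$ exists) one invokes the standard back-and-forth/Ehrenfeucht–Fra\"iss\'e-style argument for image-finite systems: using image-finiteness and absence of auto-concurrency, for every $\clmu/$ formula $\psi$ that $(p_0',u)$ satisfies one shows $(p_1',v)$ can be chosen to satisfy it too, and since the logic (being fixpoint-free and closed under the relevant boolean and modal operators, with only finitely many non-equivalent formulae up to the relevant depth) separates non-equivalent processes, a single $v$ works for all $\psi$ simultaneously.

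The main obstacle is making precise the claim that $\clmu/$-equivalence of the two \emph{systems} (an equivalence of initial processes) propagates to an equivalence of the \emph{processes reached along} locally synchronous runs, and that this propagated equivalence is exactly what lets one pick a compatible $v$. This requires: (i) observing that the second component of a $\clmu/$-process records the last transition, so the modalities $\modald{a}_c$ and $\modald{a}_{nc}$ see precisely the causal-versus-concurrent distinction relative to that last transition; (ii) checking that moving from $(p,t)$ to $(\mathfrak{X}(\tau(r)), r)$ along a transition $r$ is exactly the dynamics the semantics in Definition \ref{semtfl} prescribes, so that ``$\clmu/$-equivalent and both extended by an $a$-move of the same causal type to $\clmu/$-equivalent processes'' is a genuine back-and-forth invariant; and (iii) handling the image-finiteness argument to collapse ``for each $\psi$ some $v_\psi$'' into ``some $v$ for all $\psi$'', which is the usual subtlety in proving Hennessy–Milner-type theorems. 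The isomorphism of the induced posets is deliberately \emph{not} addressed here — that is the job of the subsequent lemma using the $\Xi$-restriction — so the present proof only needs the weaker local-synchrony bookkeeping, which keeps the argument purely local and avoids confusion-related complications.
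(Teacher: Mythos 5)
Your plan is correct and follows essentially the same route as the paper's proof: both arguments extend locally synchronous runs one transition at a time (base case $(\epsilon,\epsilon)$) and use the causal/non-causal diamond modalities $\modald{a}_{c}$ and $\modald{a}_{nc}$ to force a matching transition $v$ of the same causal type as $u$, the paper merely phrasing this as a contradiction rather than a direct induction. Your version is in fact somewhat more careful than the paper's, since you strengthen the inductive invariant to include $\clmu/$-equivalence of the reached processes and justify its propagation via the image-finite Hennessy--Milner-style argument, a step the paper passes over by simply asserting ``by hypothesis, $P \sim_{\clmu/} Q$'' at arbitrary reached processes.
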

\begin{proof}
The proof goes by a contradiction argument. Suppose that for all $\phi$ in $\mathfrak{F}_{\clmu/}$ we have that $P \models \phi \Leftrightarrow Q \models \phi$ and there exists a run in one of the systems that is not locally synchronous to any of the runs in the other system. The case where $P$ and $Q$ are the initial processes of $\mathfrak{T}_0$ and $\mathfrak{T}_1$, respectively, is trivially false since, by definition, the pair of empty runs ($\epsilon,\epsilon$) is locally synchronous.

Then, suppose now that $(\pi_0,\pi_1)$ is locally synchronous and that $P$ and $Q$ are two processes reached, respectively, in $\mathfrak{T}_0$ and in $\mathfrak{T}_1$ after following $\pi_0$ and $\pi_1$ in each system (starting from their initial processes). Additionally, suppose that there exists a transition $u$ in one of the systems, say in $\mathfrak{T}_0$, such that there is no transition $v$ in the other system for which the pair of runs $(\pi_0 . u , \pi_1 . v)$ is locally synchronous. Note that $P$ and $Q$ are strongly bisimilar, since \clmu/ includes \lmu/, and thus, the case in which a processes can perform a transition (regardless of its label) and the other cannot do so is impossible as this contradicts the hypothesis that $P \sim_{sb} Q$. 

So, suppose that for some transition $u$ with label $a$, $P = (p,\varrho(\pi_0)) \xrightarrow{a} P' = (p',u)$ and $\varrho(\pi_0) \ominus u$ (resp. $\varrho(\pi_0) \leq u$), but for all transitions $v$ such that $a = \delta(v)$ it holds that $Q = (q,\varrho(\pi_1)) \xrightarrow{a} Q' = (q',v)$ and $\varrho(\pi_1) \leq v$ (resp. $\varrho(\pi_1) \ominus v$) only. However, we know that, by hypothesis, $P \sim_{\clmu/} Q$ and so, it must be true that if $P \models \modald{a}_{nc} \phi$ (resp. if $P \models \modald{a}_{c} \phi$) then $Q \models \modald{a}_{nc} \phi$ (resp. if $Q \models \modald{a}_{c} \phi$), which is a contradiction. Thus, one must be able to match pairs of independent transitions in one of the systems whenever the same happens in the other system for all pairs of processes $P$ and $Q$ satisfying that $P \sim_{\clmu/} Q$.
\qed
\end{proof}


The previous lemma ensures that if two systems satisfy the same set of \clmu/ formulae, then, locally, they have the same causal behaviour. However, in order to show that, globally, they also have the same causal behaviour, one needs some additional information, which is given by the following lemma.

\begin{lem}\label{ch3-labels}
Let $\mathfrak{T}$ be a $\Xi$ system whose conflict relation is $\cfs/$ and let $\pi \in {\Pi_{\mathfrak{T}}}$. If after executing the run $\pi$ in $\mathfrak{T}$ there are two different enabled transitions $u$ and $v$ such that $\delta(u) = \delta(v)$, then the following two statements hold:
\begin{enumerate}
\item $u \# v$.
\item There is at most one transition $t$ in $\pi$ such that $\tau(t) = \sigma(u') = \sigma(v')$ for which $t \leq u'$ and $t \leq v'$ and $u \sim u'$ and $v \sim v'$.
\end{enumerate}
\end{lem}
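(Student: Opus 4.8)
The plan is to prove the two items separately; item~1 is short and is the one place where the $\Xi$ hypothesis (through absence of auto-concurrency) is genuinely needed, whereas item~2 is read off from the event-structure unfolding of $\mathfrak{T}$. For item~1: since $u$ and $v$ are both enabled after $\pi$ they share the source state $\sigma(u)=\sigma(v)=\tau(\varrho(\pi))$, and by the local dichotomy of Section~\ref{dual} any two distinct transitions with a common source lie either in $\otimes$ or in $\#$. If $(u,v)\in\otimes$ then $u\I v$ with $\delta(u)=\delta(v)$, i.e.\ two equally-labelled, concurrently-enabled transitions — auto-concurrency, which is excluded in $\Xi$ systems — so $(u,v)\in\#$. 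I would also record this in slightly more general form, since it recurs below: whenever two equally-labelled transitions are simultaneously enabled at a state they are in conflict; in particular any witnesses $u'\sim u$, $v'\sim v$ as in item~2 satisfy $u'\,\#\,v'$.

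For item~2 I would pass to the event-structure unfolding of $\mathfrak{T}$ and use the standard correspondences between the transition level and the event level: $\sim$-classes of transitions are the events; for $\tau(t)=\sigma(t')$ one has $t\leq t'$ iff the event of $t$ is an \emph{immediate} causal predecessor of the event of $t'$; and, by axiom~\textbf{A4}$'$, $\sim$-equivalent transitions have equal independence sets, so that ``$t\leq t'$'' really is a property of the events of $t$ and $t'$. Let $C$ be the (finite) configuration reached by $\pi$. Because $u$ is enabled after $\pi$, its event $e_u$ is not in $C$ yet is enabled at $C$, so its entire causal down-set (minus $e_u$) lies in $C$: every causal predecessor of $e_u$ has already been fired along $\pi$. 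Now suppose $t$ occurs in $\pi$ with witnesses $u'\sim u$, $v'\sim v$, $\tau(t)=\sigma(u')=\sigma(v')$, $t\leq u'$ and $t\leq v'$. From $t\leq u'$ and $u'\sim u$ we get $e_t\prec e_u$, so $e_t$ is one of those predecessors; and since $u'$ is enabled at the state $\tau(t)$ reached by the prefix of $\pi$ ending at $t$, all predecessors of $e_u$ are already fired by that point of $\pi$. Hence $e_t$ is precisely the predecessor of $e_u$ that $\pi$ fires last, which — the firing order of $\pi$ being fixed — determines $e_t$, and so the transition $t$, uniquely. Thus at most one transition of $\pi$ can meet the condition (already on account of its $u$-clause; the $v$-clause only constrains it further). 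The degenerate case in which $e_u$ has no causal predecessor is vacuous, and no $\sim$-instance of $u$ is ever fired along $\pi$ since $e_u\notin C$, which is what legitimises speaking of ``the predecessor fired last''.

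The main obstacle is not combinatorial but lies in this translation to the event level: one must justify that $t\leq u'$ with $u'\sim u$ yields \emph{immediate} causality $e_t\prec e_u$ — which holds because $e_t$ is maximal in the prefix configuration at $t$ while every predecessor of $e_u$ (equivalently, of the persisting occurrence $e_{u'}$, which equals $e_u$ since nothing fired in $\pi$ after $t$ consumes it or conflicts with it) already lies in that prefix, leaving no room for an intermediate event — and that ``$u'$ enabled at $\tau(t)$ along $\pi$'' is the same as ``all causal predecessors of $e_u$ lie in the prefix configuration''. The residual occurrence-net bookkeeping (for instance, distinguishing distinct occurrences of a transition repeated along a cyclic $\pi$) is routine and does not affect the count, which is by transition. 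Beyond this, the hypotheses $\delta(u)=\delta(v)$ and $\mathfrak{T}\in\Xi$ enter only through item~1's dichotomy.
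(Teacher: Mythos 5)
Your treatment of item~1 is fine and matches the paper's (one-line) justification: two distinct equally-labelled transitions enabled at the same state are either in $\otimes$ or in $\#$, and the first option is auto-concurrency, which $\Xi$ systems exclude.

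For item~2, however, your argument has a genuine gap, and it sits exactly where you claim the hypotheses are no longer needed. The pivotal step is the identification $e_{u'}=e_u$: from $u'\sim u$, $\sigma(u')=\tau(t)$ and $t\leq u'$ you conclude that $e_t$ is a causal predecessor of the event $e_u$ of the occurrence of $u$ \emph{at the end of} $\pi$, and hence that $t$ must be ``the predecessor of $e_u$ fired last''. But $\sim$-classes are actions, not occurrences: $u'\sim u$ only says that $u'$ is an instance of the same action in some interleaving, and your parenthetical justification (``nothing fired in $\pi$ after $t$ consumes it or conflicts with it'') is precisely the persistence property that would have to be proved and does not follow from $u'\sim u$. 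If some transition fired after $t$ is dependent on the action of $u$ (for instance one that consumes and re-creates its preset resources, or a conflicting synchronisation), then the instance enabled at $\tau(t)$ and the instance enabled after $\pi$ are different events of the unfolding, $e_t$ need not lie below $e_u$ at all, and yet $t$ still satisfies the displayed condition; the ``unique last-fired predecessor'' argument then no longer bounds the number of such $t$ by one. Runs that revisit a state (your ``routine cyclic bookkeeping'') are exactly where this happens, so that remark cannot be waved off independently of the persistence claim.

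This is also why your assertion that $\Xi$ enters only through item~1 conflicts with the paper's own proof. The paper derives item~2 from determinism of the confusion relation: since $\delta(u)=\delta(v)$ and $u\,\#\,v$, no tuple of $\cfs/$ may contain both, so neither $u$ nor $v$ can be an instance of a net action $e$ with $\mid{}^{\bullet}e\mid>1$; both are instances of singleton-preset actions, and it is this structural fact (no synchronisations behind $u$ or $v$) that limits their common local cause in $\pi$ to at most one transition. Your unfolding argument never engages with presets or confusion, so it cannot recover this restriction; to repair it you would either have to prove the persistence/event-identification step outright (which is where deterministic confusion, or free choice, must be invoked) or follow the paper's net-level route.
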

\begin{proof}
In the same order as in the statement of the lemma:
\begin{enumerate}
\item Because there is no auto-concurrency.
\item As the confusion relation is deterministic there is no ${c} \in {\cfs/}$ such that both $u$ and $v$ belong to $c$; in particular neither transition can be an instance of an action $e$ (at the net level) for which ${\mid {^{\bullet}e} \mid} > 1$. Instead, such transitions are instances of two different actions $e_1$ and $e_2 $ for which ${\mid {^{\bullet}e_1} \mid} = 1 = {\mid {^{\bullet}e_2} \mid}$.\qed
\end{enumerate}
\end{proof}

Finally, the following lemma ensures that for the class of $\Xi$ systems, the notion of locally synchronous runs (associated with local causality) is good enough to capture the stronger notion of synchronous runs (associated with global causality), so long the two systems satisfy the same set of \clmu/ formulae. 

\begin{lem}\label{ch3-localhpbtoglobalhpb}
Let $\mathfrak{T}_0$ and $\mathfrak{T}_1$ be $\Xi$ systems whose sets of runs are ${\Pi_{\mathfrak{T}_0}}$ and ${\Pi_{\mathfrak{T}_1}}$. If for each $\pi_0 \in {\Pi_{\mathfrak{T}_0}}$ (resp. $\pi_1 \in {\Pi_{\mathfrak{T}_1}}$) there exists some $\pi_1 \in {\Pi_{\mathfrak{T}_1}}$ (resp. $\pi_0 \in {\Pi_{\mathfrak{T}_0}}$) such that $(\pi_0,\pi_1)$ is locally synchronous, then $(\pi_0,\pi_1)$ is, moreover, synchronous. 
\end{lem}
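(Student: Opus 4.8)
The plan is to prove, by induction on the common length $k$ of the two runs, the following sharpening: every locally synchronous pair $(\pi_0,\pi_1)$ is synchronous; that is, the position-wise bijection $f$ sending the event occurrence of $\pi_0(i)$ to that of $\pi_1(i)$ is an isomorphism between the poset induced by $\pi_0$ with $I_0$ and the poset induced by $\pi_1$ with $I_1$. The inductive definition of local synchrony already forces the two runs to have equal length, so this makes sense; the fact that $f$ additionally respects labels, $\delta(\pi_0(i)) = \delta(\pi_1(i))$, comes for free in the context where the lemma is applied, since there the runs are generated by a relation refining strong bisimilarity (recall \clmu/ includes \lmu/). The base case $k = 0$ is the pair of empty runs, with empty induced posets.

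For the inductive step, assume the claim for locally synchronous pairs of length $k$ and let $(\pi_0.u,\pi_1.v)$ be locally synchronous, so that $(\pi_0,\pi_1)$ is locally synchronous of length $k$ and $(\varrho(\pi_0),u) \in I_0 \Leftrightarrow (\varrho(\pi_1),v) \in I_1$. Passing from $\pi_0$ to $\pi_0.u$ (resp. from $\pi_1$ to $\pi_1.v$) adjoins a single new \emph{maximal} element $e_u$ (resp. $e_v$) to the induced poset, since $e_u$ is the last event to occur. By the induction hypothesis $f$ is already a poset isomorphism on the first $k$ elements, so the only thing left to verify is that, for every $i \le k$, one has $e_{\pi_0(i)} \leq e_u$ in $\mathfrak{T}_0$ if and only if $e_{\pi_1(i)} \leq e_v$ in $\mathfrak{T}_1$; equivalently, that $f$ carries the downward closure of $e_u$ onto the downward closure of $e_v$.

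To compute these downward closures I would push the fresh transition backward through the run. If $u$ is linearly concurrent with $\varrho(\pi_0)$ --- equivalently, by local synchrony, $v$ is linearly concurrent with $\varrho(\pi_1)$ --- the concurrency diamond supplied by axioms \textbf{A2} and \textbf{A3} lets one commute $u$ past $\varrho(\pi_0)$, yielding a run on the same transitions up to $\sim$ but with a $\sim$-representative $u'$ of $u$ one slot earlier; axiom \textbf{A4}' ensures this is well defined, since independence is a property of $\sim$-classes. Iterating, $e_{\pi_0(i)} \leq e_u$ holds precisely when $u$, carried backward in this way, first becomes dependent on the current representative of $\pi_0(i)$, so the downward closure of $e_u$ is determined by the label sequence of $\pi_0$ together with the local independence facts encountered along the commutations; the same holds in $\mathfrak{T}_1$ for $e_v$. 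Local synchrony and position-wise label agreement guarantee that the two procedures start from the same data; the point that has to be established is that each run obtained from $\pi_0$ by one backward commutation is again locally synchronous to the corresponding commutation of $\pi_1$, so that local synchrony can be reinvoked at every further step and the two downward closures are built identically.

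This propagation step is where the assumption that $\mathfrak{T}_0$ and $\mathfrak{T}_1$ are $\Xi$ systems (Definition \ref{ch3-fcstructuredsys}) is essential, and it is the main obstacle. In an arbitrary system a backward commutation replaces $u$ by a transition $u'$ whose independence relations with the earlier transitions need not mirror those of the corresponding $v'$ in the other system --- exactly the confusion-driven mismatch witnessed by the counter-examples of Figure \ref{fig:ceslmu}. For a $\Xi$ system, confusion is either absent (free-choice) or deterministic, and Lemma \ref{ch3-labels} then controls the branching a fresh transition can exhibit over a run: two distinct enabled transitions with the same label are in conflict (so the only ambiguity is a genuine nondeterministic choice, not concealed auto-concurrency), and they descend from at most one common branching transition. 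Using this, I would argue that over a run in a $\Xi$ system the backward-commutation procedure is, in the relevant sense, insensitive to the choice of $\sim$-representatives, so that the $\mathfrak{T}_0$- and $\mathfrak{T}_1$-procedures stay locally synchronous at every intermediate configuration. Feeding this into the induction shows $f$ is a poset isomorphism, hence $(\pi_0,\pi_1)$ is synchronous, which completes the proof and, together with Lemma \ref{ch3-clmuruns}, the inclusion ${\sim_{\clmu/}} \subseteq {\sim_{hpb}}$ on $\Xi$ systems.
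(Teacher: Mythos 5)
Your overall frame---induction on run length, with the inductive step reduced to showing that the newly added maximal elements $e_u$ and $e_v$ have corresponding downward closures---matches the paper's proof, and your identification of where the $\Xi$ hypothesis must enter is accurate. But there is a genuine gap: the propagation step that you yourself flag as ``the main obstacle'' is never actually carried out. You assert that over a $\Xi$ system the backward-commutation procedure is ``insensitive to the choice of $\sim$-representatives'' and that the commuted runs ``stay locally synchronous at every intermediate configuration,'' but no argument is given, and this is precisely where all the content of the lemma lives. Note also that the lemma's hypothesis only gives, for each run of one system, the \emph{existence} of some locally synchronous partner; it does not say that the specific run obtained from $\pi_0$ by one backward commutation is locally synchronous to the corresponding commutation of $\pi_1$, so re-invoking local synchrony along your commutation sequence is not licensed without further work. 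Without that, the claim that the two downward closures ``are built identically'' is unsupported, and the confusion counter-examples you cite show exactly the kind of mismatch that must be excluded by a concrete argument rather than by appeal to the definition of $\Xi$.

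For comparison, the paper closes this gap not by commuting $u$ backwards but by a case analysis on the net action underlying $u$: (1) its preset is not a singleton and $u$ is conflict-free, (2) it is of free-choice type, or (3) $u$ is in the conflict relation. In each case, a hypothetical mismatch in causal dependencies between $u$ and $v$ (say, $v$ depending on an extra transition $e_j$ independent of $\varrho(\pi_1)$, or $v$ independent of a transition matched to one on which $u$ depends) is used to manufacture a second enabled transition $v'$ with $\delta(v')=\delta(v)$; absence of auto-concurrency forces $v \mathbin{\#} v'$, and the pair is then shown to sit inside a confusion tuple that cannot be deterministic, contradicting the $\Xi$ assumption---with Lemma \ref{ch3-labels} pinning down that equally-labelled enabled transitions have at most one common causal predecessor in the run. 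If you want to salvage your commutation route, you would need to prove an analogue of exactly this dichotomy: that in a $\Xi$ system the independence pattern of the fresh transition over the whole run is determined by its label and its relation to $\varrho(\pi_0)$, which is in substance the same label/conflict/confusion argument the paper makes; as it stands, your proposal presupposes the conclusion at the step where the hypothesis must do its work.
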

\begin{proof}
The proof is based on the fact that if $(\pi_0,\pi_1)$ is a locally synchronous pair, then the posets induced by such locally synchronous runs induce isomorphic posets if the systems are $\Xi$, and hence, the pair of runs is also `globally' synchronous. We proceed by induction on the length of runs. The base case, i.e., when the pair of runs is $(\pi_0,\pi_1) = (\epsilon,\epsilon)$, is trivial since in this case the two posets are empty. 

Then, for the induction step, suppose that there is a non-empty run $\pi_0$ of size $k$ that is locally synchronous to some run $\pi_1$; moreover, suppose that $\pi_0$ and $\pi_1$ induce isomorphic posets. We show that there is not a run $\pi_0 . u$ which induces a poset that is not isomorphic to any of the posets induced by those runs $\pi_1 . v$ for which the pairs of extended runs $(\pi_0 . u , \pi_1 . v)$ are locally synchronous. 

Due to the definition of $\Xi$ systems, one can consider the following three cases: (1) the transition $u$ is the instance of a net action $e$ such that ${\mid{^{\bullet}e}\mid}>1$ and $u$ is not in the conflict relation of $\mathfrak{T}_0$; (2) the transition $u$ is the instance of a net action $e$ such that for some net place $s$ we have that $e \in s^{\bullet}$ and $\forall e \in s^{\bullet}. {{\mid{^{\bullet}e}\mid}=1}$ and $u$ is not in the conflict relation of $\mathfrak{T}_0$; or (3) the transition $u$ is an instance of a net action of either type and is in the conflict relation of $\mathfrak{T}_0$. 

For the first case, let $\pi_0$ be any run such that $\varrho(\pi_0) \leq u$. By hypothesis we have that the posets induced by $\pi_0$ and $\pi_1$ are isomorphic, that $u$ depends only on one transition (namely, on $\varrho(\pi_0)$), and that $\varrho(\pi_1) \leq v$ as well. Then, the only possibility for this case to fail is if $v$, unlike $u$, causally depends on more than only one transition (since it already depends on $\varrho(\pi_1)$). Suppose this could happen; then, there is at least one transition $e_j$ in $\pi_1$ on which $v$ also causally depends and that is independent of $\varrho(\pi_1)$. Then there must exist a run $\pi^{-}_1$ of length $k-1$ that do not contain $e_j$ and where $\varrho(\pi^{-}_1) \leq v'$ for some $v'$ such that $\delta(v') = \delta(v)$. Since $v$ and $v'$ cannot be two instances of the same net action, then they must be in conflict (because there is no auto-concurrency) and moreover belong to some tuple $c$ of the confusion relation $\cfs/$ of $\mathfrak{T}_1$, which is impossible since $\Xi$ systems have a deterministic confusion relation. As a consequence any transition $u$ of this kind can be matched only by a transition $v$ that is the instance of a net action $e$ for which ${\mid{{^{\bullet}}e}\mid} = 1$, and due to Lemma \ref{ch3-labels}, such kind of transitions extend a unique transition of any run, keeping the two extended runs $\pi_0 . u$ and $\pi_1 . v$ not only locally synchronous but also globally synchronous.

For the second case, suppose that $u$ depends on a set $\{e^{i}_0,..., e^{k}_0,..., e^{m}_0 \}$ of elements of the poset induced by $\pi_0$, i.e., $\forall e \in \{e^{i}_0,..., e^{k}_0,..., e^{m}_0 \}. (e,u) \not \in I_0$, and there is at least one $e^{k}_0$ that was related to some $e^{k}_1$ of $\pi_1$ while constructing the two locally synchronous runs, i.e., $e^{k}_0 = \pi_0(k)$ and $e^{k}_1 = \pi_1(k)$ for some natural number $k$, but that is not extended in $\pi_0$ with respect to $u$ as $e^{k}_1$ is extended in $\pi_1$ with respect to $v$, i.e., which makes the two induced posets not isomorphic because $(e^{k}_0,u) \not \in I_0$ whereas $(e^{k}_1,v) \in I_1$. 

For the same reasons given in the first case, $v$ cannot depend on only one transition in $\pi_1$. On the contrary it must depend on at least two transitions, one of which must have the same label as $e^{k}_0$ and $e^{k}_1$; let $e^{n}_1$ be such a transition. As in the first case, w.l.o.g.\ the other transition can be $\varrho(\pi_1)$. Then, we have that $v$ causally depends on $e^{n}_1$ and is independent of $e^{k}_1$, which is independent of $e^{n}_1$. But this is impossible since $\delta(e^{n}_1) = \delta(e^{k}_1)$ and there is no auto-concurrency. Therefore, both runs must be extended in a synchronous way in this case as well.

Finally, for the third case notice that the arguments given before apply here as well, regardless of the kind of transition under consideration since the two properties in the former cases still hold: on the one hand, any two transitions equally labelled are always in conflict and causally depend (locally) on only one transition of any run; and, on the other hand, whenever is enabled a transition that is an instance of a net action whose preset is not a singleton, then that transition is the only one enabled with such a label.

As a consequence, any transition $v$ must extend the poset induced by $\pi_1$ in the same way as $u$ extends the poset induced by $\pi_0$, i.e., $\forall k \in \{1,...,\mid\pi_0\mid\}$ one has that $(\pi_0(k),u) \not \in I_0$ iff $(\pi_1(k),v) \not \in I_1$, making the two posets isomorphic in all cases and for all pairs $(\pi_0,\pi_1)$ of locally synchronous runs of any length.
\qed
\end{proof}

Informally, one can say that the arguments in the proof just given go through because any `extra-concurrency' in one of the systems with respect to the other can be recognised since there is no auto-concurrency, and any `extra-causality' can be recognised since, in $\Xi$ systems, any two transitions enabled at the same time and equally labelled must be in conflict and causally depend on one transition in any run. 

\begin{cor}\label{ch3-clmutohpb}
\emph{\textbf{(Logical completeness)}}
${\sim_{\clmu/}} \subseteq {\sim_{hpb}}$ on $\Xi$ systems.
\end{cor}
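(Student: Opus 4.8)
The plan is to exhibit an explicit history-preserving bisimulation between $\mathfrak{T}_0$ and $\mathfrak{T}_1$ whenever $\mathfrak{T}_0 \sim_{\clmu/} \mathfrak{T}_1$, and then read off the synchrony condition from Lemma \ref{ch3-localhpbtoglobalhpb}. Concretely, I would define $R_{hpb}$ to be the set of pairs of runs $(\pi_0,\pi_1) \in \Pi_{\mathfrak{T}_0} \times \Pi_{\mathfrak{T}_1}$ such that $(\pi_0,\pi_1)$ is locally synchronous and the processes $P$ and $Q$ reached after executing $\pi_0$ and $\pi_1$ from the respective initial processes satisfy $P \sim_{\clmu/} Q$. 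The pair $(\epsilon,\epsilon)$ lies in $R_{hpb}$ by the hypothesis $\mathfrak{T}_0 \sim_{\clmu/} \mathfrak{T}_1$ and by the convention that $(\epsilon,\epsilon)$ is locally synchronous, so the base case of Definition \ref{ch2-hpbgame} holds. The relation is symmetric by construction (both the local-synchrony condition and $\sim_{\clmu/}$ are symmetric), so it suffices to check the two forth/back clauses in one direction.

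First I would check that $R_{hpb}$ is a strong bisimulation. This is immediate: the abbreviations $\modald{a}\phi \equiv \modald{a}_{c}\phi \vee \modald{a}_{nc}\phi$ and $\modalb{a}\phi \equiv \modalb{a}_{c}\phi \wedge \modalb{a}_{nc}\phi$ show that the fixpoint-free fragment $\mathfrak{F}_{\clmu/}$ contains a copy of HML, so $\sim_{\clmu/}$ refines HML-equivalence, which on image-finite systems coincides with $\sim_{sb}$ by the Hennessy--Milner theorem. Hence every pair in $R_{hpb}$ is $\sim_{sb}$-related and Adam's $a$-labelled move in one system can always be answered by an $a$-labelled move in the other.

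Next I would verify the $\sim_{hpb}$ rule. Suppose $(\pi_0,\pi_1) \in R_{hpb}$ with reached processes $P,Q$, and Adam extends, say, $\pi_0$ by a transition $u$ with $\delta(u)=a$; say $\varrho(\pi_0) \ominus u$ (the causal case $\varrho(\pi_0) \leq u$ is symmetric). Then $P \models \modald{a}_{nc}\mutrue$, and because logical equivalence is a bisimulation \emph{separately} for the causal and non-causal modalities of \clmu/ on image-finite systems, Eve can choose a transition $v$ with $\delta(v)=a$, $\varrho(\pi_1) \ominus v$, and with the reached processes $P',Q'$ again satisfying $P' \sim_{\clmu/} Q'$; this last part is exactly (a mild strengthening of) Lemma \ref{ch3-clmuruns}. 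By the definition of locally synchronous runs, $(\pi_0.u,\pi_1.v)$ is then locally synchronous, and since $\mathfrak{T}_0$ and $\mathfrak{T}_1$ are $\Xi$ systems, Lemma \ref{ch3-localhpbtoglobalhpb} upgrades this to: $(\pi_0.u,\pi_1.v)$ induces isomorphic posets, i.e.\ is synchronous. Hence $(\pi_0.u,\pi_1.v) \in R_{hpb}$, and $R_{hpb}$ is a history-preserving bisimulation. Specialising to the initial processes gives $\mathfrak{T}_0 \sim_{hpb} \mathfrak{T}_1$, and applying the argument to arbitrary $\clmu/$-equivalent processes gives $\sim_{\clmu/} \subseteq \sim_{hpb}$ on $\Xi$ systems.

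The main obstacle is the step that turns ``$\clmu/$-equivalent processes'' into ``matchable causal-move-by-causal-move and non-causal-move-by-non-causal-move, landing again in $\clmu/$-equivalent processes''. Lemma \ref{ch3-clmuruns} already does essentially this work, but it asserts only the existence of \emph{some} locally synchronous run, not that the equivalence $\sim_{\clmu/}$ is preserved along the matching; closing that gap requires the standard image-finiteness argument of the Hennessy--Milner theorem, carried out with the causal/non-causal refinement of the modalities and with $P \sim_{\clmu/} Q$ maintained as an invariant along the play. Everything that is genuinely $\Xi$-specific --- the passage from local to global causality --- has already been isolated in Lemmas \ref{ch3-labels} and \ref{ch3-localhpbtoglobalhpb}, so no further combinatorial analysis of nets is needed at this point.
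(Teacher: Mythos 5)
Your proposal follows essentially the same route as the paper: the corollary is obtained by combining Lemma \ref{ch3-clmuruns} (logical equivalence yields locally synchronous runs) with Lemma \ref{ch3-localhpbtoglobalhpb} (on $\Xi$ systems local synchrony upgrades to synchrony), and your explicit relation $R_{hpb}$ is precisely the history-preserving bisimulation that this combination implicitly constructs. Your closing observation is well taken: Lemma \ref{ch3-clmuruns} as stated only asserts the existence of a locally synchronous partner run, and the invariant that the reached processes remain $\sim_{\clmu/}$-related --- which is needed to iterate the matching round by round --- does require the image-finiteness argument you describe, a step the paper leaves implicit.
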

\begin{proof}
From Lemmas \ref{ch3-clmuruns} and \ref{ch3-localhpbtoglobalhpb}.
\qed
\end{proof}

\begin{thm}\label{causalsim}
\emph{\textbf{(Full logical definability)}}
${\sim_{\clmu/}} \equiv {\sim_{hpb}}$ on $\Xi$ systems.
\end{thm}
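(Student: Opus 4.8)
The statement is the conjunction of two set-theoretic inclusions, so the plan is simply to assemble it from the two directions already isolated above. The inclusion ${\sim_{hpb}} \subseteq {\sim_{\clmu/}}$ is exactly Lemma~\ref{ch3-hpbtoclmu}, and it holds for arbitrary systems: one argues by structural induction on $\mathfrak{F}_{\clmu/}$, the only nontrivial step being the modal cases, where the synchrony built into an hp-bisimulation guarantees that a causal (resp.\ non-causal) move chosen by Adam is answered by Eve with a causal (resp.\ non-causal) move, so that the inductive hypothesis on the target processes applies to $\modald{a}_c$, $\modald{a}_{nc}$, $\modalb{a}_c$, $\modalb{a}_{nc}$ alike. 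The reverse inclusion ${\sim_{\clmu/}} \subseteq {\sim_{hpb}}$, restricted to $\Xi$ systems, is Corollary~\ref{ch3-clmutohpb}.

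For that reverse inclusion one chains Lemmas~\ref{ch3-clmuruns} and~\ref{ch3-localhpbtoglobalhpb}. First, if $\mathfrak{T}_0 \sim_{\clmu/} \mathfrak{T}_1$ then, by a contradiction argument that plays off the modalities $\modald{a}_c$ and $\modald{a}_{nc}$ against each other, every run of one system has a \emph{locally} synchronous partner in the other (Lemma~\ref{ch3-clmuruns}). Second, on $\Xi$ systems local synchrony already forces global synchrony: Lemma~\ref{ch3-labels} shows that two equally-labelled transitions enabled after a run must be in conflict and can causally depend on at most one transition of that run, and feeding this into the three-case analysis of Lemma~\ref{ch3-localhpbtoglobalhpb} (transition below a net action with non-singleton preset; transition below a ``free-choice-shaped'' action; transition lying in the confusion relation) shows that the posets induced by a locally synchronous pair of runs are in fact isomorphic. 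Hence the pair is synchronous, an hp-bisimulation can be built from the matched runs, and $\mathfrak{T}_0 \sim_{hpb} \mathfrak{T}_1$ follows.

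Putting the two inclusions together gives ${\sim_{\clmu/}} \equiv {\sim_{hpb}}$ on $\Xi$ systems, which is the claim; the only bookkeeping point is that the ``on $\Xi$ systems'' hypothesis is needed solely for the completeness half, the soundness half imposing no restriction.

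I would not expect any genuine obstacle at the level of this theorem, since both halves are by now in hand. The real difficulty sits inside Lemma~\ref{ch3-localhpbtoglobalhpb}: the step where one must rule out that the matching transition $v$ carries an extra causal dependency that $u$ does not, which is exactly where the absence of auto-concurrency and the free-choice / deterministic-confusion structure of $\Xi$ systems are both indispensable. If anything needs care when stating the theorem, it is only to keep straight that ``extra-concurrency'' is detectable thanks to the no-auto-concurrency assumption while ``extra-causality'' is detectable thanks to the $\Xi$ restriction, so that neither inclusion leaks the wrong side's hypotheses.
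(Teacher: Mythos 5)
Your proposal is correct and follows exactly the paper's route: the theorem is obtained by combining the soundness inclusion of Lemma~\ref{ch3-hpbtoclmu} (valid for arbitrary systems) with the completeness inclusion of Corollary~\ref{ch3-clmutohpb} (which chains Lemmas~\ref{ch3-clmuruns} and~\ref{ch3-localhpbtoglobalhpb} and is where the $\Xi$ restriction is used). Your accompanying remarks on where each hypothesis is needed accurately reflect the supporting lemmas, so there is nothing to add.
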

\begin{proof}
Immediate from Lemma \ref{ch3-hpbtoclmu} and Corollary \ref{ch3-clmutohpb}.
\qed
\end{proof}

\begin{cor}\label{corclmudec}
$\sim_{\clmu/}$ is decidable on $\Xi$ systems.
\end{cor}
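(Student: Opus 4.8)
The plan is to obtain decidability as an immediate consequence of the coincidence result just proved. By Theorem~\ref{causalsim}, on the class $\Xi$ the relation $\sim_{\clmu/}$ is literally the same relation as $\sim_{hpb}$; so, given two finite $\Xi$ systems $\mathfrak{T}_0$ and $\mathfrak{T}_1$, deciding whether $\mathfrak{T}_0 \sim_{\clmu/} \mathfrak{T}_1$ reduces to (i) checking that the inputs really are $\Xi$ systems and (ii) deciding whether $\mathfrak{T}_0 \sim_{hpb} \mathfrak{T}_1$. Step (i) is effective for finite systems: freeness of choice (Definition~\ref{ch3-fcnets}) is a purely structural condition on the underlying net; determinism of the confusion relation $\cfs/$ (Definition~\ref{ch3-confusion}) is a finite check over triples of transitions; and absence of auto-concurrency is a finite check over pairs of equally labelled transitions sharing a source. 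So the real content is step (ii).

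For step (ii) I would appeal to the well-known fact that history-preserving bisimilarity is decidable on finite systems (equivalently, on finite safe Petri nets or finite TSI), in sharp contrast with $\sim_{hhpb}$, which is undecidable even in the finite case. The idea behind that result, which I would sketch rather than reprove, is that although the hpb game of Definition~\ref{ch2-hpbgame} is played on pairs of runs of unbounded length, the only feature of a run that influences the rest of the play is the current pair of states together with the isomorphism type of the finite labelled poset the run induces; for a finite system this poset has bounded width, so there are only finitely many relevant ``situations'', and the game becomes a finite (reachability/parity) game that can be solved. Combining this decision procedure with Theorem~\ref{causalsim} then yields a decision procedure for $\sim_{\clmu/}$ on finite $\Xi$ systems.

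The main obstacle is precisely the decidability of $\sim_{hpb}$: everything else is bookkeeping. Since that is a known theorem it may simply be cited, so the only point to be careful about in the write-up is that the bound on the relevant poset width is genuinely finite for the systems considered here (image-finite, and for the decidability claim also finite-state), and that the reduction through Theorem~\ref{causalsim} does not silently step outside the class $\Xi$ — it does not, because $\Xi$-membership is a property of each system separately and the decision procedure only inspects the two given systems.
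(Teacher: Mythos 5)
Your proposal is correct and takes essentially the same route as the paper: Theorem~\ref{causalsim} reduces $\sim_{\clmu/}$ on $\Xi$ systems to $\sim_{hpb}$, whose decidability (on finite systems) is cited as a known result. The extra remarks on effectively checking $\Xi$-membership and the sketch of why $\sim_{hpb}$ is decidable are harmless elaborations beyond what the paper records.
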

\begin{proof}
Follows from Theorem \ref{causalsim} and the fact that $\sim_{hpb}$ is decidable \cite{dechpb-meyer}.
\qed
\end{proof}

The previous theorem shows that for the class of $\Xi$ systems the notion of `local' causality defined by \clmu/ captures the stronger notion of `global' causality, which is captured by $\sim_{hpb}$ in arbitrary classes of models of true-concurrency.

This result can have interesting practical applications. For instance, the complexity of deciding whether two systems are hp bisimilar, i.e., that they posses the same causal properties, is EXPTIME-complete \cite{dechpb-meyer}; since verifying that two partial order system systems satisfy the same set of \clmu/ properties requires one to check only related `localities' then the problem may be computationally easier.

\subsubsection*{A Partial Result on the Logical Equivalence Induced by \tfl/.} 
Although the equivalence induced by \tfl/ is analysed in the following section using game-theoretical arguments, we first present a simple preliminary result that relates both $\sim_{\tfl/}$ with $\sim_{hhpb}$, without using any game-theoretical machinery. 

Consider the counter-example given by Fr{\"o}schle \cite{border-froschle} using Petri nets, which provides evidence of the non-coincidence between $\sim_{hpb}$ and $\sim_{hhpb}$ in free-choice systems. Although the systems presented there in Figure 1 and here in the Figure \ref{sfhpbvshhpb} are not hhp bisimilar, they cannot be distinguished by any \tfl/ formula. This result shows that in general $\sim_{hhpb}$ does not coincide with $\sim_{\tfl/}$. However, the precise relation between $\sim_{hhpb}$ and $\sim_{\tfl/}$ is to be defined in the following section using a new form of higher-order logic game for bisimulation. For now, we have the following result:

\begin{prop}\label{tflnotinhhpb}
${\sim_{\tfl/}} \not \subseteq {\sim_{hhpb}}$.
\end{prop}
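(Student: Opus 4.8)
The plan is to reuse the counter-example of \cite{border-froschle} directly. Let $\mathfrak{T}_1$ and $\mathfrak{T}_2$ be the two free-choice Petri nets of \cite{border-froschle}, shown here in Figure~\ref{sfhpbvshhpb}, together with their induced TSI models and, on top of those, their stateless maximal process spaces $\mathfrak{S}_1$ and $\mathfrak{S}_2$. By \cite{border-froschle} we have $\mathfrak{T}_1 \sim_{hpb} \mathfrak{T}_2$ but $\mathfrak{T}_1 \not\sim_{hhpb} \mathfrak{T}_2$, so it suffices to prove $\mathfrak{T}_1 \sim_{\tfl/} \mathfrak{T}_2$, that is, that the initial processes of $\mathfrak{S}_1$ and $\mathfrak{S}_2$ satisfy exactly the same fixpoint-free closed \tfl/ formulae (Definition~\ref{sflequiv}). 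Since \tfl/-negation and \tfl/-conjunction are handled trivially by any relation, the whole argument reduces to exhibiting one relation $\mathcal{R} \subseteq \mathfrak{S}_1 \times \mathfrak{S}_2$ that relates the two initial processes and is a \emph{\tfl/-bisimulation}, i.e.\ is closed, back and forth, under the semantic moves of $\modald{a}_c$, $\modald{a}_{nc}$ and $\modald{\otimes}$ of Definition~\ref{semtfl}.

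First I would build $\mathcal{R}$ from an hp-bisimulation. Fix an hp-bisimulation $B$ between $\mathfrak{T}_1$ and $\mathfrak{T}_2$ in the sense of Definition~\ref{ch2-hpbgame}: a set of \emph{synchronous} pairs of runs closed under the $\sim_{hpb}$ rule and containing $(\epsilon,\epsilon)$. Put $(R,t) \mathrel{\mathcal{R}} (R',t')$ iff $t$ and $t'$ are the last transitions $\varrho(\pi_1),\varrho(\pi_2)$ of some $B$-related pair $(\pi_1,\pi_2)$ and the support sets $R,R'$ are obtained by ``the same'' conflict-free choice at the two current states: $R$ is a maximal set iff $R'$ is, and if they are maximal traces then a label- and causal-predecessor-preserving bijection between the maximal set at $\tau(t)$ and the maximal set at $\tau(t')$ restricts to a bijection $R \to R'$. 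The key local lemma needed is that, at every $B$-related pair of configurations, the restrictions of the patterns $\#$, $\otimes$, $\leq$, $\ominus$ to the two current maximal sets are isomorphic; this is exactly where \emph{free-choiceness} of the nets of \cite{border-froschle} is used, since being free-choice they are confusion-free, so the local conflict structure is determined by the local independence structure, which the hp-bisimulation already matches. The initial processes $(\mathfrak{X}_0,t_\epsilon)$, $(\mathfrak{X}'_0,t'_\epsilon)$ are $\mathcal{R}$-related via $(\epsilon,\epsilon)$.

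Then I would verify the three closure conditions. For $\modald{a}_c$ (resp.\ $\modald{a}_{nc}$): a move of $(R,t)$ along $r\in R$ with $\delta(r)=a$ and $t\leq r$ (resp.\ $t\ominus r$) corresponds to extending $\pi_1$ by a transition that does (resp.\ does not) causally depend on $\varrho(\pi_1)$; synchrony of $B$ lets this be matched in $\mathfrak{T}_2$ by an equally-labelled transition with the same dependency status on $\varrho(\pi_2)$, and the resulting processes (both with maximal support sets) lie in $\mathcal{R}$. For $\modald{\otimes}$: given $M \sqsubseteq R$, the label- and causality-preserving bijection between the two current maximal sets carries $M$ to some $M' \subseteq R'$; by the local lemma $M' \sqsubseteq R'$ and $(M,t)\mathrel{\mathcal{R}}(M',t')$, and subsequent $\modald{a}_c/\modald{a}_{nc}$ moves inside $M,M'$ are handled as above. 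A routine structural induction on fixpoint-free closed \tfl/ formulae, using these three properties for the modalities and the obvious steps for $\neg$ and $\wedge$, then shows $\mathcal{R}$-related processes are \tfl/-indistinguishable; specialising to the initial processes gives $\mathfrak{T}_1 \sim_{\tfl/} \mathfrak{T}_2$, and together with $\mathfrak{T}_1 \not\sim_{hhpb} \mathfrak{T}_2$ this yields ${\sim_{\tfl/}} \not\subseteq {\sim_{hhpb}}$.

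The main obstacle is the $\modald{\otimes}$ case, i.e.\ proving the local lemma that the second-order power of \tfl/ over conflict-free sets is matched by the hp-bisimulation. Intuitively this must hold because the only extra discriminating power of $\sim_{hhpb}$ over $\sim_{hpb}$ comes from \emph{backtracking} of backwards-enabled transitions (Definition~\ref{ch2-hhpbgame}), a hereditary, backward feature, whereas $\modald{\otimes}$ is a forward move that merely \emph{restricts} the current support set; but turning this into a proof requires inspecting the specific nets of \cite{border-froschle} and checking that their trace structures exhibit no confusion-like mismatch, which is precisely what their being free-choice guarantees.
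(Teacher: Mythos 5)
Your overall route is the same as the paper's: take Fr{\"o}schle's pair of free-choice nets $A$ and $B$, note $A \not\sim_{hhpb} B$, and argue that no \tfl/ formula separates them (the paper itself only \emph{asserts} this second half; you attempt to prove it, which is more than the paper does). The problem is that the key lemma you propose is false for exactly these nets. You want, at every hp-related pair of configurations, a label- and causality-preserving \emph{bijection between the two maximal sets} under which the restrictions of $\#$, $\otimes$, $\leq$, $\ominus$ are isomorphic, and you then push maximal traces through that bijection to handle $\modald{\otimes}$. But at the initial states the maximal set of $A$ has four transitions (two $a$'s and two $b$'s) while that of $B$ has six (three $a$'s and three $b$'s): no bijection exists and the two local relational structures are not isomorphic. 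Neither free-choiceness nor the existence of an hp-bisimulation gives you such a bijection --- hp-bisimulations match single-transition extensions of runs, not the outgoing branching structure (a state with two identically-labelled conflicting branches is already strongly bisimilar to one with three), and confusion-freeness says nothing about cardinalities.

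What can be matched is maximal \emph{traces} rather than maximal sets: every maximal trace at the root of either net is a two-element set $\{a_i,b_j\}$ with one $a$ and one $b$, so corresponding traces are label-isomorphic even though the ambient maximal sets are not. But then the real work, which your sketch defers to ``synchrony of $B$,'' is to choose the matching trace so that the \emph{continuations} agree: if Adam picks the unique $c$-enabling trace $\{a_2,b_1\}$ of $A$, Eve must answer with one of the two $c$-enabling traces of $B$, and every non-$c$-enabling trace must be answered by a non-$c$-enabling one. That this is always possible is a property of these specific nets (both have at least one trace of each kind, and all traces of the same kind have isomorphic futures), not a consequence of free-choiceness in general, and it is precisely the content of the claim that $\modald{\otimes}$ cannot separate $A$ from $B$. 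So the architecture of your proposal is fine and matches the paper's, but the local lemma must be restated as a future-respecting matching of maximal traces and then verified directly on $A$ and $B$; as written, the step it rests on fails.
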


\begin{figure}[htp]
  \begin{center}
    \subfigure
    {
\xymatrix @R=10pt @C=10pt 
{
    &\petone \ar[ld] \ar[rd] && A && \petone \ar[ld] \ar[rd] & \\
    \petsqa \ar[d] & & \petsqa \ar[d] && \petsqb \ar[d] & & \petsqb \ar[d] \\
    \petzero & & \petzero \ar[rd] && \petzero \ar[ld] & & \petzero \\
 &&&\petsqc &&&\\
} 
}\\
\subfigure
    {
\xymatrix @R=10pt @C=10pt 
{
    &\petone \ar[ld] \ar[d] \ar[rd] && B && \petone \ar[ld] \ar[d] \ar[rd] & \\
    \petsqa \ar[d] & \petsqa \ar[d] & \petsqa \ar[d] && \petsqb \ar[d] & \petsqb \ar[d] & \petsqb \ar[d] \\
    \petzero & \petzero \ar[rrdd] & \petzero \ar[rd] && \petzero \ar[ld] & \petzero \ar[lldd] & \petzero \\
 &&&\petsqc &&&\\
 &&&\petsqc &&&\\
} 
    }
\caption{Not inclusion of $\sim_{\tfl/}$ in $\sim_{hhpb}$: we have that $A {\mathbf{\sim_{\tfl/}}} B$ and $A {\not \sim_{hhpb}} B$.}
\label{sfhpbvshhpb}
\end{center}
\end{figure}

The two systems in Figure \ref{sfhpbvshhpb} are free-choice. We have noted that all counter-examples we have in which non-coincidence from $\sim_{hpb}$ and $\sim_{\tfl/}$ arises are due to anomalies in the concurrent behaviour of the models related to the phenomenon called confusion. For this reason along with the fact the idempotent operator of \tfl/ can recognise conflict-free sets of transitions, we believe, but have no proof, that these two equivalences actually coincide for the class of free-choice systems. So, we finish this section with the following conjecture:

\begin{conj}
${\sim_{\tfl/}} \equiv {\sim_{hpb}}$ on free-choice systems without auto-concurrency.
\end{conj}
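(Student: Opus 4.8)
The plan is to split the claimed equivalence into its two inclusions and observe that one of them is already known. Since every fixpoint-free closed \clmu/ formula is also a fixpoint-free closed \tfl/ formula, $\mathfrak{F}_{\clmu/}\subseteq\mathfrak{F}_{\tfl/}$, so $\sim_{\tfl/}$ refines $\sim_{\clmu/}$; and a free-choice system without auto-concurrency is a $\Xi$ system (Definition \ref{ch3-fcstructuredsys}), so Theorem \ref{causalsim} gives $\sim_{\clmu/}\equiv\sim_{hpb}$ on it, whence $\sim_{\tfl/}\subseteq\sim_{hpb}$. The entire content of the conjecture is therefore the converse inclusion, i.e.\ logical soundness of \tfl/ with respect to $\sim_{hpb}$ on free-choice systems: whenever $P\sim_{hpb}Q$, the processes $P$ and $Q$ satisfy the same formulae of $\mathfrak{F}_{\tfl/}$. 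Since these formulae are closed and fixpoint-free, valuations and the $\mu$/$\nu$ machinery play no role and the argument can proceed by plain structural induction.

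First I would set up this induction relative to an auxiliary relation $\widehat{R}$ on the (stateless, not-necessarily-maximal) process spaces of the two systems, extending the hp-bisimulation on initial processes and carrying the inductive invariant that $\widehat{R}$-related processes agree on all of $\mathfrak{F}_{\tfl/}$. The boolean cases and the four causal/non-causal modal cases $\modald{a}_c,\modald{a}_{nc},\modalb{a}_c,\modalb{a}_{nc}$ go through exactly as in Lemma \ref{ch3-hpbtoclmu}, using synchrony of the hp-bisimulation; the only adjustment is that after a $\modald{\otimes}$-move the current support set is a conflict-free trace $M$ rather than a maximal set, so the transitions those modalities may step along are the $r\in M$ — but the move via $r$ returns to the full maximal process $(\mathfrak{X},r)$ at $\tau(r)$, so the same argument applies provided $\widehat{R}$ also matches the traces $M$ label-preservingly. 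The one genuinely new case is the idempotent modality $\modald{\otimes}$ (its dual $\modalb{\otimes}$ being symmetric): given $\widehat{R}$-related processes $P=(R_P,t)$ and $Q=(R_Q,t)$ with $P\models\modald{\otimes}\phi$, so some maximal trace $M_P\sqsubseteq R_P$ has $(M_P,t)\models\phi$, one must produce a maximal trace $M_Q\sqsubseteq R_Q$ which is $\widehat{R}$-related to $(M_P,t)$, so that $(M_Q,t)\models\phi$ follows by the induction hypothesis.

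The engine for this case is a pair of lemmas. The first is structural: in a free-choice system, at every reachable state $s$ the maximal set $\mathfrak{X}(s)$ decomposes as a disjoint union $\mathfrak{X}(s)=\biguplus_i K_i$ of \emph{choice clusters}, where all transitions inside a $K_i$ are pairwise in $\#$ and transitions from distinct clusters are pairwise in $\otimes$, so that the maximal traces $M\sqsubseteq\mathfrak{X}(s)$ are exactly the transversals resolving each cluster — this is the behavioural reflection of the preset condition of Definition \ref{ch3-fcnets} and of the ``pure conflict / pure concurrency'' gadget decomposition of Figure \ref{ch3-freechoicesubnets-fig}. The second is an amalgamation lemma: any hp-bisimulation between free-choice systems can be refined to one which additionally matches, at each related pair of runs, the cluster decompositions of the two current maximal sets (label-preservingly), and this matching is stable both under the hpb moves and under passing to a matched pair of sub-traces, i.e.\ under $\modald{\otimes}$-moves. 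Granting these, take $\widehat{R}$ to be the closure of the refined relation to matched pairs of (possibly non-maximal) processes; then $M_Q$ is the trace matched to $M_P$, and the induction closes. Decidability in the finite case would then follow, as in Corollary \ref{corclmudec}, from decidability of $\sim_{hpb}$.

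The step I expect to be the main obstacle — and, presumably, the reason the statement is still a conjecture — is the amalgamation lemma. Free-choiceness excludes confusion and so makes the conflict structure ``orthogonal'' to the concurrency structure, but $\sim_{hpb}$ is a global, coinductive condition on causal histories, and it is not evident that preserving causal branching together with immediate (strong) branching forces preservation of the full trace-branching structure compatibly with all future moves. Turning the interaction-freeness of the free-choice gadgets into a Church--Rosser / amalgamation argument that survives the history-preserving requirement is the delicate point; the structural lemma and the remaining cases of the induction are routine once the refined relation is in hand.
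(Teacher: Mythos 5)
The first thing to say is that the paper offers no proof of this statement: it is explicitly left as a conjecture (``we believe, but have no proof''), so there is no proof of record to compare yours against. Your treatment of the inclusion ${\sim_{\tfl/}} \subseteq {\sim_{hpb}}$ is correct and is exactly the part already covered by the paper's results: $\mathfrak{F}_{\clmu/} \subseteq \mathfrak{F}_{\tfl/}$ gives ${\sim_{\tfl/}} \subseteq {\sim_{\clmu/}}$, free-choice systems without auto-concurrency are $\Xi$ systems by Definition \ref{ch3-fcstructuredsys}, and Theorem \ref{causalsim} then yields ${\sim_{\tfl/}} \subseteq {\sim_{hpb}}$. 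You are also right that the whole content of the conjecture is the converse inclusion.

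For that converse, however, your argument has a genuine gap, and you have correctly located it yourself: the ``amalgamation lemma'' is asserted, not proved, and it \emph{is} the open problem, not a technical stepping stone towards it. Everything else in your induction (the boolean and causal/non-causal modal cases, following Lemma \ref{ch3-hpbtoclmu}) only closes once the relation $\widehat{R}$ on non-maximal processes exists, so the proposal reduces the conjecture to an equally unproven statement rather than resolving it. Two concrete cautions if you pursue this. First, the cluster-decomposition lemma is less innocent than it looks: independence in nets is defined via ${}^{\bullet}t^{\bullet} \cap {}^{\bullet}{t'}^{\bullet} = \emptyset$, so two enabled transitions with disjoint presets can still fail to be in $\otimes$ through a shared \emph{post}place, which the preset condition of Definition \ref{ch3-fcnets} does not exclude; your claimed partition of $\mathfrak{X}(s)$ into pairwise-conflicting clusters with pairwise-concurrent cross-cluster pairs needs an argument covering this. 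Second, the counterexample at the top of Figure \ref{fig:ceslmu} shows that hp-bisimilar systems can be separated by a single $\modald{\otimes}$ formula as soon as (asymmetric) confusion is present; so the amalgamation property is false in general and any proof must exploit free-choiceness globally, across the entire coinductive unfolding of the hp-bisimulation, which is precisely the step you concede you cannot yet carry out. In short: the easy half is right, the plan of attack for the hard half is reasonable and honestly flagged as incomplete, but the statement remains a conjecture after your proposal just as it does in the paper.
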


\section{Higher-Order Logic Games for Bisimulation} \label{bisgames}
The logic games for bisimulation presented in Section \ref{pre} provide a \emph{first-order} power on the transitions that are picked when playing the game. In this section we introduce a bisimulation game that gives the players \emph{higher-order} power on the sets of transition in the game board. Since such games may be too powerful without restrictions, we consider higher-order games for bisimulation where this higher-order power is restricted to simple characteristic sets of transitions in the board. Moreover, since these games are intended to be used in the analysis of modal logics, then such a higher-order power is also restricted to a local setting. 

In particular, in this section we want to define higher-order logic games for bisimulation that help us understand the bisimulation equivalence induced by \tfl/, and how this logical equivalence relates to the best known history-preserving bisimilarities for concurrency. To this end, we consider games with monadic second-order power on conflict-free sets of transitions, and show that such games can both capture the logical equivalence induced by \tfl/ and be related to the bisimulation games that characterise hpb and hhpb in a very natural way. The higher-order logic game for bisimulation defined here is a refinement of the bisimulation game first presented in \cite{fos09-gut} for SFL, a fixpoint logic similar to \tfl/.

\subsection{Logical Correspondence}

In this section we give a game-theoretical characterisation of the equivalence that \tfl/ induces by defining a characteristic bisimulation game for it. As we already know that the game for \tfl/ must be at least as powerful as the game for hpb, since such a bisimilarity is captured by a syntactic \tfl/ fragment, then it is natural to design a game that extends the (first-order) bisimulation game for hpb. Here we do so. The game presented in this section conservatively extends the hp bisimulation game, and therefore the usual game for modal logic. We show that this bisimulation game, which we call `trace history-preserving' (thp) bisimulation game, characterises the logical equivalence induced \tfl/.

More importantly, we show that, on $\Xi$-systems, the equivalence relation induced by \tfl/ is strictly stronger than $\sim_{hpb}$ and strictly weaker than $\sim_{hhpb}$. We also show that the game characterising $\sim_{\tfl/}$, i.e., the thpb game, is decidable in finite systems, a result that contrasts with hhpb games, which are undecidable on arbitrary finite models \cite{undhhpb-jur}; whether $\sim_{hhpb}$ is decidable on $\Xi$-systems is an open question. These features amongst others make the game introduced here, and consequently the bisimulation equivalence induced by \tfl/, an interesting candidate for an equivalence for systems with partial order semantics. 

But, before presenting the game \tfl/ let us introduce a final definition that is related to the role of support sets as locally identifiable sets of concurrent transitions, i.e., of conflict-free sets of transitions.

\begin{defn}
\emph{
Two sets of transitions $R_1$ and $R_2$ are said to be \emph{history-preserving isomorphic} with respect to a pair of transitions $(t_m,t_n)$ if, and only if, there exists a bijection $\mathcal{B}$ between them such that for every $(t_1,t_2) \in \mathcal{B}$, if $t_m \leq t_1$ (resp. $t_m \ominus t_1$) then $t_n \leq t_2$ (resp. $t_n \ominus t_2$).
}
\eos
\end{defn}

Notice that any infinite play of an hpb game where Eve wins always induces a sequence of history-preserving isomorphic sets, where each set is a singleton. This follows from the fact that if this were not the case then Adam could win by choosing a transition in either $R_1$ or $R_2$ such that the hp bisimulation would no longer be synchronous. We are now ready to define thpb games.

\begin{defn}\label{thpbgame}
\emph{
{\bf (Trace history-preserving bisimulation games)} Let the pair $(\pi_1 , \pi_2)$ be a configuration of the game $\mathcal{G}(\mathfrak{T}_1,\mathfrak{T}_2)$. The initial configuration of the game is $(\epsilon,\epsilon)$. There are two players, Eve and Adam, and Adam always plays first and chooses where to play before using any rule of the game. The equivalence relation $R_{thpb}$ is a trace history-preserving (thp) bisimulation, $\sim_{thpb}$, between $\mathfrak{T}_1$ and $\mathfrak{T}_2$ iff it is an hp bisimulation between $\mathfrak{T}_1$ and $\mathfrak{T}_2$ and:
\begin{itemize}
\item (Base case) The initial configuration $(\epsilon,\epsilon)$ is in $R_{thpb}$.
\item ($\sim_{thpb}$ rule). Before Adam chooses a transition using the $\sim_{hpb}$ rule, he can also restrict the set of available transitions by choosing either in $\pi_1$ or $\pi_2$ a maximal trace to be the new set of available choices. Then, Eve must choose a maximal set in the other component of the configuration.
\end{itemize}
$\mathfrak{T}_1 \sim_{thpb} \mathfrak{T}_2$ iff Eve has a winning strategy for the thpb game  $\mathcal{G}(\mathfrak{T}_1,\mathfrak{T}_2)$.}
\eos
\end{defn}


\begin{lem} \label{duptotfl}
If Eve has a winning strategy for every play in the trace history-preserving bisimulation game $\mathcal{G}(\mathfrak{T}_1,\mathfrak{T}_2)$, then ${\mathfrak{T}_1} \sim_{\tfl/} {\mathfrak{T}_2}$.
\end{lem}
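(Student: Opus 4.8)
The plan is to prove the lemma by structural induction on fixpoint-free closed \tfl/ formulae, turning Eve's winning strategy into a dictionary between semantic witnesses and game moves. Fix a winning strategy $\lambda$ for Eve in $\mathcal{G}(\mathfrak{T}_1,\mathfrak{T}_2)$. First I would record that every configuration reachable in a play in which Eve follows $\lambda$ carries, besides the pair of runs $(\pi_1,\pi_2)$, a current pair of support sets $(R_1,R_2)$ --- the maximal sets at the two current states, or maximal traces if a $\sim_{thpb}$-restriction has just been made --- and hence a pair of processes $P=(R_1,\varrho(\pi_1))$, $Q=(R_2,\varrho(\pi_2))$ in the stateless maximal process spaces $\mathfrak{S}_1,\mathfrak{S}_2$. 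The invariant to prove is: for every such $\lambda$-reachable configuration with associated processes $P,Q$ and every fixpoint-free closed \tfl/ formula $\phi$, one has $P\models^{\mathfrak{T}_1}\phi\Leftrightarrow Q\models^{\mathfrak{T}_2}\phi$. Applied at the initial configuration $(\epsilon,\epsilon)$, whose processes are the initial processes $(\mathfrak{X}_0,t_{\epsilon})$ of the two systems, this yields $\mathfrak{T}_1\sim_{\tfl/}\mathfrak{T}_2$. As thpb games are symmetric (Adam may play on either side), it suffices to show $P\models\phi\Rightarrow Q\models\phi$. The base cases $\phi\in\{\mutrue,\mufalse\}$ are immediate, and the boolean cases $\phi=\phi_1\wedge\phi_2$ and $\phi=\phi_1\vee\phi_2$ follow from the induction hypothesis applied to $\phi_1,\phi_2$ at the same configuration, with no game move consumed.

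For the four transition modalities I would use the $\sim_{hpb}$ rule. Take $\phi=\modald{a}_{c}\phi_1$ with $P=(R_1,t_1)\models\phi$, so some $r\in R_1$ has $t_1\leq r$, $\delta(r)=a$ and $(\mathfrak{X}(\tau(r)),r)\models\phi_1$. Let Adam play $r$ in $\mathfrak{T}_1$. Since $t_1\leq r$ means $\tau(t_1)=\sigma(r)$ and $\neg(t_1\I r)$, the synchrony requirement of the hpb component forces Eve's $\lambda$-answer $v$ in $\mathfrak{T}_2$ to have $\delta(v)=a$ and $\neg(t_2\I v)$, i.e.\ $t_2\leq v$; the resulting configuration is again $\lambda$-reachable, with processes $(\mathfrak{X}(\tau(r)),r)$ and $(\mathfrak{X}(\tau(v)),v)$, so the induction hypothesis on $\phi_1$ gives $(\mathfrak{X}(\tau(v)),v)\models\phi_1$, i.e.\ $Q\models\modald{a}_{c}\phi_1$. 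The box case $\phi=\modalb{a}_{c}\phi_1$ is dual: for an arbitrary $v\in R_2$ with $t_2\leq v$ and $\delta(v)=a$, Adam plays $v$ in $\mathfrak{T}_2$, Eve's response $r$ has $t_1\leq r$ and $\delta(r)=a$, the induction hypothesis carries $(\mathfrak{X}(\tau(r)),r)\models\phi_1$ over to $(\mathfrak{X}(\tau(v)),v)\models\phi_1$, and since $v$ was arbitrary $Q\models\modalb{a}_{c}\phi_1$. The non-causal modalities $\modald{a}_{nc},\modalb{a}_{nc}$ are the same with $\leq$ replaced by $\ominus$ (the chosen transition is now independent of the previous one, and synchrony again reproduces this on the other side).

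The $\modalt$ modalities are handled with the $\sim_{thpb}$ rule, using idempotence of $\modald{\otimes}$ (Proposition~\ref{idem}). If $P=(R_1,t_1)\models\modald{\otimes}\phi_1$ there is a maximal trace $M_1\sqsubseteq R_1$ with $(M_1,t_1)\models\phi_1$; let Adam make the $\sim_{thpb}$ restriction choosing $M_1$ in $\pi_1$, so $\lambda$ answers with a maximal trace $M_2\sqsubseteq R_2$; the new $\lambda$-reachable configuration has processes $(M_1,t_1),(M_2,t_2)$, the induction hypothesis gives $(M_2,t_2)\models\phi_1$, and hence $Q\models\modald{\otimes}\phi_1$. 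For $\phi=\modalb{\otimes}\phi_1$, Adam instead probes an arbitrary $M_2\sqsubseteq R_2$ on the other side and the induction hypothesis transfers $\phi_1$ from $\lambda$'s answer $M_1$ back to $M_2$; nested $\otimes$'s are harmless since by Proposition~\ref{idem} a trace-process $(M,t)$ satisfies $\modald{\otimes}\psi$ iff it satisfies $\psi$, so one $\sim_{thpb}$ restriction per round suffices. I expect the main obstacle to be precisely this bookkeeping of the support-set context along a play: one must check that a $\sim_{thpb}$ restriction followed by a $\sim_{hpb}$ move realises exactly the clause in which the transition $r$ is drawn from the current, possibly trace-restricted, support set $R$ and not from the full maximal set, and that the hpb synchrony condition faithfully keeps the $\leq$-case and the $\ominus$-case apart; once the invariant is phrased with this context built in, everything else is routine.
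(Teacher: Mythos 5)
Your proof is correct in substance but follows a genuinely different route from the paper's. The paper argues by contradiction: it splits on whether Eve wins because Adam gets stuck or in an infinite play, invokes the fact that the thpb game conservatively extends the hpb game together with the already-established inclusion ${\sim_{hpb}} \subseteq {\sim_{\clmu/}}$ (Lemma~\ref{ch3-hpbtoclmu}) to dispatch all the causal/non-causal modalities at once, and then treats only a putative distinguishing formula of the form $\modald{\otimes}\phi_1$, relying on the (asserted, not separately proved) claim that the maximal traces Eve's winning strategy selects are history-preserving isomorphic to those chosen by Adam. You instead give a direct, self-contained Hennessy--Milner-style structural induction on fixpoint-free closed formulae, quantified over all configurations reachable under Eve's strategy, handling every connective by an explicit game move: the $\sim_{hpb}$ rule plus the synchrony condition keeps $\leq$ and $\ominus$ apart for the four transition modalities, and the $\sim_{thpb}$ restriction plus the induction hypothesis at the post-restriction configuration handles $\modald{\otimes}$ and $\modalb{\otimes}$, with idempotence disposing of nested traces. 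What the paper's route buys is brevity, by reusing Lemma~\ref{ch3-hpbtoclmu} and conservativity over the hpb game; what your route buys is uniformity and rigour: you avoid appealing to the hp-isomorphism of Eve's trace answers, you make the universal quantification in the box cases explicit via Adam probing on the other side, and the argument does not depend on the finite/infinite-play case split. One shared caveat, which you rightly flag as the main bookkeeping point, is the reading of the $\sim_{thpb}$ rule under which, once both players have committed to maximal traces, the subsequent $\sim_{hpb}$ choices of \emph{both} players are drawn from those traces; your box cases (e.g.\ concluding $(\mathfrak{X}(\tau(r)),r)\models\phi_1$ from $P\models\modalb{a}_c\phi_1$ for Eve's reply $r$) need exactly this, and the paper's own argument implicitly relies on the same reading, so making the invariant carry the trace-restricted support sets, as you propose, is the right way to close it.
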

\begin{proof}
By contradiction suppose that Eve has a winning strategy and $P \not \sim_{\tfl/} Q$, where $P = (M,t)$ and $Q = (N,r)$ are two processes of ${\mathfrak{T}_1}$ and ${\mathfrak{T}_2}$, respectively. There are two cases. Suppose that Adam cannot make a move. This means that both $P \models^{\mathfrak{T}_1} \modalb{-} \mufalse$ and $Q \models^{\mathfrak{T}_2} \modalb{-} \mufalse$ only, which is a contradiction. The other case is when Eve wins in an infinite play. Since \clmu/ induces an hp bisimilarity and the thpb game conservatively extends the hpb game, w.l.o.g. we can consider only the case when the rule $\sim_{thpb}$ is necessarily played.

Then, let $P \models^{\mathfrak{T}_1} \modald{\otimes} \phi_1$ that, by hypothesis, is not satisfied by $Q$. By the satisfaction relation either $M$ is already a maximal trace or there is a maximal trace $M'$ such that $M' \sqsubseteq M$. Additionally, such a maximal trace cannot be recognised from $N$. However, this is not possible since Eve can always find such a support set by hypothesis. 

Thus, the only other possibility is that the support set can be constructed but a synchronous transition in it cannot be found. But this also leads to a contradiction because the support sets that Eve chooses are, additionally, history-preserving isomorphic to the ones that Adam chooses. Therefore all properties that include $\modald{\otimes}$ must be satisfied at this stage and the the game has to proceed to the next round. However, since the play will continue forever, this holds for all reachable processes, and therefore, all formulae containing $\modald{\otimes}$ that are satisfied in $P$ must also be satisfied in $Q$, which is again a contradiction.
\qed
\end{proof}

\begin{cor}\label{soundtfl}
\emph{{\bf (Soundness)}.} If ${\mathfrak{T}_1} \not \sim_{\tfl/} {\mathfrak{T}_2}$, then Adam has a winning strategy for every play in the trace history-preserving bisimulation game $\mathcal{G}(\mathfrak{T}_1,\mathfrak{T}_2)$. 
\end{cor}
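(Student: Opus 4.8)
The plan is to read this off as the contrapositive of Lemma~\ref{duptotfl} together with a determinacy argument for thpb games. By Lemma~\ref{duptotfl}, if Eve had a winning strategy in $\mathcal{G}(\mathfrak{T}_1,\mathfrak{T}_2)$ then $\mathfrak{T}_1 \sim_{\tfl/} \mathfrak{T}_2$; so under the hypothesis $\mathfrak{T}_1 \not\sim_{\tfl/} \mathfrak{T}_2$, Eve has no winning strategy. It then remains to argue that in a thpb game, failure of Eve to have a winning strategy forces Adam to have one. Here the winning condition is of the simplest kind: every infinite play is won by Eve, and in a finite play the player who cannot move loses. Viewed on the (finitely branching, by our image-finiteness convention) tree of plays, the set of plays won by Adam is exactly the set of finite plays ending in a position where it is Eve's turn with no legal move; this is a topologically open set, so the game is Gale--Stewart determined (in fact it is a reachability/safety game and determinacy is elementary). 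Hence Adam has a winning strategy, which is the claim.

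A more informative route, which I would actually write out, is to construct Adam's winning strategy directly from a distinguishing formula, mirroring the classical passage from a distinguishing HML formula to a Spoiler strategy. Since $\mathfrak{T}_1 \not\sim_{\tfl/} \mathfrak{T}_2$, Definition~\ref{sflequiv} supplies a fixpoint-free closed \tfl/ formula $\phi$ with, say, $P_0 \models^{\mathfrak{T}_1} \phi$ and $P_0 \not\models^{\mathfrak{T}_2} \phi$ at the initial processes. Adam maintains the invariant that after each of his moves the current configuration $(\pi_1,\pi_2)$ carries a subformula $\psi \in Sub(\phi)$ of strictly smaller size such that the process reached along one run satisfies $\psi$ while the process reached along the other does not. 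At a conjunction $\psi = \psi_1 \wedge \psi_2$ failing on the ``$Q$'' side he keeps the failing conjunct; at a modal subformula $\modalb{a}_{c}\psi'$, $\modalb{a}_{nc}\psi'$, or a diamond dual, he invokes the $\sim_{hpb}$ rule, playing on the side where the appropriate causal or non-causal witness transition exists and using synchrony so that every response of Eve lands in the complementary truth situation; at a subformula headed by $\modald{\otimes}$ or $\modalb{\otimes}$ he first invokes the $\sim_{thpb}$ rule, selecting on the appropriate side a maximal trace $M$ that witnesses the relevant (non-)satisfaction of $\modald{\otimes}\psi'$, so that whichever maximal set Eve answers with, she cannot keep the two processes agreeing on $\psi'$. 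Since $\phi$ is fixpoint-free the tracked subformula strictly decreases in size at every round, so after finitely many rounds Adam reaches an atomically distinguishing configuration — one process satisfies $\modald{-}\mutrue$ and the other only $\modalb{-}\mufalse$, or the analogous situation at $\modald{\otimes}$ — and Eve is stuck. Thus every play consistent with Adam's strategy is a finite play he wins.

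The step I expect to be the main obstacle is the $\modald{\otimes}$/$\modalb{\otimes}$ case of the direct construction: one must check that the $\sim_{thpb}$ rule, together with the history-preserving isomorphism constraint on the chosen support sets and the synchrony constraint inherited from the $\sim_{hpb}$ rule, genuinely lets Adam preserve the ``one side satisfies $\psi'$, the other does not'' invariant. This is precisely the content that the $\modald{\otimes}$ semantics quantifies over complete traces $M \sqsubseteq R$, that such traces are exactly the locally recognizable conflict-free sets, and that the support sets a player may offer are history-preserving isomorphic to the opponent's — it is the mirror image of the argument already used in the proof of Lemma~\ref{duptotfl}. Granting that, and noting that the contrapositive-plus-determinacy argument already settles the bare statement, the corollary follows.
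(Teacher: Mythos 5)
Your first route is exactly the paper's treatment: the corollary is read off as the contrapositive of Lemma~\ref{duptotfl}, with determinacy of the thpb game (whose winning sets are open/Borel, as the paper itself notes in the proof of Theorem~\ref{tthpbdec}) supplying Adam's strategy, and your explicit Gale--Stewart justification is if anything more careful than the paper, which states the corollary without proof. Your second, direct construction of Adam's strategy from a distinguishing fixpoint-free formula is a genuinely different and more informative argument, but it is not needed for the statement and you leave its crucial $\modald{\otimes}$/$\modalb{\otimes}$ case unverified, so the corollary should be taken to rest on the first route.
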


\begin{lem}  \label{tfltodup}
\emph{{\bf (Completeness)}.}
If ${\mathfrak{T}_1} \sim_{\tfl/} {\mathfrak{T}_2}$, then Eve has a winning strategy for every play in the trace history-preserving bisimulation game $\mathcal{G}(\mathfrak{T}_1,\mathfrak{T}_2)$.
\end{lem}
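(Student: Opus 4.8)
The plan is to prove this in Hennessy--Milner style: rather than describing Eve's strategy move by move, I would exhibit a single relation that witnesses it, namely $\sim_{\tfl/}$ itself, read as a relation between processes (equivalently, between configurations of the game via the processes they determine), and show that it satisfies the two clauses of Definition \ref{thpbgame}. Eve's winning strategy is then simply ``always move so that the pair of current processes stays $\sim_{\tfl/}$-related''; since every infinite play is winning for Eve and in a finite play the stuck player loses, it suffices to verify that from any configuration whose associated processes $P=(R_1,t_1)$ and $Q=(R_2,t_2)$ satisfy $P\sim_{\tfl/}Q$, any legal Adam move can be answered by Eve so that the new pair of processes is again $\sim_{\tfl/}$-related \emph{and} the two runs stay isomorphic as posets (so that the position remains a legal hpb, hence thpb, configuration in the sense of Definitions \ref{ch2-hpbgame} and \ref{thpbgame}). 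The base case is immediate: the initial processes $(\mathfrak{X}_0^1,t_\epsilon)$ and $(\mathfrak{X}_0^2,t_\epsilon)$ are $\sim_{\tfl/}$-related by the hypothesis $\mathfrak{T}_1\sim_{\tfl/}\mathfrak{T}_2$, and the empty runs are trivially isomorphic. Throughout we use that the systems are image-finite and without auto-concurrency, and that formulae may be taken in positive normal form.

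I would then treat the two kinds of Adam move separately. For an $\sim_{hpb}$ move --- Adam picks $u\in R_1$ with $\delta(u)=a$ and $t_1\leq u$ (the causal case; $t_1\ominus u$ is symmetric) --- image-finiteness gives finitely many $a$-labelled $v\in R_2$ with $t_2\leq v$, and the usual argument applies: if each such $v$ were separated from the target $(\mathfrak{X}^1_u,u)$ by some fixpoint-free closed \tfl/ formula $\phi_v$ (see Definition \ref{sflequiv}), then $\modald{a}_c\bigwedge_v\phi_v$ would separate $P$ from $Q$, contradiction; so some $v$ gives $(\mathfrak{X}^1_u,u)\sim_{\tfl/}(\mathfrak{X}^2_v,v)$. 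This is essentially the argument of Lemma \ref{ch3-hpbtoclmu} run in the converse direction, restricted to the modal fragment, and --- since $v$ is matched on the causal (resp.\ non-causal) modality --- it keeps the extended runs synchronous. For an $\sim_{thpb}$ move --- Adam replaces $R_1$ by a maximal trace $M_1\sqsubseteq R_1$ --- the reading of $\modald{\otimes}$ in Definition \ref{semtfl}, together with image-finiteness (so there are only finitely many maximal traces of $R_2$), lets Eve pick $M_2\sqsubseteq R_2$ with $(M_1,t_1)\sim_{\tfl/}(M_2,t_2)$ by the same ``conjoin the separating formulae under $\modald{\otimes}$'' trick, and then the modal analysis above resumes from the restricted configuration.

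The main obstacle, and the place where the power of the idempotent trace modality $\modald{\otimes}$ is really used (cf.\ Proposition \ref{idem}), is showing that $\tfl/$-equivalence of processes forces not merely a \emph{logically} matching maximal trace but a \emph{structurally} isomorphic one, i.e.\ that $M_1$ and $M_2$ above can be chosen history-preserving isomorphic with respect to $(t_1,t_2)$, and more generally that the runs built along the play remain isomorphic as posets rather than merely ``locally synchronous'' in the sense of the discussion preceding Lemma \ref{ch3-clmuruns}. In the $\clmu/$ setting this gap was closed only on $\Xi$-systems (Lemma \ref{ch3-localhpbtoglobalhpb}); here the point is that, because $\modald{\otimes}$ exposes each conflict-free set as a whole \emph{before} any modal step is taken, the causal/non-causal profile of every transition of a trace relative to the history is already pinned down by the fixpoint-free \tfl/-theory of the process, so no extra structural hypothesis on the systems is needed. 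Accordingly I would isolate this as a separate sub-lemma --- ``if $(R_1,t_1)\sim_{\tfl/}(R_2,t_2)$ then every maximal trace of $R_1$ has an $\sim_{\tfl/}$-equivalent, history-preserving isomorphic counterpart in $R_2$'' --- prove it by probing each transition of the trace with the causal and non-causal modalities inside the restricted trace and assembling the bijection via image-finiteness, and then feed it, together with the modal-move analysis, into the invariant maintenance. Finally, this lemma combined with Corollary \ref{soundtfl} gives soundness and completeness, and hence determinacy, of the thpb game.
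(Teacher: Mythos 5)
Your proposal is correct and takes essentially the same route as the paper's own proof: Eve's strategy is to keep the current pair of processes $\sim_{\tfl/}$-related, Adam's choice of a maximal trace is answered through the semantics of $\modald{\otimes}$, the two chosen traces are argued to be history-preserving isomorphic with respect to the last pair of transitions (which the paper asserts in one line, ``otherwise there would be a simple modal formula differentiating them,'' and which you isolate as a sub-lemma), and the subsequent transition move is then matched inside the isomorphic traces, with infinite plays and Adam-stuck plays won by Eve. The only differences are ones of explicitness: your Hennessy--Milner-style treatment of the ordinary $\sim_{hpb}$ moves via image-finiteness and conjunctions of separating formulae, and your flagged concern about global poset isomorphism of the runs, spell out steps that the paper dispatches with its ``w.l.o.g.\ only the $\sim_{thpb}$ rule'' reduction and its appeal to the conservative extension of the hpb game.
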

\begin{proof}
By constructing a winning strategy for Eve based on the fact that ${\mathfrak{T}_1} \sim_{\tfl/} {\mathfrak{T}_2}$. 
For the same reasons given previously, w.l.o.g., it is possible to consider only the case when Adam uses the $\sim_{thpb}$ rule.

So, suppose that Adam is able to choose a maximal set $M$ enabled at $P = (M,t)$, where $P$ is a process in the stateless maximal process space $\mathfrak{S}$ associated with $\mathfrak{T}_1$. This implies that $P \models^{\mathfrak{T}_1} \phi$, where $\phi = \modald{\otimes} \phi_1$ for some formula $\phi$ with support set $M$. By the hypothesis, for some process $Q = (N,r)$ that is thp bisimilar to $P$, it must be true that $Q \models^{\mathfrak{T}_2} \phi$ as well, and therefore Eve can choose a maximal set $N$ which is the support set for $\phi$ in $Q = (N,r)$. Since $P \sim_{\tfl/} Q$ then $M$ and $N$ must be history-preserving isomorphic sets with respect to $(t,r)$; otherwise, there would be a simple modal formula differentiating them. 

Then Adam must choose an element of either set of transitions using the $\sim_{hpb}$ rule, say a transition $t' \in M$. But since $M$ and $N$ are history-preserving isomorphic sets with respect to $(t,r)$, then it is always possible for Eve to find a transition $r' \in N$ that synchronises as $t'$, forcing the game to proceed to a next round. The play, therefore, must either go on forever or stop because Adam cannot make a move. In either case Eve wins the game. The dual case is similar since Adam can always choose where to play, i.e., in which structure, before applying any rule of the game.
\qed
\end{proof}

The soundness and completeness results give a full game-theoretical characterisation to the equivalence induced by \tfl/.

\begin{thm}
{\bf (Game abstraction)} ${\mathfrak{T}_1} \sim_{\tfl/} {\mathfrak{T}_2}$ iff Eve has a winning strategy for the thp bisimulation game $\mathcal{G}(\mathfrak{T}_1,\mathfrak{T}_2)$; conversely, ${\mathfrak{T}_1} \not \sim_{\tfl/} {\mathfrak{T}_2}$ iff Adam has a winning strategy for the thp bisimulation game $\mathcal{G}(\mathfrak{T}_1,\mathfrak{T}_2)$.
\end{thm}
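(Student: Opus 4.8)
The plan is to obtain the theorem by assembling the three results already established for the thpb game --- Lemma~\ref{duptotfl} (if Eve has a winning strategy then ${\mathfrak{T}_1} \sim_{\tfl/} {\mathfrak{T}_2}$), Corollary~\ref{soundtfl} (soundness: ${\mathfrak{T}_1} \not\sim_{\tfl/} {\mathfrak{T}_2}$ implies Adam has a winning strategy), and Lemma~\ref{tfltodup} (completeness: ${\mathfrak{T}_1} \sim_{\tfl/} {\mathfrak{T}_2}$ implies Eve has a winning strategy) --- together with two elementary observations about the game: first, that its winning conditions partition the set of plays (infinite plays go to Eve; at a finite dead end the player unable to move loses), so no single play is won by both players and hence Eve and Adam cannot both possess a winning strategy; and second, that by the law of excluded middle exactly one of ${\mathfrak{T}_1} \sim_{\tfl/} {\mathfrak{T}_2}$ and ${\mathfrak{T}_1} \not\sim_{\tfl/} {\mathfrak{T}_2}$ holds.

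For the first biconditional, the implication ${\mathfrak{T}_1} \sim_{\tfl/} {\mathfrak{T}_2} \Rightarrow$ ``Eve has a winning strategy for $\mathcal{G}(\mathfrak{T}_1,\mathfrak{T}_2)$'' is precisely Lemma~\ref{tfltodup}, and the converse is precisely Lemma~\ref{duptotfl}; so nothing new is needed there. For the second biconditional, the implication ${\mathfrak{T}_1} \not\sim_{\tfl/} {\mathfrak{T}_2} \Rightarrow$ ``Adam has a winning strategy'' is Corollary~\ref{soundtfl}. For its converse, I would argue by contradiction: if Adam has a winning strategy and yet ${\mathfrak{T}_1} \sim_{\tfl/} {\mathfrak{T}_2}$, then Lemma~\ref{tfltodup} also gives Eve a winning strategy, and playing the two strategies against each other yields a single play won by both players, contradicting the partition of winning conditions; hence ${\mathfrak{T}_1} \not\sim_{\tfl/} {\mathfrak{T}_2}$. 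Finally, determinacy --- that one of the two players always has a winning strategy --- drops out as a corollary: given any $\mathfrak{T}_1,\mathfrak{T}_2$, either ${\mathfrak{T}_1} \sim_{\tfl/} {\mathfrak{T}_2}$, in which case Eve wins by Lemma~\ref{tfltodup}, or ${\mathfrak{T}_1} \not\sim_{\tfl/} {\mathfrak{T}_2}$, in which case Adam wins by Corollary~\ref{soundtfl}.

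Since the substantive implications are already in hand, there is no serious obstacle remaining; the only points requiring (minor) care are the ``no play is won by both players'' claim, which must be read off the explicit winning conditions of the thpb game inherited from the hpb game, and the implicit check that the strategies produced in the proofs of Lemmas~\ref{duptotfl} and~\ref{tfltodup} and Corollary~\ref{soundtfl} are genuine history-free strategies for the whole game --- i.e.\ that the handling of the $\sim_{thpb}$ rule composes correctly with the underlying hpb/modal strategy, as was argued ``without loss of generality'' there. One could alternatively derive determinacy abstractly, since the thpb winning condition is an essentially open Gale--Stewart condition on the tree of plays, but the route above is shorter and self-contained, and it moreover exhibits exactly which player wins in terms of whether $\sim_{\tfl/}$ holds.
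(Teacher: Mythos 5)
Your proposal is correct and matches the paper's intent: the paper gives no separate argument for this theorem, presenting it as an immediate consequence of Lemma~\ref{duptotfl}, Corollary~\ref{soundtfl}, and Lemma~\ref{tfltodup}, which is exactly the assembly you carry out. The only content you add beyond the paper is the explicit observation that the winning conditions partition the plays (so both players cannot have winning strategies), which is indeed the missing glue for the backward direction of the second biconditional and is read off directly from the game's winning conditions.
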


\begin{cor}\label{tflequithpb}
$\sim_{\tfl/} \equiv \sim_{thpb}$.
\end{cor}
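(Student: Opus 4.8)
The plan is to read the corollary off directly from the preceding results, essentially by unfolding the definition of $\sim_{thpb}$. Recall from Definition \ref{thpbgame} that, for any two systems $\mathfrak{T}_1$ and $\mathfrak{T}_2$, one has $\mathfrak{T}_1 \sim_{thpb} \mathfrak{T}_2$ precisely when Eve has a winning strategy in the thp bisimulation game $\mathcal{G}(\mathfrak{T}_1,\mathfrak{T}_2)$ starting from the initial configuration $(\epsilon,\epsilon)$. On the other hand, the preceding theorem (Game abstraction) states that $\mathfrak{T}_1 \sim_{\tfl/} \mathfrak{T}_2$ holds iff Eve has a winning strategy in that very same game. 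Chaining these two biconditionals yields $\mathfrak{T}_1 \sim_{\tfl/} \mathfrak{T}_2 \Leftrightarrow \mathfrak{T}_1 \sim_{thpb} \mathfrak{T}_2$ for all pairs of systems, which is exactly the claim $\sim_{\tfl/} \equiv \sim_{thpb}$.

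First I would make explicit the two halves that the Game abstraction theorem packages: Lemma \ref{tfltodup} (completeness) supplies the implication $\mathfrak{T}_1 \sim_{\tfl/} \mathfrak{T}_2 \Rightarrow$ Eve wins, while Corollary \ref{soundtfl} (soundness, i.e. the contrapositive of Lemma \ref{duptotfl}) supplies the converse: if $\mathfrak{T}_1 \not\sim_{\tfl/} \mathfrak{T}_2$ then Adam wins, hence Eve does not. Together with the determinacy of the game (exactly one of the two players has a winning strategy, which is itself a consequence of having both soundness and completeness), these give that the game-induced relation $\sim_{thpb}$ and the logical relation $\sim_{\tfl/}$ identify the same pairs of rooted systems.

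The only point requiring a little care is the level at which the two relations live: $\sim_{\tfl/}$ is introduced in Definition \ref{sflequiv} as a relation on processes, comparing the denotations of fixpoint-free closed \tfl/ formulae, whereas $\sim_{thpb}$ is phrased on systems. I would note that it suffices to compare the initial processes $(\mathfrak{X}_0, t_{\epsilon})$ of the two process spaces, since a \tfl/ formula evaluated at the initial process only sees reachable behaviour, and the initial configuration $(\epsilon,\epsilon)$ of the thpb game corresponds exactly to this initial process; the ``$\equiv$'' between the two equivalences is then precisely the statement that they agree on all such pairs. No step here is a genuine obstacle — all the real work was already done in Lemmas \ref{duptotfl} and \ref{tfltodup} and the Game abstraction theorem — so the only thing to get right is the bookkeeping that matches up the two ``Eve has a winning strategy'' characterisations.
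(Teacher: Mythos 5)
Your proposal is correct and matches the paper's (implicit) argument: the corollary is read off immediately from the Game abstraction theorem, i.e.\ from the soundness and completeness results (Lemma \ref{duptotfl}, Corollary \ref{soundtfl}, Lemma \ref{tfltodup}), by chaining the two characterisations of ``Eve has a winning strategy'' in the thpb game. The bookkeeping remark about comparing initial processes versus systems is a reasonable clarification but introduces nothing beyond what the paper already takes for granted.
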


\subsection{Decidability and Determinacy}

\begin{thm}\label{tthpbdec}
$\sim_{thpb}$ is decidable on finite systems.
\end{thm}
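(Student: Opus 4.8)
The plan is to show that, on finite systems, the thpb game $\mathcal{G}(\mathfrak{T}_1,\mathfrak{T}_2)$ can be recast as a game on a \emph{finite} arena carrying a safety winning condition, and then to invoke the fact that finite safety (equivalently reachability) games are determined and effectively solvable. Since, by Definition \ref{thpbgame}, $\mathfrak{T}_1 \sim_{thpb} \mathfrak{T}_2$ holds exactly when Eve has a winning strategy in this game, computing Eve's winning region and checking whether it contains the initial configuration $(\epsilon,\epsilon)$ decides $\sim_{thpb}$.

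First I would observe that the thpb game conservatively extends the hpb game with a single extra, local move type. A thpb configuration is a pair of runs $(\pi_1,\pi_2)$, possibly annotated with a pending ``restriction set'' $R_i$ at one component while a $\sim_{thpb}$ move is being resolved. The data that actually governs future play is not the pair of runs itself but: (i) the current pair of states $(\tau(\varrho(\pi_1)),\tau(\varrho(\pi_2)))$; (ii) the isomorphism type of the pair of finite posets induced by $\pi_1$ with $I_1$ and by $\pi_2$ with $I_2$, together with the established bijection between them and, for each event, its label and its $\leq$/$\ominus$ relationship with a candidate next transition; and (iii) the restriction sets $R_1,R_2$ if a $\sim_{thpb}$ move is pending. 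Two configurations agreeing on (i)--(iii) offer precisely the same continuations to both players, so they may be identified, and this quotient clearly respects the game graph and the winning condition.

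Second I would argue that there are only finitely many such classes, and that the quotient arena is effectively constructible. For (i) this is immediate from finiteness of $S_1,S_2$. For (iii): by the results of Section \ref{dual}, maximal traces and maximal sets are sets of transitions with a common source, hence subsets of the finite sets $\mathfrak{X}(s)$; there are finitely many of them and they are computable, as is the history-preserving-isomorphism test between two candidate sets with respect to a pair $(t_m,t_n)$, which is just a finite check of a bijection against $\leq$ and $\ominus$. The one delicate point is (ii): that the hpb game on finite systems already has finitely many inequivalent configurations and can be presented as a finite game. This is exactly the content of the decidability of $\sim_{hpb}$, which I take as given from \cite{dechpb-meyer}. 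Crucially, the $\sim_{thpb}$ rule does not enlarge this memory: a pending restriction set lives among the transitions at the \emph{current} state (finitely many possibilities) and is consumed by the immediately following $\sim_{hpb}$ move, so it contributes only a bounded, local, finite amount of additional state. Hence the quotient arena is finite and computable.

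Finally, recall the winning conditions of Definition \ref{thpbgame}: every infinite play is won by Eve, and in a finite play the player who cannot move loses. On the finite quotient arena this is a safety objective for Eve (she must avoid positions where it is her turn and she has no move) and a reachability objective for Adam; such games are determined and solvable in time polynomial in the arena by the usual attractor/backward-induction computation. This yields the theorem. The main obstacle is the finiteness claimed in (ii); here I rely on \cite{dechpb-meyer}, and the residual work is only to verify that the $\sim_{thpb}$ moves and the history-preserving-isomorphism requirement are genuinely local and finitary, which is routine. (Alternatively, via Corollary \ref{tflequithpb}, one may compute $\sim_{\tfl/}$ directly as the greatest fixpoint of the thpb-bisimulation functional over the finite product of the two stateless maximal process spaces, which stabilises after finitely many iterations.)
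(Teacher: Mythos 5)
Your overall architecture---recast the thpb game as a game on a finite, effectively constructible arena and solve the resulting safety/reachability game by backward induction---is sound, and at the level of the winning condition it matches the paper (every infinite play is Eve's, a stuck player loses). Where you genuinely diverge is in how finiteness of the arena is obtained. The paper never quotients the run-based game: it argues that the stateless maximal process space already internalises the relevant history locally (a process $(R,t)$ carries the last transition in its second component, and support sets live at single states of a finite system), so configurations effectively range over a finite set; finite plays are then decided by stuck positions and infinite plays by repetition of configurations, from which computability of Eve's winning strategies follows. You instead keep the run-based configurations and quotient them by the triple (current state pair, poset-isomorphism data of the matched runs, pending restriction sets), outsourcing finiteness of the middle component to \cite{dechpb-meyer}. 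Your route has the merit of making explicit the global synchrony requirement of hp bisimulation that the paper's locality claim quietly absorbs; the paper's route is self-contained and directly yields the history-freeness it uses elsewhere.

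One step in your argument needs a sharper justification. As literally defined, your invariant (ii)---the isomorphism type of the pair of induced posets together with the established bijection---takes unboundedly many values as runs grow, so finiteness of your quotient does not follow from your own construction; and ``$\sim_{hpb}$ is decidable'' is a statement about the relation, which does not by itself hand you a finite-arena presentation of the hpb game. What you actually need is the bounded-history characterisation used inside the decidability proof of \cite{dechpb-meyer}, and you must then re-verify that this bounded data still determines both the $\leq$/$\ominus$ relationship of candidate next transitions to the last matched pair and the history-preserving-isomorphism test on the maximal traces offered under the $\sim_{thpb}$ rule. You flag this as the delicate point and call the residual verification routine; that is plausible, since both checks are local to the current states and last transitions, but as written your proof relies on the citation for more than it literally provides, whereas the paper carries that burden itself through its locality argument about the process space.
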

\begin{proof}
As all other bisimulation games presented in this report, thpb games are two-player zero-sum perfect-information (infinite) games whose winning conditions define `Borel' sets, thus they are determined \cite{boreldet-martin}. Alternatively, we can also say that since thb games are sound and complete, then they must be determined. This means that if Eve does not win a play in the game, then Adam must win it. But, since Eve only wins when the two systems are either \tfl/-equivalent, then Adam must win whenever the two systems are not equivalent. 

Moreover, the number of different configurations of any play is always finite, provided that the systems are finite. This follows from the fact that in order to define a winning strategy for Eve one only needs to analyse the locality of the process space where Eve is playing, rather than the whole history of the game, and clearly the number of localities in a finite system is also finite. In particular, notice that given a state of a partial order model there is always only a finite number of processes and support sets relative to such a state. This feature gives a finite size to the set of elements in the first component of a configuration. On the other hand, finiteness of the elements in the second component follows from the bounded branching property of the models since they are also image-finite.

Now, when constructing a winning strategy for Adam it is important to note that only a finite number of processes must be analysed given a particular state of a partial order model. Firstly, notice that a stateless maximal process space embeds the immediate history of a play in the transition component of a process, and so such information is available locally by exploring a finite number of processes given a particular element in the process space.  Secondly, the support sets that a player can choose given a particular process is also finite and can be explored simply by checking all support sets relative to the same either state or support set of the process in the last configuration of the game, i.e., all those in the same neighbourhood. This analysis must be done for all states of the partial order models being compared, but again these sets of states are also finite. 

Finally, since Eve wins when Adam cannot make a move (a finite play easily decided) or when a finite set of repeated configurations is visited infinitely often (for infinite plays, which are won only by Eve), then it is always possible to compute the winning strategies for Eve, and therefore decidability follows.
\qed
\end{proof}

\begin{cor}
$\sim_{\tfl/}$ is decidable.
\end{cor}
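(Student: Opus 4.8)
The plan is to derive this corollary directly from the game-theoretic characterisation established just above, so almost no new work is needed. By Corollary~\ref{tflequithpb} we already know that $\sim_{\tfl/}$ and $\sim_{thpb}$ are the very same relation, so deciding whether two systems are \tfl/-equivalent is literally the same problem as deciding whether Eve has a winning strategy in the corresponding thpb game. Hence the first (and only essential) step is simply to invoke Theorem~\ref{tthpbdec}, which states that $\sim_{thpb}$ is decidable on finite systems; composing the two facts immediately yields that $\sim_{\tfl/}$ is decidable on finite systems. (One should state the corollary with the same finiteness proviso as Theorem~\ref{tthpbdec}, since nothing in the argument gives decidability for arbitrary infinite systems.)

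For completeness I would also spell out, in one sentence, why the chain of results it rests on is sound: Lemmas~\ref{duptotfl} and~\ref{tfltodup} give, respectively, soundness and completeness of the thpb game for $\sim_{\tfl/}$, determinacy then follows (either from the Borel determinacy argument of \cite{boreldet-martin} or directly from soundness-plus-completeness), and finiteness of the configuration space of any play — guaranteed by image-finiteness and bounded branching of finite systems, together with the fact that only the local neighbourhood of the current process must be inspected — makes the existence of Eve's winning strategy effectively computable. All of these are already in place, so the corollary is obtained by assembling them.

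Since there is genuinely no technical obstacle here, the "hard part" is only presentational: making sure the statement is not overclaimed. In particular I would be careful to restrict to finite systems and to note explicitly that this does \emph{not} settle decidability of $\sim_{hhpb}$ on $\Xi$-systems (still open) nor give any complexity bound beyond what the thpb-game analysis provides; the value of the corollary is precisely the contrast with hhpb games, which are undecidable on arbitrary finite models \cite{undhhpb-jur}, whereas $\sim_{\tfl/}$ — lying strictly between $\sim_{hpb}$ and $\sim_{hhpb}$ on $\Xi$-systems — remains decidable in the finite case.

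\begin{proof}
By Corollary~\ref{tflequithpb}, $\sim_{\tfl/} \equiv \sim_{thpb}$, and by Theorem~\ref{tthpbdec}, $\sim_{thpb}$ is decidable on finite systems. Hence $\sim_{\tfl/}$ is decidable on finite systems as well.
\qed
\end{proof}
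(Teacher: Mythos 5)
Your proof is correct and is exactly the paper's own argument: the corollary is obtained by composing Corollary~\ref{tflequithpb} ($\sim_{\tfl/} \equiv \sim_{thpb}$) with Theorem~\ref{tthpbdec} (decidability of $\sim_{thpb}$ on finite systems). Your remark that the statement should carry the same finiteness proviso as Theorem~\ref{tthpbdec} is a fair observation, but it does not change the substance of the argument.
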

\begin{proof}
Follows from Theorem \ref{tthpbdec} and Corollary \ref{tflequithpb}.
\qed
\end{proof}

The previous results let us relate $\sim_{hhpb}$ with $\sim_{\tfl/}$ using game-theoretical arguments. As these bisimulation games are conservative extensions of the hpb game, they can be compared just by looking at their additional rules with respect to the hpb game. Then, only by showing that the additional rule for the hhpb game is at least as powerful as the additional rule for the thpb game, and taking into account that, by Proposition \ref{tflnotinhhpb}, ${\sim_{hhpb}}$ and ${\sim_{\tfl/}}$ do not coincide, we have:

\begin{thm} \label{hhpbtosfl}
${\sim_{hhpb}} \subset {\sim_{\tfl/}}$. 
\end{thm}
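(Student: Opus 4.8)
The plan is to split the strict inclusion into its two halves: first prove $\sim_{hhpb} \subseteq \sim_{\tfl/}$, and then derive strictness from the counter-example already isolated in Proposition \ref{tflnotinhhpb}.

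For the inclusion I would argue through the game-theoretic characterisation of Corollary \ref{tflequithpb}, i.e. I would show that $\mathfrak{T}_1 \sim_{hhpb} \mathfrak{T}_2$ implies Eve has a winning strategy in the thpb game $\mathcal{G}(\mathfrak{T}_1,\mathfrak{T}_2)$. Both the hhpb game (Definition \ref{ch2-hhpbgame}) and the thpb game (Definition \ref{thpbgame}) are conservative extensions of the hpb game with a common underlying hpb part, so it suffices to compare their extra moves: hhpb adds backtracking of backwards-enabled transitions, whereas thpb adds the $\sim_{thpb}$ rule, where Adam restricts play to a maximal trace in one component and Eve must answer with a maximal set in the other that is history-preserving isomorphic with respect to the last pair of transitions. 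The key step is to show that a winning strategy $\lambda$ for Eve in the hhpb game induces one in the thpb game, by simulating each thpb challenge inside the hhpb game: when Adam fixes a maximal trace $M$ at the current configuration, the transitions of $M$ are pairwise independent, so using $\lambda$ Eve can reproduce all of $M$ forwards one transition at a time (each independent of its predecessors) and then backtrack them; since $\lambda$ keeps the hhpb relation — hence the hpb relation, hence the induced posets — synchronous throughout this forward-and-back excursion, the set $N$ of responses produced on the other side is a maximal set history-preserving isomorphic to $M$, which is exactly a legal reply to the $\sim_{thpb}$ rule. Thereafter play collapses to the common hpb part, on which $\lambda$ already wins, so $\sim_{hhpb} \subseteq \sim_{thpb} = \sim_{\tfl/}$. (An alternative route is a direct induction on $\tfl/$ formulae in positive normal form; the only non-routine case is $\modald{\otimes}\phi_1$, where the hereditary matching of the independence structure provided by $\sim_{hhpb}$ yields the required correspondence of conflict-free sets $M \sqsubseteq R$.)

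For strictness, Proposition \ref{tflnotinhhpb} already exhibits two free-choice systems $A$ and $B$ (Figure \ref{sfhpbvshhpb}) with $A \sim_{\tfl/} B$ but $A \not\sim_{hhpb} B$, so $\sim_{\tfl/} \not\subseteq \sim_{hhpb}$; together with the inclusion just established this gives $\sim_{hhpb} \subsetneq \sim_{\tfl/}$.

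I expect the main obstacle to be making the simulation argument of the second paragraph fully rigorous: the informal claim that the additional hhpb rule is ``at least as powerful as'' the additional thpb rule hides the need to verify that Eve's backtracking excursions really do force her forward responses to form a \emph{maximal} and history-preserving isomorphic set — in particular that the local maximality of the trace $M$ (no further concurrent transition can be added) is faithfully matched on the other component. This relies on the determinacy axiom and the standing no-auto-concurrency hypothesis for all the models considered, and care is needed to see that these carry the argument through for arbitrary (possibly infinite but image-finite) systems rather than only for the finite case.
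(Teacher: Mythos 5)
Your proposal is correct and takes essentially the same route as the paper: both reduce the inclusion to the observation that, since thpb and hhpb games are conservative extensions of the hpb game, it suffices to show the hhpb backtracking rule subsumes the $\sim_{thpb}$ rule (the transitions in a chosen maximal trace are pairwise independent, hence all backwards enabled, so the excursion can be checked by backtracking), and both derive strictness from the free-choice counter-example of Proposition \ref{tflnotinhhpb}. Your forward-and-back simulation is in fact spelled out somewhat more carefully than the paper's own sketch, which likewise leaves implicit the verification that Eve's answering set is maximal and history-preserving isomorphic.
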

\begin{proof}
Suppose that two systems are hhp bisimilar. We show that they must be thp bisimilar as well. The additional rule for the thpb game simply lets Adam choose a set of runs to be checked by committing a choice made by Eve. Suppose the current configuration of the game is $(\pi_1,\pi_2)$. When Adam uses the additional rule of the thpb game, the extended runs that can be checked by Adam have the form $\pi' = \pi_1.w_1.\alpha$ and $\pi'' = \pi_2.w_2.\beta$, where $w_1$ and $w_2$ are the sequences of transitions in the conflict-free isomorphic sets $M$ and $N$ that Adam and Eve have chosen, and $\alpha$ and $\beta$ are any sequence of transitions after them which are not in $M$ and not in $N$, respectively. Now, since $M$ and $N$ only contain conflict-free transitions at $\pi_1$ and $\pi_2$, i.e., concurrent transitions at $\pi_1$ and $\pi_2$, they are all backwards enabled whenever Adam decides to choose them, and therefore they can also be checked by the additional rule of the hhpb game.
\qed
\end{proof}

This result allows us to define a new hierarchy of equivalences and games for concurrency. The new hierarchy we define here extends the work of Fecher \cite{hier-fecher}.

\subsection{A New Hierarchy of Equivalences for Concurrency}\label{hier}
An interesting problem is that of having logics capturing bisimulation equivalences for concurrency. A hierarchy of so-called `true concurrent' equivalences can be found in \cite{hier-fecher}. Our results define a new hierarchy of equivalences for concurrent systems, where the bisimilarities induced by (the fixpoint-free fragment of) \tfl/ rank above all the decidable equivalences in such a hierarchy.

Prior to this work, we had that $\sim_{hhpb}$ is captured by the Path Logic (PL) of Nielsen and Clausen \cite{pathlogic-nielsen}, as well as the well-known result by Milner and Hennessy that $\sim_{sb}$ is captured by HML \cite{hmljacm-milner}, or what is equivalent, by the fixpoint-free fragment of the mu-calculus. Moreover, in a preliminary work to the one reported here the author also studied another fixpoint logic quite similar to \tfl/. Such a logic, which is called Separation Fixpoint Logic (SFL \cite{fos09-gut}) also induces a bisimulation equivalence strictly stronger than hpb and strictly weaker than hhpb on $\Xi$-systems; such an equivalence is studied in \cite{fos09-gut} under the name of `independence' hpb, $\sim_{ihpb}$. The exact relation between $\sim_{\tfl/}$ and $\sim_{ihpb}$ (or equivalently $\sim_{SFL}$, the bisimilarity induced by SFL) is however still unknown.

Just to recall, here we have shown that \clmu/ captures $\sim_{hpb}$, the standard equivalence for causal systems. Moreover, a new equivalence was introduced and shown to be decidable and strictly between $\sim_{hpb}$ and $\sim_{hhpb}$ in terms of discriminating power. In Figure \ref{hierfig}, $\sim_{PL}$ represents the bisimulation equivalence induced by PL; moreover, $\sim_{eq}$ refers to several other equivalences for concurrency, which are not studied in this document. The original hierarchy can be found in \cite{hier-fecher}.
\begin{figure}[]
  \begin{center}
    \subfigure
    {
\xymatrix @R=10pt @C=10pt 
{
& \sim_{hhpb} \equiv \sim_{PL} \ar[dl] \ar[dr] &  \\
\sim_{ihpb} \equiv \sim_{SFL} \ar[dr] &  & \sim_{thpb} \equiv \sim_{\tfl/} \ar[dl] \\
& \sim_{hpb} \equiv \sim_{\clmu/} \ar[d] & \\
\ar@{--}[dd]&  \ar@{--}[l] \ar@{--}[r] & \ar@{--}[dd]\\
  & \sim_{eq} &   \\
\ar@{--}[r] &  \ar[d] & \ar@{--}[l] \\
& \sim_{sb} \equiv \sim_{HML}  & \\
} 
    }
  \end{center}
\vspace{-0.5cm}
\caption{A hierarchy of equivalences for concurrency. The arrow $\rightarrow$ means inclusion $\subset$.}
\label{hierfig}
\end{figure}
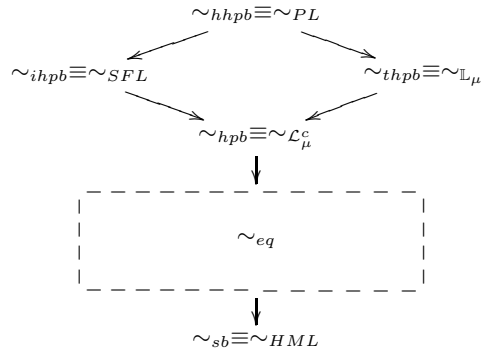

\section{Higher-Order Logic Games for Model-Checking} \label{mcgames}
In this section we introduce higher-order logic games for model-checking that allow \emph{local} second-order power on sets of independent transitions in the underlying partial order models where the games are played. Since the interleaving semantics of such models is not considered, some problems that may arise when using interleaving representations are avoided and new decidability results for partial order models are achieved. The games are shown to be \emph{sound} and \emph{complete}, and therefore determined. While in the interleaving case they coincide with the local model-checking games for the mu-calculus, in a partial order setting they verify properties of a number of fixpoint modal logics that can specify in concurrent systems with partial order semantics, several properties not expressible with \lmu/. 

The games underpin a novel decision procedure for model-checking all temporal properties of a class of infinite and regular event structures, thus improving previous results in the literature. As said before, similar to the case of higher-order logic games for bisimulation, the players in this new game are given local monadic second-order power on conflict-free sets of transitions. The technical details behind the construction of this game follows seminal ideas on local model-checking games for \lmu/ as presented by Stirling \cite{localmc-stirling}. In particular, this higher-order logic game for model-checking is a simple refinement of the model-checking procedure first defined in \cite{con09-gut} by Bradfield and the author.

A higher-order logic game for model-checking $\mathcal{G}(\mathfrak{M},\phi)$ is
played on a model $\mathfrak{M} = (\mathfrak{T},\mathcal{V})$, where
$\mathfrak{T} = (S,s_0,T,I,\Sigma)$ is a system, and on a formula
$\phi$. The game can also be presented as
$\mathcal{G}_{\mathfrak{M}}(H_0,\phi)$, or even as
$\mathcal{G}(H_0,\phi)$ or $\mathcal{G}(s_0,\phi)$, where $H_0 =
(\mathfrak{X}(s_0),t_{\epsilon})$ is the initial process of
$\mathfrak{S}$ in the model $\mathfrak{M}$. The board in which the game is played has the form
$\mathfrak{B} \subseteq \mathfrak{S} \times \Sub(\phi)$, for a process space
$\mathfrak{S} = \mathcal{X} \times \mathfrak{A}$ of 
support sets $R \in \mathcal{X}$ and transitions $t \in \mathfrak{A}$ in the system $\mathfrak{T}$, and where $\Sub (\phi)$ is the subformula set of $\phi$ as given by the Fischer--Ladner closure associated with the logic to which $\phi$ belongs. Since we want to define model-checking games for \tfl/ we now present the Fischer--Ladner closure of \tfl/ formulae.

\begin{defn}
\emph{{\bf (Fischer--Ladner closure of \tfl/ formulae)}
The `subformulae' or subformula set $Sub(\phi)$ of a \tfl/ formula $\phi$ is given in the following way:
\begin{center}
\begin{tabular}{lll}
$Sub(Z)$                     &  $=$ & $\{Z\}$ \\
$Sub(\phi_1 \wedge \phi_2)$  &  $=$ & $\{ \phi_1 \wedge \phi_2 \} \cup Sub(\phi_1) \cup Sub(\phi_2)$ \\
$Sub(\phi_1 \vee \phi_2)$    &  $=$ & $\{ \phi_1 \vee \phi_2 \} \cup Sub(\phi_1) \cup Sub(\phi_2)$ \\
$Sub(\modald{a}_c \phi_1)$   &  $=$ & $\{\modald{a}_c \phi_1 \} \cup Sub(\phi_1)$ \\
$Sub(\modald{a}_{nc} \phi_1)$   &  $=$ & $\{\modald{a}_{nc} \phi_1 \} \cup Sub(\phi_1)$ \\
$Sub(\modalb{a}_c \phi_1)$   &  $=$ & $\{\modalb{a}_c \phi_1 \} \cup Sub(\phi_1)$ \\
$Sub(\modalb{a}_{nc} \phi_1)$   &  $=$ & $\{\modalb{a}_{nc} \phi_1 \} \cup Sub(\phi_1)$ \\
$Sub(\modald{\otimes} \phi_1)$   &  $=$ & $\{\modald{\otimes} \phi_1 \} \cup Sub(\phi_1)$ \\
$Sub(\modalb{\otimes} \phi_1)$   &  $=$ & $\{\modalb{\otimes} \phi_1 \} \cup Sub(\phi_1)$ \\
$Sub(\mu Z . \phi_1)$          &  $=$ & $\{\mu Z . \phi_1 \} \cup Sub(\phi_1)$ \\
$Sub(\nu Z . \phi_1)$          &  $=$ & $\{\nu Z . \phi_1 \} \cup Sub(\phi_1)$ \\
\end{tabular} 
\end{center}
}
\eos
\end{defn}

A play is a possibly infinite sequence of configurations $C_0, C_1, ...$, written as $(R,t) \vdash \phi$ or $H \vdash \phi$ whenever possible; each $C_i$ is an element of the board $\mathfrak{B}$. Every play starts in the configuration $C_0 = H_0 \vdash \phi$, and proceeds according to the rules of the game given in Figure \ref{fig:mcrules}. As usual for model-checking games, Eve tries to prove that $H_0 \models \phi$ whereas Adam tries to show that $H_0 \not \models \phi$.  

\begin{figure}[]
\begin{center}
\small
\begin{tabular}{|lclclcl|}
\hline
\vspace{-0.3cm}
&&&&&& \\
\multicolumn{7}{|c|}{(FP) $~~\myfrac{H \vdash {_{\mu}^{\nu}Z} . \phi}{H \vdash Z}~~$ s.t.\ ${_{\mu}^{\nu}} \in \{\mu,\nu\}$}  \\
\vspace{-0.2cm}
&&&&&& \\
\multicolumn{7}{|c|}{(VAR) $~~\myfrac{H \vdash Z}{H \vdash \phi}~~$ s.t.\ ${_{\mu}^{\nu}Z} . \phi$ } \\ 
\vspace{-0.3cm}
&&&&&& \\
\hline
\vspace{-0.3cm}
&&&&&& \\
\multicolumn{7}{|c|}{($\vee$) $~~\myfrac{H \vdash \phi_0 \vee \phi_1}{H \vdash \phi_i}~~$ $\left[\exists\right] i : ~~ i \in \{0,1\}$} \\
\vspace{-0.2cm}
&&&&&& \\
\multicolumn{7}{|c|}{($\wedge$) $~~\myfrac{H \vdash \phi_0 \wedge \phi_1}{H \vdash \phi_i}~~$ $\left[\forall\right]i: ~~ i \in \{0,1\}$} \\
\vspace{-0.3cm}
&&&&&& \\
\hline
\vspace{-0.3cm}
&&&&&& \\
\multicolumn{7}{|c|}{($\modald{~}_c$) $~~\myfrac{(R,t) \vdash \modald{a}_c \phi}{(\mathfrak{X},r) \vdash \phi}~~$ $\left[\exists\right] a: ~~ a = \delta(r), ~ r \in R, ~ t \leq r$}\\
\vspace{-0.2cm}
&&&&&& \\
\multicolumn{7}{|c|}{($\modald{~}_{nc}$) $~~\myfrac{(R,t) \vdash \modald{a}_{nc} \phi}{(\mathfrak{X},r) \vdash \phi}~~$ $\left[\exists\right] a: ~~ a = \delta(r), ~ r \in R, ~ t \ominus r$}\\
\vspace{-0.2cm}
&&&&&& \\
\multicolumn{7}{|c|}{($\modalb{~}_c$) $~~\myfrac{(R,t) \vdash \modalb{a}_c \phi}{(\mathfrak{X},r) \vdash \phi}~~$ $\left[\forall\right] a: ~~ a = \delta(r), ~ r \in R, ~ t \leq r$}\\
\vspace{-0.2cm}
&&&&&& \\
\multicolumn{7}{|c|}{($\modalb{~}_{nc}$) $~~\myfrac{(R,t) \vdash \modalb{a}_{nc} \phi}{(\mathfrak{X},r) \vdash \phi}~~$ $\left[\forall\right] a: ~~ a = \delta(r), ~ r \in R, ~ t \ominus r$}\\
\vspace{-0.3cm}
&&&&&& \\
\hline
\vspace{-0.3cm}
&&&&&& \\
\multicolumn{7}{|c|}{($\modald{\otimes}$) $~~\myfrac{(R,t) \vdash \modald{\otimes} \phi}{(M,t) \vdash \phi}~~$ $\left[\exists\right] M: ~~  {M \sqsubseteq R}$}\\
\vspace{-0.2cm}
&&&&&& \\
\multicolumn{7}{|c|}{($\modalb{\otimes}$) $~~\myfrac{(R,t) \vdash \modalb{\otimes} \phi}{(M,t) \vdash \phi}~~$ $\left[\forall\right] M: ~~  {M \sqsubseteq R}$}\\
\hline
\end{tabular} 
\end{center}
\caption{Model-checking game rules of \tfl/. Whereas the notation $\left[\forall\right]$ denotes a choice made by Adam, the notation $\left[\exists\right]$ denotes a choice by Eve; $\mathfrak{X}$ is the maximal set at $\tau(r)$.}
\label{fig:mcrules}
\end{figure} 

The rules (FP) and (VAR) control the unfolding of fixpoint operators. Their correctness is based on the fact that ${{_{\mu}^{\nu}}Z} . \phi \equiv \phi \left[ {{_{\mu}^{\nu}}Z} . \phi / Z \right]$, where ${_{\mu}^{\nu}} \in \{\mu,\nu\}$, according to the semantics of the logic. Rules ($\vee$) and ($\wedge$) have the same meaning as the disjunction and conjunction rules, respectively, in a Hintikka game for propositional logic. Rules ($\modald{~}_c$), ($\modald{~}_{nc}$), ($\modalb{~}_c$) and ($\modalb{~}_{nc}$) are like the rules for quantifiers in a standard Hintikka game semantics for first-order (FO) logic, provided that the box and diamond operators behave, respectively, as restricted universal and existential quantifiers sensitive to the causal information in the partial order model.

Finally, the most interesting rules are ($\modald{\otimes}$) and ($\modalb{\otimes}$). Local monadic second-order moves are used to recognize conflict-free sets of transitions in $\mathfrak{M}$, i.e., those in the same `trace'. The use of $\modald{\otimes}$ and $\modalb{\otimes}$ requires a player to make a choice, locally, on a set of transitions rather than on a singleton set as in the traditional games for model-checking. However, such a second-order power is restricted to some conflict-free sets of transitions; more specifically, to maximal traces.

Guided by the semantics of $\modald{\otimes}$ (resp.\ $\modalb{\otimes}$), it is defined that Eve (resp.\ Adam) must look for a maximal set $M$ to be assigned to $\phi_1$ as its support set.

\begin{defn} \label{wc}
\emph{
The following rules are the \emph{winning conditions} that determine a unique winner for every finite or infinite play $C_0, C_1, ... $ in a game $\mathcal{G}(H_0,\phi)$.
}
\end{defn}

Adam wins a finite play $C_0, C_1, ..., C_n$ or an infinite play $C_0, C_1, ...$ iff:
\begin{enumerate}
\item $C_n = H \vdash Z$ and $H \not \in \mathcal{V}(Z)$.
\item $C_n = (R,t) \vdash \modald{a}_{c} \psi$ and $\{(\mathfrak{X},r) : {t \leq {r = {s \xrightarrow{a} s'} \in R}} \} = \emptyset$.
\item $C_n = (R,t) \vdash \modald{a}_{nc} \psi$ and $\{(\mathfrak{X},r) : t \ominus r = s \xrightarrow{a} s' \in R \} = \emptyset$.
\item $C_n = (R,t) \vdash \modald{\otimes} \psi$ and $\{(M,t) : M \sqsubseteq R \} = \emptyset$.
\item The play is infinite and there are infinitely many configurations where $Z$ appears, such that $Z$ is the least fixpoint of some subformula $\mu Z . \psi$ and the syntactically outermost variable in $\phi$ that occurs infinitely often. 
\end{enumerate}

Eve wins a finite play $C_0, C_1, ..., C_n$ or an infinite play $C_0, C_1, ...$ iff:
\begin{enumerate}
\item $C_n = H \vdash Z$ and $H \in \mathcal{V}(Z)$.
\item $C_n = (R,t) \vdash \modalb{a}_{c} \psi$ and $\{(\mathfrak{X},r) : {t \leq {r = {s \xrightarrow{a} s'} \in R}} \} = \emptyset$.
\item $C_n = (R,t) \vdash \modalb{a}_{nc} \psi$ and $\{(\mathfrak{X},r) : t \ominus r = s \xrightarrow{a} s' \in R \} = \emptyset$.
\item $C_n = (R,t) \vdash \modalb{\otimes} \psi$ and $\{(M,t) : M \sqsubseteq R \} = \emptyset$.
\item The play is infinite and there are infinitely many configurations where $Z$ appears, such that $Z$ is the greatest fixpoint of some subformula $\nu Z . \psi$ and the syntactically outermost variable in $\phi$ that occurs infinitely often. 
\eos
\end{enumerate}

\subsection{Soundness and Completeness} \label{soundcomplete}
Let us first give some intermediate results. The statements in this section are all either standard modal mu-calculus statements, or standard statements where additional cases for the new operators of \tfl/ need to be checked. We give the statements in full, and the usual proof outlines, for the sake of being self-contained.

Let $\mathfrak{T}$ be a system and $C = (R,t) \vdash \psi$ a configuration in the game $\mathcal{G}(H_0,\phi)$, as defined before. As usual, the denotation $\denot{\phi}^{\mathfrak{T}}_{\mathcal{V}}$ of a \tfl/ formula $\phi$ in the model $\mathfrak{M} = (\mathfrak{T},\mathcal{V})$ is a subset of $\mathfrak{S}$. We say that a configuration $C$ of $\mathcal{G}(H_0,\phi)$ is \emph{true} if, and only if, we have that $(R,t) \in \denot{\psi}^{\mathfrak{T}}_{\mathcal{V}}$, and \emph{false} otherwise.

\begin{fact}\label{negclosed}
\tfl/ is closed under negation.
\end{fact}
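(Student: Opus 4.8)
The plan is to exhibit, for every $\tfl/$ formula $\phi$ in positive normal form, a positive-normal-form formula $\overline{\phi}$ with $\denot{\overline{\phi}}^{\mathfrak{T}}_{\mathcal{V}} = \mathfrak{S}\setminus\denot{\phi}^{\mathfrak{T}}_{\mathcal{V}}$ for every model $\mathfrak{M}=(\mathfrak{T},\mathcal{V})$, so that $\overline{\phi}\equiv\neg\phi$; since $\overline{\phi}$ is again a $\tfl/$ formula (indeed in positive normal form), this gives closure under negation within the fragment we actually work with, and subsumes the trivial syntactic observation that $\neg$ already belongs to the grammar.

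First I would define the complement operation $\overline{\cdot}$ by recursion on the structure of positive-normal-form formulae, using the De Morgan dualities supplied in the definition of $\tfl/$: set $\overline{Z}=\neg Z$ and $\overline{\neg Z}=Z$; $\overline{\phi_1\wedge\phi_2}=\overline{\phi_1}\vee\overline{\phi_2}$ and dually for $\vee$; $\overline{\modald{a}_c\phi_1}=\modalb{a}_c\overline{\phi_1}$, $\overline{\modalb{a}_c\phi_1}=\modald{a}_c\overline{\phi_1}$, and likewise swapping $\modald{a}_{nc}\leftrightarrow\modalb{a}_{nc}$ and $\modald{\otimes}\leftrightarrow\modalb{\otimes}$; and $\overline{\mu Z.\phi_1}=\nu Z.\overline{\phi_1}$, $\overline{\nu Z.\phi_1}=\mu Z.\overline{\phi_1}$, where in the two fixpoint clauses the recursive call treats the bound variable $Z$ as positive (maps $Z\mapsto Z$, not $Z\mapsto\neg Z$). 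A routine check shows $\overline{\phi}$ is again in positive normal form: negations land only on variables, and since in a positive-normal-form $\mu Z.\phi_1$ the variable $Z$ occurs only positively, it still occurs only positively in $\overline{\phi_1}$, so $\nu Z.\overline{\phi_1}$ is well formed; moreover $\overline{\overline{\phi}}=\phi$.

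Then I would prove $\denot{\overline{\phi}}_{\mathcal{V}}=\mathfrak{S}\setminus\denot{\phi}_{\mathcal{V}}$ by induction on $\phi$. The boolean and variable cases are immediate from Definition~\ref{semtfl} (recall $\denot{\neg\phi_1}_{\mathcal{V}}=\mathfrak{S}-\denot{\phi_1}_{\mathcal{V}}$). For the modalities one unfolds the semantic clauses for $\modald{a}_c$, $\modald{a}_{nc}$, $\modald{\otimes}$ and their duals and pushes the complement through the existential quantifier over $r\in R$ (resp.\ over $M\sqsubseteq R$), exactly as in the standard $\lmu/$ argument for the box/diamond duality; the side-conditions $t\leq r$, $t\ominus r$, $M\sqsubseteq R$, and ``$\mathfrak{X}$ the maximal set at $\tau(r)$'' are untouched by complementation, so the inductive hypothesis on $\phi_1$ closes each case. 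For the fixpoint case, writing $f(Q)=\denot{\phi_1}_{\mathcal{V}[Z:=Q]}$ and $g(Q)=\denot{\overline{\phi_1}}_{\mathcal{V}[Z:=Q]}$, the inductive hypothesis gives $g(Q)=\mathfrak{S}\setminus f(\mathfrak{S}\setminus Q)$ for all $Q\subseteq\mathfrak{S}$; both $f$ and $g$ are monotone by the positivity restriction, so Knaster--Tarski yields $\denot{\mu Z.\overline{\phi_1}}_{\mathcal{V}}=\mathrm{lfp}(g)=\mathfrak{S}\setminus\mathrm{gfp}(f)=\mathfrak{S}\setminus\denot{\nu Z.\phi_1}_{\mathcal{V}}$, and symmetrically for $\overline{\mu Z.\phi_1}$.

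The main obstacle is the bookkeeping around bound variables in the fixpoint case: one must establish a substitution/environment lemma ensuring that complementing $\phi_1$ while keeping $Z$ positive is compatible with complementing only the sets the other (free, hence bound-elsewhere) variables denote, so that the identity $g(Q)=\mathfrak{S}\setminus f(\mathfrak{S}\setminus Q)$ genuinely holds for every $Q$ — this is precisely what feeds the Knaster--Tarski step. Once that lemma is in place, the remaining cases are the standard modal mu-calculus computation with the three extra clauses for $\modald{a}_c$, $\modald{a}_{nc}$, $\modald{\otimes}$ (and duals) checked as above.
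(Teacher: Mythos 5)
Your proposal is correct, but it is considerably more than the paper itself offers for this statement: in the paper, Fact~\ref{negclosed} is stated without proof, the implicit justification being simply that $\neg$ is a primitive connective of \tfl/ together with the earlier (also unproved) remark in Section~\ref{synsem} that every formula can be rewritten in positive normal form using the dual operators $\modalb{a}_c$, $\modalb{a}_{nc}$, $\modalb{\otimes}$, $\nu$. What you have written is essentially the missing proof of that remark: an explicit dualization $\phi\mapsto\overline{\phi}$ on positive-normal-form formulae together with an induction showing $\denot{\overline{\phi}}_{\mathcal{V}}=\mathfrak{S}\setminus\denot{\phi}_{\mathcal{V}}$, with the fixpoint case handled by the standard Knaster--Tarski duality $\mathrm{lfp}(g)=\mathfrak{S}\setminus\mathrm{gfp}(f)$ for $g(Q)=\mathfrak{S}\setminus f(\mathfrak{S}\setminus Q)$. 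This is the right level of rigour for how the Fact is actually used later (Lemma~\ref{dualgames} and Theorem~\ref{completeness} need $\neg\phi$ to be available as a formula the game rules of Figure~\ref{fig:mcrules} can process, i.e.\ in positive normal form, not merely as a symbol in the grammar), so your reading buys something the paper's terse statement does not. Your treatment of the bound variable in the fixpoint clause (keeping $Z$ positive, so that $\overline{\mu Z.\phi_1}=\nu Z.\overline{\phi_1}$ matches the paper's definition $\nu Z.\phi_1=\neg\mu Z.\neg\phi_1[\neg Z/Z]$ after pushing negations inward) is the standard mu-calculus bookkeeping, and you correctly identify the environment/substitution lemma as the one non-routine step; with that lemma stated — e.g.\ by proving the stronger invariant $\denot{\overline{\phi}}_{\mathcal{V}}=\mathfrak{S}\setminus\denot{\phi}_{\mathcal{V}^{c}}$ with $\mathcal{V}^{c}(Y)=\mathfrak{S}\setminus\mathcal{V}(Y)$ and $\overline{Y}=Y$ uniformly, then specialising to closed formulae — the induction closes exactly as you describe, and the modal clauses (including the new $\modald{\otimes}/\modalb{\otimes}$ pair, whose side condition $M\sqsubseteq R$ is indeed untouched by complementation) are routine.
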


\begin{lem}\label{dualgames}
A game $\mathcal{G}(H_0,\phi)$, where Eve has a winning strategy, has a dual game $\mathcal{G}(H_0,\neg \phi)$ where Adam has a winning strategy, and conversely.
\end{lem}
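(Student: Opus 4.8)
The plan is to prove the statement by exhibiting an explicit correspondence between strategies in $\mathcal{G}(H_0,\phi)$ and strategies in $\mathcal{G}(H_0,\neg\phi)$ that swaps the roles of the two players. Since positive normal form is assumed throughout, I would first recall that Fact~\ref{negclosed} gives a De Morgan dual $\neg\phi$ of $\phi$ in positive normal form, obtained syntactically by swapping $\wedge\leftrightarrow\vee$, $\modald{a}_c\leftrightarrow\modalb{a}_c$, $\modald{a}_{nc}\leftrightarrow\modalb{a}_{nc}$, $\modald{\otimes}\leftrightarrow\modalb{\otimes}$, $\mu\leftrightarrow\nu$, and complementing the literals. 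The key structural observation is that this syntactic duality induces a bijection between the subformula set $Sub(\phi)$ and $Sub(\neg\phi)$, hence a bijection between the boards $\mathfrak{B}\subseteq\mathfrak{S}\times Sub(\phi)$ and $\mathfrak{B}'\subseteq\mathfrak{S}\times Sub(\neg\phi)$ sending $(R,t)\vdash\psi$ to $(R,t)\vdash\neg\psi$. This bijection maps plays of $\mathcal{G}(H_0,\phi)$ to plays of $\mathcal{G}(H_0,\neg\phi)$ position-by-position.

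The main steps, in order, are: (1) check that the game rules of Figure~\ref{fig:mcrules} are exchanged under the bijection — the rule $(\vee)$ (an Eve choice) on $\phi$ corresponds to the rule $(\wedge)$ (an Adam choice) on $\neg\phi$, and likewise $(\modald{~}_c)\leftrightarrow(\modalb{~}_c)$, $(\modald{~}_{nc})\leftrightarrow(\modalb{~}_{nc})$, $(\modald{\otimes})\leftrightarrow(\modalb{\otimes})$, while the deterministic rules (FP) and (VAR) are preserved (with $\mu$ and $\nu$ interchanged, which does not affect who moves since those rules are deterministic); (2) check that the winning conditions of Definition~\ref{wc} are exchanged — a finite play in which Adam wins by condition~$i$ on the $\phi$-side corresponds to a finite play in which Eve wins by condition~$i$ on the $\neg\phi$-side (using that $H\notin\mathcal{V}(Z)$ on one side matches $H\in\mathcal{V}(\neg Z)$ on the other, under the valuation convention for the complemented variable, and that the ``no successor'' side-conditions are literally the same set being empty), and for infinite plays, the outermost variable occurring infinitely often is a $\mu$-variable on the $\phi$-side iff its image is a $\nu$-variable on the $\neg\phi$-side, so Adam wins the $\phi$-play iff Eve wins the image $\neg\phi$-play; (3) conclude that if $\lambda$ is a history-free winning strategy for Eve in $\mathcal{G}(H_0,\phi)$, then transporting $\lambda$ across the board bijection yields a history-free strategy for Adam in $\mathcal{G}(H_0,\neg\phi)$ which wins every play, since every play consistent with the transported strategy is the image of a play consistent with $\lambda$, hence won by Eve in $\mathcal{G}(H_0,\phi)$, hence won by Adam in $\mathcal{G}(H_0,\neg\phi)$; and symmetrically for the converse direction and for strategies of the other player.

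The step I expect to require the most care is step~(2), specifically the treatment of variables and the valuation $\mathcal{V}$ at the leaves $H\vdash Z$: one must be precise about how $\mathcal{V}$ on $\mathrm{Var}$ relativised to $\phi$ relates to the valuation used for $\neg\phi$, i.e.\ that the variable $Z$ of $\phi$ and its dual literal in $\neg\phi$ are interpreted by complementary subsets of $\mathfrak{S}$, so that $H\in\mathcal{V}(Z)$ fails exactly when the dual leaf is winning for the other player. A clean way to handle this is to note that closed formulae have no free variables, so at genuine leaves the relevant $Z$ is always a bound fixpoint variable and the leaf is reached only after its fixpoint has been unfolded at least once; the winning condition there is then entirely determined by the $\mu/\nu$-parity of the relevant variables, which is exactly what the duality swaps. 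The remaining cases are routine rule-by-rule and condition-by-condition verifications, and the ``conversely'' clause is immediate by symmetry since $\neg\neg\phi$ equals $\phi$ in positive normal form.
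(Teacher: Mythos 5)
Your proposal is correct and follows essentially the same route as the paper's own (much terser) proof: exploit closure of \tfl/ under negation to set up a syntactic duality that swaps the rules of Figure \ref{fig:mcrules} and the winning conditions of Definition \ref{wc} between the players, and then transport Eve's winning strategy across this correspondence to obtain a winning strategy for Adam in $\mathcal{G}(H_0,\neg\phi)$, with the converse by symmetry. Your version simply makes explicit the board bijection, the rule-by-rule and condition-by-condition checks, and the treatment of free variables and the valuation, all of which the paper leaves implicit.
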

\begin{proof}
First, note that since \tfl/ is closed under negation, for every rule that requires a player to make a choice on a formula $\psi$ there is a dual rule in which the other player makes a choice on the negated formula $\neg \psi$. Also, note that for every winning condition for one of the players in a formula $\psi$ there is a dual winning condition for the other player in $\neg \psi$. Now, suppose Eve has a winning strategy $\pi$ in the game $\mathcal{G}(H_0,\phi)$. Adam can use $\pi$ in the dual game $\mathcal{G}(H_0,\neg \phi)$ since whenever he has to make a choice, by duality, there is a rule that requires Eve to make a choice in $\mathcal{G}(H_0,\phi)$. In this way, regardless of the choices that Eve makes, Adam can enforce a winning play for himself. The case when Adam has a winning strategy in the game $\mathcal{G}(H_0,\phi)$ is dual.
\qed
\end{proof}

\begin{lem}\label{etrueafalse}
Eve preserves falsity and can preserve truth with her choices. Hence, she cannot choose true configurations when playing in a false configuration. Dually, Adam preserves truth and can preserve falsity with his choices. Then, he cannot choose false configurations when playing in a true configuration.
\end{lem}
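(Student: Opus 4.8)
The plan is to prove the statement by a straightforward case analysis on the game rules of Figure \ref{fig:mcrules}, using throughout the fact that each rule has been designed so as to respect the denotational semantics of \tfl/ given in Definition \ref{semtfl}. First I would fix the convention, standard for such games, that a configuration $(R,t)\vdash\psi$ is \emph{true} iff $(R,t)\in\denot{\psi}^{\mathfrak{T}}_{\mathcal{V}}$, where $\mathcal{V}$ is the valuation obtained by binding every fixpoint variable met so far along the current play to the denotation of its defining subformula; this is what makes the notion of truth well defined at every position of every play. The goal is then to check two dual facts: (i) if Eve plays from a false configuration, every successor she can reach is false, and from a true configuration she has at least one successor that is true; and (ii) the same statement with the roles of Eve, Adam and of true, false exchanged.

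The core of the argument is a rule-by-rule verification. For the two deterministic rules (FP) and (VAR) there is nothing to choose, and both truth and falsity are preserved: for (FP) because the valuation binds the new variable $Z$ to $\denot{\munu Z.\phi}^{\mathfrak{T}}_{\mathcal{V}}$ by the Knaster--Tarski reading of the fixpoint operators, and for (VAR) because of the unfolding identity $\munu Z.\phi\equiv\phi[\munu Z.\phi/Z]$ already noted in the text. For Eve's choice rules I would argue as follows. For $(\vee)$, $\denot{\phi_0\vee\phi_1}=\denot{\phi_0}\cup\denot{\phi_1}$, so a false disjunction has both disjuncts false, while a true one has a true disjunct that Eve selects. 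For $(\modald{~}_c)$, Definition \ref{semtfl} says $(R,t)\in\denot{\modald{a}_c\phi}$ iff some $r\in R$ with $t\leq r$ and $a=\delta(r)$ satisfies $(\mathfrak{X},r)\in\denot{\phi}$; hence if the configuration is true Eve moves to the true position $(\mathfrak{X},r)\vdash\phi$, and if it is false every position she can produce is false (and if there is no legal $r$ at all, winning condition (2) of Definition \ref{wc} correctly hands the false position to Adam). The rules $(\modald{~}_{nc})$ and $(\modald{\otimes})$ are treated identically, replacing $\leq$ by $\ominus$, and the set of causally dominated transitions by the set of maximal traces $M\sqsubseteq R$ respectively, with the empty-choice cases matching winning conditions (3) and (4). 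Adam's rules $(\wedge)$, $(\modalb{~}_c)$, $(\modalb{~}_{nc})$, $(\modalb{\otimes})$ are then handled by the exactly dual reasoning, using that $\denot{\phi_0\wedge\phi_1}=\denot{\phi_0}\cap\denot{\phi_1}$ and that each box modality quantifies universally over precisely the successor set its companion diamond quantifies over existentially.

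Once this is done, the two "hence" sentences are immediate consequences: since Eve preserves falsity she can never step from a false to a true configuration, and dually Adam can never step from a true to a false one. I do not expect any deep obstacle here — the result is essentially a bookkeeping lemma — but the one place that needs care is the treatment of the fixpoint rules: one must be precise about how the valuation is updated as variables are introduced and unfolded along a play, so that "true configuration" is unambiguous and rules (FP) and (VAR) genuinely preserve it. Everything else is a direct read-off of Definition \ref{semtfl}, and the only minor subtlety elsewhere is remembering, in the modal rules, that the successor support set is reset to the maximal set $\mathfrak{X}$ at $\tau(r)$, which is exactly what the semantic clause for $\modald{a}_c$ and its siblings already prescribes.
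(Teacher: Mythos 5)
Your proposal is correct and follows essentially the same route as the paper's proof: a rule-by-rule case analysis against the denotational semantics of Definition \ref{semtfl}, with the boolean rules handled as in Hintikka games, the diamond/box and $\modald{\otimes}$/$\modalb{\otimes}$ rules checked directly (dually for Adam), and the deterministic rules (FP) and (VAR) justified by the fixpoint unfolding. Your extra care about how the valuation treats unfolded variables and your explicit linkage of the empty-choice cases to the winning conditions of Definition \ref{wc} are harmless refinements of the same argument.
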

\begin{proof}
The cases for the rules ($\wedge$) and ($\vee$) are just as for the Hintikka evaluation games for FO logic. Thus, let us go on to check the rules for the other operators. Firstly, consider the rule $(\modald{~}_c)$ and a configuration $C = (R,t) \vdash \modald{a}_c \psi$, and suppose that $C$ is false. In this case there is no $a$ such that $t \leq {r \in R}$ and $(\mathfrak{X},r) \in \denot{\psi}^{\mathfrak{T}}_{\mathcal{V}}$, where as usual $r$ is some transition $s \xrightarrow{a} q$ and $\mathfrak{X}$ is the maximal set at $\tau(r)$. Hence, the following configurations will be false as well. Contrarily, if $C$ is true, then Eve can make the next configuration $(\mathfrak{X},r) \vdash \psi$ true by choosing a transition $r = s \xrightarrow{a} s' \in R$ such that $t \leq r$. The case for $(\modald{~}_{nc})$ is similar (simply change $\leq$ for $\ominus$), and the cases for $(\modalb{~}_c)$ and $(\modalb{~}_{nc})$ are dual. Now, consider the rule $\modald{\otimes}$ and a configuration $C = (R,t) \vdash \modald{\otimes} \psi$, and suppose that $C$ is false. In this case there is no maximal trace $M$ such that $M \sqsubseteq R$ and $(M,t) \in \denot{\psi}^{\mathfrak{T}}_{\mathcal{V}}$, and hence Eve preserves falsity since the next configuration must be false as well. On the other hand, if $C$ is true, then Eve can make the next configuration true as well by choosing a maximal set $M$ such that $(M,t) \vdash \psi$ is true as well. Finally, the deterministic rules (FP) and (VAR) preserve both truth and falsity because of the semantics of fixpoint operators. Recall that for any process $H$, if $H \in \denot{_{\mu}^{\nu} Z . \psi}$ then $H \in \denot{\psi}_{Z:=\denot{_{\mu}^{\nu} Z . \psi}}$ for all free variables $Z$ in $\psi$. 
\qed
\end{proof}

\begin{lem} \label{outermost}
In any infinite play of a game $\mathcal{G}(H_0,\phi)$ there is a unique syntactically outermost variable that occurs infinitely often.  
\end{lem}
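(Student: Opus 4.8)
The plan is the classical well-foundedness/nested-fixpoint argument adapted to these rules. First I would record two finiteness observations. Since $\phi$ is a fixed formula, $\Sub(\phi)$ is finite; and among the rules of Figure~\ref{fig:mcrules} only (VAR) can increase the length of the formula-component of a configuration, every other rule --- including (FP), which replaces $\munu Z.\varphi$ by $Z$ --- strictly decreasing it. Hence any infinite play applies (VAR) infinitely often, for otherwise, past the last (VAR)-step, the length of the formula-component would be strictly decreasing along an infinite sequence. (We may moreover take $\phi$ effectively closed: a configuration $H \vdash Z$ with $Z$ free ends the play by the first winning condition, so free variables never occur in an infinite play, and (VAR) only ever unfolds bound variables.) Since $\phi$ has only finitely many fixpoint variables, some variable occurs as the formula-component of infinitely many configurations; write $U$ for the non-empty set of all such variables, and fix an index $N$ so that for every $i \ge N$ the formula-component of $C_i$ mentions only variables from $U$ --- possible because every variable outside $U$ occurs only finitely often. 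For $Z \in U$ let $\munu Z.\psi_Z$ denote its unique binder in $\phi$; by the standing assumptions $\phi$ is closed and has no clash of bound variables, so every occurrence of $Z$ in $\phi$ lies inside the body $\psi_Z$.

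The key step is a confinement lemma: if $Z^{\star} \in U$ is such that $\munu Z^{\star}.\psi_{Z^{\star}}$ is maximal in $\{\munu Z.\psi_Z : Z\in U\}$ for the subformula order, then from the first index $m \ge N$ at which the play unfolds $Z^{\star}$ (such $m$ exists since $Z^{\star}\in U$) onward, every configuration has its formula-component in $\Sub(\munu Z^{\star}.\psi_{Z^{\star}})$. I would prove this by inspecting the rules: every rule other than (VAR) sends a formula in $\Sub(\munu Z^{\star}.\psi_{Z^{\star}})$ to a subformula of it, and (VAR) applied to a variable bound inside $\psi_{Z^{\star}}$ stays inside as well; the only way to escape $\Sub(\munu Z^{\star}.\psi_{Z^{\star}})$ is to apply (VAR) to a variable $W$ that occurs in $\psi_{Z^{\star}}$ but whose binder lies strictly outside, in which case $\munu W.\psi_W$ is strictly above $\munu Z^{\star}.\psi_{Z^{\star}}$ in the subformula order. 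But any such escape is a configuration $H \vdash W$ at an index $\ge m+1 > N$, so $W \in U$, contradicting the maximality of $\munu Z^{\star}.\psi_{Z^{\star}}$. Hence no escape ever occurs.

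Finally I would deduce uniqueness. Suppose $Z^{\star}, Y^{\star} \in U$ with both $\munu Z^{\star}.\psi_{Z^{\star}}$ and $\munu Y^{\star}.\psi_{Y^{\star}}$ maximal; if $Z^{\star} \neq Y^{\star}$ these binders are incomparable, hence occur at positions of $\phi$'s syntax tree neither below the other, so $\Sub(\munu Z^{\star}.\psi_{Z^{\star}})$ and $\Sub(\munu Y^{\star}.\psi_{Y^{\star}})$ have no common subformula reachable from both as nested occurrences. By the confinement lemma, from some index on every configuration's formula-component lies in $\Sub(\munu Z^{\star}.\psi_{Z^{\star}})$; but $Y^{\star}\in U$ occurs infinitely often, so one such formula-component has $Y^{\star}$ as a subformula, i.e.\ $Y^{\star} \in \Sub(\munu Z^{\star}.\psi_{Z^{\star}})$. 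Since every occurrence of $Y^{\star}$ lies inside $\psi_{Y^{\star}}$, the subtrees rooted at $\munu Z^{\star}.\psi_{Z^{\star}}$ and at $\munu Y^{\star}.\psi_{Y^{\star}}$ must overlap, contradicting incomparability. Therefore there is a unique maximal binder among $\{\munu Z.\psi_Z : Z\in U\}$, and its variable is the unique syntactically outermost variable occurring infinitely often. The main obstacle is making the confinement lemma precise --- in particular the syntactic point that, along a play, the scope of a binder can only be left by unfolding a variable bound strictly further out --- which is exactly where the restriction to closed, clash-free formulae in positive normal form is used.
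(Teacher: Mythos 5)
Your proof is correct and follows essentially the same route as the paper's: both rest on the syntactic fact that a variable can only recur infinitely often if the play keeps re-entering its binder's scope, which forces any two candidates to sit beneath a common infinitely-recurring variable, contradicting outermostness. Your version merely makes this precise via the explicit confinement lemma, where the paper argues the same point more informally by contradiction.
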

\begin{proof}
By contradiction, assume that the statement is false. Without loss of generality, suppose that there are two variables $Z$ and $Y$ that are syntactically outermost and appear infinitely often. The only possibility for this to happen is that $Z$ and $Y$ are at the same level in $\phi$. However, if this is the case $Z$ and $Y$ cannot occur infinitely often unless there is another variable $X$ that also occurs infinitely often and whose unfolding contains both $Z$ and $Y$. But this means that both $Z$ and $Y$ are syntactically beneath $X$, and therefore neither $Z$ nor $Y$ is outermost in $\phi$, which is a contradiction.
\qed
\end{proof}

\begin{fact} \label{sizec}
Only rule \emph{(VAR)} can increase the size of a formula in a configuration. All other rules decrease the size of formulae in configurations.
\end{fact}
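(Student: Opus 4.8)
The plan is to prove the Fact by a routine case analysis over the ten game rules displayed in Figure~\ref{fig:mcrules}, in each case comparing the size of the formula in the premise (the numerator of the rule) with the size of the formula in the conclusion (the denominator). First I would fix the notion of size: since every formula occurring in a configuration of a game $\mathcal{G}(H_0,\phi)$ is, by the definition of the board $\mathfrak{B} \subseteq \mathfrak{S} \times \Sub(\phi)$, a member of the Fischer--Ladner closure $\Sub(\phi)$, I would let the size $|\psi|$ of such a subformula $\psi$ be its syntactic length, i.e.\ the number of nodes of its syntax tree; this is finite and well defined for every configuration that can arise in a play.

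Then I would dispatch the nine size-decreasing rules in three groups and treat $(\mathrm{VAR})$ separately. For $(\mathrm{FP})$ the numerator $\munu Z.\psi$ properly contains both the fixpoint quantifier and the body $\psi$, so $|\munu Z.\psi| > |Z| = 1$ and the size strictly decreases. For $(\vee)$ and $(\wedge)$ the numerator $\phi_0 \vee \phi_1$ (respectively $\phi_0 \wedge \phi_1$) has size $|\phi_0| + |\phi_1| + 1$, strictly larger than the size $|\phi_i|$ of the selected subformula, so the size again strictly decreases. For the six modal rules $(\modald{~}_c)$, $(\modald{~}_{nc})$, $(\modalb{~}_c)$, $(\modalb{~}_{nc})$, $(\modald{\otimes})$ and $(\modalb{\otimes})$, the numerator is obtained from the denominator $\phi$ by prefixing exactly one modal operator, hence is strictly larger than $|\phi|$, and once more the size strictly decreases. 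This leaves only $(\mathrm{VAR})$, whose numerator is the single variable $Z$ (size $1$) and whose denominator is the body $\phi$ of the associated fixpoint $\munu Z.\phi$; since $\phi$ may be a compound formula we may have $|\phi| > 1$, which is precisely the asserted possibility that only $(\mathrm{VAR})$ can increase the size.

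Assembling these observations gives the statement. I do not anticipate a genuine obstacle: the only delicate point is the choice of measure, which must be the \emph{syntactic length} of the subformula appearing in the configuration, and not, say, the cardinality of $\Sub(\phi)$, which stays constant along a play; once that is pinned down the verification is entirely mechanical. It is also worth recording why the Fact is useful: together with Lemma~\ref{outermost} it is the combinatorial ingredient behind the observation that along an infinite play the formula size cannot strictly decrease forever, so rule $(\mathrm{VAR})$---and hence the unfolding of some fixpoint variable---must be applied infinitely often, which is what gives content to the winning conditions of Definition~\ref{wc}.
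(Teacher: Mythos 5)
Your proposal is correct: the rule-by-rule size comparison (with size taken as syntactic length of the formula component of a configuration) is exactly the routine inspection of Figure~\ref{fig:mcrules} that the paper takes for granted, since it states Fact~\ref{sizec} without proof as an immediate observation about the rules. Your remarks on the choice of measure and on the role of the Fact alongside Lemma~\ref{outermost} in forcing infinitely many (VAR) steps in infinite plays match how the paper uses it in the proof of Lemma~\ref{uwinner}.
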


\begin{lem} \label{uwinner}
Every play of a game $\mathcal{G}(H_0,\phi)$ has a uniquely determined winner.
\end{lem}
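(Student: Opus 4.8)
The plan is to case-split on whether the play is finite or infinite and, in each case, to check that exactly one of the two families of winning conditions of Definition~\ref{wc} applies, so that the winner is determined and uniquely so.

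First I would handle finite plays. A play $C_0, C_1, \ldots$ is finite exactly when it reaches a configuration $C_n$ to which no rule of Figure~\ref{fig:mcrules} applies. Inspecting the rules, (FP), (VAR), ($\vee$) and ($\wedge$) always supply a successor configuration (the shape of the formula guarantees it, and when $Z$ is bound by some $\mu Z.\psi$ or $\nu Z.\psi$ rule (VAR) fires), so a stuck $C_n$ must have one of the forms: $H \vdash Z$ with $Z$ not bound by any fixpoint subformula, or $(R,t) \vdash \heartsuit\psi$ for one of the six modalities $\heartsuit \in \{\modald{a}_{c},\modald{a}_{nc},\modalb{a}_{c},\modalb{a}_{nc},\modald{\otimes},\modalb{\otimes}\}$ whose associated set of successor configurations is empty. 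In the first case exactly one of $H \in \mathcal{V}(Z)$ and $H \notin \mathcal{V}(Z)$ holds, so exactly one of Eve's winning condition~1 and Adam's winning condition~1 applies. In the remaining cases the outermost connective of the formula in $C_n$ is fixed: if it is a diamond then exactly one of Adam's conditions~2--4 holds and none of Eve's, and dually if it is a box. Since the shape of the terminal formula selects precisely one condition, and these conditions are exhaustive over stuck configurations, a finite play has a unique winner.

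Next I would handle infinite plays. Here the key observation is that only finitely many distinct formulae occur along a play, since every formula appearing in a configuration lies in the Fischer--Ladner closure $Sub(\phi)$, which is finite. By Fact~\ref{sizec} every rule other than (VAR) strictly decreases the size of the formula in the current configuration, so between two consecutive applications of (VAR) only finitely many steps occur; hence in an infinite play (VAR) is applied infinitely often, and by the pigeonhole principle (finitely many variables) some variable occurs in infinitely many configurations. By Lemma~\ref{outermost} there is then a \emph{unique} syntactically outermost variable $Z$ occurring infinitely often. This $Z$ is bound by exactly one fixpoint subformula, which is either $\mu Z.\psi$ or $\nu Z.\psi$ but not both: in the first case exactly Adam's condition~5 is satisfied, in the second exactly Eve's condition~5, and no finite-play condition can apply since the play is infinite. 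So an infinite play also has a unique winner, and combining the two cases finishes the proof.

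The finite case is routine; the point that needs a little care is the infinite case, where one must rule out an infinite play satisfying \emph{neither} condition~5. This is precisely what is excluded by chaining the finiteness of $Sub(\phi)$, Fact~\ref{sizec} (forcing infinitely many (VAR) moves, hence a recurring variable), and the uniqueness assertion of Lemma~\ref{outermost} (pinning down a single outermost recurring variable with a well-defined fixpoint type).
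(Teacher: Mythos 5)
Your proposal is correct and follows essentially the same route as the paper: a case split on finite versus infinite plays, with finite plays decided by the mutually exclusive winning conditions one to four and infinite plays decided by winning condition five via Fact~\ref{sizec} (forcing infinitely many (VAR) moves) and Lemma~\ref{outermost} (uniqueness of the outermost recurring variable). Your version merely spells out the exhaustiveness of the terminal-configuration cases and the pigeonhole step in more detail than the paper does.
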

\begin{proof}
Suppose the play is of finite length. Then, the winner is uniquely determined by one of the winning conditions one to four (Definition \ref{wc}) of either Eve or Adam since such rules cover all possible cases and are mutually exclusive. Now, suppose that the play is of infinite length. Due to Fact \ref{sizec}, rule (VAR) must be used infinitely often in the game, and thus, there is at least one variable that is replaced by its defining fixpoint formula each time it occurs. Therefore, winning condition five of one of the players can be used to uniquely determine the winner of the game since, due to Lemma \ref{outermost}, there is a unique syntactically outermost variable that occurs infinitely often.
\qed
\end{proof}

\begin{defn}\label{approx}
\emph{
\textbf{(Approximants)} Let $Z$ be the least fixpoint of some formula $\phi$ and let $\alpha , \lambda \in \mathbb{O}\mathrm{rd}$ be two ordinals, where $\lambda$ is a limit ordinal. Then:
\begin{center}
\begin{tabular}{ccc}
$Z^0 := \mufalse$, &$~~~~~~~$ $Z^{\alpha + 1} = \phi\left[Z^{\alpha} / Z \right]$, &$~~~~~~~$ $Z^{\lambda} = \bigvee_{\alpha < \lambda} Z^{\alpha}$
\end{tabular}
\end{center}
For greatest fixpoints the approximants are defined dually. Let $Z$ be the greatest fixpoint of some formula $\phi$ and, as before, let $\alpha , \lambda \in \mathbb{O}\mathrm{rd}$ be two ordinals, where $\lambda$ is a limit ordinal. Then:
\begin{center}
\begin{tabular}{ccc}
$Z^0 := \mutrue$, &$~~~~~~~$ $Z^{\alpha + 1} = \phi\left[Z^{\alpha} / Z \right]$, &$~~~~~~~$ $Z^{\lambda} = \bigwedge_{\alpha < \lambda} Z^{\alpha}$
\end{tabular}
\end{center}
}
\vspace{-1.1cm}
\eos 
\end{defn}

We can now show that the analysis for fixpoint modal logics \cite{mucalculi-bradfield} can be extended to this scenario. The proof of soundness uses similar arguments to that in the mu-calculus case, but we present it here in full because it is the basis of the decision procedure for \tfl/ model-checking.

\begin{thm}\label{soundness}
\emph{\textbf{(Soundness)}} Let $\mathfrak{M} = (\mathfrak{T},\mathcal{V})$ be a model of a formula $\phi$ in the game $\mathcal{G}(H_0,\phi)$. If $H_0 \not \in \denot{\phi}^{\mathfrak{T}}_{\mathcal{V}}$ then Adam wins $H_0 \vdash \phi$.
\end{thm}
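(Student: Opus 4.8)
The plan is to construct a winning strategy for Adam directly from the hypothesis, following the classical soundness argument for mu-calculus model-checking games and verifying the two extra cases introduced by the trace modalities. First I would fix Adam's strategy using Lemma~\ref{etrueafalse}: starting from the false configuration $C_0 = H_0 \vdash \phi$, whenever it is Adam's turn he moves to a false configuration, and by the same lemma Eve can never move to a true configuration out of a false one. Hence every play consistent with this strategy consists entirely of false configurations, and it remains to show that Adam wins each such play; I split into the finite and the infinite case.

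For finite plays: a play stops only at a configuration with no legal successor, and I would enumerate these. A configuration $H \vdash Z$ with $Z$ bound by a fixpoint always has the (VAR) successor, so a terminal variable configuration must have $Z$ free; being false it satisfies $H \notin \mathcal{V}(Z)$, which is Adam's winning condition~1 of Definition~\ref{wc}. A terminal $(R,t) \vdash \modald{a}_c \psi$ (resp.\ $\modald{a}_{nc}\psi$, $\modald{\otimes}\psi$) must have an empty set of successors, which is exactly Adam's winning condition~2 (resp.\ 3, 4); the dual box configurations with an empty successor set are vacuously \emph{true}, so by falsity they cannot be terminal here, and Eve's finite winning conditions never apply. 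Thus Adam wins every finite play consistent with his strategy.

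For infinite plays, by Fact~\ref{sizec} the rule (VAR) is applied infinitely often, so some variable regenerates infinitely often, and by Lemma~\ref{outermost} there is a unique syntactically outermost such variable $Z$; Adam then wins by his winning condition~5 provided $Z$ is a least fixpoint variable, so the heart of the proof is to rule out $Z = \nu Z.\psi$. Assuming it, I would pass to a tail of the play beyond the last occurrence of any variable whose scope properly contains $Z$, so that in this tail $Z$ regenerates infinitely often and the play between two consecutive occurrences of $Z$ stays inside $\psi$. To the $i$-th such occurrence, say $(R_i,t_i) \vdash Z$, I attach the least ordinal $\alpha_i$ with $(R_i,t_i) \notin \denot{Z^{\alpha_i}}$ under the current valuation, using the approximants of Definition~\ref{approx}; since the configuration is false and $Z^0 = \mutrue$ this $\alpha_i$ is well defined and nonzero, and it cannot be a limit $\lambda$ since $Z^{\lambda} = \bigwedge_{\alpha<\lambda} Z^{\alpha}$ would force $(R_i,t_i) \notin \denot{Z^{\beta}}$ for some $\beta < \lambda$, contradicting minimality. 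Writing $\alpha_i = \beta_i + 1$, unfolding $Z$ re-enters $\psi$ with $Z$ bound to its $\beta_i$-th approximant, and the inner fixpoint unfoldings along the segment do not alter the interpretation of $Z$, so when $Z$ is next reached the corresponding process lies outside $\denot{Z^{\beta_i}}$, whence $\alpha_{i+1} \le \beta_i < \alpha_i$. This produces an infinite strictly decreasing sequence of ordinals, a contradiction; hence $Z$ is a $\mu$-variable and Adam wins by condition~5.

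The trace modalities enter only through the preservation statement for $(\modald{\otimes})$/$(\modalb{\otimes})$ in Lemma~\ref{etrueafalse} and through terminal case~4 of Definition~\ref{wc}, and since they introduce no variables and strictly decrease formula size (Fact~\ref{sizec}) they do not disturb the ordinal-descent machinery. The main obstacle is precisely that infinite-play case: making the bookkeeping rigorous — choosing the right tail of the play, tracking how the valuation of $Z$ is replaced by successive approximants between two regenerations, and checking that no outer variable interferes. Once soundness is established in this form, the converse direction (Eve wins when $H_0 \in \denot{\phi}^{\mathfrak{T}}_{\mathcal{V}}$) follows symmetrically, or formally from Lemma~\ref{dualgames} together with Fact~\ref{negclosed}.
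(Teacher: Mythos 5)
Your proposal is correct and follows essentially the same route as the paper's proof: Adam's strategy is built by preserving falsity via Lemma \ref{etrueafalse}, finite plays are settled by the terminal winning conditions, and infinite plays are handled by the ordinal-descent argument on the approximants of Definition \ref{approx}, showing the outermost infinitely regenerating variable cannot be a $\nu$-variable. Your write-up is if anything slightly more careful (the free-variable terminal case and the passage to a tail of the play), but the underlying argument is the same.
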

\begin{proof}
Suppose $H_0 \not \in \denot{\phi}^{\mathfrak{T}}_{\mathcal{V}}$. We construct for Adam a possibly infinite game tree that starts in $H_0 \vdash \phi$. We do so by preserving falsity according to Lemma \ref{etrueafalse}, i.e., whenever a rule requires Adam to make a choice then the tree will contain the successor configuration that preserves falsity. All other choices that are available for Eve are included in the game tree. 

First, consider only finite plays. Since Eve only wins finite plays that end in true configurations, then she cannot win any finite play by using her winning conditions one to four. Hence, Adam wins each finite play in this game tree. 

Now, consider infinite plays. The only chance for Eve to win is to use her winning condition five. So, let the configuration $H \vdash \nu Z . \phi$ be reached such that $Z$ is the syntactically outermost variable that appears infinitely often in the play according to Lemma \ref{outermost}. In the next configuration $H \vdash Z$, variable $Z$ is interpreted as the least approximant $Z^{\alpha}$ such that $H \not \in \denot{Z^{\alpha}}^{\mathfrak{T}}_{\mathcal{V}}$ and $H \in \denot{Z^{\alpha-1}}^{\mathfrak{T}}_{\mathcal{V}}$, by the principle of fixpoint induction. As a matter of fact, by monotonicity and due to the definition of fixpoint approximants it must also be true that $H \in \denot{Z^{\beta}}^{\mathfrak{T}}_{\mathcal{V}}$ for all ordinals $\beta$ such that ${\beta} < {\alpha}$. Note that, also due to the definition of fixpoint approximants, $\alpha$ cannot be a limit ordinal $\lambda$ because this would mean that $H \not \in \denot{Z^{\lambda} = \bigwedge_{\beta < \lambda} Z^{\beta}}^{\mathfrak{T}}_{\mathcal{V}}$ and $H \in \denot{Z^{\beta}}^{\mathfrak{T}}_{\mathcal{V}}$ for all $\beta < \lambda$, which is impossible.

Since $Z$ is the outermost variable that occurs infinitely often and the game rules follow the syntactic structure of formulae, the next time that a configuration $C' = H' \vdash Z$ is reached, $Z$ can be interpreted as $Z^{\alpha-1}$ in order to make $C'$ false as well. And again, if $\alpha-1$ is a limit ordinal $\lambda$, there must be a $\gamma < \lambda$ such that $H' \not \in \denot{Z^{\gamma}}^{\mathfrak{T}}_{\mathcal{V}}$ and $H' \in \denot{Z^{\gamma-1}}^{\mathfrak{T}}_{\mathcal{V}}$. One can repeat this process even until $\lambda = \omega$.

But, since ordinals are well-founded the play must eventually reach a false configuration $C'' = H'' \vdash Z$ where $Z$ is interpreted as $Z^0$. And, according to Definition \ref{approx}, $Z^0 := \mutrue$, which leads to a contradiction since the configuration $C'' = H'' \vdash \mutrue$ should be false, i.e., $H'' \in \denot{\mutrue}^{\mathfrak{T}}_{\mathcal{V}}$ should be false, which is impossible. In other words, if $H$ had failed a maximal fixpoint, then there must have been a descending chain of failures, but, as can be seen, there is not. 

As a consequence, there is no such least $\alpha$ that makes the configuration $H \vdash Z^{\alpha}$ false, and hence, the configuration $H \vdash \nu Z . \phi$ could not have been false either. Therefore, Eve cannot win any infinite play with her winning condition 5 either. Since Eve can win neither finite plays nor infinite ones whenever  $H_0 \not \in \denot{\phi}^{\mathfrak{T}}_{\mathcal{V}}$, then Adam must win all plays of $\mathcal{G}(H_0,\phi)$. 
\qed
\end{proof}

\begin{rem}
\emph{
If only finite state systems are considered $\mathbb{O}\mathrm{rd}$, the set of ordinals, can be replaced by $\mathbb{N}$, the set of natural numbers.
}
\eos
\end{rem}

\begin{thm}\label{completeness}
\emph{\textbf{(Completeness)}} Let $\mathfrak{M} = (\mathfrak{T},\mathcal{V})$ be a model of a formula $\phi$ in the game $\mathcal{G}(H_0,\phi)$ . If $H_0 \in \denot{\phi}^{\mathfrak{T}}_{\mathcal{V}}$ then Eve wins $H_0 \vdash \phi$. 
\end{thm}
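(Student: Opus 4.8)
The plan is to obtain completeness from soundness (Theorem~\ref{soundness}) by a duality argument, using that \tfl/\ is closed under negation. First I would note that, since $\denot{\neg\phi}^{\mathfrak{T}}_{\mathcal{V}} = \mathfrak{S} - \denot{\phi}^{\mathfrak{T}}_{\mathcal{V}}$, the hypothesis $H_0 \in \denot{\phi}^{\mathfrak{T}}_{\mathcal{V}}$ is literally the statement $H_0 \not\in \denot{\neg\phi}^{\mathfrak{T}}_{\mathcal{V}}$. By Fact~\ref{negclosed} the formula $\neg\phi$ is again a \tfl/\ formula, and converting it to positive normal form changes neither its denotation nor the outcome of the associated game. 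Applying Theorem~\ref{soundness} to $\neg\phi$ therefore gives that Adam wins the game $\mathcal{G}(H_0,\neg\phi)$. By Lemma~\ref{dualgames} (the ``conversely'' direction), the dual game $\mathcal{G}(H_0,\neg\neg\phi)$ is then won by Eve; and since $\neg\neg\phi \equiv \phi$, this dual game is, up to positive-normal-form conversion, precisely $\mathcal{G}(H_0,\phi)$. Hence Eve wins $H_0 \vdash \phi$, as required.

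For a self-contained direct proof I would instead dualize the argument of Theorem~\ref{soundness}. Assuming $H_0 \in \denot{\phi}^{\mathfrak{T}}_{\mathcal{V}}$, build a strategy tree for Eve by always selecting, whenever a rule calls on her, a successor configuration that preserves truth---available by Lemma~\ref{etrueafalse}---while retaining all of Adam's alternatives; by the same lemma every configuration in the tree is then true. Any finite play through the tree ends in a true configuration, so none of Adam's winning conditions~1--4 applies, and by Lemma~\ref{uwinner} Eve wins it. For an infinite play, Fact~\ref{sizec} forces rule (VAR) to fire infinitely often, so by Lemma~\ref{outermost} there is a unique syntactically outermost variable $Z$ occurring infinitely often; one must show $Z$ is a greatest-fixpoint variable. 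If $Z$ bound a least fixpoint $\mu Z.\psi$, interpret $Z$ at each recurrence of $H \vdash Z$ by the least approximant $Z^{\alpha}$ with $H \in \denot{Z^{\alpha}}^{\mathfrak{T}}_{\mathcal{V}}$ (Definition~\ref{approx}); because $Z$ is outermost the ordinal strictly decreases on each recurrence, producing an infinite descending chain of ordinals and eventually a true configuration $H'' \vdash Z$ with $Z$ read as $Z^0 = \mufalse$, which is impossible. Hence the outermost recurring variable is a $\nu$-variable and Eve wins by her winning condition~5.

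The routine parts are the propositional and modal case checks inside Lemma~\ref{etrueafalse}, which carry over unchanged from the mu-calculus, together with the new cases $\modald{\otimes}$ and $\modalb{\otimes}$, where truth-preservation is immediate: a witnessing maximal trace $M \sqsubseteq R$ with $(M,t) \in \denot{\psi}^{\mathfrak{T}}_{\mathcal{V}}$ exists exactly when $(R,t) \in \denot{\modald{\otimes}\psi}^{\mathfrak{T}}_{\mathcal{V}}$, and dually for the box. The main obstacle---in the direct version---is the infinite-play analysis: making precise the ordinal-descent argument that forces the outermost recurring variable to be a maximal fixpoint, with the only genuine subtlety being the treatment of limit ordinals (a failure at a limit approximant descends strictly below it), which is exactly where well-foundedness of the ordinals closes the case, just as in Theorem~\ref{soundness}. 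If one instead proceeds via duality, there is no real obstacle beyond checking that Lemma~\ref{dualgames} and the positive-normal-form conversions compose as claimed.
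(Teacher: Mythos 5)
Your first argument is exactly the paper's proof: it derives completeness from Theorem~\ref{soundness} applied to $\neg\phi$ (using Fact~\ref{negclosed}) together with Lemma~\ref{dualgames}, so your proposal is correct and takes essentially the same route. The additional direct construction you sketch (dualizing the soundness argument via truth-preservation and the approximant descent) is a sound alternative but goes beyond what the paper actually does.
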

\begin{proof}
Suppose that $H_0 \in \denot{\phi}^{\mathfrak{T}}_{\mathcal{V}}$. Due to Fact \ref{negclosed} it is also true that $H_0 \not \in \denot{\neg \phi}^{\mathfrak{T}}_{\mathcal{V}}$. According to Theorem \ref{soundness}, Adam wins $H_0 \vdash \neg \phi$, i.e., has a winning strategy in the game $\mathcal{G}(H_0,\neg \phi)$. And, due to Lemma \ref{dualgames}, Eve has a winning strategy in the dual game $\mathcal{G}(H_0,\phi)$. Therefore, Eve wins $H_0 \vdash \phi$ if $H_0 \in \denot{\phi}^{\mathfrak{T}}_{\mathcal{V}}$.
\qed
\end{proof}

Theorems \ref{soundness} and \ref{completeness} imply that the game is determined. Determinacy and perfect information make the notion of truth defined by this Hintikka game semantics coincide with its Tarskian counterpart.

\begin{cor}\label{determinacy}
\emph{\textbf{(Determinacy)}} For every model-checking game $\mathcal{G}(H_0,\phi)$ either Adam or Eve has a winning strategy to win all plays in the game.
\end{cor}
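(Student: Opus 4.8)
The plan is to obtain determinacy as an immediate corollary of Theorems \ref{soundness} and \ref{completeness}, using the fact that the semantic membership question is a genuine dichotomy. First I would note that for any model $\mathfrak{M} = (\mathfrak{T},\mathcal{V})$ of $\phi$ and any initial process $H_0$, the set $\denot{\phi}^{\mathfrak{T}}_{\mathcal{V}}$ is a well-defined subset of $\mathfrak{S}$ (its existence for the fixpoint cases being guaranteed by the Knaster--Tarski fixpoint theorem together with the monotonicity ensured by the positive-normal-form restriction). Hence exactly one of $H_0 \in \denot{\phi}^{\mathfrak{T}}_{\mathcal{V}}$ and $H_0 \notin \denot{\phi}^{\mathfrak{T}}_{\mathcal{V}}$ holds.

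Then I would split on these two cases. If $H_0 \in \denot{\phi}^{\mathfrak{T}}_{\mathcal{V}}$, Theorem \ref{completeness} supplies a winning strategy for Eve in $\mathcal{G}(H_0,\phi)$; if $H_0 \notin \denot{\phi}^{\mathfrak{T}}_{\mathcal{V}}$, Theorem \ref{soundness} supplies a winning strategy for Adam in $\mathcal{G}(H_0,\phi)$. Since the case split is exhaustive, in every case one of the two players has a strategy that wins all plays, which is exactly the assertion of the corollary. I would also remark, for clarity, that the two outcomes are mutually exclusive: by Lemma \ref{uwinner} every play has a uniquely determined winner, so Eve and Adam cannot both hold winning strategies; thus the dichotomy on the game side mirrors the semantic dichotomy.

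There is essentially no obstacle here, since the substantive content of determinacy has already been absorbed into Theorems \ref{soundness} and \ref{completeness} (which in turn rested on the fixpoint-approximant argument for soundness and on the duality Lemma \ref{dualgames} for completeness). The only point needing a moment's attention is that the winning conditions of Definition \ref{wc} are exhaustive and mutually exclusive, so that the notion of a winning strategy is meaningful in the first place — but this was already discharged by Lemma \ref{uwinner} together with Facts \ref{sizec} and Lemma \ref{outermost}. Consequently the proof reduces to the two-line case analysis above.
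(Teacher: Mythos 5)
Your proposal is correct and follows exactly the route the paper takes: the paper derives determinacy directly from Theorems \ref{soundness} and \ref{completeness} via the semantic dichotomy on whether $H_0 \in \denot{\phi}^{\mathfrak{T}}_{\mathcal{V}}$, which is precisely your two-case argument. The extra remark on mutual exclusivity via Lemma \ref{uwinner} is a harmless (and reasonable) addition that the paper leaves implicit.
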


\subsection{Local Properties and Decidability} \label{locanddec}

We have shown that the higher-order logic games designed to model-check \tfl/ properties of concurrent systems with partial order semantics are still sound and complete even when players are allowed to manipulate sets of independent transitions. Importantly, the power of these games, and also of \tfl/, is that such a second-order quantification is kept both \emph{local} and restricted to transitions in the same \emph{trace}. We now show that such model-checking games enjoy several local properties that in turn make them \emph{decidable} in the finite case. Such a decidability result is used later on to extend the decidability border of model-checking a category of partial order models of concurrency.

\begin{prop}\label{winstr}
\emph{\textbf{(Winning strategies)}} The winning strategies in the higher-order logic games for model-checking \tfl/ properties are history-free. 
\end{prop}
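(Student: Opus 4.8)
The plan is to observe that the history-free winning strategies we need have, in effect, already been constructed inside the proofs of soundness and completeness, and then to package this with determinacy. First I would revisit the proof of Theorem~\ref{soundness}: when $H_0 \notin \denot{\phi}^{\mathfrak{T}}_{\mathcal{V}}$, Adam is instructed always to move to a \emph{false} successor, and by Lemma~\ref{etrueafalse} such a successor exists at every Adam position. Whenever several false successors are available, fix once and for all an arbitrary selection (say, via fixed well-orderings of $\Sub(\phi)$, of the support sets, and of the transitions). The resulting ``falsity-preserving'' strategy is then a function of the current configuration $(R,t) \vdash \psi$ alone --- it inspects only $\psi$, the denotations of its immediate subformulae, and the local data $R$, $t$ --- hence it is history-free in the sense of Section~\ref{ch2-lgv}. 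The proof of Theorem~\ref{soundness} shows precisely that this strategy wins every finite play (by Adam's winning conditions~1--4) and, via the descending-chain-of-approximants argument, every infinite play (by winning condition~5); crucially, that argument merely \emph{analyses} the play under the falsity-preserving regime and never asks Adam to consult the history.

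For the dual side I would invoke Theorem~\ref{completeness}, Lemma~\ref{dualgames} and Fact~\ref{negclosed}: if $H_0 \in \denot{\phi}^{\mathfrak{T}}_{\mathcal{V}}$ then $H_0 \notin \denot{\neg\phi}^{\mathfrak{T}}_{\mathcal{V}}$, so Adam has the above history-free winning strategy in $\mathcal{G}(H_0,\neg\phi)$, and the strategy translation in the proof of Lemma~\ref{dualgames} turns it into a winning strategy for Eve in $\mathcal{G}(H_0,\phi)$; since that translation acts position by position (each Adam-move rule in $\mathcal{G}(H_0,\neg\phi)$ is dual to an Eve-move rule in $\mathcal{G}(H_0,\phi)$ on the negated formula), it preserves history-freeness. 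By Corollary~\ref{determinacy} and Lemma~\ref{uwinner} exactly one of the cases $H_0 \in \denot{\phi}$, $H_0 \notin \denot{\phi}$ holds and the corresponding player is the unique winner; in either case that player has a history-free winning strategy, which is the assertion of the proposition.

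I expect the only delicate point to be the infinite-play part of the first paragraph: one must check that Adam genuinely needs no memory there, i.e.\ that the ordinal bookkeeping in the proof of Theorem~\ref{soundness} belongs to the \emph{verification} of the strategy, not to its \emph{definition}. This is so because the approximant $Z^{\alpha}$ witnessing a failed greatest fixpoint is determined by the process reached and the denotation, not by the route taken, and the game rules follow the syntactic structure of $\phi$ (a variable is unfolded only while its binder is active), so by Lemma~\ref{outermost} a unique outermost variable eventually governs the tail of the play. It is also worth remarking that the higher-order rules $(\modald{\otimes})$ and $(\modalb{\otimes})$ cause no extra trouble: they merely add edges to positions whose formula component strictly shrinks (Fact~\ref{sizec}) and introduce no fixpoint unfolding, so for the purpose of extracting positional strategies they behave exactly like the ordinary modal moves. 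An alternative route, which I would mention in passing, is to read $\mathcal{G}(H_0,\phi)$ directly as a parity game --- assigning $\mu$- and $\nu$-variables odd and even priorities ranked by syntactic nesting depth so that, by Lemma~\ref{outermost}, the outermost infinitely recurring variable carries the extreme priority seen infinitely often --- and to invoke the memoryless determinacy of parity games over arbitrary (possibly infinite) arenas with finitely many priorities \cite{mcgames-gradel,games-wal}.
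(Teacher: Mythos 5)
Your proposal is correct and follows essentially the same route as the paper: the paper's proof likewise observes that the truth/falsity-preserving choices of Lemma~\ref{etrueafalse} together with the approximant annotation depend only on the current configuration (the second-order rules being local), and that this yields positional winning strategies for whichever player wins. Your version is more careful --- in particular in separating the \emph{definition} of the strategy from the ordinal \emph{verification} that it wins infinite plays, and in routing Eve's strategy through Lemma~\ref{dualgames} --- and the closing reduction to memoryless determinacy of parity games is a valid alternative the paper does not mention.
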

\begin{proof}
Consider a winning strategy $\pi$ for Eve. According to Lemma \ref{etrueafalse} and Theorem \ref{completeness} such a strategy consists of preserving truth with her choices and annotating variables with their approximant indices. But neither of these two tasks depends on the history of a play. Instead they only depend on the current configuration of the game. In particular notice that, of course, this is also the case for the structural operators since the second-order quantification has only a local scope. Similar arguments apply for the winning strategies of Adam.
\qed
\end{proof}

This result is key to achieve decidability of these games in the presence of the local second-order quantification on the traces of the partial order models we consider. Also, from a more practical standpoint, memoryless strategies are desirable as they are easier to synthesize. However, synthesis is not studied here.

\begin{thm}\label{dectflgames}
\emph{\textbf{(Decidability)}} The model-checking game for finite systems against \tfl/ specifications is decidable.
\end{thm}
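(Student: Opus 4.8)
The plan is to reduce the question to solving a finite parity game, which is decidable by classical results; I will also indicate a more elementary, purely semantic route that uses only the soundness and completeness theorems already established.

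First I would check that the game is played on a finite arena. For a finite system $\mathfrak{T} = (S,s_0,T,I,\Sigma)$ the stateless maximal process space $\mathfrak{S} = \mathcal{X} \times \mathfrak{A}$ is finite: $S$ is finite, the set $\mathcal{X}$ of maximal sets and maximal traces is determined by and bounded in terms of $S$ and $T$ (a maximal set consists of transitions out of a fixed state, of which there are finitely many, and its maximal traces are among its subsets), and $\mathfrak{A} = T \cup \{t_{\epsilon}\}$ is finite. Moreover $\Sub(\phi)$ is finite by the Fischer--Ladner closure definition. Hence the board $\mathfrak{B} \subseteq \mathfrak{S} \times \Sub(\phi)$ is finite, and by the game rules every configuration has only finitely many successors — each rule, including $(\modald{\otimes})$ and $(\modalb{\otimes})$, offers at most one successor per maximal trace $M \sqsubseteq R$ — so $\mathcal{G}(H_0,\phi)$ is a game on a finite directed graph.

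Next I would turn $\mathcal{G}(H_0,\phi)$ into a parity game. Finite plays are decided, uniquely, by winning conditions one through four (cf.\ Lemma \ref{uwinner}). For infinite plays I would assign to each fixpoint subformula $\munu Z . \psi$ of $\phi$ a priority whose parity is even for $\nu$ and odd for $\mu$, and whose magnitude reflects the syntactic nesting of $Z$ in $\phi$, so that a syntactically outermost variable gets a strictly larger priority than any variable beneath it. By Lemma \ref{outermost} every infinite play has a unique syntactically outermost variable occurring infinitely often, so the largest priority seen infinitely often is well defined and its parity agrees with winning condition five. Thus $\mathcal{G}(H_0,\phi)$ is a parity game on a finite arena; it is determined and solvable by a standard algorithm (e.g.\ Zielonka's recursive algorithm, or direct nested-fixpoint computation), which computes the winning region and hence decides whether Eve wins $H_0 \vdash \phi$. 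The history-freeness of the winning strategies (Proposition \ref{winstr}) is consistent with the positional determinacy of parity games and, conversely, can be recovered from it.

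Alternatively, and perhaps most directly, I would compute the denotation $\denot{\phi}^{\mathfrak{T}}_{\mathcal{V}}$ by bottom-up evaluation over the finite complete lattice $(2^{\mathfrak{S}},\subseteq)$: boolean, modal, and $\modald{\otimes}$-type operators are evaluated pointwise, and each $\mu$/$\nu$ operator is evaluated by Knaster--Tarski iteration, which stabilises in at most $|\mathfrak{S}|$ steps since $\mathfrak{S}$ is finite; one then tests $H_0 \in \denot{\phi}^{\mathfrak{T}}_{\mathcal{V}}$, which by Soundness (Theorem \ref{soundness}) and Completeness (Theorem \ref{completeness}) holds iff Eve wins $\mathcal{G}(H_0,\phi)$. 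The only point requiring care in either route is that the second-order moves $\modald{\otimes}$ and $\modalb{\otimes}$ preserve finiteness and locality: in a finite system a support set has finitely many maximal traces, all computable from $\mathfrak{T}$, and the scope of these moves is confined to the neighbourhood of the current state, so they introduce only finite branching and demand no extra memory — precisely what Proposition \ref{winstr} and the preceding local-properties discussion guarantee. I expect this to be the main (indeed essentially the only) obstacle; once it is dispatched, decidability is a routine consequence of the finiteness of the board together with the determinacy of Corollary \ref{determinacy}.
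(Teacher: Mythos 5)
Your argument is correct, but it is organised differently from the paper's own proof. The paper argues directly on the game: it invokes determinacy, uses Proposition \ref{winstr} (history-freeness of winning strategies) to confine attention to the finite set of configurations, and then bounds the number of fixpoint approximants that need to be computed in a finite system so that all history-free strategies can be enumerated and the winner read off from winning condition five; it is an informal enumeration-style argument rather than a reduction. You instead make the finiteness of the board $\mathfrak{B} \subseteq \mathfrak{S} \times \Sub(\phi)$ explicit and then give two sharper routes: (i) an explicit recasting of the game as a finite parity game, where priorities of even/odd parity are assigned to $\nu$/$\mu$ variables respecting syntactic nesting, so that by Lemma \ref{outermost} the maximal priority occurring infinitely often is exactly that of the unique outermost variable of winning condition five, after which Zielonka-style algorithms give both decidability and positional determinacy (recovering, rather than assuming, the content of Proposition \ref{winstr}); and (ii) a purely semantic route, evaluating $\denot{\phi}^{\mathfrak{T}}_{\mathcal{V}}$ bottom-up over the finite lattice $(2^{\mathfrak{S}},\subseteq)$ by Knaster--Tarski iteration and then transferring the answer to the game via Theorems \ref{soundness} and \ref{completeness}. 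What your approach buys is precision and off-the-shelf algorithmics (complexity bounds come for free from parity-game solvers, and route (ii) avoids reasoning about strategies altogether); what the paper's approach buys is that it stays entirely inside the game-theoretic machinery it has just built, using the soundness proof and approximant analysis as the decision procedure itself, which is the style it wants for the later application to regular event structures. Your handling of the second-order moves $\modald{\otimes}$ and $\modalb{\otimes}$ -- noting that maximal traces of a support set are finitely many, computable, and local -- is exactly the point the paper also flags, so no gap remains on that front.
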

\begin{proof}
Since the game is determined, finite plays are decided by winning conditions one to four of either player. Now consider the case of plays of infinite length; since the winning strategies of both players are history-free, we only need to look at the set of different configurations in the game, which is finite even for plays of infinite length because of the following simple argument. In a finite system an infinite play can only be possible if the model is cyclic. But, since the model has a finite number of states, there is an upper bound on the number of fixpoint approximants that must be calculated (as well as on the number of configurations of the game board that must be checked) in order to ensure that either a greatest fixpoint is satisfied or a least fixpoint has failed. As a consequence, all possible history-free winning strategies for a play of infinite length can be computed, so that the game can be decided using winning condition five of one of the players. Note that in general computing (pure) history-free winning strategies of determined games played in a finite system is always possible.
\qed
\end{proof}

\subsubsection*{The Interleaving Case.}
Local properties of this higher-order games can also be found in the interleaving case, namely, they coincide with the local model-checking games for the modal mu-calculus defined by Stirling and presented in Section \ref{pre}. Note that interleaving concurrency can be cast using \tfl/ by both syntactic and semantic means. In the former case this is done by considering only its \lmu/ fragment, whereas in the latter case this is achieved by considering directly the (one-step) interleaving semantics of a concurrent system.

The importance of this feature of \tfl/ is that even having constructs for independence and a natural partial order semantics, nothing is lost with respect to the main local approaches to interleaving concurrency. Let us recall that \lmu/ can be obtained from \tfl/ by considering the $\modalt$-free language and using only HML modalities, i.e., making a strict use of the following abbreviations: $\modald{a} \phi = \modald{a}_{c} \phi \vee \modald{a}_{nc} \phi$ and $\modalb{a} \phi = \modalb{a}_{c} \phi \wedge \modalb{a}_{nc} \phi$.

\begin{prop} \label{tmsomcgamesvslmcmugames}
If either a model with an empty independence relation or the syntactic \lmu/ fragment of \tfl/ is considered, then the higher-order model-checking games for \tfl/ degenerate to the local model-checking games for \lmu/. 
\end{prop}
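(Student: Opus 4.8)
The plan is to treat the two hypotheses in turn; in both of them the process-space apparatus of \tfl/ collapses onto the state space, after which the game rules and winning conditions match Stirling's local \lmu/ model-checking game rule by rule.

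First I would dispatch the syntactic case, where the formula $\phi$ in $H_0 \vdash \phi$ is a \lmu/ formula, i.e.\ it is $\modalt$-free and every modality occurs only in the combined forms $\modald{a} \equiv \modald{a}_c \vee \modald{a}_{nc}$ and $\modalb{a} \equiv \modalb{a}_c \wedge \modalb{a}_{nc}$. Every formula occurring in a play of $\mathcal{G}(H_0,\phi)$ lies in $\Sub(\phi)$, hence — since $\phi$ is $\modalt$-free — is itself $\modalt$-free, so the rules $(\modald{\otimes})$ and $(\modalb{\otimes})$ of Figure~\ref{fig:mcrules} are never enabled. Consequently the first component of every configuration is a \emph{maximal} set: it starts as $\mathfrak{X}(s_0)$, and the only rules that alter it, namely $(\modald{~}_c)$, $(\modald{~}_{nc})$ and their box duals, produce $(\mathfrak{X}(\tau(r)),r)$. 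Since $\mathfrak{X}(s)$ is determined by $s$, and (as noted below) the transition component is inert along such plays, the game factors through the projection $(\mathfrak{X}(s),t)\mapsto s$ onto the board $S\times\Sub(\phi)$, which is precisely Stirling's \lmu/ board; I would also lift a \lmu/ valuation so that $\mathcal{V}(Z)$ depends only on the state component.

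It then remains to compare the rules under this projection. The fixpoint rules (FP), (VAR), the boolean rules $(\vee)$, $(\wedge)$ and winning condition~5 are literally the \lmu/ ones. For the modalities the key point is that at a process $(\mathfrak{X}(s),t)$ each transition $r = s\xrightarrow{a}s'$ satisfies \emph{exactly one} of $t\leq r$ and $t\ominus r$: both require $\tau(t)=\sigma(r)$ and are then distinguished by whether $(t,r)\in I$. Hence when Eve plays the two-step move ``pick a disjunct of $\modald{a}_c\phi_1 \vee \modald{a}_{nc}\phi_1$, then pick a successor'', her effective choice ranges over \emph{all} $a$-successors of $s$: to reach $(\mathfrak{X}(s'),r)$ she selects the disjunct matching $r$, whereas selecting a disjunct with no matching successor loses immediately by Adam's winning condition~2 or~3, and if $s$ has no $a$-successor at all then Adam wins — exactly as in the \lmu/ rule and winning condition for $\modald{a}$. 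The case of $\modalb{a}\phi_1$ is dual: Adam picks a conjunct and a successor, a dead conjunct losing by Eve's winning condition~2 or~3, the two live conjuncts together covering all $a$-successors. Winning condition~1 transports because $\mathcal{V}$ ignores the transition component, which is inert once the disjunct/conjunct has been fixed. This settles the syntactic case. For the semantic case, $I=\emptyset$ forces $\otimes=\ominus=\emptyset$, so $\denot{\modald{a}_{nc}\psi}=\emptyset$ and $\denot{\modalb{a}_{nc}\psi}=\mathfrak{S}$, the causal relation degenerates to $\leq=\{(t,r):\tau(t)=\sigma(r)\}$, and every conflict-free set is a singleton so a $\modalt$-move can only leave $\mathfrak{X}(s)$ unchanged or pass to a singleton inside it; restricting to \lmu/ formulae (for which the comparison with Stirling's game is meaningful), $\modalt$ does not occur, the first component stays a maximal set, the vacuous $\modald{a}_{nc},\modalb{a}_{nc}$ make the combined modalities reduce to $\modald{a}_c,\modalb{a}_c$, and under the degenerate $\leq$ these are exactly the HML diamond and box over all $a$-successors, so again the game is Stirling's \lmu/ game.

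I expect the one genuinely delicate point to be the bookkeeping around $\modald{a}$ and $\modalb{a}$ being abbreviations: one must confirm that the auxiliary ``choose a subscript'' move, together with the winning conditions that punish a player for committing to an empty subscripted modality, reproduces \emph{exactly} the unrestricted quantification over $a$-successors of the \lmu/ game — in particular that the corner cases (an $a$-successor of $s$ exists but only of one polarity with respect to $t$; no $a$-successor of $s$ exists at all) are decided identically by both games. Everything else is a routine rule-by-rule comparison on the two boards, together with the trivial remark that the transition component of a process never influences a play once the game has been restricted to the \lmu/ fragment.
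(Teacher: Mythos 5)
Your proposal is correct and follows essentially the same route as the paper's proof: in both cases one observes that the $\modalt$-free/maximal-set restriction collapses the board onto $S\times\Sub(\phi)$, that $t\leq r$ and $t\ominus r$ partition the $a$-successors so the two-step disjunct-then-successor move reproduces Stirling's single quantification over all $a$-successors, and that in the $I=\emptyset$ case the non-causal modalities become vacuous. Your treatment of the corner cases (a player committing to an empty subscripted modality, or no $a$-successor existing at all) is in fact slightly more explicit than the paper's, which compresses this into the remark that the same player chooses both the disjunct/conjunct and the transition.
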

\begin{proof}
Let us consider the case when the syntactic \lmu/ fragment of \tfl/ is considered. The first observation to be made is that the $\{\modald{\otimes},\modalb{\otimes}\}$-free fragment of \tfl/ only considers maximal sets. Hence if a transition can be performed then it is always in the support set currently available. Therefore, support sets in $\mathcal{X}$ can be disregarded and only the set of states of the systems are needed. Also, without loss of generality, consider only the case of the modal operators since the \lmu/ and \tfl/ boolean and fixpoint operators have the same denotation. 
\begin{center}
\begin{tabular}{lll}
$\denot{\modald{a} \phi}^{\mathfrak{T}}_{\mathcal{V}}$ &  $=$ & $\{ (s,t) \in S \times \mathfrak{A} \mid \exists q \in S . ~ t \leq r = s \xrightarrow{a} q \wedge (q,r) \in \denot{\phi}^{\mathfrak{T}}_{\mathcal{V}} \} $\\ 
&  & $\hbox to 0pt{\hss${}\cup{}$}\{ (s,t) \in S \times \mathfrak{A} \mid \exists q \in S . ~  t \ominus r = s \xrightarrow{a} q \wedge (q,r) \in \denot{\phi}^{\mathfrak{T}}_{\mathcal{V}} \} $ \\
\end{tabular} 
\end{center}
The second observation is that when computing the semantics of the combined operator $\modald{a}$, the conditions $t \leq r$, i.e., $(t,r) \not \in I$, and $t \ominus r$, i.e., $(t,r) \in I$, complement each other and become always true since there are no other possibilities. Thus, the second component of every pair in $S \times \mathfrak{A}$ can also be disregarded.
\begin{center}
\begin{tabular}{lll}
$\denot{\modald{a} \phi}^{\mathfrak{T}}_{\mathcal{V}}$ &  $=$ & $\{ s \in S \mid \exists q \in S . ~ s \xrightarrow{a} q \wedge q \in \denot{\phi}_{\mathcal{V}} \}$ \\
\end{tabular} 
\end{center}
The case for the box operator $\modalb{a}$ is similar. Now, note that the new game rules and winning conditions enforced by these restrictions coincide with the ones defined by Stirling for the local model-checking games of \lmu/. In particular, the new game rules and winning conditions for the modalities are as follows.

In a finite play $C_0, C_1, ..., C_n$ of $\mathcal{G}(H_0,\phi)$, where $C_n$ has a modality as a formula component, Adam wins iff $C_n = s \vdash \modald{a} \psi$ and $\{q : s \xrightarrow{a} q \} = \emptyset$, and Eve wins iff $C_n = s \vdash \modalb{a} \psi$ and $\{q : s \xrightarrow{a} q \} = \emptyset$. Since winning conditions for infinite plays do not depend on modalities, they remain the same. Furthermore, the game rules for modal operators reduce to:
\begin{center}
\begin{tabular}{lcl}
\begin{footnotesize}($\modald{~}$)\end{footnotesize} $~~\myfrac{s \vdash \modald{a} \phi}{q \vdash \phi}~~$      \begin{footnotesize}$[\exists] a: ~~ s \xrightarrow{a} q$\end{footnotesize}& $~~~$ & 
\begin{footnotesize}($\modalb{~}$)\end{footnotesize} $~~\myfrac{s \vdash \modalb{a} \phi}{q \vdash \phi}~~$      \begin{footnotesize}$[\forall] a: ~~ s \xrightarrow{a} q$\end{footnotesize}
\end{tabular} 
\end{center}
Clearly, the games just defined are equivalent to the ones defined by Stirling \cite{localmc-stirling}. The reason for this coincidence is that when a modality $\modald{a} \phi$ (resp.\ $\modalb{a} \phi$) is encountered, only Eve (resp.\ Adam) gets to choose both the next subformula and the transition used to verify (resp.\ falsify) the truth value of $\phi$.

Now, let us look at the case when a model with an empty independence relation is considered. In such a case the rule ($\modalb{~}_{nc}$) becomes trivially true and the rule ($\modald{~}_{nc}$) trivially false since in an interleaving model all pairs of transitions $(t,r)$ such that $\tau(t) = \sigma(r)$ are in $\leq$. Moreover, in such models the rules ($\modald{\otimes}$) and ($\modalb{\otimes}$) are ``absorbed'' by rules ($\modald{~}_c$) and ($\modalb{~}_{nc}$), respectively, because all maximal sets are necessarily singletons, and as before because all transitions are in $\leq$, whereas $\ominus$ is empty. For these reasons the elements that belong to the sets $\mathcal{X}$ and $\mathfrak{A}$ need no longer be considered and the rules ($\modalb{~}_{c}$) and ($\modald{~}_{c}$) become ($\modalb{~}$) and ($\modald{~}$), respectively. The other rules remain the same.
\qed
\end{proof}

\subsection{Model-Checking Infinite Posets} \label{mcres}
In this subsection we use the higher-order model-checking games for \tfl/ to push forward the decidability border of the model-checking problem of a particular class of partial order models, namely, of the class of regular event structures defined by Thiagarajan \cite{res-thia}. More precisely, we improve previous results by Penczek \cite{tacas-penczek} and Madhusudan \cite{mceslics-madhu} in terms of temporal expressive power. The solution presented here is the \tfl/ version of the same result obtained in \cite{con09-gut}.

\subsubsection*{\tfl/ on Regular Trace Event Structures.}
As we have shown previously, higher-order logic games for model-checking can be played in either finite or infinite state systems (with finite branching). However, decidability for the games was proved only for finite systems. Therefore, if the system at hand has recursive behaviour and, moreover, is represented by an event structure, then the TSI representation of it may be infinite, and decidability is not guaranteed.

We now analyse the decidability of the model-checking games for \tfl/ against a special class of infinite, but regular, event structures called `regular trace' event structures. This class of systems was introduced by Thiagarajan \cite{res-thia} in order to give a canonical representation to the set of Mazurkiewicz traces modelling the behaviour of a finite concurrent system. The model-checking problem for this class of models has been studied elsewhere \cite{mceslics-madhu,tacas-penczek}, and shown to be rather difficult. In the reminder of this section we show that model-checking \tfl/ properties of this kind of systems is also decidable. 

As described before, an event structure $\mathfrak{E} = (E,\preccurlyeq,\sharp,\eta,\Sigma)$ determines a TSI $\mathfrak{T} = (S,s_0,T,I,\Sigma)$ by means of an inclusion functor from the category $\mathcal{ES}$ of event structures to the category $\mathcal{TSI}$ of TSI. The mapping we presented before was given in a set-theoretic way since such a presentation is more convenient for us. A categorical definition is given by Joyal, Nielsen, and Winskel, which can be found in \cite{open-nielsen}. Let $\lambda: \mathcal{ES} \rightarrow \mathcal{TSI}$ be such a construction.

\begin{defn}
\emph{
A \emph{regular trace event structure} is a labelled event structure $\mathfrak{E} = (E,\preccurlyeq,\sharp,\eta,\Sigma)$ as in Definition \ref{esdef}, where for all configurations $C$ of $\mathfrak{E}$, and for all events $e \in C$, the set of future non-isomorphic configurations rooted at $e$ defines an equivalence relation of finite index.
}
\eos
\end{defn}

Let $\Conf/$ be the set of configurations of $\mathfrak{E}$. Notice that the restriction to image-finite models implies that the partial order $\preccurlyeq$ of $\mathfrak{E}$ is of \emph{finite branching}, and hence for all $C \in \Conf/$, the set of immediately next configurations is bounded. Also notice that the set of states $S$ of the TSI representation of an event structure $\mathfrak{E}$ is isomorphic to the set $\Conf/$ of configurations of $\mathfrak{E}$.

\subsubsection*{A Computable Folding Functor from Event Structures to TSI.}
In order to overcome the problem of dealing with infinite event structures, such as the regular trace event structures just defined, we present a new morphism (a functor) that folds a possibly infinite event structures into a TSI. This way, a finite process space can be constructed so as to give the semantics of \tfl/ formulae, and hence, play a model-checking game for this logic in a finite board. Such a morphism and the procedure to effectively compute it is described below.

\paragraph{The Quotient Set Method.}
Let $Q = ({\Conf/} / {\sim})$ be the \emph{quotient set representation} of $\Conf/$ by $\sim$ in a finite or infinite event structure $\mathfrak{E}$, where $\Conf/$ is the set of configurations in $\mathfrak{E}$ and $\sim$ is an equivalence relation on such configurations. The equivalence class $[X]_{\sim}$ of a configuration $X \in \Conf/$ is the set $\{{C \in {\Conf/}} \mid {C \sim X} \}$. A quotient set $Q$ where $\sim$ is decidable is said to have a decidable characteristic function, and will be called a \emph{computable quotient set}. 

\begin{defn}
\emph{ 
A \emph{regular quotient set} $({\Conf/} / {\sim})$ of an event structure $\mathfrak{E}$ is a computable quotient set representation of $\mathfrak{E}$ with a finite number of equivalence classes.
}
\eos
\end{defn}

Having defined a regular quotient set representation of $\mathfrak{E}$, the morphism $\lambda: \mathcal{ES} \rightarrow \mathcal{TSI}$ above can be modified to defined a new map $\lambda_f: \mathcal{ES} \rightarrow \mathcal{TSI}$ which folds a (possibly infinite) event structure into a TSI:
\begin{center}
\begin{tabular}{lcl}
$S$ & $=$ & $\{[C]_{\sim} \subseteq \Conf/ \mid \exists [X]_{\sim} \in Q = ({\Conf/} / {\sim}) . ~ {C \sim X} \}$ \\
$T$ & $=$ & $\{ ([C]_{\sim},a,[C']_{\sim}) \in S \times \Sigma \times S \mid \exists e \in E . ~ \eta(e) = a , e \not \in C , C' = C \cup \{e\}\}$ \\
$I$ & $=$ & $\{ (([C_1]_{\sim},a,[C'_1]_{\sim}),([C_2]_{\sim},b,[C'_2]_{\sim})) \in T \times T \mid \exists (e_1,e_2) \in {\co/} . ~ $\\&&$\eta(e_1) =a, \eta(e_2)=b, C'_1 = C_1 \cup \{e_1\}, C'_2 = C_2 \cup \{e_2\} \}$
\end{tabular} 
\end{center}

\begin{lem}\label{sattflformulae}
Let $\mathfrak{T}$ be a TSI and $\mathfrak{E}$ an event structure. If $\mathfrak{T} = \lambda_f(\mathfrak{E})$, then the models $(\mathfrak{T},\mathcal{V})$ and $(\mathfrak{E},\mathcal{V})$ satisfy the same set of \tfl/ formulae.
\end{lem}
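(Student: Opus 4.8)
The plan is to show that the folding functor $\lambda_f$ does not change the stateless maximal process space that underlies the semantics of \tfl/, and hence preserves and reflects the denotation of every \tfl/ formula. Since the definition of a \tfl/ model refers only to the process space $\mathfrak{S} = \mathcal{X} \times \mathfrak{A}$ built from the system (support sets and transitions), it suffices to exhibit a bijection between the relevant local data of $\mathfrak{E}$ and of $\mathfrak{T} = \lambda_f(\mathfrak{E})$ that respects the three structural ingredients used in Definition \ref{semtfl}: the causal order $\leq$ on transitions, the linearised-concurrency relation $\ominus$, and the complete-trace relation $\sqsubseteq$ (equivalently, the duality with $\otimes$ and $\#$).

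First I would pin down what $\lambda_f$ does locally. By construction the states of $\mathfrak{T}$ are $\sim$-classes $[C]_\sim$ of configurations, and a transition out of $[C]_\sim$ corresponds to adding an event $e \notin C$ to some representative $C$; since $\mathfrak{E}$ is a regular trace event structure, the quotient is by the "same future up to isomorphism'' relation, so the set of enabled events at $C$ and the causal/conflict/concurrency relations among the transitions they generate depend only on $[C]_\sim$. Thus I would first argue that $\lambda_f$ is well defined: the transition relation $T$ and independence relation $I$ given in the displayed equations do not depend on the chosen representative $C$ of the class (this is exactly where regularity — the finite-index, isomorphism-invariant structure of futures — is used, and it is the same fact that makes $S$ finite). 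Then, for each state $[C]_\sim$, the map sending the transition $(C,a,C\cup\{e\})$ of the unfolding to $([C]_\sim, a, [C\cup\{e\}]_\sim)$ restricts to a bijection between $\mathfrak{X}(C)$ (the maximal set of transitions at $C$ in the event-structure's TSI) and $\mathfrak{X}([C]_\sim)$ in $\mathfrak{T}$, and this bijection carries $\otimes$, $\#$, $\leq$, and $\ominus$ across, because all four are defined purely from source/target coincidence and the independence relation, which is preserved by the $\co/$-based definition of $I$. Consequently it carries support sets to support sets, maximal traces to maximal traces, and the relation $\sqsubseteq$ across; so $\mathcal{X}$ and $\mathfrak{A}$, and hence the process spaces $\mathfrak{S}$, for $\mathfrak{E}$ and $\mathfrak{T}$ are isomorphic, compatibly with the shared valuation $\mathcal{V}$.

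With the process-space isomorphism $\Theta$ in hand, I would conclude by a routine induction on the structure of \tfl/ formulae in positive normal form, following Definition \ref{semtfl}: the cases for $Z$, $\neg$, $\wedge$ are immediate from $\Theta$ being a bijection commuting with $\mathcal{V}$; the cases for $\modald{a}_c$, $\modald{a}_{nc}$, $\modald{\otimes}$ hold because $\Theta$ preserves $\leq$, $\ominus$, $\sqsubseteq$ respectively and the quantifications in those clauses range over exactly the data $\Theta$ matches up; the dual operators are handled by positive normal form; and the $\mu Z.\phi$ case follows since $\Theta$ is a lattice isomorphism between $(2^{\mathfrak{S}_{\mathfrak{E}}},\subseteq)$ and $(2^{\mathfrak{S}_{\mathfrak{T}}},\subseteq)$, so it commutes with the least-fixpoint operator given by Knaster--Tarski. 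In particular $H_0 \in \denot{\phi}^{\mathfrak{E}}_{\mathcal{V}}$ iff $\Theta(H_0) \in \denot{\phi}^{\mathfrak{T}}_{\mathcal{V}}$, and since $\Theta$ sends the initial process of $\mathfrak{E}$ to that of $\mathfrak{T}$, the two models satisfy the same \tfl/ formulae.

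The main obstacle I anticipate is the well-definedness and naturality of $\lambda_f$ — specifically, verifying that collapsing isomorphic futures genuinely leaves the local concurrency structure (the relations $\otimes$, $\#$, $\leq$, $\ominus$ and the complete-trace relation $\sqsubseteq$) intact rather than merely the one-step transition labels. The subtle point is that $\leq$ and $\ominus$ compare a transition at one state with a transition at a successor state, so one must check that the folding respects two-step causal dependencies and not just immediate enabling; this is where the precise definition of "regular trace event structure'' (configurations rooted at $e$ forming finitely many isomorphism classes) does the work, since it guarantees that whether $t \leq t'$ or $t \ominus t'$ for composable $t,t'$ is determined by the isomorphism class of the local configuration. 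Everything after that is bookkeeping.
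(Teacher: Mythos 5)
Your overall strategy---show that $\lambda_f$ preserves the local data ($\leq$, $\ominus$, $\sqsubseteq$) that the semantics of \tfl/ consumes, and then induct on formulae---is reasonable, but the pivotal claim is false as stated: the process spaces of $\mathfrak{E}$ and of $\mathfrak{T}=\lambda_f(\mathfrak{E})$ are not isomorphic in general. The whole point of $\lambda_f$ is to fold a possibly infinite event structure onto a finite TSI: distinct configurations $C_1 \sim C_2$ are identified into one state $[C_1]_{\sim}$, so the induced map on states (hence on transitions, support sets, and processes) is many-to-one, and for an infinite regular trace event structure $\mathfrak{S}_{\mathfrak{E}}$ is infinite while $\mathfrak{S}_{\mathfrak{T}}$ is finite. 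Your per-state bijections between $\mathfrak{X}(C)$ and $\mathfrak{X}([C]_{\sim})$ are fine, but they do not assemble into a global bijection $\Theta$, and consequently there is no lattice isomorphism between $(2^{\mathfrak{S}_{\mathfrak{E}}},\subseteq)$ and $(2^{\mathfrak{S}_{\mathfrak{T}}},\subseteq)$; the fixpoint case of your induction, which leans on exactly that isomorphism, collapses.

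Two repairs are available. The paper's own proof avoids the induction altogether: $\lambda_f$ has a right adjoint $\varepsilon$ (the unfolding functor, preserving labelling and independence), $(\varepsilon \circ \lambda_f)(\mathfrak{E})$ is isomorphic to $\mathfrak{E}$, and \tfl/ formulae distinguish neither a model from its unfolding nor isomorphic, equally labelled models; the representative-independence issue you flag as the main obstacle is precisely what makes the unfolding of the folding isomorphic to the original, and that is where the regularity of $\sim$ is spent. Alternatively, you can salvage your direct argument by treating the quotient map as a surjective, structure-preserving morphism (a functional bisimulation on process spaces respecting $\leq$, $\ominus$ and $\sqsubseteq$) and proving $\denot{\phi}^{\mathfrak{E}}_{\mathcal{V}} = \Theta^{-1}\bigl(\denot{\phi}^{\mathfrak{T}}_{\mathcal{V}}\bigr)$ by induction, with the valuations related through $\Theta^{-1}$; the modal and $\modald{\otimes}$ cases then go through by your local bijections, but the fixpoint case needs either fixpoint approximants or the observation that $\Theta^{-1}$ commutes with arbitrary unions and intersections---this is more than bookkeeping, and it is the step your isomorphism claim was papering over.
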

\begin{proof}
The morphism $\lambda_f: \mathcal{ES} \rightarrow \mathcal{TSI}$ from the category of event structures to the category of TSI has a unique right adjoint $\varepsilon: \mathcal{TSI} \rightarrow \mathcal{ES}$, the unfolding functor that preserves labelling and the independence relation between events, such that for any $\mathfrak{E}$ we have that $\mathfrak{E'} = (\varepsilon \circ \lambda_f ) ~ ( \mathfrak{E} )$, where $\mathfrak{E'}$ is isomorphic to $\mathfrak{E}$. But \tfl/ formulae do not distinguish between models and their unfoldings, and hence cannot distinguish between $(\mathfrak{T},\mathcal{V})$ and $(\mathfrak{E'},\mathcal{V})$. Moreover, \tfl/ formulae do not distinguish between isomorphic models equally labelled, and therefore cannot distinguish between $(\mathfrak{E'},\mathcal{V})$ and $(\mathfrak{E},\mathcal{V})$ either.
\qed
\end{proof}

Having defined a morphism $\lambda_f$ that preserves \tfl/ properties, one can now define a procedure that constructs a TSI model from a given event structure.

\begin{defn}
\emph{ 
Let $\mathfrak{E} = (E,\preccurlyeq,\sharp,\eta,\Sigma)$ be an event structure and $({\Conf/}/{\sim})$ a regular quotient set representation of $\mathfrak{E}$. A \emph{representative set} $E_r$ of $\mathfrak{E}$ is a subset of $E$ such that $\forall C \in \Conf/ . ~ \exists X \subseteq E_r . ~ C \sim X$.
}
\eos
\end{defn}

\begin{lem}\label{sinfiniterep}
Let $\mathfrak{E}$ be an represented as a regular quotient set $({\Conf/}/{\sim})$. Then, a finite representative set $E_r$ of $\mathfrak{E}$ is effectively computable. 
\end{lem}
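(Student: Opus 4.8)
The plan is to exhibit a terminating algorithm that produces a finite set $\mathcal{R} \subseteq \Conf/$ of representatives — one configuration per $\sim$-class — and then to set $E_r := \bigcup_{C \in \mathcal{R}} C$; the finiteness and the representative property of $E_r$ then follow directly.

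First I would record two easy facts. Every $C \in \Conf/$ is finite, since by definition $\Conf/$ consists of the configurations reachable from $\{\}$ by finitely many single-event extensions; and, by image-finiteness of $\mathfrak{E}$, each such $C$ has only finitely many immediate successors $C \xrightarrow{e} C \cup \{e\}$, all of which can be effectively enumerated from the representation of $\mathfrak{E}$. Since $({\Conf/}/{\sim})$ is a regular quotient set, $\sim$ is decidable and of finite index; and since it is a genuine representation of $\mathfrak{E}$, I will use that $\sim$ is a bisimulation for the successor relation on configurations (this is automatic when $\sim$ is the canonical identification of configurations with isomorphic futures in Thiagarajan's regular trace event structures, and it is also exactly the property that makes the folding map $\lambda_f$ of Lemma \ref{sattflformulae} well defined).

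Next I would describe the algorithm: maintain a finite set $\mathcal{R}$ of pairwise $\sim$-inequivalent configurations, starting from $\{\{\}\}$; at each stage test whether $\mathcal{R}$ is \emph{closed}, i.e.\ whether for every $C \in \mathcal{R}$ and every immediate successor $C \cup \{e\}$ there is some $C' \in \mathcal{R}$ with $C \cup \{e\} \sim C'$ — a decidable test, being a finite boolean combination of instances of the decidable relation $\sim$. If $\mathcal{R}$ is not closed, adjoin a witnessing successor (one $\sim$-inequivalent to everything currently in $\mathcal{R}$) and repeat. Each adjunction strictly increases the number of $\sim$-classes met by $\mathcal{R}$, so by finite index the loop halts, returning a closed $\mathcal{R}$.

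Then I would prove, by induction on the length of a reaching run, that this closed $\mathcal{R}$ meets every configuration up to $\sim$: the empty configuration lies in $\mathcal{R}$; and if $C = C_0 \cup \{e\}$ with $C_0$ reached by a shorter run, the induction hypothesis gives $C_0 \sim C_0' \in \mathcal{R}$, the bisimulation property of $\sim$ gives a matching successor $C_0' \cup \{e'\} \sim C$, closure of $\mathcal{R}$ gives $C_0' \cup \{e'\} \sim C' \in \mathcal{R}$, and transitivity yields $C \sim C'$. Finally $E_r = \bigcup_{C \in \mathcal{R}} C$ is a finite union of finite sets, hence finite, and for any $C \in \Conf/$ the configuration $X := C'$ with $C \sim C' \in \mathcal{R}$ satisfies $X \subseteq E_r$, so $E_r$ is a representative set of $\mathfrak{E}$. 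The one step I expect to need care is the bisimulation property of $\sim$: it is precisely what licenses the \emph{local} closure test as a sufficient stopping criterion for an otherwise global coverage condition; every other step is routine given decidability of $\sim$ and image-finiteness.
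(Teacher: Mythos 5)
Your proof is correct, but it follows a genuinely different route from the paper's. The paper runs a breadth-first search over configurations starting from $\{\}$ while maintaining the quotient set $Q$ explicitly: each time a freshly visited configuration $C_j$ matches a class $[X_i]_{\sim}$ still in $Q$, that class is deleted and the events of $C_j$ are added to $E_r$; the procedure stops when $Q$ is empty, and termination follows because every class is non-empty with finite configurations, so the BFS reaches a representative of each class at bounded depth. Your stopping criterion is instead \emph{local}: you saturate a set $\mathcal{R}$ of pairwise inequivalent representatives until it is closed under one-step successors modulo $\sim$, with termination guaranteed by the finite index. The trade-off is instructive. The paper's test requires that the ``regular quotient set representation'' come with an effective enumeration of its classes (otherwise ``check $C_j \sim X_i$ for every class'' and ``stop when $Q$ is empty'' are not effective), but it works for an arbitrary decidable finite-index $\sim$. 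Your test needs no class list, only decidability of $\sim$ and image-finiteness, but it is sound only if $\sim$ is a bisimulation for the successor relation on configurations -- without that, local closure can be reached while whole classes remain unvisited (e.g.\ if $\{\}$ is $\sim$-related to all its immediate successors but not to some deeper configuration). You identify this dependency explicitly and justify it by the intended instantiation (isomorphic-future classes in regular trace event structures, and well-definedness of the folding map $\lambda_f$), which is a fair reading; just be aware that it is an assumption beyond the paper's literal definition of a regular quotient set, much as the class enumeration is an unstated assumption in the paper's own proof. Your coverage argument by induction on run length, and the final step $E_r = \bigcup_{C\in\mathcal{R}} C$, are both fine.
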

\begin{proof}
Construct a finite representative set $E_r$ as follows. Start with $E_r = \emptyset$ and $C_j = C_0 = \emptyset$, the initial configuration or root of the event structure. Check $C_j \sim X_i$ for every equivalence class $[X_i]_{\sim}$ in $Q = ({\Conf/}/{\sim})$ and whenever $C_j \sim X_i$ holds define both a new quotient set ${Q' = Q \setminus [X_i]_{\sim}}$ and a new $E_r = E_r \cup C_j$. This subprocedure terminates because there are only finitely many equivalence classes to check and the characteristic function of the quotient set is decidable. Now, do this recursively in a breadth-first search fashion in the partial order defined on $E$ by $\preccurlyeq$, and stop when the quotient set is empty. Since $\preccurlyeq$ is of finite branching and all equivalence classes must have finite configurations, the procedure is bounded both in depth and breath and the quotient set will always eventually get smaller. Hence, such a procedure always terminates. It is easy to see that this procedure only terminates when $E_r$ is a representative set of $\mathfrak{E}$.
\qed
\end{proof}

A finite representative set $E_r$ is big enough to define all states in the TSI representation of $\mathfrak{E}$ when using $\lambda_f$. However, such a set may not be enough to recognize all transitions in the TSI. In particular, cycles cannot be recognized using $E_r$. Therefore, it is necessary to compute a set $E_f$ where cycles in the TSI can be recognized. We call $E_f$ a \emph{complete representative set} of $\mathfrak{E}$. The procedure to construct $E_f$ is similar to the previous one.

\begin{lem}\label{finiterep}
Let $\mathfrak{E} = (E,\preccurlyeq,\sharp,\eta,\Sigma)$ be an event structure and $E_r$ a finite representative set of $\mathfrak{E}$. If $\mathfrak{E}$ is represented as a regular quotient set $({\Conf/}/{\sim})$, then a finite complete representative set $E_f$ of $\mathfrak{E}$ is effectively computable. 
\end{lem}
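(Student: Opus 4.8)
The plan is to mimic the construction used in the proof of Lemma~\ref{sinfiniterep}, performing one additional ``round'' of exploration on top of the finite representative set $E_r$, so that every transition of the folded TSI $\lambda_f(\mathfrak{E})$ — in particular every ``back edge'' that closes a cycle — is witnessed by an event inside the resulting set, and likewise for the pairs of concurrent events that generate the independence relation $I$.

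First I would fix, for each of the finitely many equivalence classes $[X_i]_{\sim}$ of the regular quotient set $Q = ({\Conf/}/{\sim})$, the representative configuration $X_i \subseteq E_r$ produced by the procedure of Lemma~\ref{sinfiniterep}; each $X_i$ is finite, since by Definition~\ref{esdef}(2) every configuration of $\mathfrak{E}$ is finite. Then I would set
\[
E_f \;=\; E_r \;\cup\; \bigl\{\, e \in E \mid \exists i.\ e \notin X_i \wedge X_i \cup \{e\} \in \Conf/ \,\bigr\} \;\cup\; \bigl\{\, e \in E \mid \exists i, e'.\ e \co/ e' \wedge X_i \cup \{e,e'\} \in \Conf/ \,\bigr\},
\]
that is, $E_r$ augmented with every event extending some representative configuration by one step and every event occurring in a concurrent two-step extension of some representative. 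This set is finite: there are only $|Q|$ representatives, each finite, and each has only finitely many one-event and concurrent two-event extensions, because $\preccurlyeq$ is of finite branching (which, as noted just before Lemma~\ref{sinfiniterep}, follows from image-finiteness). It is moreover effectively computable: the characteristic function of $Q$ is decidable (the quotient set is computable), so for every representative $X_i$ we can enumerate its successor configurations, decide into which of the $|Q|$ classes each one falls, and collect the witnessing events; the enumeration terminates because each $X_i$ is finite and of finite branching.

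Next I would verify that $E_f$ is indeed a \emph{complete} representative set, i.e.\ that computing $\lambda_f$ from $E_f$ recognises all transitions and all independence pairs of the TSI. Every state of $\lambda_f(\mathfrak{E})$ is a class $[C]_{\sim}$ with $C \sim X_i$ for some representative $X_i \subseteq E_r \subseteq E_f$; by construction of $E_f$, every successor configuration of $X_i$ is of the form $X_i \cup \{e\}$ with $e \in E_f$, and its class is one of the $|Q|$ classes of $Q$, hence again a state — so every transition incident to $[C]_{\sim}$ is already present. The same argument applied to $X_i \cup \{e,e'\}$ with $e \co/ e'$ handles the independence relation. Hence no transition and no cycle is lost.

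The main obstacle is exactly this last step: one must be sure that a \emph{single} extra layer beyond $E_r$ is enough — in particular that the truncation performed to build $E_r$, which may stop precisely at a configuration $\sim$-equivalent to an earlier one and therefore may omit the event labelling the resulting back edge, does not cause any transition to disappear. This is where the argument crucially uses the fact that $E_r$ already contains a representative of \emph{every} class of $Q$: a back edge out of a represented state needs only one further event beyond its representative, so closing the representatives under one-step (and concurrent two-step) successors is both necessary and sufficient. A minor additional point to check is that the particular choice of representatives is immaterial — any $X_i \subseteq E_r$ with $X_i \in [X_i]_{\sim}$ yields the same folded TSI up to isomorphism, because $\sim$ is, by definition of a regular trace event structure, the future-isomorphism relation on configurations.
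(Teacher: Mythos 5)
Your construction is correct and essentially the same as the paper's: both form $E_f$ by adding to $E_r$ the events of all one-step successor configurations of the finitely many (finite) configurations already represented, with finite branching and decidability of $\sim$ guaranteeing termination. The only deviations — iterating over one representative per equivalence class rather than over all of $\Conf/(E_r)$, and the extra clause collecting concurrent two-step extensions — are harmless, and the latter is in fact redundant, since independence in $\lambda_f$ is already witnessed by pairs of one-step extensions.
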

\begin{proof}
Start with $E_f = E_r$, and set $\mathfrak{C} = \Conf/(E_r)$, the set
of configurations generated by $E_r$. For each $C_j$ in $E_r$ check in $\preccurlyeq$ the set $Next(C_j)$ of next configurations to $C_j$, i.e., those configurations $C'_j$ such that $C'_j = C_j \cup \{e\}$ for some event $e$ in $E \setminus C_j$. Having computed $Next(C_j)$, set $E_f = E_f \cup (\bigcup Next(C_j))$ and $\mathfrak{C} = \mathfrak{C} \setminus \{C_j\}$, and stop when $\mathfrak{C}$ is empty. This procedure behaves as the one described previously. Notice that at the end of this procedure $E_f$ is complete since it contains the next configurations of all elements in $E_r$.
\qed
\end{proof}

\subsubsection*{Temporal Verification of Regular Infinite Event Structures.}
Based on Lemmas \ref{sattflformulae} and \ref{finiterep} and on Theorem \ref{dectflgames}, we can give a decidability result for the class of event structures introduced in \cite{res-thia} against \tfl/ specifications. Such a result, which is obtained by representing a regular event structure as a regular quotient set, is a corollary of the following theorem:

\begin{thm}\label{decmcinfes}
The model-checking problem for an event structure $\mathfrak{E}$ represented as a regular quotient set $({\Conf/}/{\sim})$ against \tfl/ specifications is decidable.
\end{thm}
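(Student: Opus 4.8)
The plan is to reduce the model-checking problem for the infinite event structure $\mathfrak{E}$ to a model-checking game on a finite TSI and then invoke Theorem \ref{dectflgames}. Concretely, from the regular quotient set presentation $({\Conf/}/{\sim})$ I would first run the procedure of Lemma \ref{sinfiniterep} to obtain a finite representative set $E_r \subseteq E$, and then the procedure of Lemma \ref{finiterep} to extend it to a finite complete representative set $E_f$. The role of $E_f$ is that it is large enough to witness every $\sim$-class of configurations together with at least one ``next'' configuration for each, so that when the folding functor $\lambda_f$ collapses $\mathfrak{E}$ along $\sim$, all states, all transitions (including those that close cycles), and all independence pairs of the resulting TSI $\mathfrak{T} = \lambda_f(\mathfrak{E})$ can be read off the finite fragment of $\mathfrak{E}$ determined by $E_f$. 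Since there are only finitely many $\sim$-classes, $\mathfrak{T}$ has finitely many states; since $\mathfrak{E}$ is image-finite, $\mathfrak{T}$ has finite branching and hence finitely many transitions; so $\mathfrak{T}$ is a finite system, and it is effectively constructible.

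Next I would observe that the process space $\mathfrak{S} = \mathcal{X} \times \mathfrak{A}$ associated with $\mathfrak{T}$ is finite and computable: the maximal sets, the conflict-free sets, the support sets, the complete traces $M \sqsubseteq R$, and the maximal traces are all fixed and computable from a finite $\mathfrak{T}$, as noted in Section \ref{synsem}. Then, by Lemma \ref{sattflformulae}, the models $(\mathfrak{T},\mathcal{V})$ and $(\mathfrak{E},\mathcal{V})$ satisfy exactly the same \tfl/ formulae, so for any \tfl/ specification $\phi$ we have that $\mathfrak{E} \models \phi$ iff $\mathfrak{T} \models \phi$, i.e.\ iff the initial process $H_0 = (\mathfrak{X}(s_0),t_{\epsilon})$ of $\mathfrak{S}$ satisfies $\phi$. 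Therefore model-checking $\mathfrak{E}$ against $\phi$ is equivalent to model-checking the finite TSI $\mathfrak{T}$ against $\phi$.

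Finally, by Theorem \ref{dectflgames} the model-checking game $\mathcal{G}(H_0,\phi)$ on the finite system $\mathfrak{T}$ is decidable: finite plays are settled by winning conditions one to four, and by Proposition \ref{winstr} the winning strategies are history-free, so only the finitely many configurations of the board, together with the bounded number of fixpoint approximants forced by the finiteness of $\mathfrak{T}$, need to be inspected to settle infinite plays via winning condition five. Since determinacy (Corollary \ref{determinacy}) guarantees that exactly one player has a winning strategy, computing that strategy decides whether $H_0 \models \phi$, and hence whether $\mathfrak{E} \models \phi$.

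The step I expect to be the main obstacle is the correctness and effectiveness of the construction of $\mathfrak{T} = \lambda_f(\mathfrak{E})$ out of $E_f$: one must argue that the finite fragment generated by $E_f$ already exhibits every transition of the folded TSI (in particular the ``back edges'' that create cycles) and every independence pair in $I$, and that the resulting structure genuinely is $\lambda_f(\mathfrak{E})$ up to isomorphism, independently of the choices made when selecting representatives. This rests on the regularity hypothesis (finitely many $\sim$-classes, decidable $\sim$, finite configurations) and on the adjunction between $\lambda_f$ and the unfolding functor used in Lemma \ref{sattflformulae}; once that is in place, the remainder is bookkeeping over finite objects.
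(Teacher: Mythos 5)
Your proposal is correct and follows essentially the same route as the paper's proof: construct the finite complete representative set $E_f$ via Lemma \ref{finiterep}, fold $\mathfrak{E}$ into a finite TSI with $\lambda_f$ using $E_f$ in place of $E$, invoke Lemma \ref{sattflformulae} to transfer \tfl/ properties, and conclude by the decidability of the finite model-checking game (Theorem \ref{dectflgames}). The point you flag as the main obstacle---that the fragment determined by $E_f$ already exhibits all states, transitions, and independence pairs of the folded TSI---is indeed the part the paper treats only implicitly, so your explicit attention to it is a reasonable refinement rather than a deviation.
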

\begin{proof}
Due to Lemma \ref{finiterep} one can construct a finite complete representative set $E_f$ of $E$. Then a finite TSI that satisfies the same set of \tfl/ formulae as $\mathfrak{E}$ can be defined by using the folding map $\lambda_f$ from event structures to TSI, and using $E_f$ instead of $E$ as the new set of events. Since such a morphism preserves all \tfl/ properties (Lemma \ref{sattflformulae}), the model-checking problem for this kind of event structures can be reduced to solving the model-checking game for finite TSI, and hence for finite systems in general, which due to Theorem \ref{dectflgames} is decidable.
\qed
\end{proof}

\textbf{Regular Event Structures as Finite CCS Processes.} A regular event structure can be generated by a finite concurrent system represented by a finite number of (possibly recursive) CCS processes \cite{ccs-milner,essemccs-winskel}. Syntactic restrictions on CCS that generate only finite systems have been studied. Notice that the combination of the \emph{syntactic} restriction to finite CCS processes and the \emph{semantic} restriction to image-finite models give the requirements for regularity on the event structures that are generated, in particular, of the regular trace event structures defined before.

Now, without any loss of generality, consider only deterministic CCS processes without auto-concurrency. A CCS process is deterministic if whenever $a.M+b.N$, then $a \neq b$, and similarly has no auto-concurrency if whenever $a.M \parallel b.N$, then $a \neq b$. Notice that any CCS process $P$ that either is nondeterministic or has auto-concurrency can be converted into an equivalent process $Q$ which generates an event structure that is isomorphic, up to relabelling of events, to the one generated by $P$. 

Eliminating nondeterminism and auto-concurrency can be done by relabelling events in $\wp(P)$, the powerset of CCS processes of $P$, with an injective map $\theta: \Sigma \rightarrow \Sigma^{*}$ (where $\Sigma^{*}$ is a set of labels and $\Sigma \subseteq \Sigma^{*}$), and by extending the `synchronization algebra' \cite{essemccs-winskel} according to the new labelling of events so as to preserve pairs of (labels of) events that can synchronize. Also notice that the original labelling can always be recovered from the new one, i.e., the one associated with the event structure generated by $Q$, since $\theta$ is injective and hence has inverse $\theta^{-1} : \Sigma^{*} \rightarrow \Sigma$. 

\textbf{Finite CCS Processes as Regular Quotient Sets.} Call $ESProc(P)$ the set of configurations of the event structure generated by a CCS process $P$ of the kind described above. The set $ESProc(P)$ together with an equivalence relation between CCS processes $\equiv_{CCS}$ given simply by syntactic equality between them is a regular quotient set representation $({ESProc(P)} ~/~ {\equiv_{CCS}})$ of the event structure generated by $P$. 

Notice that since there are finitely many different CCS expressions, i.e., $\wp(P)$ is finite, then the event structure generated by $P$ is of finite-branching and the number of equivalence classes is also bounded. Finally, $\equiv_{CCS}$ is clearly decidable because the process $P$ is always associated with the $\emptyset$ configuration and any other configuration in $ESProc(P)$ can be associated with only one CCS expression in $\wp(P)$ as they are deterministic and have no auto-concurrency after relabelling.

The previous simple observations lead to the following result:

\begin{cor}
Model-checking regular trace event structures against \tfl/ specifications is decidable.
\end{cor}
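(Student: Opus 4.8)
The plan is to reduce the claim directly to Theorem~\ref{decmcinfes} by exhibiting, for an arbitrary regular trace event structure, a regular quotient set representation of it that is effectively computable. The route is the one laid out in the two preceding paragraphs: first realise the event structure as the one generated by a finite (possibly recursive) CCS process, and then read off a regular quotient set from the finitely many syntactic CCS expressions occurring in $\wp(P)$.

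First I would recall that, by the combination of the syntactic restriction to finite CCS processes and the semantic restriction to image-finite models, every regular trace event structure $\mathfrak{E}$ in the sense of Thiagarajan is generated by some finite CCS process $P$. Next, using the relabelling argument via an injective map $\theta : \Sigma \rightarrow \Sigma^{*}$ together with the correspondingly extended synchronization algebra, I would assume without loss of generality that $P$ is deterministic and has no auto-concurrency: this only renames events, and since $\theta$ is injective with inverse $\theta^{-1}$, the model-checking instance for $\mathfrak{E}$ and a formula $\phi$ is equivalent to the one for the event structure generated by the relabelled process $Q$ and the formula with labels relabelled along $\theta$, whose answer can then be transported back along $\theta^{-1}$; \tfl/ cannot distinguish isomorphic, equally labelled models, so nothing is lost.

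Then I would take the quotient set $({ESProc(P)}~/~{\equiv_{CCS}})$, where $ESProc(P)$ is the set of configurations of the event structure generated by $P$ and $\equiv_{CCS}$ is syntactic equality of the associated CCS expression in $\wp(P)$, and check the two conditions making it a regular quotient set. Condition one, finitely many equivalence classes: $\wp(P)$ is finite, so the generated event structure is of finite branching and the index of $\equiv_{CCS}$ is bounded. Condition two, decidability of the characteristic function: after the relabelling above $P$ is deterministic and auto-concurrency-free, so each configuration in $ESProc(P)$ is associated with a unique CCS expression in $\wp(P)$, and syntactic equality of CCS terms is plainly decidable. With $({ESProc(P)}~/~{\equiv_{CCS}})$ established as a regular quotient set representation of $\mathfrak{E}$, the conclusion is immediate from Theorem~\ref{decmcinfes}.

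The only genuinely delicate point — and hence the main obstacle — is the reduction to deterministic, auto-concurrency-free CCS processes: one must be sure that $\theta$ can be chosen injective and that the extended synchronization algebra still permits exactly the pairs of event labels that were allowed to synchronize before, so that the event structure generated by $Q$ is really isomorphic, up to the $\theta$-relabelling of events, to the one generated by $P$. Once that is in place, everything else is bookkeeping layered on top of Theorem~\ref{decmcinfes}.
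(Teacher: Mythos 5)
Your argument follows the same route as the paper: realise the regular trace event structure as one generated by a finite CCS process, relabel via an injective $\theta$ to eliminate nondeterminism and auto-concurrency, observe that $({ESProc(P)}~/~{\equiv_{CCS}})$ is a regular quotient set (finitely many classes since $\wp(P)$ is finite, decidable characteristic function since each configuration corresponds to a unique CCS expression), and conclude by Theorem~\ref{decmcinfes}. This matches the paper's reasoning, so the proposal is correct and essentially identical in approach.
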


A similar result was given by Bradfield and the present author \cite{con09-gut} using SFL. However, as stated previously, since it is still unknown the exact relationship between the expressivity of SFL and \tfl/, then we are, as to now, unable to decide which of the two results is stronger in terms of temporal expressive power. It may well be that SFL and \tfl/ are actually equi-expressive on $\Xi$-systems.

\section{Discussion and Related Work}\label{relwork}
The work presented here is related to three connected topics: mu-calculi, bisimulation equivalences, and model-checking problems. We were particularly interested in mu-calculi as fixpoint extensions of modal logic, and bisimulation and model-checking problems from a game-theoretic perspective. Our results relate, mainly, to work on these topics with respect to partial order models of concurrency; however, since we also embrace interleaving systems in a very natural way, in some cases pointers to similar work in the interleaving context are given.

{\bf Mu-Calculi.} This work can be related to logics with partial order semantics at large. Formulae of these logics, usually, are given denotations that consider the one-step interleaving semantics of a particular partial order model. Following this approach no new logical constructions have to be introduced; unfortunately, in this case the explicit notion of concurrency in the models is completely lost. 

Thus, the usual approach when defining logics with partial order models is to introduce operators that somehow capture the independence information on the partial order models. In most cases that kind of logical independence is actually a sequential interpretation of concurrency, which is based on the introduction of past operators sensitive to concurrent transitions and a mixture of forwards and backwards reasoning; however, this can lead to undecidability results with respect to the decision problems related to such logics, e.g., with respect to their satisfiability, equivalence, or model-checking problems (cf. \cite{pathlogic-nielsen,undec-penczek,pologics-penczek,logtracesbook-penczek}).

Several logics with the characteristics described above whose semantics are given using partial order models (as well as their related decision problems) can be found in \cite{pologics-penczek,logtracesbook-penczek}, and the references therein. Other logics with partial order semantics that do not appear there can be found, e.g., in \cite{causallogic-alur,pathlogic-nielsen}, but the literature includes many, many more references. However, it is worth saying that not all such logics are extensions of modal logic or even mu-calculi. In some cases, they are variations of the usual linear-time and branching-time temporal logics.

At a more philosophical level, this study is also similar to that in \cite{henkinlogic-bradfield,njc-bradfield}, a work primarily on mathematical logic using game logics for concurrency. In these works the main goal is to explicitly capture what we call `model independence', i.e., explicit concurrency in the models, in a logical way with the use of Henkin quantifiers, which are partial order generalisations of the usual quantifiers in classical logic. More precisely, in \cite{henkinlogic-bradfield} different properties of a number of fixpoint modal logics based on Hintikka and Sandu's `Independence-Friendly' (IF) logic are discussed, and in \cite{njc-bradfield} the bisimulation equivalence induced by one of such logics, namely of IF modal logic (IFML), is thoroughly studied. Their main motivation closely relates to ours, especially because of their interest in the bisimulation equivalences induced by such logics as well as the use of games.

{\bf Bisimulation.} A great deal of work has been done on the study of bisimulation equivalences captured by modal and temporal logics, e.g., as done by Milner and Hennessy \cite{hmljacm-milner} of by De Nicola and Vaandrager \cite{bbjacm-rocco} in and interleaving context. Here, we have proposed a generic approach in which the standard bisimilarity for interleaving concurrency, namely sb, is captured by syntactic and semantics means using \tfl/ and its associated (higher-order) bisimulation game.

Also, the work by Joyal, Nielsen, and Winskel \cite{open-nielsen} relates to ours. Whereas in \cite{open-nielsen} they proposed a \emph{categorical} approach to defining an abstract or model independent notion of bisimulation equivalence for several concurrent systems, here we have proposed a \emph{logical} one, following the way of reasoning used in \cite{hmljacm-milner,bbjacm-rocco}, but in our case in a partial order setting instead of an interleaving one. Moreover, the bisimilarities we studied here are all decidable, a result that contrasts with the work by Nielsen and Clausen \cite{pathlogic-nielsen}, where a concretization of the abstract notion of bisimulation defined in \cite{open-nielsen} is given a logical as well as a game-theoretic characterisation, which turned out to be undecidable.

As briefly mentioned before, Bradfield and Fr{\"o}schle \cite{njc-bradfield} also studied the bisimulation equivalence induced by IFML using game-theoretic techniques. They followed a logical approach, which is in spirit quite close to our work. Unfortunately, the bisimulation equivalences induced by the logic studied there do not coincide, in most cases, with the standard bisimilarities for partial order models of concurrency, not even when restricted to particular classes of systems. 

{\bf Model-checking.} Model-checking games have been an active area of research in the last decades (cf. \cite{mcgames-gradel,games-wal}). They have been studied from both theoretical and practical perspectives. For instance, for the proper definition of their mathematical properties \cite{gamesbook-gradel,phdthesis-lange,mcgames-stirling}, or for the construction of tools for property verification, e.g., see \cite{appgames-ghica,practicalmc-stevens}. Most approaches based on games have considered either only interleaving systems or the one-step interleaving semantics of partial order models. Our work differs from these approaches in that we deal with games played on partial order models without considering interleaving simplifications. Although verification problems in finite partial order models can be undecidable, the games presented here are all \emph{decidable} in the finite case.

Regarding model-checking in a broader sense, many procedures, not only game-theoretic, have been studied elsewhere for concurrent systems both with interleaving models and with partial order semantics. For instance, see \cite{causallogic-alur,mcmit-grumberg,logtracesbook-penczek}, as well as the references therein, for several examples of various techniques and approaches to model-checking concurrent systems. However, since our main motivation was to develop a decision procedure to verify concurrent systems with partial order models, only the techniques considering these kinds of systems relate to our work, though, as said before, such procedures are not game-theoretic.

With respect to the temporal verification of event structures, previous studies have been done on restricted classes. Closer to our work is \cite{mceslics-madhu,tacas-penczek}. Indeed, model-checking regular trace event structures has turned out to be rather difficult and previous work has shown that verifying monadic second-order (MSO) properties on these structures is already undecidable. For this reason weaker (classical, modal, and temporal) logics have been studied. Unfortunately, although very interesting results have been achieved, especially in \cite{mceslics-madhu} where CTL$^*$ temporal properties can be verified, previous approaches have not managed to define decidable theories for a logic with enough expressive power to describe all usual temporal properties as can be done with \lmu/ in the interleaving case, and therefore with \tfl/ when considering partial order models for concurrency.

The difference between the logics and decision procedure presented in \cite{mceslics-madhu} and the approach we presented here is that in \cite{mceslics-madhu} a \emph{global} second-order quantification on conflict-free sets in the partial order model is permitted, whereas only a \emph{local} second-order quantification in the same kind of sets is defined here, but such a second-order power can be embedded into fixpoint specifications, which in turn allows one to express more temporal properties. In this way we are able to improve, in terms of temporal expressive power, previous results on model-checking regular trace event structures against a branching-time temporal logic.

Finally, it is important to point out that most of the results we have presented here with respect to the logics as well as to the bisimulation and model-checking games have been also obtained using SFL (instead of \tfl/ as in this report); such results were first presented in \cite{fos09-gut,con09-gut}. This report is based on the work presented there, but it also contains some modifications, corrections, and refinements.

\section{Concluding Remarks}\label{conc}
We have given a logical characterisation to the \emph{dualities} that can be found when analysing \emph{locally} the relationships between concurrency and conflict as well as concurrency and causality in partial order models of concurrency. This characterisation aims at defining relationships between equivalences that take into account the explicit notion of independence when considering partial order semantics, and which can be defined at the level of the models as well as at the level of the logics. This study led to several positive results with respect to the \emph{bisimulation} and \emph{model-checking} problems associated with such logics and models. It also delivered new forms of logic games for verification where the players are given \emph{higher-order} power on the sets of elements they are allowed to play.

A key ingredient of the work reported here is that we allow a free interplay of \emph{fixpoints} and \emph{local monadic second-order} power in both the mu-calculi and higher-order logic games for verification we have presented. Our results, together with the analysis of some of the related work, suggest that restricting the quantification power to \emph{conflict-free} sets (of transitions) in partial order models of concurrency may be a sensible way of retaining decidability while still having good expressivity. These features, along with the fact that the mu-calculi and logic games we have defined for partial order models generalise those for interleaving concurrency, make our logic-based game-theoretic framework a powerful alternative approach to studying different kinds of concurrent systems \emph{uniformly}, regardless of whether they have an interleaving or a partial order semantics.

\subsection*{Acknowledgements.} I thank Julian Bradfield, Sibylle Fr{\"o}schle, Ian Stark, and Colin Stirling for helpful discussions. I also thank the reviewers of the FOSSACS 2009 and CONCUR 2009 conferences for their comments on preliminary versions of this work. The author was financially supported by an Overseas Research Studentship (ORS) Award and a School of Informatics PhD Studentship at The University of Edinburgh.

\bibliographystyle{abbrv}
\bibliography{bibreport}

\begin{thebibliography}{10}

\bibitem{causallogic-alur}
R.~Alur, D.~Peled, and W.~Penczek.
\newblock Model-checking of causality properties.
\newblock In {\em LICS}, 90--100. IEEE Computer Society, 1995.

\bibitem{jvbthesis-jvb}
J.~V. Benthem.
\newblock {\em Modal Correspondence Theory}.
\newblock PhD thesis, University of Amsterdam, 1977.

\bibitem{lg-jvb}
J.~V. Benthem.
\newblock Logic games, from tools to models of interaction.
\newblock In {\em Logic at the Crossroads}, 283--317. Allied Publishers, 2007.

\bibitem{alt-bradfield}
J.~C. Bradfield.
\newblock The modal $\mu$-calculus alternation hierarchy is strict.
\newblock {\em Theor. Comput. Sci.}, 195(2):133--153, 1998.

\bibitem{henkinlogic-bradfield}
J.~C. Bradfield.
\newblock Independence: logics and concurrency.
\newblock In {\em Truth and Games: Essays in Honour of Gabriel Sandu}, Acta
  Phil. Fennica \textbf{78}, 47--70. Soc. Phil. Fen., 2006.

\bibitem{njc-bradfield}
J.~C. Bradfield and S.~B. Fr{\"o}schle.
\newblock Independence-friendly modal logic and true concurrency.
\newblock {\em Nord. J. Comput.}, 9(1):102--117, 2002.

\bibitem{localtcs-bradfield}
J.~C. Bradfield and C.~Stirling.
\newblock Local model checking for infinite state spaces.
\newblock {\em Theor. Comput. Sci.}, 96(1):157--174, 1992.

\bibitem{mucalculi-bradfield}
J.~C. Bradfield and C.~Stirling.
\newblock Modal mu-calculi.
\newblock In {\em Handbook of Modal Logic}, 721--756. Elsevier, 2007.

\bibitem{mcmit-grumberg}
E.~Clarke, O.~Grumberg, and D.~Peled.
\newblock {\em Model Checking}.
\newblock The MIT Press, 2000.

\bibitem{tltolmu-dam}
M.~Dam.
\newblock {CTL*} and {ECTL*} as fragments of the modal mu-calculus.
\newblock {\em Theor. Comput. Sci.}, 126(1):77--96, 1994.

\bibitem{fcnets-esparza}
J.~Desel and J.~Esparza.
\newblock {\em Free Choice Petri Nets}.
\newblock Cambridge Tracts in Theoretical Computer Science \textbf{40}.
  Cambridge University Press, 1995.

\bibitem{hier-fecher}
H.~Fecher.
\newblock A completed hierarchy of true concurrent equivalences.
\newblock {\em Inf. Process. Lett.}, 89(5):261--265, 2004.

\bibitem{border-froschle}
S.~B. Fr{\"o}schle.
\newblock The decidability border of hereditary history preserving
  bisimilarity.
\newblock {\em Inf. Process. Lett.}, 93(6):289--293, 2005.

\bibitem{appgames-ghica}
D.~R. Ghica.
\newblock Applications of game semantics: From program analysis to hardware
  synthesis.
\newblock In {\em LICS}, 17--26. IEEE Computer Society, 2009.

\bibitem{hpb-glabbeek}
R.~J.~V. Glabbeek and U.~Goltz.
\newblock Refinement of actions and equivalence notions for concurrent systems.
\newblock {\em Acta Inf.}, 37(4/5):229--327, 2001.

\bibitem{mcgames-gradel}
E.~Gr{\"a}del.
\newblock Model checking games.
\newblock {\em Electr. Notes Theor. Comput. Sci.}, 67, 2002.

\bibitem{gamesbook-gradel}
E.~Gr{\"a}del, W.~Thomas, and T.~Wilke, editors.
\newblock {\em Automata, Logics, and Infinite Games}, LNCS \textbf{2500}.
  Springer, 2002.

\bibitem{fos09-gut}
J.~Gutierrez.
\newblock Logics and bisimulation games for concurrency, causality and
  conflict.
\newblock In {\em FOSSACS}, LNCS \textbf{5504}, 48--62. Springer, 2009.

\bibitem{con09-gut}
J.~Gutierrez and J.~C. Bradfield.
\newblock Model-checking games for fixpoint logics with partial order models.
\newblock In {\em CONCUR}, LNCS \textbf{5710}, 354--368. Springer, 2009.

\bibitem{hmljacm-milner}
M.~Hennessy and R.~Milner.
\newblock Algebraic laws for nondeterminism and concurrency.
\newblock {\em J. ACM}, 32(1):137--161, 1985.

\bibitem{dechpb-meyer}
L.~Jategaonkar and A.~R. Meyer.
\newblock Deciding true concurrency equivalences on safe, finite nets.
\newblock {\em Theor. Comput. Sci.}, 154(1):107--143, 1996.

\bibitem{open-nielsen}
A.~Joyal, M.~Nielsen, and G.~Winskel.
\newblock Bisimulation from open maps.
\newblock {\em Inf. Comput.}, 127(2):164--185, 1996.

\bibitem{undhhpb-jur}
M.~Jurdzinski, M.~Nielsen, and J.~Srba.
\newblock Undecidability of domino games and hhp-bisimilarity.
\newblock {\em Inf. Comput.}, 184(2):343--368, 2003.

\bibitem{lmutcs-kozen}
D.~Kozen.
\newblock Results on the propositional mu-calculus.
\newblock {\em Theor. Comput. Sci.}, 27:333--354, 1983.

\bibitem{phdthesis-lange}
M.~Lange.
\newblock {\em Games for Modal and Temporal Logics}.
\newblock PhD thesis, Univeristy of Edinburgh, 2002.

\bibitem{mcgames-stirling}
M.~Lange and C.~Stirling.
\newblock Model checking games for branching time logics.
\newblock {\em J. Log. Comput.}, 12(4):623--639, 2002.

\bibitem{mceslics-madhu}
P.~Madhusudan.
\newblock Model-checking trace event structures.
\newblock In {\em LICS}, 371--380. IEEE Computer Society, 2003.

\bibitem{boreldet-martin}
D.~A. Martin.
\newblock Borel determinacy.
\newblock {\em Ann. Math.}, 102(2):363--371, 1975.

\bibitem{tracesbook-maz}
A.~W. Mazurkiewicz.
\newblock Introduction to trace theory.
\newblock In {\em The Book of Traces}, 3--42. World Scientific, 1995.

\bibitem{ccs80-milner}
R.~Milner.
\newblock {\em A Calculus of Communicating Systems}.
\newblock LNCS \textbf{92}. Springer, 1980.

\bibitem{ccs-milner}
R.~Milner.
\newblock {\em Communication and Concurrency}.
\newblock Prentice-Hall, 1989.

\bibitem{bbjacm-rocco}
R.~D. Nicola and F.~W. Vaandrager.
\newblock Three logics for branching bisimulation.
\newblock {\em J. ACM}, 42(2):458--487, 1995.

\bibitem{pathlogic-nielsen}
M.~Nielsen and C.~Clausen.
\newblock Games and logics for a noninterleaving bisimulation.
\newblock {\em Nord. J. Comput.}, 2(2):221--249, 1995.

\bibitem{pnesdom-winskel}
M.~Nielsen, G.~D. Plotkin, and G.~Winskel.
\newblock Petri nets, event structures and domains, {Part I}.
\newblock {\em Theor. Comput. Sci.}, 13:85--108, 1981.

\bibitem{models-winskel}
M.~Nielsen and G.~Winskel.
\newblock Models for concurrency.
\newblock In {\em Handbook of Logic in Computer Science}, 1--148. Oxford
  University Press, 1995.

\bibitem{fixpv-park}
D.~M.~R. Park.
\newblock Fixpoint induction and proofs of program properties.
\newblock {\em Machine Intelligence}, 5:59--78, Edinburgh University Press,
  1969.

\bibitem{bis-park}
D.~M.~R. Park.
\newblock Concurrency and automata on infinite sequences.
\newblock LNCS \textbf{104}, 167--183. Springer, 1981.

\bibitem{undec-penczek}
W.~Penczek.
\newblock On undecidability of propositional temporal logics on trace systems.
\newblock {\em Inf. Process. Lett.}, 43(3):147--153, 1992.

\bibitem{pologics-penczek}
W.~Penczek.
\newblock Branching time and partial order in temporal logics.
\newblock In {\em Time and Logic: A Computational Approach}, 179--228. UCL
  Press, 1995.

\bibitem{tacas-penczek}
W.~Penczek.
\newblock Model-checking for a subclass of event structures.
\newblock In {\em TACAS}, LNCS \textbf{1217}, 145--164. Springer, 1997.

\bibitem{logtracesbook-penczek}
W.~Penczek and R.~Kuiper.
\newblock Traces and logic.
\newblock In {\em The Book of Traces}, 307--390. World Scientific, 1995.

\bibitem{hpb-rav}
A.~M. Rabinovich and B.~A. Trakhtenbrot.
\newblock Behavior structures and nets.
\newblock {\em Fundam. Inform.}, 11:357--403, 1988.

\bibitem{modelstcs-winskel}
V.~Sassone, M.~Nielsen, and G.~Winskel.
\newblock Models for concurrency: Towards a classification.
\newblock {\em Theor. Comput. Sci.}, 170(1-2):297--348, 1996.

\bibitem{confusiontcs-esmith}
E.~Smith.
\newblock On the border of causality: Contact and confusion.
\newblock {\em Theor. Comput. Sci.}, 153(1{\&}2):245--270, 1996.

\bibitem{practicalmc-stevens}
P.~Stevens and C.~Stirling.
\newblock Practical model-checking using games.
\newblock In {\em TACAS}, LNCS \textbf{1384}, 85--101. Springer, 1998.

\bibitem{localmc-stirling}
C.~Stirling.
\newblock Local model checking games.
\newblock In {\em CONCUR}, LNCS \textbf{962}, 1--11. Springer, 1995.

\bibitem{processes-stirling}
C.~Stirling.
\newblock {\em Modal and Temporal Properties of Processes}.
\newblock Texts in Computer Science. Springer, 2001.

\bibitem{lmctab-stirling}
C.~Stirling and D.~Walker.
\newblock Local model checking in the modal mu-calculus.
\newblock {\em Theor. Comput. Sci.}, 89(1):161--177, 1991.

\bibitem{ktft-tarski}
A.~Tarski.
\newblock A lattice-theoretical fixpoint theorem and its applications.
\newblock {\em Pacific J. Math.}, 5(2):285--309, 1955.

\bibitem{res-thia}
P.~S. Thiagarajan.
\newblock Regular trace event structures.
\newblock Technical report, BRICS, 1996.

\bibitem{games-wal}
I.~Walukiewicz.
\newblock A landscape with games in the background.
\newblock In {\em LICS}, 356--366. IEEE Computer Society, 2004.

\bibitem{essemccs-winskel}
G.~Winskel.
\newblock Event structure semantics for {CCS} and related languages.
\newblock In {\em ICALP}, LNCS \textbf{140}, 561--576. Springer, 1982.

\end{thebibliography}

\end{document}